\let\counterwithin\relax  
\setlist[1]{labelindent=\parindent,
 }
\newcommand{\su}{\mathsf{u}}
\definecolor{dark-gray}{gray}{0.3}
\definecolor{dkgray}{rgb}{.4,.4,.4}
\definecolor{dkblue}{rgb}{0,0,.5}
\definecolor{medblue}{rgb}{0,0,.75}
\definecolor{rust}{rgb}{0.5,0.1,0.1}
\newtheoremstyle{myThm} 
    {\topsep}                    
    {\topsep}                    
    {\itshape}                   
    {}                           
    {\sffamily\bfseries}                   
    {.}                          
    {.5em}                       
    {}  
\newtheoremstyle{myRem} 
    {\topsep}                    
    {\topsep}                    
    {}                   
    {}                           
    {\sffamily}                   
    {.}                          
    {.5em}                       
    {}  
\newtheoremstyle{myDef} 
    {\topsep}                    
    {\topsep}                    
    {}                   
    {}                           
    {\sffamily\bfseries}                   
    {.}                          
    {.5em}                       
    {}  
\theoremstyle{myThm}
\newtheorem{theorem}{Theorem}[chapter]
\newtheorem{lemma}[theorem]{Lemma}
\newtheorem{proposition}[theorem]{Proposition}
\newtheorem{corollary}[theorem]{Corollary}
\newtheorem{assumption}[theorem]{Assumption}
\newtheorem{algorithm}[theorem]{Algorithm}
\theoremstyle{myRem}
 \newenvironment{remark}
  {\pushQED{\qed}\remarkx}
  {\popQED\endremarkx}
\theoremstyle{myDef}
\newenvironment{definition}
  {\pushQED{\qed}\definitionx}
  {\popQED\enddefinitionx}
 \newenvironment{example}
  {\pushQED{\qed}\examplex}
  {\popQED\endexamplex}
 \newenvironment{observation}
  {\pushQED{\qed}\observationx}
  {\popQED\endexamplex}
\let\originalleft\left
\let\originalright\right
\renewcommand{\left}{\mathopen{}\mathclose\bgroup\originalleft}
\renewcommand{\right}{\aftergroup\egroup\originalright}
\renewcommand{\phi}{\varphi}
\newcommand{\eps}{\varepsilon}
\newcommand{\zerovct}{\vct{0}} 
\newcommand{\Id}{{I}} 
\providecommand{\mathbbm}{\mathbb} 
\newcommand{\R}{\mathbbm{R}}
\newcommand{\N}{\mathbbm{N}}
\newcommand{\Z}{\mathbbm{Z}}
\newcommand{\la}{\langle}
\newcommand{\ra}{\rangle}
\definecolor{mygreen}{rgb}{0.1,0.75,0.2}
\newcommand{\as}{\color{mygreen}}
\newcommand{\nc}{\normalcolor}
\newcommand{\dz}{d_z}
\newcommand{\pr}{\rho}
\newcommand{\post}{\pi}
\newcommand{\noise}{\nu}
\newcommand{\loss}{\mathsf{L}}
\newcommand{\losss}{\ell}
\newcommand{\like}{\mathsf{l}}
\newcommand{\reg}{\mathsf{R}}
\newcommand{\precision}{\Omega}
\newcommand{\V}{V}
\newcommand{\Y}{Y}
\newcommand{\varg}{v}
\newcommand{\pMC}{\post^N_{\mbox {\tiny{\rm MC}}}}
\newcommand{\prMC}{\pr^N_{\mbox {\tiny{\rm MC}}}}
\newcommand{\ppCN}{\post^N_{\mbox {\tiny{\rm pCN}}}}
\newcommand{\pIS}{\post^N_{\mbox {\tiny{\rm IS}}}}
\newcommand{\pMCMC}{\post^N_{\mbox {\tiny{\rm MCMC}}}}
\newcommand{\mpr}{\hat{m}}
\newcommand{\mpost}{m}
\newcommand{\Cpr}{\hat{C}}
\newcommand{\Cpost}{C}
\newcommand{\Prob}{\operatorname{\mathbbm{P}}}
\newcommand{\Expect}{\operatorname{\mathbb{E}}}
\newcommand{\Sam}{N}
\newcommand{\sam}{n}
\newcommand{\pdfs}{{\mathscr{P}}}
\newcommand{\Du}{d}
\newcommand{\du}{i}
\newcommand{\Dy}{k}
\newcommand{\Ru}{\mathbbm{R}^\Du}
\newcommand{\Ry}{\mathbbm{R}^\Dy}
\newcommand{\vct}[1]{{#1}}
\newcommand{\mtx}[1]{{#1}}
\newcommand{\umap}{u_{\mbox {\tiny{\rm MAP}}}}
\newcommand{\upm}{u_{\mbox {\tiny{\rm PM}}}}
\newcommand{\pmh}{p_{\mbox {\tiny{\rm MH}}}}
\newcommand{\pimh}{\pi_{\mbox {\tiny{\rm MH}}}}
\newcommand{\mA}{A^{-1}}
\newcommand{\cN}{\mathcal{N}}
\newcommand{\dkl}{d_{\mbox {\tiny{\rm KL}}}}
\newcommand{\dtv}{d_{\mbox {\tiny{\rm TV}}}}
\newcommand{\dchi}{d_{\mbox {\tiny{$ \chi^2$}}}}
\newcommand{\dhell}{d_{\mbox {\tiny{\rm H}}}}
\newcommand{\Nc}{\mathcal{N}}
\title{{\huge\emph{Inverse Problems and Data Assimilation}}\\
} 
\author{\vspace{.55in}  Daniel Sanz-Alonso $^{\musFlat{}},$ Andrew Stuart $^{\musSharp{}}$ and Armeen Taeb $^{\musNatural{}}$\\
	\vspace{.35in} 
	$^{\musFlat{}}$ Department of Statistics, University of Chicago \vspace{.05in} 
	\\ $^{\musSharp{}}$ Department of Computing and Mathematical Sciences, Caltech \vspace{.05in} 
	\\  $^{\musNatural{}}$ Department of Statistics, University of Washington}
\date{}
\makeatletter\@addtoreset{section}{part}\makeatother%
\numberwithin{equation}{chapter}
\renewcommand{\triangleq}{:=}
\newcommand{\upperRomannumeral}[1]{\uppercase\expandafter{\romannumeral#1}}
\renewcommand{\hat}{\widehat}
\renewcommand{\bar}{\overline}
\newcommand{\An}{{\mathcal{A}}}
\newcommand{\Pred}{{\mathcal{P}}}
\newcommand{\J}{{\mathsf{J}}}
\newcommand{\I}{{\mathsf{I}}}
\newcommand{\Jtp}{\J_{\mbox {\tiny{\rm TP}}}}
\newcommand{\Jdm}{\J_{\mbox {\tiny{\rm DM}}}}
\newcommand{\Jl}{\J_\ell^{\mbox {\tiny{lin}}}}
\newcommand{\Juc}{\J_\ell^{\mbox {\mbox{\tiny{UC} }  }}}
\newcommand{\Jtpl}{\J_{\mbox{\tiny{\rm TP,}}\ell}^{\ell} }
\newcommand{\Juctp} {\J_{\mbox {\tiny{\rm TP}},\ell}^{\mbox {\mbox{\tiny{UC} }  }}}
\newcommand{\Jucdm} {\J_{\mbox {\tiny{\rm DM}},\ell}^{\mbox {\mbox{\tiny{UC} }  }}}
\newcommand{\Jucdmn} {\J_{\mbox {\tiny{\rm DM}},\ell}^{(n),\mbox {\mbox{\tiny{UC} }  }}}
\newcommand{\Jtpn}{\J_{\mbox {\tiny{\rm TP}},\ell}^{(n)}}
\newcommand{\Juctpn} {\J_{\mbox {\tiny{\rm TP}},\ell}^{(n),\mbox {\mbox{\tiny{UC} }  }}}
\newcommand{\rtp}{r_{\mbox {\tiny{\rm TP}}}}
\newcommand{\rdm}{r_{\mbox {\tiny{\rm DM}}}}
\newcommand{\uin}{u_\ell^{(n)}}
\newcommand{\vin}{v_\ell^{(n)}}
\begin{document}
\maketitle 

\newpage
\section*{\Large{\sffamily{Introduction}}}


\subsection*{Aim and Overview of the Notes}

The aim of these notes is to provide a clear and concise 
 mathematical introduction
to the subjects of \index{inverse problem}Inverse Problems and \index{data assimilation}Data Assimilation, and their
inter-relations, together with bibliographic pointers to literature
in this area that goes into greater depth.  The target audiences
are advanced undergraduates and beginning graduate students in the
mathematical sciences, together
with researchers in the sciences and engineering who are interested
in the systematic underpinnings of methodologies widely used in their
disciplines.

In its most basic form, \index{inverse problem}inverse problem theory is the study of how to estimate model parameters from data. Often the data provide indirect information about these parameters, corrupted by noise. The theory of \index{inverse problem}inverse problems, 
however, is much richer than just parameter estimation. For example, 
the underlying theory can be used to
determine the effects of noisy data on the accuracy of the solution; it can 
be used to determine what kind of \index{observation}observations are needed to accurately 
determine a parameter; and it can be used to study the uncertainty in a
parameter estimate and, relatedly, is useful, for example,
in the design of strategies for control or \index{optimization}optimization under
uncertainty, and for risk analysis. The theory
thus has applications in many fields of science and engineering.

To apply the ideas in these notes, the starting point is a mathematical model 
mapping the unknown parameters to the \index{observation}observations: termed the \index{forward!problem}``forward'' or 
``direct'' problem, and often a subject of research in its own right. A good \index{forward!model}forward model will not only identify how the data is dependent on parameters, but 
also what sources of noise or model uncertainty are present in the postulated
relationship between unknown parameters and data. For example, if  the desired
\index{forward!problem}forward problem cannot be solved analytically, then the \index{forward!model}forward model may be approximated by a numnerical simulation; 
in this case, discretization may be considered as a source of error. Once a relationship between model parameters, sources of error, and data is clearly defined, the \index{inverse problem}inverse problem of estimating parameters from data can be addressed. The theory of
\index{inverse problem}inverse problems can be separated into two cases: (1) the ideal case where data is not corrupted by noise and is derived from a known perfect model; and (2) the practical case where data is incomplete and imprecise. The first case is useful for classifying \index{inverse problem}inverse problems and determining if a given set of \index{observation}observations
can, in principle, allow to fully reconstruct the model parameters; this provides insight into
conditions needed for existence, uniqueness, and stability of a solution to the \index{inverse problem}inverse problem. 
The second case is useful for the formulation of practical algorithms
to learn about parameters, and uncertainties in their estimates, and will
be the focus of these notes.

A model for which a solution exists, is unique, and changes continuously with input (stability) is termed  \index{well-posed}``well-posed''.  
Conversely, a model lacking any of these properties is termed \index{ill-posed}``ill-posed''. \index{ill-posed}Ill-posedness is present in many \index{inverse problem}inverse problems,
and mitigating it is an extensive part of the subject.
Out of the different approaches to formulating an 
\index{inverse problem}inverse problem, 
our notes emphasize the \index{Bayesian}Bayesian framework.  Nonetheless,
practical algorithms in this area include a variety of related optimization 
approaches, and these are  also   discussed in detail.

The goal of the \index{Bayesian}Bayesian framework is to find a probability measure that assigns a probability to each possible solution for a parameter $u$, given the 
data $y$. \index{Bayes formula}Bayes formula states that
\begin{equation*}
\Prob(u \vert y) = \frac{1}{\Prob(y)} \Prob(y \vert u)\Prob(u).
\end{equation*}
 This formula 
enables calculation of the \index{posterior}posterior probability on 
$u\vert y$, $\Prob(u \vert y) $, in terms of the  
product of the data \index{likelihood}likelihood $\Prob(y \vert u)$ and the 
\index{prior}prior information on the parameter encoded in $\Prob(u)$. The \index{likelihood}likelihood
describes the probability of the observed data $y$ if the input parameter
were set to be $u$; it is determined by the \index{forward!model}forward model, and the
structure of the noise. The normalization constant $\Prob(y)$ ensures that
$\Prob(u \vert y)$  is a probability measure. There are four
primary benefits to this framework:  (1)  it provides a clear theoretical 
setting in which the \index{forward!model}forward model choice, the description of how noise 
enters the data and the \index{forward!model}forward model,
and \emph{a \index{prior}priori} information on the unknown parameter are all explicit; 
 (2)  it provides information about the entire solution space for possible
input parameter choices;  (3)  it naturally leads to quantification of 
uncertainty and risk in parameter estimates;  (4)   it is generalizable to a wide class of \index{inverse problem}inverse problems, in finite and infinite dimension, and comes
with a \index{well-posed}well-posedness  theory 
mitigating the ill-posedness of a naive deterministic approach.

The first part of the notes is dedicated to studying the \index{Bayesian}Bayesian framework for \index{inverse problem}inverse problems. Techniques such as \index{importance sampling}importance sampling and \index{Markov chain!Monte Carlo}\index{MCMC}Markov 
Chain Monte Carlo (MCMC) methods are introduced; these methods have the
desirable property that in the limit of an infinite number of samples
they reproduce the full \index{posterior}posterior distribution.  Since it is often 
computationally intensive to implement these methods, especially in 
high-dimensional problems, techniques to approximate
the \index{posterior}posterior by a \index{Dirac}Dirac or a \index{Gaussian}Gaussian distribution are  also  discussed,
along with related \index{optimization}optimization algorithms to determine the best
approximation.

The second part of the notes covers \index{data assimilation}data assimilation. 
This refers to a particular class of \index{inverse problem}inverse problems in which
the unknown parameter is the initial condition of a \index{dynamical system}dynamical system or,
in the case of \index{dynamics!stochastic}stochastic dynamics, the entire sequence of subsequent 
\index{state}states of the system, and the data comprises partial and noisy \index{observation}observations 
of the (possibly stochastic) \index{dynamical system}dynamical system. A primary use of \index{data assimilation}data assimilation is in forecasting, where the purpose is to provide better future estimates than can be obtained using either the data or the model alone. 
All the methods from the first part of the course
may be applied directly, but there are other new methods which
exploit the Markovian structure to update the \index{state}state of the system
sequentially, rather than to learn about the initial condition. (But,
of course, knowledge of the initial condition may be used to inform
the \index{state}state of the system at later times.)

\begin{table}
	\begin{center}
	\bgroup
\def\arraystretch{1.5}
		\begin{tabular}{ | c | c |c |}
			\hline
			Topic & Inverse Problems & Data Assimilation  \\ \hline
		 Bayesian Formulation& Chapter 1 &  Chapter 7  \\ \hline
			 Linear Setting & Chapter 2  & Chapter 8    \\ \hline
			 Optimization Perspective & Chapter 3  & Chapter 9    \\ \hline
			 			 Gaussian Approximation & Chapter 4  & Chapter 10    \\ \hline
			 			 			 Sampling & Chapters 5 and 6 & Chapters 11 and 12    \\ \hline	 			 			 			 			 Kalman Inversion  & \multicolumn{2}{c|}{Chapter 13}    \\ \hline
		\end{tabular}
\egroup
		\caption*{ {\sffamily{ \bf Table 1}} Structure of the notes: the organization of the material emphasizes the unity between the subjects of inverse problems and data assimilation.  }		
		\label{Updateswithgradients}
	\end{center}
\end{table}

 The third and final part of the notes describes methods for generic
\index{inverse problem}inverse problems that build on \index{data assimilation} data 
assimilation ideas, thus bringing together the material in the first two parts. The structure of the notes, as well as the presentation, emphasize the inter-relations between inverse problems and data assimilation. As summarized in Table 1,  each chapter in the first part (inverse problems) has its counterpart in the second part (data assimilation).  

\subsection*{Use of the Notes for Teaching and Independent Learning}
These notes were first developed out of Caltech course 
ACM 159 (now ACM/IDS 154) in Fall 2017, and substantially modified for 
the University of Chicago course STAT 31550 in Winter 2019; now
the notes form the basis of courses taught regularly in both institutions. To cater to students with diverse backgrounds and interests,  the instructors complement the material covered in class with hands-on assignments. The first two parts of the notes include several exercises that the instructors have used for this purpose. Additionally, when teaching these classes, we have found it pedagogically beneficial to ask students to complete an independent project, implementing the methods studied in class to solve an applied problem of their choice. This applied problem often arises from the students' own research;
the bibliographic references included at the end of each chapter also form a resource 
to help students to choose and formulate their own projects. Finally, the notes are intended to be concise and self-contained, and thus to be useful not only as a  classroom  teaching resource, 
but also for independent self-guided learning.

\subsection*{Notation}
Throughout the notes we use $\mathbb{N}$ to denote the positive integers $\{1,2,3, \cdots \},$
and $\mathbb{Z}^+$ to denote the non-negative integers $\mathbb{N} \cup \{0\}=\{0,1,2,3, \cdots \}.$
 The symbol $I_\Du$ denotes the identity matrix on $\Ru$, and
$Id$ denotes the identity mapping.  
We use $|\cdot|$ to denote the Euclidean norm corresponding to the 
inner-product $\langle a ,  b \rangle = a^\top b;$  we also use the
notation $|\cdot|$ to denote the induced norm on matrices.  

A  symmetric matrix $A$
is \index{positive definite}positive definite (resp. positive semi-definite) if $\langle u, Au \rangle$ is positive (resp. non-negative)
for all $u \ne 0$. This will sometimes  be denoted by  $A>0$ (resp. $A \ge 0).$
For $A>0,$ we denote by  $| \cdot |_A $ the weighted norm defined by $|v|_A^2  = v^\top A^{-1} v $.
 The corresponding weighted Euclidean inner-product is given by
$\langle \cdot \;,  \; \cdot \rangle_{A}:=\langle \cdot \;, A^{-1}\cdot \rangle.$
 We use $\otimes$ to denote the outer product between
two vectors: $(a \otimes b)c=\langle b,c \rangle a.$ 
We let $B(u,\delta)$ denote the open ball of radius $\delta$ at $u$, in the
Euclidean norm.
 We also use ${\rm det}$ and ${\rm Tr}$ to denote the determinant 
and trace functions on matrices. 

Throughout, we denote by $\Prob (\cdot), \Prob(\cdot \; | \; \cdot )$ the probability density function (pdf) of
a random variable and its conditional pdf, respectively.
We write 
$$ \rho(f)  =\Expect^\rho[f]=\int_{\Ru}f(u)\rho(u)du$$
to denote expectation of $f: \Ru \mapsto  \mathbb{R}$
with respect to pdf
$\rho$ on $\Ru.$ 
The distribution of the random variables in these notes 
 will often have density with respect to \index{Lebesgue}Lebesgue
measure, but occasional use of \index{Dirac}Dirac masses will be required; we will
use the notational convention that \index{Dirac}Dirac mass at point $v$
has ``density'' $\delta(\cdot-v)$, also denoted by $\delta_v(\cdot).$ When a random
variable $u$ has pdf $\rho$
we will write $u \sim \rho.$ We use $\Rightarrow$ to denote \index{weak convergence}weak
convergence of probability measures,  that is, $\rho_n \Rightarrow \rho$ if $\rho_n(f) \to \rho(f)$ for all bounded and continuous $f: \Ru \mapsto  \R$.

\subsection*{Acknowledgments}
These notes were created in \LaTeX \, by the students in ACM 159, 
based on lectures presented by the instructor Andrew Stuart, and 
on input from the course TA Armeen Taeb. The authors are very grateful to
these students, without whom the notes would not exist.
The individuals responsible for 
typesetting the notes, listed in alphabetic order, are: Blancquart, Paul; Cai, Karena; Chen, Jiajie; Cheng, Richard; Cheng, Rui; Feldstein, Jonathan; Huang, De; Id{\'i}ni, Benjamin; Kovachki, Nikola; Lee, Marcus; Levy, Gabriel; Li, Liuchi; Muir, Jack; Ren, Cindy; Seylabi, Elnaz; Sch{\"a}fer, Florian; Singhal, Vipul; Stephenson, Oliver; Song, Yichuan;  Su, Yu; Teke, Oguzhan; Williams, Ethan; Wray, Parker; Zhan, Eric; Zhang, Shumao; Xiao, Fangzhou. Furthermore, the following students added content to the notes, beyond the materials presented by the
instructors: Parker Wray --  created an early draft of the Overview; Jiajie Chen --  found an  alternative proof of early presentations of under-determined inverse problems and smoothing in Gaussian data assimilation; Fangzhou Xiao --  providing numerical
illustrations of  \index{prior}prior, \index{likelihood}likelihood and \index{posterior}posterior; Elnaz Seylabi and Fangzhou Xiao -- catching  many  typographical errors in an  early  draft of these notes; Cindy Ren -- numerical simulations to enhance understanding of \index{importance sampling}importance sampling;
Cindy Ren and De Huang --  improving the constants in initial presentations of the approximation error  of \index{importance sampling}importance sampling; Richard Cheng and Florian Sch{\"a}fer -- illustrations to enhance understanding of the coupling argument used to study convergence of \index{MCMC}MCMC algorithms by presenting the finite \index{state-space!finite}state-space case; and Ethan Williams and Jack Muir -- numerical simulations and illustrations of \index{Kalman filter!ensemble}ensemble Kalman filter and \index{Kalman filter!extended}extended Kalman filter that appeared in an early version of these notes.
The authors are also grateful to Tapio Helin (LUT University) 
who used the notes in his own course and
provided very helpful feedback on an early draft. Finally, the authors
are thankful to Yuming Chen, Andrew Dennehy, Ruoxi Jiang, Phillip Lo, and Walter Zhang (University of Chicago)
and Eitan Levin (Caltech) for their generous feedback; they are also grateful to
Hwanwoo Kim (University of Chicago) for making
substantial improvements to the figures initially provided by the individuals
listed above.

The work of Daniel Sanz-Alonso has been funded by DOE, NGIA, and NSF (USA), and by FBBVA
(Spain). The work of Andrew Stuart has been funded by AFOSR, ARL, DoD, NIH, NSF, and ONR (USA), by EPSRC (UK), and by ERC (EU). The work of Armeen Taeb has been funded by the Resnick Fellowship (USA) and by the ETH Foundations of Data Science (Switzerland).  All of this funded research has helped to shape the 
presentation of the material in these notes and is gratefully acknowledged.   \\[0.1in]


\newpage \tableofcontents

\part{Inverse Problems}

 \chapter{\Large{\sffamily{Bayesian Inverse Problems and Well-Posedness}}} \label{ch1}
In this chapter we introduce the \index{Bayesian}Bayesian approach to inverse problems  in which  the unknown parameter and the observed data are viewed as random variables. In this probabilistic formulation, the solution of the inverse problem is the \index{posterior}posterior distribution  on the
parameter given the data.  
 We will show that the \index{Bayesian}Bayesian formulation leads to a form of \index{well-posed}well-posedness: small perturbations of the \index{forward!model}forward model or the observed data translate into small perturbations of the \index{posterior}posterior distribution. \index{well-posed}Well-posedness requires a
notion of \index{distance}distance between probability measures. We introduce the \index{distance!total variation}total 
variation  and \index{distance!Hellinger}Hellinger distances, giving characterizations of them,
and bounds relating them, that will be used throughout these notes.
We prove \index{well-posed}well-posedness in the \index{distance!Hellinger}Hellinger distance.

The chapter is organized as follows. Section \ref{sec:11} introduces the formulation of \index{Bayesian!inverse problem}Bayesian inverse problems.  In Section \ref{sec:12} we derive a formula for the \index{posterior}posterior pdf and explain how several estimators for the unknown parameter can be obtained using the \index{posterior}posterior. Section \ref{sec:13} describes the \index{well-posed}well-posedness of the \index{Bayesian}Bayesian formulation together with the necessary background on distances between probability measures. The chapter closes with bibliographical remarks in Section \ref{sec:14}.

\section{Formulation of \index{Bayesian!inverse problem}Bayesian Inverse Problems}\label{sec:11}
We consider the following setting. We let $G:  \Ru \to \Ry$ define
the \index{forward!model}forward model and aim to recover an unknown parameter $u \in \Ru$ from data $y \in \Ry$ given by
\begin{equation}
 y = G(u) + \eta, 
\label{eq:jc0}
\end{equation}
where $\eta \in \Ry$ represents \index{observation!noise}observation noise.
We view $(u, y ) \in \Ru \times \Ry$ as a random variable, whose distribution is specified by means of the following assumption on the distribution of $(u,\eta) \in \Ru \times \Ry$ and the relationship between $u,$ $y$ and $\eta$ postulated in equation \eqref{eq:jc0}.
\begin{assumption}\label{a:jc1}
The distribution of the random variable $(u,\eta) \in \Ru \times \Ry$ is defined by:
 \begin{itemize}
 \item $u \sim \pr(u), u \in \Ru $.
 \item $\eta \sim \noise(\eta), \eta \in \Ry$.
 \item $ u$ and $\eta$ are independent, written $u \perp \eta.$
 \end{itemize}
\end{assumption}
Here $\pr$ and $\noise$ describe the pdfs of the random variables $u$ and $\eta,$ respectively. 
Then $\pr(u)$ is called the \index{prior}\emph{prior} pdf and, for each fixed $u\in \Ru,$  $y |  u \sim \noise\bigl(y - G(u)\bigr)$ determines the \index{likelihood}\emph{likelihood} function.  In this
probabilistic perspective, the solution to the inverse problem is the conditional distribution of $u$ given $y$, which is called the \index{posterior}\emph{posterior} distribution, and will be denoted by $u | y\sim \post^y(u).$  The \index{posterior}posterior pdf determines, for any candidate
 parameter value in $\Ru$, how probable that parameter is, based on \index{prior}prior assumptions
and the link between parameter and data, all expressed probabilistically. In particular,
the \index{posterior}posterior contains information about the level of uncertainty in the parameter 
recovery: for instance, large \index{posterior}posterior covariance typically
indicates that the data contains insufficient information 
to accurately recover the input parameter.

 \section{Formula for \index{posterior}Posterior pdf: \index{Bayes theorem}Bayes Theorem}\label{sec:12}
\index{Bayes theorem}Bayes theorem is a bridge connecting the \index{prior}prior, the \index{likelihood}likelihood and the \index{posterior}posterior.
 \begin{theorem}[Bayes Theorem]
\label{t:bayes}
 Let Assumption \ref{a:jc1} hold, and assume that
\[
 Z = Z(y) :=\int_{\Ru} \noise\bigl( y - G(u)\bigr)\pr(u) du  >0.\]
 Then $u | y\sim \post^y(u),$ where
\begin{equation}\label{eq:bayesformula}
 \post^y(u) = \frac{1}{Z} \noise\bigl(y - G(u)\bigr)\pr(u).
\end{equation}
 \end{theorem}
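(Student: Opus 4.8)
The plan is to derive the posterior by constructing the joint density of the pair $(u,y)$, marginalizing to obtain the density of $y$, and then invoking the definition of conditional density. First I would exploit the independence in Assumption~\ref{a:jc1}: since $u \sim \pr$ and $\eta \sim \noise$ are independent, the pair $(u,\eta) \in \Ru \times \Ry$ has joint pdf $\pr(u)\noise(\eta)$.

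Next I would change variables from $(u,\eta)$ to $(u,y)$ through the map induced by the observation model $y = G(u) + \eta$. The inverse map $(u,y) \mapsto (u, y - G(u))$ has a block lower-triangular Jacobian with identity blocks on the diagonal, and hence unit determinant regardless of the (possibly nonlinear) forward model $G$. Consequently the joint pdf of $(u,y)$ is $\pr(u)\noise\bigl(y - G(u)\bigr)$. This step is the crux of the argument, since it is where the shear structure of the observation model enters and allows the Jacobian factor to disappear.

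Integrating the joint density over $u \in \Ru$ then yields the marginal pdf of $y$, which is exactly $Z(y)$; the hypothesis $Z(y) > 0$ is precisely what guarantees that the subsequent division is meaningful. Finally, the definition of the conditional pdf gives
\[
\post^y(u) = \Prob(u \mid y) = \frac{\pr(u)\,\noise\bigl(y - G(u)\bigr)}{\int_{\Ru}\pr(u')\,\noise\bigl(y - G(u')\bigr)\,du'} = \frac{1}{Z}\,\noise\bigl(y - G(u)\bigr)\,\pr(u),
\]
which is the claimed formula~\eqref{eq:bayesformula}.

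I expect the main obstacle to be the careful justification of the change of variables: confirming that the nonlinear forward model $G$ contributes a unit Jacobian, and tracking the well-definedness of the conditional density throughout under the positivity of $Z$. Everything after that is routine bookkeeping with Lebesgue densities, so I would keep the emphasis on the joint-density construction and relegate the marginalization and normalization to brief remarks.
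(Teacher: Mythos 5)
Your proposal is correct, and it reaches \eqref{eq:bayesformula} by a somewhat different route than the notes. The proof in the text works entirely at the level of conditional densities: it writes $\Prob(u,y)=\Prob(u|y)\Prob(y)=\Prob(y|u)\Prob(u)$, divides, and then simply \emph{identifies} $\Prob(y|u)$ with $\noise\bigl(y-G(u)\bigr)$, an identification that is declared as part of the model formulation rather than derived. You instead construct the joint density of $(u,y)$ from first principles, via the change of variables $(u,\eta)\mapsto(u,G(u)+\eta)$ with unit Jacobian, and only then marginalize and condition. Your version therefore actually \emph{proves} the likelihood formula that the text assumes, which is a genuine gain in rigor; the cost is that the Jacobian argument, as you state it, implicitly requires $G$ to be differentiable so that the change-of-variables theorem applies. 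That hypothesis is avoidable: for each fixed $u$ the map $\eta\mapsto G(u)+\eta$ is a translation, so the conditional law of $y$ given $u$ has density $\noise\bigl(y-G(u)\bigr)$ by translation invariance of Lebesgue measure, and the joint density $\pr(u)\noise\bigl(y-G(u)\bigr)$ then follows from independence and Fubini with $G$ merely measurable. With that small repair your argument is complete, and the remaining steps --- marginalizing to get $\Prob(y)=Z$, using $Z>0$ to justify the division --- match the text's treatment of the normalization.
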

 \begin{proof}
 Denote by $\Prob (\cdot)$ the pdf of a random variable and by $\Prob(\cdot | \cdot )$ its conditional pdf. 
We have
 \[
 \begin{aligned}
 \Prob(u , y) &=  \Prob (u | y) \Prob(y),  \textrm{ if } \Prob(y) > 0 ,\\
 \Prob(u, y) &=  \Prob (y | u ) \Prob(u),  \textrm{ if } \Prob(u) > 0 .\\
\end{aligned}
 \]
 Note that the marginal pdf on $y$ is given by
 \begin{align*}
 \Prob(y) &= \int_{\Ru} \Prob( u, y) du  \\
 & = \int_{\Ru} \Prob(y|u) \Prob(u) du = Z >0.
 \end{align*}Then
 \begin{equation}
\label{eq:jc1}
 \Prob(u | y) = \frac{1}{\Prob(y) } \Prob(y | u) \Prob(u)   =  \frac{1}{\Prob(y)}\noise\bigl(y - G(u )\bigr) \pr(u)
 \end{equation}
 for both $\Prob(u)=\rho(u) > 0$ and $\Prob(u)=\rho(u) = 0$. 
\end{proof}

We will often denote the \index{likelihood}likelihood function by 
$\like(u) := \noise \bigl(y - G(u)\bigr)$. We then write 
$$ \post^y(u) = \frac{1}{Z} \like(u) \pr(u),$$
omitting the data $y$ in the \index{likelihood}likelihood function; when no confusion arises
we will also simply write $\post(u)$ for  the \index{posterior}posterior pdf, rather than
$\post^y(u).$

\begin{remark}
The proof of Theorem \ref{t:bayes} shows that in order to apply \index{Bayes formula}Bayes formula \eqref{eq:bayesformula} one needs to guarantee that the normalizing constant $\Prob(y) = Z$ 
is positive; in other words, the marginal density of the observed data $y$ needs 
to be positive.  This is simply the natural assumption that the observed data could 
indeed have been observed, given the probabilistic conditions in Assumption \ref{a:jc1}.
From now on it will be assumed without further notice that $\Prob(y) = Z >0.$ 
Finally, we remark that throughout these notes we will denote normalizing constants generically by $Z,$ and depending on the context the normalizing constant  may sometimes 
be interpreted as the marginal density of an underlying data set. 
\end{remark}

The \index{posterior}posterior distribution $\post^y(u)$ contains all the knowledge on the parameter $u$ available in the \index{prior}prior and the data. In applications it is often useful, however, to summarize the \index{posterior}posterior distribution through a few numerical values. Summarizing the \index{posterior}posterior is particularly important if the parameter is high-dimensional, since then visualizing the \index{posterior}posterior or detecting regions of high \index{posterior}posterior probability is nontrivial. Two natural numerical summaries are the \index{posterior!mean estimator}posterior mean and the \index{posterior}posterior mode. 
\begin{definition}\label{def:map}
The {\em \index{posterior!mean estimator}posterior mean estimator} of $u$ given data $y$ is the mean of the \index{posterior}posterior distribution:
$$\upm = \int_{\Ru} u \post^y(u) \, du.$$ 
	The {\em maximum a posteriori (\index{MAP estimator}MAP) estimator} of $u$ given data $y$ is the mode of the  
\index{posterior}posterior distribution $\post^{\vct{y}}(\vct{u})$,  defined as
	$$\umap =\arg\max_{\vct{u}\in\Ru}\post^{\vct{y}}(\vct{u}).$$
\end{definition}
\noindent This maximum may not be uniquely defined, 
in which case we talk about \emph{a}, 
rather than \emph{the}, \index{MAP estimator}MAP estimator.

The importance of the \index{MAP estimator}MAP and the 
\index{posterior!mean estimator}posterior mean already suggest 
the need to compute maxima (for the MAP estimator) and 
integrals (for the posterior mean) in
order to extract actionable information from the
\index{Bayesian}Bayesian formulation of inverse problems 
and \index{data assimilation}data assimilation.  For this reason, \index{optimization}optimization (to compute
maxima) and \index{sampling}sampling (to compute integrals) will play an important role in these notes.  
In practice it is often useful to quantify the uncertainty in the parameter reconstruction, and numerical summaries such as the \index{posterior!mean estimator}posterior mean and the \index{MAP estimator}MAP estimators can be complemented by \index{credible intervals}credible intervals, that is, parameter regions of prescribed \index{posterior}posterior 
probability. 
In order to make tractable the computation of estimators and credible intervals, the \index{posterior}posterior can be approximated by a simple 
distribution, such as a \index{Gaussian}Gaussian or a \index{Gaussian!mixture} Gaussian
mixture; \index{optimization}optimization can be used to determine such
approximations. In a similar spirit, sampling may be viewed as approximating
the \index{posterior}posterior by a combination of \index{Dirac}Dirac masses to enable computation of integrals.
An \index{optimization}optimization perspective for inverse problems and \index{data assimilation}data assimilation will be studied in Chapters \ref{chap:optimization} and \ref{ch:optfilteringsmoothing}, respectively, and \index{Gaussian!approximation}Gaussian approximations will be discussed in Chapters \ref{ch4} and \ref{lecture10}, respectively; \index{Dirac}Dirac approximations constructed via \index{sampling}sampling will be studied in Chapters \ref{lecture5} and \ref{ch:6} (inverse problems) and in Chapters \ref{ch11} and \ref{ch12} (\index{data assimilation}data assimilation).  

We next consider two simple examples of a direct application of \index{Bayes theorem}Bayes theorem. 

\begin{example}[MAP and Posterior Mean Estimators]
Let $\Du=\Dy=1$, $\eta \sim \noise = \Nc(0, \gamma^2),$ and let
\[
\pr(u) = \begin{cases}
\frac{1}{2} ,  &  u\in (-1,1), \\
0 ,  &  u\in (-1,1 )^c. \\
\end{cases}
\]
Suppose that the \index{observation}observation is generated by $y = u + \eta$. Using \index{Bayes theorem}Bayes Theorem \ref{t:bayes}, we derive the \index{posterior}posterior pdf
\[
\post^y(u) =
 \begin{cases}
 \frac{1}{2Z} \exp(  -\frac{1}{2\gamma^2}|y- u |^2   ),  &  u \in (-1,1) , \\
 0 ,  & u \in (-1,1)^c , \\
 \end{cases}
\]
where $Z$ is a normalizing constant ensuring that $\int_{\R} \post^y(u) du = 1$. 
 Now we find 
the \index{MAP estimator}MAP estimator.  
From the explicit formula for $\post^y$, we have
\[
\umap  = \arg\max_{u \in \R} \post^y(u) = \begin{cases}
y &  \textrm{if } y \in (-1, 1),  \\
-1&  \textrm{if } y  \leq - 1 , \\
1 & \textrm{if } y\geq 1 . \\
\end{cases}
\]
In this example, the \index{prior}prior on $u$ is supported
on $(-1,1)$ and the \index{posterior}posterior on $u|y$ is supported on $(-1,1)$.
If the data lies in $(-1,1)$ then the \index{MAP estimator}MAP estimator is the data itself; otherwise it is the extremal point of
the \index{prior}prior support which matches the sign of the data. The \index{posterior!mean estimator}posterior mean is 
$$\upm = \frac{1}{2Z} \int_{-1}^1 u \exp\Bigl( -\frac{1}{2\gamma^2} |y - u|^2 \Bigr) \, du,$$
which may be  approximated using, for instance, the \index{sampling}sampling methods described in Chapters \ref{lecture5} and \ref{ch:6}. 
\end{example}

The following example illustrates once again the application of \index{Bayes theorem}Bayes theorem, and shows that the \index{posterior}posterior may concentrate near a low-dimensional manifold in the input parameter space $\R^d$. In such a case it is important to understand the geometry of the support of the \index{posterior}posterior density, which cannot be captured by point estimation or \index{Gaussian!approximation}Gaussian approximations.

\begin{example}[Concentration of Posterior on Manifold]
Let $\Du =2, \Dy=1, \pr \in C(\R^2,  \R), $ 
 and suppose that there is $\pr_{\max} >0$ such that, for all  $u \in \R^2,$ $0 < \pr(u) \leq \pr_{\max} <  \infty$. 
  Suppose that the \index{observation}observation is generated by
 \begin{align*}
 y &= G(u) + \eta,\\
G(u) & = u_1^2 + u_2^2,\\
\eta & \sim \noise = \Nc(0, \gamma^2), \quad 0< \gamma \ll 1,
 \end{align*}
and assume that $y > 0$. Using \index{Bayes theorem}Bayes theorem we obtain
the \index{posterior}posterior pdf
\[
\post^y(u) = \frac{1}{Z} \exp\left(  - \frac{1}{2 \gamma^2}| u_1^2 + u_2^2 - y|^2   \right) \pr(u).
\]
We now show that the \index{posterior}posterior concentrates near the manifold
defined by the circumference  $\{ u \in \R^2: u_1^2 + u_2^2=y\}.$
Denote $A^{\pm}:=\{ u \in \R^2 :  | u_1^2 + u_2^2 - y |^2 \leq \gamma^{2 \pm \delta}   \}$, for some fixed $\delta \in (0,2)$. The set $A^-$ is defined so that it captures most of the
\index{posterior}posterior probability, and $A^+$ so that it captures little of the \index{posterior}posterior probability. They are defined this way because the \index{observation!noise}observational  noise has 
variance $\gamma^2$; considering a neighbourhood of the circumference
which  scales as  $\gamma$ raised to a power slightly smaller than $2$ 
captures most of the \index{posterior}posterior probability; considering a 
neighbourhood of the circumference in which the exponent is slightly 
larger than this captures little of the \index{posterior}posterior probability.
 Define $B$ to be the closed ball of radius $2\sqrt{y}$ centered at the origin.
Let $u^+ \in A^+ \subset B, u^- \in (A^-)^c$ and let $\pr_{\min} = \inf_{u\in B} \pr(u)$.  Since $\pr(u)$ is positive and continuous and $B$ is compact, $\pr_{\min} > 0$. Taking the \index{small noise limit}small noise limit yields 
 \[
 \frac{\post^y(u^+)}{ \post^y(u^-) } \geq \exp\left( -\frac{1}{2} \gamma^{\delta} + \frac{1}{2} \gamma^{-\delta} \right)  \frac{\pr_{\min}}{\pr_{\max}}\to \infty, \ \textrm{as } \gamma \to 0^+.
 \]
Therefore, noting that $y>0$, the \index{posterior}posterior $\post^y$ concentrates, as $\gamma \to 0^+,$ on the circumference with radius $\sqrt{y}$.
  \begin{figure}[h]
   \centering
   \includegraphics[width =0.45\textwidth, height = 0.4\textwidth]{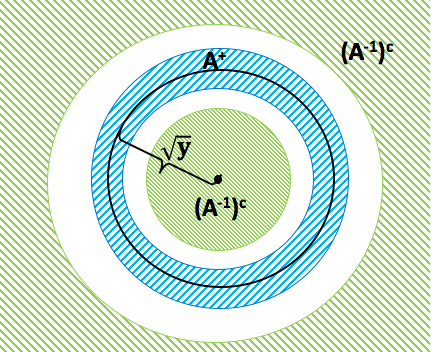}
   \caption{The \index{posterior}posterior measure concentrates on a circumference with radius $\sqrt{y}$. Here, the blue shadow area is $A^+$ and the green shadow area is $(A^-)^c$. }
 \end{figure}
\end{example}


\section{\index{well-posed}Well-Posedness of \index{Bayesian!inverse problem}Bayesian Inverse Problems}\label{sec:13}

In this section we show that the \index{Bayesian}Bayesian formulation of \index{inverse problem}inverse problems
leads to a form of \index{well-posed}well-posedness. 
More precisely, we study the sensitivity of the \index{posterior}posterior pdf to perturbations of the \index{forward!model}forward model $G.$  In many \index{inverse problem}inverse problems the ideal \index{forward!model}forward model $G$ is not accessible but can be approximated by some computable $G_\delta;$ consequently $\post^y$ is replaced by $\post_\delta^y$. An example that is often found in applications, to which the theory contained herein may be generalized,
is when  $G$ is an operator acting on an infinite-dimensional space
which is approximated, for the purposes of computation, by some finite-dimensional operator $G_\delta$. We seek to prove that, under certain assumptions, the small difference between $G$ and $G_\delta$ (\index{forward!error}forward error) leads to a 
similarly small difference between $\post^y$ and $\post^y_\delta$
(inverse error):
      
\vspace{0.1in}
\noindent{\bf Meta Theorem: \index{well-posed}Well-Posedness}\\
       \[|G-G_\delta|=O(\delta)\ \Longrightarrow \ d(\post^y,\post_\delta^y)=O(\delta),\]
{\em for small enough $\delta>0$ and some metric $d(\cdot,\cdot)$ on probability densities.}
\vspace{0.1in}

This result will be formalized in Theorem \ref{thm1} below, which  shows that the $O(\delta)$-convergence of $\post_\delta^y$ with respect to some \index{distance}distance $d(\cdot,\cdot)$ can be guaranteed under certain assumptions on the \index{likelihood}likelihood. We will conclude the chapter by showing an example where these assumptions hold true. In order to discuss these issues we will need to
introduce metrics on probability densities.

\subsection{Metrics on Probability Densities}
Here we introduce the \index{distance!total variation}total variation and the \index{distance!Hellinger}Hellinger distance, both of which have been used to show \index{well-posed}well-posedness results. In this chapter we will use the \index{distance!Hellinger}Hellinger distance to establish \index{well-posed}well-posedness of \index{Bayesian!inverse problem}Bayesian \index{inverse problem}inverse problems, and in Chapter \ref{lecture7} we employ the \index{distance!total variation}total variation distance to establish \index{well-posed}well-posedness of \index{Bayesian}Bayesian formulations of \index{filtering}filtering and \index{smoothing}smoothing in \index{data assimilation}data assimilation.

\begin{definition}\label{def:hellinger}
The \index{distance!total variation} \emph{total variation distance} between two pdfs $\post$ and $\post'$ is defined by 
\[\dtv (\post,\post')\coloneqq\frac{1}{2}\int|\post(u)-\post'(u)|du=\frac{1}{2}\|\post-\post'\|_{L^1}.\]
The \index{distance!Hellinger}\emph{Hellinger distance} between two pdfs $\post$ and $\post'$ is defined by
\[\dhell(\post,\post')\coloneqq\Big(\frac{1}{2}\int|\sqrt{\post(u)}-\sqrt{\post'(u)}|^2du\Big)^{1/2}=\frac{1}{\sqrt{2}}\|\sqrt{\post}-\sqrt{\post'}\|_{L^2}.\]
\end{definition}

In the rest of this subsection we will establish bounds between the \index{distance!Hellinger}Hellinger and \index{distance!total variation}total variation distance, and show how both distances can be used to bound the difference of expected values computed with two different densities; these results will be used in subsequent chapters.  Before doing so, the next lemma motivates our choice of normalization constant $1/2$ for \index{distance!total variation}total variation distance and $1/\sqrt{2}$ for \index{distance!Hellinger}Hellinger distance: they are chosen so that the maximum possible distance between two densities is one.  
The proof also shows that $\post$ and $\post'$ have \index{distance!total variation}total variation and \index{distance!Hellinger}Hellinger distance equal to one if and only if they have disjoint supports, that is, if $\int \post(u)\post'(u) du=0.$

      \begin{lemma} \label{l:dh1}
For any pdfs $\post$ and $\post'$,
       \[0\leq \dtv(\post,\post')\leq 1,\quad 0\leq \dhell(\post,\post')\leq 1.\]
      \end{lemma}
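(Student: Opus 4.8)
The plan is to treat the two distances separately, dispatching both lower bounds immediately and concentrating the actual work on the two upper bounds. The lower bounds are trivial: $\dtv(\post,\post')$ is $\tfrac{1}{2}$ times the integral of the nonnegative integrand $|\post-\post'|$, and $\dhell(\post,\post')$ is the square root of $\tfrac{1}{2}$ times the integral of the nonnegative integrand $|\sqrt{\post}-\sqrt{\post'}|^2$, so each is manifestly $\geq 0$.

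For the total variation upper bound, the key observation is that since $\post$ and $\post'$ are both nonnegative, the pointwise inequality $|\post(u)-\post'(u)|\leq \post(u)+\post'(u)$ holds everywhere. Integrating this and using that each density integrates to one gives
\[
\int |\post(u)-\post'(u)|\,du \leq \int \post(u)\,du + \int \post'(u)\,du = 2,
\]
whence $\dtv(\post,\post')\leq 1$.

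For the Hellinger upper bound, the idea is to expand the square inside the integral:
\[
|\sqrt{\post(u)}-\sqrt{\post'(u)}|^2 = \post(u) - 2\sqrt{\post(u)\,\post'(u)} + \post'(u).
\]
Integrating term by term and again invoking the normalization $\int \post\,du = \int \post'\,du = 1$ yields $\int |\sqrt{\post}-\sqrt{\post'}|^2\,du = 2 - 2\int \sqrt{\post(u)\,\post'(u)}\,du$. Because the cross term $\int \sqrt{\post\,\post'}\,du$ is the integral of a nonnegative function, it is $\geq 0$, so the whole integral is at most $2$; hence $\dhell(\post,\post')^2\leq 1$ and therefore $\dhell(\post,\post')\leq 1$.

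There is no serious obstacle here, as the lemma is elementary. The only point requiring a moment's care is the Hellinger computation: one must remember that both densities are normalized, so the two ``pure'' terms each contribute $1$, and that it is merely the \emph{nonnegativity} of the square-root cross term, not its exact value, that delivers the bound. (Its exact value $\int\sqrt{\post\,\post'}\,du$, the Hellinger affinity, is precisely what would be needed for a sharp lower bound, but that is not required here.)
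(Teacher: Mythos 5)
Your proof is correct and follows essentially the same route as the paper's: the triangle inequality $|\post-\post'|\leq\post+\post'$ for the total variation bound, and expanding $|\sqrt{\post}-\sqrt{\post'}|^2$ and discarding the nonnegative cross term $2\int\sqrt{\post\post'}\,du$ for the Hellinger bound. No issues.
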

      \begin{proof} The lower bounds follow immediately from the 
definitions, so we only need to prove the upper bounds. For \index{distance!total variation}total variation distance
      \[\dtv(\post,\post')=\frac{1}{2}\int|\post(u)-\post'(u)|du\leq \frac{1}{2}\int\post(u)du+\frac{1}{2}\int\post'(u)du=1, \]
      and for \index{distance!Hellinger}Hellinger distance
      \begin{align*}
      \dhell(\post,\post')=&\ \biggl(\frac{1}{2}\int \Bigl|\sqrt{\post(u)}-\sqrt{\post'(u)}\Bigr|^2du\biggr)^{1/2}\\
      =&\ \bigg(\frac{1}{2}\int\Big(\post(u)+\post'(u)-2\sqrt{\post(u)\post'(u)}\,\, \Big)du \bigg)^{1/2}\\
      \leq&\ \Big(\frac{1}{2}\int\big(\post(u)+\post'(u)\big)du\Big)^{1/2}\\
      =&\ 1.\end{align*}
      \end{proof}      
The following result gives bounds between \index{distance!total variation}total variation and \index{distance!Hellinger}Hellinger distance.       

      \begin{lemma} \label{l:dh2} For any pdfs $\post$ and $\post'$,
       \[\frac{1}{\sqrt{2}}\dtv(\post,\post')\leq \dhell(\post,\post')\leq \sqrt{\dtv(\post,\post')}.\]
      \end{lemma}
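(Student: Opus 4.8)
The plan is to establish the two inequalities independently, exploiting throughout the factorization $|\post-\post'| = |\sqrt{\post}-\sqrt{\post'}|\,(\sqrt{\post}+\sqrt{\post'})$, valid since both densities are nonnegative. This single identity links the $L^1$ quantity defining $\dtv$ to the $L^2$ quantity defining $\dhell$, and both bounds fall out of it once it is combined with the Cauchy--Schwarz inequality.

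For the left-hand inequality, which is equivalent to $\dtv(\post,\post')^2 \le 2\,\dhell(\post,\post')^2 = \int |\sqrt{\post}-\sqrt{\post'}|^2\,du$, I would write $\dtv(\post,\post') = \frac{1}{2}\int |\sqrt{\post}-\sqrt{\post'}|\,(\sqrt{\post}+\sqrt{\post'})\,du$ and apply Cauchy--Schwarz to split it as a product of the $L^2$ norms of $|\sqrt{\post}-\sqrt{\post'}|$ and of $\sqrt{\post}+\sqrt{\post'}$. The first factor is exactly $\sqrt{2}\,\dhell(\post,\post')$; the second I would control by expanding $\int(\sqrt{\post}+\sqrt{\post'})^2\,du = 2 + 2\int\sqrt{\post\post'}\,du$ and noting that $\int\sqrt{\post\post'}\,du \le 1$ by a further application of Cauchy--Schwarz (equivalently, this is the content behind Lemma \ref{l:dh1}), so that this factor is at most $2$. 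Collecting the constants gives $\dtv(\post,\post') \le \sqrt{2}\,\dhell(\post,\post')$, as claimed.

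For the right-hand inequality, equivalent to $\dhell(\post,\post')^2 \le \dtv(\post,\post')$, I would argue pointwise: since $0 \le |\sqrt{\post}-\sqrt{\post'}| \le \sqrt{\post}+\sqrt{\post'}$, multiplying these two nonnegative quantities gives $|\sqrt{\post}-\sqrt{\post'}|^2 \le |\sqrt{\post}-\sqrt{\post'}|\,(\sqrt{\post}+\sqrt{\post'}) = |\post-\post'|$. Integrating and multiplying by $\frac{1}{2}$ yields $\dhell(\post,\post')^2 \le \dtv(\post,\post')$ directly, with no further estimates required.

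Neither step presents a genuine obstacle; the only points needing care are the bookkeeping of the normalization constants $\tfrac{1}{2}$ and $\tfrac{1}{\sqrt{2}}$ so that the final inequalities emerge with exactly the stated factors, and the observation that $\int\sqrt{\post\post'}\,du\le 1$ is what supplies the crucial bound on $\|\sqrt{\post}+\sqrt{\post'}\|_{L^2}$ in the first part.
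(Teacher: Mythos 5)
Your proposal is correct and follows essentially the same route as the paper: both inequalities come from the factorization $|\post-\post'|=|\sqrt{\post}-\sqrt{\post'}|\,(\sqrt{\post}+\sqrt{\post'})$, with Cauchy--Schwarz for the left-hand bound and the pointwise comparison $|\sqrt{\post}-\sqrt{\post'}|\le\sqrt{\post}+\sqrt{\post'}$ for the right-hand one. The only cosmetic difference is that you bound $\|\sqrt{\post}+\sqrt{\post'}\|_{L^2}$ by expanding the square and using $\int\sqrt{\post\post'}\,du\le 1$, whereas the paper uses $(a+b)^2\le 2(a^2+b^2)$; both yield the same constant.
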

      \begin{proof} Using the \index{Cauchy-Schwarz inequality}Cauchy–Schwarz inequality 
      \begin{align*}
      \dtv(\post,\post')=&\ \frac{1}{2}\int\Big|\sqrt{\post(u)}-\sqrt{\post'(u)}\Big|\Big|\sqrt{\post(u)}+\sqrt{\post'(u)}\Big|du\\
      \leq&\ \biggl(\frac{1}{2}\int\Big|\sqrt{\post(u)}-\sqrt{\post'(u)}\Big|^2du\biggr)^{1/2}\biggl(\frac{1}{2}\int\Big|\sqrt{\post(u)}+\sqrt{\post'(u)}\Big|^2du\biggr)^{1/2}\\
      \leq&\ \dhell(\post,\post')\biggl(\frac{1}{2}\int\big(2\post(u)+2\post'(u)\big)du\biggr)^{1/2}\\
      =&\ \sqrt{2}\dhell(\post,\post').
      \end{align*}
      Notice that $|\sqrt{\post(u)}-\sqrt{\post'(u)}|\leq |\sqrt{\post(u)}+\sqrt{\post'(u)}|$ since $\sqrt{\post(u)},\sqrt{\post'(u)}\geq0$. Thus we have
      \begin{align*}
      \dhell(\post,\post')=&\ \biggl(\frac{1}{2}\int\Big|\sqrt{\post(u)}-\sqrt{\post'(u)}\Big|^2du\biggr)^{1/2}\\
      \leq&\ \biggl(\frac{1}{2}\int\Big|\sqrt{\post(u)}-\sqrt{\post'(u)}\Big|\Big|\sqrt{\post(u)}+\sqrt{\post'(u)}\Big|du\biggr)^{1/2}\\
      \leq&\ \Big(\frac{1}{2}\int\big|\post(u)-\post'(u)\big|du\Big)^{1/2}\\
      =&\ \sqrt{\dtv(\post,\post')}.
      \end{align*}
      \end{proof}

The following two lemmas show that if two densities are close in \index{distance!total variation}total variation or in \index{distance!Hellinger}Hellinger distance, expectations computed with respect to both densities are also close. In addition, the following lemma also provides a useful characterization of the \index{distance!total variation}total variation distance that will be used repeatedly throughout these notes. 

      \begin{lemma} \label{lemmatv} Let $f$ be a function such that $
|f|_\infty:=\sup_{u\in \Ru}|f(u)|<\infty.$ It holds that
      \[\big|\Expect^\post[f]-\Expect^{\post'}[f]\big|\leq 2|f|_\infty \dtv(\post,\post').\]
Moreover, the following \index{variational!characterization of total variation}variational characterization of the \index{distance!total variation}total variation distance holds:
\begin{equation}
\dtv(\post, \post') = \frac12 \sup_{|f|_\infty \le 1} \big|\Expect^\post[f]-\Expect^{\post'}[f]\big|.
\end{equation} 
      \end{lemma}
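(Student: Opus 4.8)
The plan is to dispatch the two claims separately, both flowing from the single identity $\Expect^\post[f]-\Expect^{\post'}[f]=\int f(u)\bigl(\post(u)-\post'(u)\bigr)\,du$. For the inequality I would simply bound the integrand pointwise: pulling the absolute value inside the integral and using $|f(u)|\leq |f|_\infty$ gives
\[
\bigl|\Expect^\post[f]-\Expect^{\post'}[f]\bigr|\leq |f|_\infty\int|\post(u)-\post'(u)|\,du = 2|f|_\infty\,\dtv(\post,\post'),
\]
where the last equality is merely the definition of $\dtv$. That settles the first claim with no further work.

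For the variational characterization I would prove two matching inequalities. The direction $\frac12\sup_{|f|_\infty<1}\bigl|\Expect^\post[f]-\Expect^{\post'}[f]\bigr|\leq \dtv(\post,\post')$ is immediate from the bound just established: for every admissible $f$ we have $\frac12\bigl|\Expect^\post[f]-\Expect^{\post'}[f]\bigr|\leq |f|_\infty\,\dtv(\post,\post')\leq \dtv(\post,\post')$, and taking the supremum over $\{|f|_\infty<1\}$ preserves this.

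The reverse inequality is where the only genuine subtlety lies, and it is the step I would flag as the main obstacle. The natural optimal test function is $f=\mathrm{sign}(\post-\post')$, for which $f(u)\bigl(\post(u)-\post'(u)\bigr)=|\post(u)-\post'(u)|$ pointwise, so that $\frac12\bigl|\Expect^\post[f]-\Expect^{\post'}[f]\bigr|=\frac12\int|\post-\post'|=\dtv(\post,\post')$. The difficulty is that this $f$ has $|f|_\infty=1$, which violates the \emph{strict} constraint $|f|_\infty<1$ defining the supremum, so the supremum is approached but need not be attained. I would circumvent this by scaling: setting $f_\eps=(1-\eps)\,\mathrm{sign}(\post-\post')$ yields $|f_\eps|_\infty=1-\eps<1$ and $\frac12\bigl|\Expect^\post[f_\eps]-\Expect^{\post'}[f_\eps]\bigr|=(1-\eps)\,\dtv(\post,\post')$, and letting $\eps\to 0^+$ shows the supremum is at least $\dtv(\post,\post')$. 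Combining the two inequalities gives the claimed equality. A minor bookkeeping remark is that $\mathrm{sign}$ should be assigned some fixed value (say $0$) on the set $\{\post=\post'\}$, but that set contributes nothing to either integral, so the argument is unaffected.
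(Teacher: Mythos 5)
Your proof is correct and follows essentially the same route as the paper: the pointwise bound $|f|\le|f|_\infty$ for the first claim, and the test function $\mathrm{sign}(\post-\post')$ for the variational characterization. The one place you improve on the paper is the strict constraint $|f|_\infty<1$: the paper's own proof simply exhibits an $f$ with $|f|_\infty=1$ and declares equality attained, silently treating the supremum as if it were over $|f|_\infty\le 1$, whereas your scaling argument with $f_\eps=(1-\eps)\,\mathrm{sign}(\post-\post')$ handles the constraint as literally stated.
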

      \begin{proof}
      For the first part of the lemma, note that 
     	\begin{align*}
     	\Big|\Expect^\post[f]-\Expect^{\post'}[f]\Big|=&\ \Big|\int_{\Ru}f(u)\bigl(\post(u)-\post'(u)\bigr)du\Big|\\
     	\leq&\ 2|f|_\infty\cdot\frac{1}{2}\int_{\Ru}|\post(u)-\post'(u)|du\\
     	=&\ 2|f|_\infty\dtv(\post,\post').
     	\end{align*}
     	This in particular shows that, for any $f$ with $|f|_\infty =1,$ 
     	 $$\dtv(\post, \post') \ge \frac12 \big|\Expect^\post[f]-\Expect^{\post'}[f]\big|.$$
     Our goal now is to show a choice of $f$ with $|f|_\infty =1$ that achieves equality.	Define $f(u):= \text{sign} \Bigl( \post(u)- \post'(u) \Bigr)$, so that $f(u) \Bigl( \post(u)- \post'(u) \Bigr) = |\post(u) - \post'(u)|.$ Then it holds that $|f|_\infty =1,$ and 
     	\begin{align*}
     	\dtv(\post, \post') &= \frac12 \int_{\Ru} |\post(u) - \post'(u)| \, du \\
     	&= \frac12 \int_{\Ru} f(u) \Bigl( \post(u) - \post'(u) \Bigr) \, du \\
     	& = \frac12 \Big|\Expect^\post[f]-\Expect^{\post'}[f]\Big|.
     	\end{align*}
     	This completes the proof of the \index{variational!characterization of total variation}variational characterization.
      \end{proof}
 
      \begin{lemma} \index{distance!Hellinger} \label{Lemma: hellinger} Let $f$ be a function such that 
$f_2:=\bigl(\Expect^\post[|f|^2]+\Expect^{\post'}[|f|^2]\bigr)^{\frac12}< \infty$. It holds that
      \[\big|\Expect^\post[f]-\Expect^{\post'}[f]\big|\leq 2 f_2 \dhell(\post,\post').\]
      \end{lemma}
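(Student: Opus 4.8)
The plan is to mimic the Cauchy--Schwarz argument used in Lemma~\ref{l:dh2}, but now weighting one factor by $f$. First I would write the difference of expectations as a single integral,
\[
\Expect^\post[f]-\Expect^{\post'}[f]=\int_{\Ru}f(u)\bigl(\post(u)-\post'(u)\bigr)\,du,
\]
and then exploit the algebraic factorization $\post-\post'=\bigl(\sqrt{\post}-\sqrt{\post'}\bigr)\bigl(\sqrt{\post}+\sqrt{\post'}\bigr)$ that already drove the proof of Lemma~\ref{l:dh2}. The point of this factorization is to manufacture a factor $\sqrt{\post}-\sqrt{\post'}$ whose $L^2$ norm is exactly the Hellinger distance, while parking the function $f$ together with the benign factor $\sqrt{\post}+\sqrt{\post'}$.

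Next I would apply the Cauchy--Schwarz inequality to the product, grouping $\sqrt{\post}-\sqrt{\post'}$ into one factor and $f\bigl(\sqrt{\post}+\sqrt{\post'}\bigr)$ into the other:
\[
\Bigl|\Expect^\post[f]-\Expect^{\post'}[f]\Bigr|\leq\Bigl(\int\bigl|\sqrt{\post}-\sqrt{\post'}\bigr|^2du\Bigr)^{1/2}\Bigl(\int |f|^2\bigl|\sqrt{\post}+\sqrt{\post'}\bigr|^2du\Bigr)^{1/2}.
\]
The first factor is immediately recognized as $\sqrt{2}\,\dhell(\post,\post')$ straight from the definition of the Hellinger distance. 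For the second factor I would use the elementary bound $\bigl(\sqrt{\post}+\sqrt{\post'}\bigr)^2\leq 2\bigl(\post+\post'\bigr)$, which is just $(a+b)^2\leq 2(a^2+b^2)$ applied pointwise with $a=\sqrt{\post(u)}$ and $b=\sqrt{\post'(u)}$.

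With that bound the second factor becomes
\[
\Bigl(\int |f|^2\bigl|\sqrt{\post}+\sqrt{\post'}\bigr|^2du\Bigr)^{1/2}\leq\Bigl(2\int |f|^2(\post+\post')\,du\Bigr)^{1/2}=\sqrt{2}\,\bigl(\Expect^\post[|f|^2]+\Expect^{\post'}[|f|^2]\bigr)^{1/2}=\sqrt{2}\,f_2,
\]
and multiplying the two factors yields $\sqrt{2}\,\dhell\cdot\sqrt{2}\,f_2=2f_2\,\dhell(\post,\post')$, as claimed. I do not anticipate a serious obstacle here; the only step requiring mild care is the pointwise inequality $\bigl(\sqrt{\post}+\sqrt{\post'}\bigr)^2\leq 2(\post+\post')$, which is what converts the awkward cross term $2\sqrt{\post\post'}$ into the clean sum $\post+\post'$ and thereby lets the hypothesis $f_2^2=\Expect^\post[|f|^2]+\Expect^{\post'}[|f|^2]<\infty$ close the estimate. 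The finiteness assumption on $f_2$ is exactly what guarantees the Cauchy--Schwarz application is not vacuous.
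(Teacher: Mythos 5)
Your proposal is correct and follows essentially the same route as the paper's proof: factor $\post-\post'$ as $\bigl(\sqrt{\post}-\sqrt{\post'}\bigr)\bigl(\sqrt{\post}+\sqrt{\post'}\bigr)$, apply Cauchy--Schwarz with $f$ grouped alongside $\sqrt{\post}+\sqrt{\post'}$, and bound $\bigl(\sqrt{\post}+\sqrt{\post'}\bigr)^2\leq 2(\post+\post')$. The only cosmetic difference is how the factor of $2$ is distributed between the two Cauchy--Schwarz factors; the constants come out identically.
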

      \begin{proof}
      Using the \index{Cauchy-Schwarz inequality}Cauchy–Schwarz inequality 
     	\begin{align*}
     	\Big|\Expect^\post[f]-\Expect^{\post'}[f]\Big|=&\ \biggl|\int_{\Ru}f(u)\Bigl(\sqrt{\post(u)}-\sqrt{\post'(u)}\Bigr)\Bigl(\sqrt{\post(u)}+\sqrt{\post'(u)} \, \Bigr) \,du\biggr|\\
     	\leq&\ \biggl(\frac{1}{2}\int \Big|\sqrt{\post(u)}-\sqrt{\post'(u)}\Big|^2du\biggr)^{1/2} \biggl(2\int|f(u)|^2\Big|\sqrt{\post(u)}+\sqrt{\post'(u)}\Big|^2 \,du\biggr)^{1/2}\\
     	\le &\ \dhell(\post,\post')\Big(4\int|f(u)|^2\bigl(\post(u)+\post'(u)\bigr)du\Big)^{1/2}\\
     	= &\ 2f_2 \,\dhell(\post,\post').
     	\end{align*}
      \end{proof}
      
\begin{remark}
Note  that the result for \index{distance!Hellinger}Hellinger only assumes that $f$ is square integrable with respect to $\post$ and $\post'$. In contrast, the result for \index{distance!total variation}total variation distance assumes that $f$ is bounded, which is a stronger condition. 
Lemma \ref{l:dh2} also demonstrates that smallness in 
the Hellinger metric is a more stringent condition than smallness in
\index{distance!total variation}total variation. 
Our aim in the following section is to show  \index{well-posed}well-posedness in some metric on probability densities. 
The preceding observations suggest that establishing such a result 
in the \index{distance!Hellinger}Hellinger metric makes a stronger statement than doing so in \index{distance!total variation}total
variation.
\end{remark}

\subsection{Approximation Theorem}
We denote by 
\[\like(u)=\noise\bigl(y-G(u)\bigr)\quad \text{and}\quad \like_\delta(u)=\noise\bigl(y-G_\delta(u)\bigr)\]
the \index{likelihood}likelihoods associated with $G(u)$ and $G_\delta(u),$ so that
\[\post^y(u)=\frac{1}{Z}\like(u)\pr(u)\quad \text{and}\quad \post_\delta^y(u)=\frac{1}{Z_\delta}\like_\delta(u)\pr(u),\]
where $Z, Z_\delta>0$ are the corresponding normalizing constants. Before we proceed to our main result, we make some assumptions.
\begin{assumption}\label{a:dh1}
There exist $\delta^+>0$ and $K_1,K_2<\infty$ such that, for all $\delta\in (0,\delta^+),$ 
	\begin{itemize}
	\item[(i)] $|\sqrt{\like(u)} - \sqrt{\like_\delta(u)}|\leq \phi(u)\delta$, for some $\phi(u)$ 
such that $\Expect^{\pr} [\phi^2(u)] \le K_1^2$;
	\item[(ii)] $\sup_{u\in\Ru}(|\sqrt{\like(u)}|+|\sqrt{\like_\delta(u)}|) \le K_2.$
	\end{itemize}
\end{assumption}

\begin{remark}
Assumption \ref{a:dh1} only involves conditions on the \index{likelihood}likelihood $\like$ and the approximate \index{likelihood}likelihood $\like_\delta.$ Our presentation in this chapter emphasizes the
situation in which this approximation is necessitated in order to approximate the \index{forward!model}forward 
model $G.$ However, another important scenario which is covered by the theory is approximation due 
to perturbations of the data $y.$ As an example, we will establish in Chapter \ref{lecture7}  a \index{well-posed}well-posedness result that guarantees stability of \index{Bayesian}Bayesian \index{smoothing}smoothing under perturbations of the data. 
More generally, the theoretical framework introduced here is very flexible, and it may be employed to
study the stability of many \index{Bayesian}Bayesian formulations of \index{inverse problem}inverse problems and \index{data assimilation}data assimilation under a wide range of perturbations. 
\end{remark}

Now we state the main result of this section:
\begin{theorem}[\index{well-posed}Well-Posedness of \index{posterior}Posterior] Under Assumption \ref{a:dh1} we have
      \[\dhell(\post^y,\post_\delta^y)\leq c \delta,\quad \delta\in(0,\Delta),\]
      for some $\Delta>0$ and some $c\in(0,+\infty)$ independent of $\delta$.
      \label{thm1}\\
\end{theorem}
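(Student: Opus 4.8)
The plan is to work directly from the definition of the Hellinger distance, exploiting the shared prior factor in $\post^y = Z^{-1}g\pr$ and $\post_\delta^y = Z_\delta^{-1}g_\delta\pr$. Writing
\[
\sqrt{\post^y(u)}-\sqrt{\post_\delta^y(u)}
= \sqrt{\pr(u)}\left(\frac{\sqrt{g(u)}}{\sqrt{Z}}-\frac{\sqrt{g_\delta(u)}}{\sqrt{Z_\delta}}\right),
\]
I would insert the intermediate quantity $\sqrt{g_\delta(u)}/\sqrt{Z}$ to separate a ``likelihood'' part from a ``normalizing constant'' part:
\[
\frac{\sqrt{g}}{\sqrt{Z}}-\frac{\sqrt{g_\delta}}{\sqrt{Z_\delta}}
= \frac{\sqrt{g}-\sqrt{g_\delta}}{\sqrt{Z}}
+ \sqrt{g_\delta}\left(\frac{1}{\sqrt{Z}}-\frac{1}{\sqrt{Z_\delta}}\right).
\]
Applying $(a+b)^2\le 2a^2+2b^2$ and integrating against $\pr$, I obtain $2\dhell(\post^y,\post_\delta^y)^2 \le I_1 + I_2$, where
\[
I_1 = \frac{2}{Z}\int (\sqrt{g}-\sqrt{g_\delta})^2\pr\, du, \qquad
I_2 = 2\left(\frac{1}{\sqrt{Z}}-\frac{1}{\sqrt{Z_\delta}}\right)^2\int g_\delta\, \pr\, du.
\]

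The first term is handled immediately by Assumption (i): since $|\sqrt{g}-\sqrt{g_\delta}|\le \phi(u)\delta$, I get $I_1 \le \frac{2}{Z}\int \phi^2\delta^2\pr\, du = \frac{2}{Z}\Expect^{\pr}[\phi^2]\,\delta^2 \le \frac{2K_1}{Z}\delta^2$.

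The second term, which I expect to be the main obstacle, reduces to controlling the difference of normalizing constants. Using $\int g_\delta\pr\, du = Z_\delta$ and the algebraic identity $\left(\frac{1}{\sqrt{Z}}-\frac{1}{\sqrt{Z_\delta}}\right)^2 Z_\delta = \frac{(\sqrt{Z}-\sqrt{Z_\delta})^2}{Z}$, the term collapses to $I_2 = \frac{2}{Z}(\sqrt{Z}-\sqrt{Z_\delta})^2$, so it suffices to prove $|\sqrt{Z}-\sqrt{Z_\delta}| = O(\delta)$. I would first bound $|Z-Z_\delta|$ by factoring the difference of likelihoods and invoking both assumptions:
\[
|Z-Z_\delta| \le \int |\sqrt{g}-\sqrt{g_\delta}|\,(\sqrt{g}+\sqrt{g_\delta})\pr\, du
\le K_2\delta \int \phi\, \pr\, du \le K_2\sqrt{K_1}\,\delta,
\]
where (i) controls $|\sqrt{g}-\sqrt{g_\delta}|$, (ii) bounds the factor $\sqrt{g}+\sqrt{g_\delta}\le K_2$, and Cauchy–Schwarz gives $\Expect^{\pr}[\phi]\le\sqrt{\Expect^{\pr}[\phi^2]}\le\sqrt{K_1}$ (here $\pr$ being a probability density is used). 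Passing from $|Z-Z_\delta|$ to $|\sqrt{Z}-\sqrt{Z_\delta}|$ relies on the identity $\sqrt{Z}-\sqrt{Z_\delta} = (Z-Z_\delta)/(\sqrt{Z}+\sqrt{Z_\delta})$ together with the lower bound $\sqrt{Z}+\sqrt{Z_\delta}\ge\sqrt{Z}>0$; this is precisely where the standing positivity $Z=\Prob(y)>0$ is essential. The outcome is $(\sqrt{Z}-\sqrt{Z_\delta})^2 \le \frac{K_1K_2^2}{Z}\delta^2$, hence $I_2 \le \frac{2K_1K_2^2}{Z^2}\delta^2$.

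Combining the two estimates yields $\dhell(\post^y,\post_\delta^y)^2 \le K_1\big(\tfrac{1}{Z}+\tfrac{K_2^2}{Z^2}\big)\delta^2$, so the claim holds with $c=\sqrt{K_1(Z^{-1}+K_2^2 Z^{-2})}$ and $\tilde{\delta}^+=\delta^+$; the constant $c$ depends only on $K_1,K_2,Z$ and is manifestly independent of $\delta$. The crux of the whole argument is the normalizing-constant estimate: it is the unique point where (i) and (ii) are needed simultaneously and where $Z>0$ keeps $c$ finite. It is worth noting that no lower bound on $Z_\delta$ is required, because the factor $Z_\delta$ arising from $\int g_\delta\pr\,du$ cancels favorably against the $1/(ZZ_\delta)$ produced by the normalizing-constant difference.
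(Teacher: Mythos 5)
Your proof is correct, and it follows the same overall strategy as the paper: split the error into a likelihood-perturbation part and a normalizing-constant part, and control $|Z-Z_\delta|$ by $\sqrt{K_1}K_2\,\delta$ using Assumptions (i) and (ii) together with Cauchy--Schwarz. The one genuine difference is the choice of intermediate term. The paper inserts $\sqrt{g\pr/Z_\delta}$ and applies the $L^2$ triangle inequality, which forces it to divide by $\sqrt{Z_\delta}$ in both resulting terms; this is why the paper needs the second half of its Lemma \ref{lem1}, namely the lower bound $Z_\delta\geq Z/2$, and why it must shrink the range of admissible $\delta$ to $\tilde{\delta}^+=\min\{Z/(2\sqrt{K_1}K_2),\delta^+\}$. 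You instead insert $\sqrt{g_\delta}/\sqrt{Z}$, and the resulting factor $\int g_\delta \pr\,du = Z_\delta$ cancels exactly against the $1/(ZZ_\delta)$ coming from $\bigl(Z^{-1/2}-Z_\delta^{-1/2}\bigr)^2$, leaving $(\sqrt{Z}-\sqrt{Z_\delta})^2/Z$, which you then control using only the lower bound $\sqrt{Z}+\sqrt{Z_\delta}\geq\sqrt{Z}$. Consequently you need no lower bound on $Z_\delta$ beyond the positivity required for $\post_\delta^y$ to be defined, and you may take $\tilde{\delta}^+=\delta^+$, i.e.\ the estimate holds on the full range of $\delta$ rather than a possibly smaller one. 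Your constant is also fully explicit in $K_1,K_2,Z$. This is a modest but real tightening of the paper's argument; both proofs are equally elementary.
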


Notice that this theorem together with Lemma \ref{Lemma: hellinger} guarantee that expectations computed with respect to $\post^y$ and $\post_\delta^y$ are order $\delta$ apart. 
To prove Theorem \ref{thm1}, we first show a lemma which characterizes
the normalization factor $Z_\delta$ in the small $\delta$ limit.
   
     \begin{lemma}\label{lem1} Under Assumption \ref{a:dh1} there exist $\Delta>0$, $c_1,c_2\in(0,+\infty)$ such that 
      \[|Z-Z_\delta|\leq c_1\delta\quad \text{and}\quad Z,Z_\delta>c_2,\quad \text{for}\ \delta\in(0,\Delta).\]
      \end{lemma}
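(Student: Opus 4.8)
The plan is to estimate $|Z-Z_\delta|$ by the same square-root factorization used in Lemmas \ref{l:dh2} and \ref{Lemma: hellinger}, and then to obtain the uniform lower bound by bootstrapping from the standing positivity of $Z$. First I would write the difference of normalizers as
\[
Z-Z_\delta=\int_{\Ru}\bigl(g(u)-g_\delta(u)\bigr)\pr(u)\,du,
\]
and factor the integrand as $g-g_\delta=\bigl(\sqrt{g}-\sqrt{g_\delta}\bigr)\bigl(\sqrt{g}+\sqrt{g_\delta}\bigr)$, which is legitimate because $g,g_\delta\ge 0$ are noise densities.

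Next, applying the triangle inequality together with parts (ii) and (i) of Assumption \ref{a:dh1},
\[
|Z-Z_\delta|\le\int_{\Ru}\bigl|\sqrt{g(u)}-\sqrt{g_\delta(u)}\bigr|\,\bigl|\sqrt{g(u)}+\sqrt{g_\delta(u)}\bigr|\,\pr(u)\,du\le K_2\,\delta\int_{\Ru}\phi(u)\pr(u)\,du.
\]
Since $\pr$ is a probability density, the Cauchy--Schwarz inequality gives $\Expect^\pr[\phi]\le\bigl(\Expect^\pr[\phi^2]\bigr)^{1/2}\le\sqrt{K_1}$, so that $|Z-Z_\delta|\le c_1\delta$ with $c_1:=K_2\sqrt{K_1}$. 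The same use of (ii) also shows $Z,Z_\delta\le K_2^2<\infty$, so all quantities are finite.

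Finally, for the lower bound I would invoke the standing hypothesis $Z>0$ recorded in the Remark after Theorem \ref{t:bayes}; crucially $Z$ does not depend on $\delta$. From the reverse triangle inequality, $Z_\delta\ge Z-|Z-Z_\delta|\ge Z-c_1\delta$, so choosing $\tilde{\delta}^+:=\min\{\delta^+,\,Z/(2c_1)\}$ forces $Z_\delta\ge Z/2$ for every $\delta\in(0,\tilde{\delta}^+)$. Taking $c_2:=Z/3$ then yields both $Z>c_2$ (as $Z>0$) and $Z_\delta\ge Z/2>c_2$, completing the argument. The estimate on $|Z-Z_\delta|$ is essentially routine; the one point that requires care is that the uniform positivity of $Z_\delta$ cannot come from Assumption \ref{a:dh1} by itself, but must be bootstrapped from the a priori positivity of the fixed limiting normalizer $Z$, with the thresholds $\tilde{\delta}^+,c_1,c_2$ chosen independently of $\delta$.
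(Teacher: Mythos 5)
Your proof is correct and follows essentially the same route as the paper: the square-root factorization of $g-g_\delta$, Assumption 1.5 (i)--(ii) plus Cauchy--Schwarz to get $|Z-Z_\delta|\leq \sqrt{K_1}K_2\,\delta$, and then bootstrapping the lower bound on $Z_\delta$ from the $\delta$-independent positivity of $Z$ with $\tilde{\delta}^+=\min\{\delta^+,Z/(2c_1)\}$. The only (cosmetic) differences are that you apply the sup bound $K_2$ pointwise before integrating rather than as an $L^2(\pr)$ factor, and your choice $c_2=Z/3$ gives the strict inequality $Z_\delta>c_2$ cleanly, where the paper's $c_2=Z/2$ only yields $Z_\delta\ge c_2$.
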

      \begin{proof} 
Since $Z=\int \like(u)\pr(u)du$ and
$Z_\delta=\int \like_\delta(u)\pr(u)du$ 
we have
      \begin{align*}
      |Z-Z_\delta|=& \biggl| \int \bigl(\like(u)-\like_\delta(u)\bigr)\pr(u)du    \biggr|   \\
      \leq&\ \Big(\int \Bigl|\sqrt{\like(u)}- \sqrt{\like_\delta(u)}\Bigr|^2\pr(u)du\Big)^{1/2}\Big(\int \Bigl|\sqrt{\like(u)}+\sqrt{\like_\delta(u)}\Bigr|^2\pr(u)du\Big)^{1/2}\\
      \leq&\ \Big(\int \delta^2 \phi(u)^2\pr(u)du\Big)^{1/2}\Big(\int K_2^2\pr(u)du\Big)^{1/2}\\
      \leq&\ K_1 K_2\delta,\quad \delta\in(0,\delta^+). 
      \end{align*}
      Therefore, for $\delta\leq \Delta\coloneqq\min\{\frac{Z}{2K_1 K_2},\delta^+\}$, we have
      \[Z_\delta\geq Z-|Z-Z_\delta|\geq \frac{1}{2}Z.\]
      The lemma follows by taking $c_1=K_1 K_2$ and $c_2=\frac{1}{2}Z$.\\
      \end{proof}   
 
      \begin{proof}[Proof of Theorem \ref{thm1}]
   We break the total error into two contributions, one reflecting
the difference between $Z$ and $Z_\delta$, and the other the difference between $\like$ and $\like_\delta$:
      \begin{align*}
      \dhell(\post^y,\post_\delta^y)=&\ \frac{1}{\sqrt{2}}\Bigl\|\sqrt{\post^y}-\sqrt{\post_\delta^y}\Bigr\|_{L^2}\\
      =&\  \frac{1}{\sqrt{2}}\Big\|\sqrt{\frac{\like\pr}{Z}}-\sqrt{\frac{\like\pr}{Z_\delta}}+\sqrt{\frac{\like\pr}{Z_\delta}}-\sqrt{\frac{\like_\delta\pr}{Z_\delta}}\Big\|_{L^2}\\
      \leq&\ \frac{1}{\sqrt{2}}\Big\|\sqrt{\frac{\like\pr}{Z}}-\sqrt{\frac{\like\pr}{Z_\delta}}\Big\|_{L^2}+\frac{1}{\sqrt{2}}\Big\|\sqrt\frac{\like\pr}{Z_\delta}-\sqrt{\frac{\like_\delta\pr}{Z_\delta}}\Big\|_{L^2}.
      \end{align*}
      Using Lemma \ref{lem1} we have, for $\delta\in(0,\Delta)$, 
      \begin{align*}
      \Big\|\sqrt{\frac{\like\pr}{Z}}-\sqrt{\frac{\like\pr}{Z_\delta}}\Big\|_{L^2}=&\ \Big|\frac{1}{\sqrt{Z}}-\frac{1}{\sqrt{Z_\delta}}\Big|\Big(\int \like(u)\pr(u)du\Big)^{1/2}\\
      =&\ \frac{|Z-Z_\delta|}{(\sqrt{Z}+\sqrt{Z_\delta})\sqrt{Z_\delta}}\\
      \leq&\ \frac{c_1}{2c_2}\delta,
      \end{align*}
      and 
      \[\Big\|\sqrt{\frac{\like\pr}{Z_\delta}} - \sqrt{\frac{\like_\delta\pr}{Z_\delta}}\Big\|_{L^2}=\frac{1}{\sqrt{Z_\delta}}\Big(\int\Bigl|\sqrt{\like(u)}- \sqrt{\like_\delta(u)}\Bigr|^2\pr(u) du\Big)^{1/2} \leq \sqrt{\frac{K_1^2}{c_2}}\delta.\]
      Therefore
      \[\dhell(\post^y,\post_\delta^y)\leq \frac{1}{\sqrt{2}}\frac{c_1}{2c_2}\delta+\frac{1}{\sqrt{2}}\sqrt{\frac{K_1^2}{c_2}}\delta=c \delta,\]
      with $c=\frac{1}{\sqrt{2}}\frac{c_1}{2c_2}+\frac{K_1}{\sqrt{2c_2}}$,
which is independent of $\delta$.
      \end{proof}

\subsection{Example: \index{well-posed}Well-Posedness for Parameter Estimation in an ODE} 
Many \index{inverse problem}inverse problems arise from differential equations with unknown input parameters. Here we consider a simple but typical example where $G(u)$ comes from the solution of an \index{ordinary differential equation}ordinary differential equation (\index{ODE}ODE), which needs to be solved numerically. Let $x(t)$ be the solution to the initial value problem 
\begin{equation}
\label{eq:dh1}
\frac{d x}{dt}=F(x;u),\quad x(0)=0,
\end{equation} 
where $F:\Ry\times \Ru\rightarrow \Ry$ is a function such that $F(x;u)$ and the partial Jacobian $D_xF(x;u)$ are uniformly bounded with respect to $(x,u)$, i.e.
\[|F(x;u)|,|D_xF(x;u)|< F_{{\rm max}}  ,\quad \text{ for all } (x,u)\in \Ry\times \Ru,\]
for some constant $F_{{\rm max}}$, and thus $F(x;u)$ is \index{Lipschitz}Lipschitz in $x$ in that,  for all $u \in \Ru$, 
\[|F(x;u)-F(x';u)|\leq F_{{\rm max}} |x-x'|,\quad \text{ for all } x,x'\in \Ry.\]
Note that $u \in \Ru$ defines parametric dependence of the
vector field defining the differential equation.

Now consider the \index{inverse problem}inverse problem setting 
$$y=G(u)+\eta,$$ 
where 
$$G(u):=x(1)=x(t)|_{t=1},$$
and $\eta\sim \Nc(0,\gamma^2 I_\Dy)$. We assume that the exact mapping $G(u)$ is replaced by some numerical approximation $G_\delta(u)$. In particular, $G_\delta(u)$ is given by using the \index{forward Euler method}forward Euler method to solve the \index{ODE}ODE \eqref{eq:dh1}. Define $X_0=0$ and 
\[X_{\ell+1}=X_\ell+\delta F(X_\ell;u),\quad \ell\geq0,\]
where $\delta=\frac{1}{L}$ for some large integer $L$. Finally define $G_\delta(u):=X_L$. \\

In what follows, we will prove that $G_\delta(u)$ is uniformly bounded and 
close to $G(u)$ when $\delta$ is small, and that $G$ and $G_\delta$ both
satisfy the same global bound. Then we will use these results to show 
that Assumption \ref{a:dh1} is satisfied. Therefore, we can apply Theorem~\ref{thm1} to this example to establish that the approximate 
\index{posterior}posterior $\post_\delta^y$, defined by approximate
\index{forward!model}forward model $G_\delta$, is close to the true 
\index{posterior}posterior $\post^y$ 
with exact \index{forward!model}forward model $G$.\\

In showing that Assumption \ref{a:dh1} is satisfied, we use
Lemmas \ref{l:maruyama} and \ref{lem:euler} below. 
Recall that $\eta\sim \Nc(0,\gamma^2I_\Dy)$, and thus 
\[\sqrt{\like(u)}=\sqrt{\noise\bigl(y-G(u)\bigr)}=\frac{1}{(2\pi)^{\Dy/4}\gamma^{\Dy/2}}\exp\Big(-\frac{1}{4\gamma^2}|y-G(u)|^2\Big),\]
\[\sqrt{\like_\delta(u)}=\sqrt{\noise\bigl(y-G_\delta(u)\bigr)}=\frac{1}{(2\pi)^{\Dy/4}\gamma^{\Dy/2}}\exp\Big(-\frac{1}{4\gamma^2}|y-G_\delta(u)|^2\Big).\]
\begin{itemize}
\item For Assumption \ref{a:dh1}(i) notice that the function $e^{-w}$  is \index{Lipschitz}Lipschitz for $w>0$, with \index{Lipschitz}Lipschitz constant $1$. Therefore we have
\begin{align*}
\Big|\sqrt{\like(u)}-\sqrt{\like_\delta(u)}\Big|\leq&\ \frac{1}{(2\pi)^{\Dy/4}\gamma^{\Dy/2}}\cdot \frac{1}{4\gamma^2}\cdot \big||y-G(u)|^2-|y-G_\delta(u)|^2\big|\\
= &\ \frac{1}{(2\pi)^{\Dy/4}\gamma^{\Dy/2}}\cdot \frac{1}{4\gamma^2} \cdot |2y-G(u)-G_\delta(u)||G(u)-G_\delta(u)|\\
\leq &\ \frac{1}{(2\pi)^{\Dy/4}\gamma^{\Dy/2}}\cdot \frac{1}{4\gamma^2} \cdot (2|y|+2F_{{\rm max}}) c\delta\\
=&\ \tilde{c}\delta.
\end{align*}
That is to say, Assumption \ref{a:dh1}(i) is satisfied with $\phi(u)=\tilde{c}$ and $\int_{\Ru} \phi^2(u)\pr(u)du=\tilde{c}^2<\infty$. 
\item Assumption \ref{a:dh1}(ii) is satisfied, since
\[\sqrt{\like(u)}=\frac{1}{(2\pi)^{\Dy/4}\gamma^{\Dy/2}}\exp\Big(-\frac{1}{4\gamma^2}|y-G(u)|^2\Big)\leq \frac{1}{(2\pi)^{\Dy/4}\gamma^{\Dy/2}},\]
\[\sqrt{\like_\delta(u)}=\frac{1}{(2\pi)^{\Dy/4}\gamma^{\Dy/2}}\exp\Big(-\frac{1}{4\gamma^2}|y-G_\delta(u)|^2\Big)\leq \frac{1}{(2\pi)^{\Dy/4}\gamma^{\Dy/2}}.\]
\end{itemize}

The preceding verification of Assumption \ref{a:dh1}
used the following two lemmas, and the first of these
uses the Gronwall inequality which follows them. 
Define $t_\ell=\ell\delta$, $x_\ell=x(t_\ell)$. The following lemma gives an estimate on the error generated from using the \index{forward Euler method}forward Euler method. 

\begin{lemma} \label{l:maruyama} Let $E_\ell:=x_\ell-X_\ell$. Then there is $c<\infty$ independent of $\delta$ such that
\[|E_\ell|\leq c\delta,\quad 0\leq \ell\leq L. \]
In particular,
\[|G(u)-G_\delta(u)|=|E_L|\leq c\delta.\]
      \end{lemma}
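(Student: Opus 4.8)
The plan is to derive a one-step recursion for the error $E_\ell=x_\ell-X_\ell$ and then close it with a discrete Grönwall (geometric-sum) argument. First I would write the exact solution in integrated form over one step, $x_{\ell+1}=x_\ell+\int_{t_\ell}^{t_{\ell+1}}F(x(s);u)\,ds$, and subtract the Euler update $X_{\ell+1}=X_\ell+\delta F(X_\ell;u)$ to obtain
\[
E_{\ell+1}=E_\ell+\int_{t_\ell}^{t_{\ell+1}}\bigl(F(x(s);u)-F(X_\ell;u)\bigr)\,ds.
\]

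Second, I would split the integrand as $\bigl(F(x(s);u)-F(x_\ell;u)\bigr)+\bigl(F(x_\ell;u)-F(X_\ell;u)\bigr)$. The second difference is controlled by the Lipschitz bound, giving $F_{{\rm max}}|E_\ell|$. For the first difference I again use Lipschitz continuity together with the uniform bound on $F$: since $|x(s)-x_\ell|=\bigl|\int_{t_\ell}^{s}F(x(\tau);u)\,d\tau\bigr|\le F_{{\rm max}}\delta$ on the step, this term is at most $F_{{\rm max}}^2\delta$. Integrating over the step of length $\delta$ yields the affine recursion
\[
|E_{\ell+1}|\le (1+F_{{\rm max}}\delta)\,|E_\ell|+F_{{\rm max}}^2\delta^2.
\]

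Third, since $E_0=0$, I would iterate this recursion and sum the resulting geometric series, giving $|E_\ell|\le F_{{\rm max}}\delta\bigl((1+F_{{\rm max}}\delta)^\ell-1\bigr)$. Using $(1+F_{{\rm max}}\delta)^\ell\le e^{F_{{\rm max}}\delta\ell}\le e^{F_{{\rm max}}}$ for $\ell\le L$ (because $\delta L=1$), the bound becomes $|E_\ell|\le c\delta$ with $c=F_{{\rm max}}\bigl(e^{F_{{\rm max}}}-1\bigr)$, which is independent of $\delta$, of $\ell$, and---crucially---of $u$, since $F_{{\rm max}}$ bounds $F$ and its Jacobian uniformly in $(x,u)$. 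Taking $\ell=L$ gives $|G(u)-G_\delta(u)|=|E_L|\le c\delta$.

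The per-step estimates are routine; the only point that needs care is the discrete Grönwall step, that is, solving the affine recursion and exploiting $\delta L=1$ so that the accumulated local errors of size $O(\delta^2)$ over $L=1/\delta$ steps produce a global error of exactly order $\delta$ rather than blowing up. The main (modest) obstacle is organizing the constant so that it is manifestly independent of $\delta$ and $u$, which is precisely what will later allow us to verify Assumption \ref{a:dh1}.
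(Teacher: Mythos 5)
Your proof is correct. The global structure is the same as the paper's --- a one-step error recursion of the form $|E_{\ell+1}|\leq(1+F_{{\rm max}}\delta)|E_\ell|+O(\delta^2)$ closed by discrete Gronwall and the identity $\delta L=1$ --- but you obtain the local truncation error differently. The paper Taylor-expands the true solution to second order, identifies $\tfrac{d^2x}{dt^2}=D_xF(x;u)F(x;u)$, and bounds the Lagrange remainder by $\tfrac{\delta^2}{2}F_{{\rm max}}^2$; it must then remark that for $\Dy>1$ the argument requires the integral form of the Taylor remainder. You instead write the true increment as $\int_{t_\ell}^{t_{\ell+1}}F(x(s);u)\,ds$, split off the frozen-argument piece, and use only the Lipschitz bound together with the a priori estimate $|x(s)-x_\ell|\leq F_{{\rm max}}\delta$. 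This buys you an argument that is dimension-independent from the outset and needs no pointwise second derivative of the solution, at the cost of a factor of $2$ in the constant: you get $c=F_{{\rm max}}(e^{F_{{\rm max}}}-1)$ versus the paper's $c=\tfrac{1}{2}F_{{\rm max}}(e^{F_{{\rm max}}}-1)$. Both constants are uniform in $u$ because $F_{{\rm max}}$ bounds $F$ and $D_xF$ uniformly in $(x,u)$, which is all that is needed downstream to verify Assumption \ref{a:dh1}.
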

      \begin{proof}
For simplicity of exposition, we consider the case $\Dy=1$; the case
$\Dy>1$ is almost identical, simply requiring the integral form for
the remainder term in the Taylor expansion.  Using Taylor expansion
in the case $\Dy=1,$ there is $\xi_\ell\in[t_\ell,t_{\ell+1}]$ such that 
     	\begin{align*}
     	x_{\ell+1}=&\ x_\ell+\delta \frac{dx}{dt}(t_\ell)+\frac{\delta^2}{2}\frac{d^2x}{dt^2}(\xi_\ell)\\
     	=&\ x_\ell+\delta F(x_\ell;u)+\frac{\delta^2}{2}D_xF\bigl(x(\xi_\ell);u\bigr)F\bigl(x(\xi_\ell);u\bigr).
     	\end{align*}
     	Thus we have
     	\begin{align*}
     	|E_{\ell+1}|=&\ |x_{\ell+1}-X_{\ell+1}|\\
     	=&\ \Big|x_\ell-X_\ell+\delta\Bigl(F(x_\ell;u)-F(X_\ell;u)\Bigr)+\frac{\delta^2}{2}D_xF\bigl(x(\xi_\ell);u\bigr)F\bigl(x(\xi_\ell);u\bigr)\Big|\\
     	\leq&\ |x_\ell-X_\ell|+\delta \bigl|F(x_\ell;u)-F(X_\ell;u)\bigr|+\frac{\delta^2}{2} \bigl|D_xF\bigl(x(\xi_\ell);u\bigr) \bigr| \bigl|F\bigl(x(\xi_\ell);u\bigr)\bigr|\\
     	\leq&\ |E_\ell|+\delta F_{{\rm max}}|E_\ell|+ \frac{\delta^2}{2}F_{{\rm max}}^2.
     	\end{align*}
     	Noticing that $|E_0|=0$, the \index{discrete Gronwall inequality}discrete Gronwall inequality (Theorem
\ref{t1.6}) gives
     	\begin{align*}
     	|E_\ell|\leq&\ (1+\delta F_{{\rm max}})^\ell|E_0|+\frac{(1+\delta F_{{\rm max}})^\ell-1}{\delta F_{{\rm max}}}\cdot\frac{\delta^2}{2}F_{{\rm max}}^2\\
     	\leq&\ \bigg( \Big(1+\frac{F_{{\rm max}}}{L}\Big)^L-1\bigg)\cdot\frac{F_{{\rm max}}\delta}{2}\\
     	\leq&\ \frac{(e^{F_{{\rm max}}}-1)F_{{\rm max}}}{2} \delta.  
     	\end{align*}
     	The lemma follows by taking $c=\frac{(e^{F_{{\rm max}}}-1)F_{{\rm max}}}{2}$.
      \end{proof}

\begin{lemma}\label{lem:euler} For any $u \in \Ru,$  
\[|G(u)|,|G_\delta(u)| \le F_{{\rm max}}.\]
      \end{lemma}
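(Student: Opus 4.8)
The plan is to read off both bounds from the uniform estimate $|F(x;u)|<F_{{\rm max}}$ by writing each map as an accumulation of values of $F$ over a window of total length one, so that the triangle inequality closes the estimate exactly. For the exact map I would integrate the ODE \eqref{eq:dh1} over $[0,1]$; using $x(0)=0$ gives
\[
G(u)=x(1)=\int_0^1 F\bigl(x(t);u\bigr)\,dt.
\]
Pulling the norm inside the integral and invoking the pointwise bound yields
\[
|G(u)|\leq \int_0^1 \bigl|F\bigl(x(t);u\bigr)\bigr|\,dt \leq F_{{\rm max}}\int_0^1 dt = F_{{\rm max}}.
\]

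For the Euler approximation I would telescope the recursion $X_{\ell+1}=X_\ell+\delta F(X_\ell;u)$ starting from $X_0=0$, which gives
\[
G_\delta(u)=X_L=\delta\sum_{\ell=0}^{L-1} F(X_\ell;u).
\]
Since $\delta=1/L$, the triangle inequality together with $|F(X_\ell;u)|<F_{{\rm max}}$ gives
\[
|G_\delta(u)|\leq \delta\sum_{\ell=0}^{L-1}\bigl|F(X_\ell;u)\bigr| < \delta\, L\, F_{{\rm max}} = F_{{\rm max}}.
\]
Both estimates are uniform in $u$ because the hypothesis $|F(x;u)|<F_{{\rm max}}$ holds for all $(x,u)\in\Ry\times\Ru$, so the argument never needs to know the particular trajectory.

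The only delicate point—a mild one—is the \emph{strict} inequality in the conclusion. In the discrete case it is immediate: a sum of finitely many terms each strictly below $F_{{\rm max}}$ is strictly below $L\,F_{{\rm max}}$. In the continuous case I would upgrade the pointwise strict bound to a strict bound on the integral by noting that $t\mapsto |F(x(t);u)|$ is continuous on the compact interval $[0,1]$ and therefore attains a maximum $M<F_{{\rm max}}$, whence $\int_0^1|F(x(t);u)|\,dt\leq M<F_{{\rm max}}$. I expect no genuine obstacle here; the bounds are structural consequences of the unit-length integration/summation window and the uniform control on $F$.
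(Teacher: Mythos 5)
Your proof is correct and follows essentially the same route as the paper: the integral representation $G(u)=\int_0^1 F(x(t);u)\,dt$ for the exact map, and an accumulation of the Euler increments (your telescoping is just the unrolled form of the paper's induction $|X_{\ell+1}|\le|X_\ell|+\delta F_{{\rm max}}$). You are in fact slightly more careful than the paper, which concludes with $\le F_{{\rm max}}$ despite the strict inequality in the statement; your compactness argument for the continuous case and the finite-sum observation for the discrete case close that small gap.
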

      \begin{proof} For $G(u)$ we use that $F(x;u)$ is uniformly bounded, so that
     	\[|G(u)|=|x(1)|=\Big|\int_0^1F(x(t);u)dt\Big|\leq \int_0^1\bigl|F(x(t);u)\bigr|dt\leq F_{{\rm max}}.\]
     	As for $G_\delta(u)$, we first notice that 
     	$$|X_{\ell+1}| = |X_\ell +\delta F(X_\ell;u)| \le |X_\ell| + \delta |F(X_\ell;u)| \le |X_\ell| + \delta F_{{\rm max}},$$
     	and by induction 
     	$$|X_{\ell}| \le |X_0| +  \ell \delta F_{{\rm max}}  = \ell \delta F_{{\rm max}}.$$
     	In particular, 
     	$$|G_\delta(u) | = |X_L| \le L \delta F_{{\rm max}} = F_{{\rm max}}.$$
      \end{proof}

The following \index{discrete Gronwall inequality}discrete Gronwall inequality is used
several times in these notes, and is stated and proved here
for completeness.

\begin{theorem}[Discrete Gronwall Inequality] \label{t1.6}
\index{discrete Gronwall inequality}
Let a positive sequence $\{Z_{\ell}\}_{\ell=0}^L$ satisfy
$$Z_{\ell+1} \le CZ_{\ell}+D, \qquad \forall \ell=0, \dots, L-1$$
for some constants $C,D$ with $C>0$.
Then
$$Z_{\ell} \le \frac{D}{1-C}(1-C^{\ell})+Z_{0}C^{\ell} \qquad \forall \ell=0, \dots, L, \quad
C \ne 1$$
and
$$Z_{\ell} \le \ell D+Z_0 \qquad \forall \ell=0, \dots, L, \quad C=1.$$
\end{theorem}

\begin{proof}
The proof is by induction. We start with the case $C \ne 1.$ 
The result holds for $\ell=0$.
Assume it is true for $\ell<L$. Then, using the defining inequality,
$$Z_{\ell+1} \le \frac{CD}{1-C}(1-C^{\ell})+Z_{0}C^{\ell+1}+D.$$
Rearranging yields
$$Z_{\ell+1} \le \frac{D}{1-C}(1-C^{\ell+1})+Z_{0}C^{\ell+1}$$
and the result follows by induction.

When $C=0$ we again note that the result holds for $\ell=0$.
Assume it is true for $\ell<L$. Then, using the defining inequality with
$C=1$,
$$Z_{\ell+1} \le \ell D+Z_0+D=(\ell+1)D+Z_0$$
and the result follows by induction.
\end{proof}

\section{Discussion and Bibliography}\label{sec:14}
The book by Kaipio and Somersalo \cite{kaipio2006statistical} provides an introduction to the \index{Bayesian}Bayesian approach to \index{inverse problem}inverse problems, especially in the context of differential equations,  and the book \cite{calvetti2007introduction} gives an introduction to \index{Bayesian}Bayesian scientific computing. An overview of the 
subject of \index{Bayesian!inverse problem}Bayesian \index{inverse problem}inverse problems in differential equations, with
a perspective informed by the geophysical sciences,
is given in the book by Tarantola  \cite{tarantola2005inverse} (see, especially, Chapter 5). For non-statistical
approaches to \index{inverse problem}inverse problems, we refer to the books \cite{tikhonov1977solutions,engl1996regularization,vogel2002computational} and the lecture notes \cite{bal2012introduction,miller2003fundamentals}.

The subject of \index{Bayesian!inverse problem}Bayesian \index{inverse problem}inverse problems may be developed
beyond the specific setting of equation \eqref{eq:jc0} 
to study problems of the form 
\begin{equation*}
 y = G(u,\eta).
\end{equation*}
Our emphasis on additive noise $\eta$, often assumed to be \index{Gaussian}Gaussian,
simplifies some algorithms and enables us to be explicit about
some formulae, but is not fundamental in any way. We refer to \cite{dunlop2019multiplicative} for \index{well-posed}well-posedness theory and a study of \index{MAP estimator}MAP estimation with multiplicative noise.  In addition, the setting of equation \eqref{eq:jc0} presupposes that the  \index{forward!model}forward model $G$ is given to us, but in some cases the forward model itself may need to be learned from data.

In the paper \cite{stuart2010inverse} the \index{Bayesian}Bayesian approach to \index{regularization}regularization is reviewed, 
developing a function space viewpoint on the subject; a similar development
of this approach is described in \cite{lasanen2012non,lasanen2012nonb}.
A \index{well-posed}well-posedness theory 
and some algorithmic approaches which are used when adopting the \index{Bayesian!inverse problem}Bayesian 
approach to \index{inverse problem}inverse problems are introduced.
The function space viewpoint on the subject is developed in more detail in the
 lecture notes \cite{dashti2013bayesian}. 
An  early application of this function space methodology to a large-scale
applied \index{inverse problem}inverse problem, taken from
the geophysical sciences, may be found in \cite{martin2012stochastic}. 
The paper \cite{lieberman2010parameter} demonstrates the potential
for the use of dimension reduction techniques from control theory
within statistical \index{inverse problem}inverse problems.

We refer to \cite{gibbs2002choosing}  for further study on the subject of
metrics, and other distance-like functions, on probability measures. 
The first published paper to discuss stability and \index{well-posed}well-posedness
of the \index{Bayesian!inverse problem}Bayesian \index{inverse problem}inverse problem is \cite{marzouk2009stochastic}, in which
the \index{divergence!Kullback-Leibler}Kullback-Liebler divergence 
(see Chapter \ref{ch4}) is employed.
Related results on stability and \index{well-posed}well-posedness, 
but using other distances and divergences, 
may be found in \cite{latz2020well}.
The articles \cite{stuart2010inverse,dashti2013bayesian,trillos2017bayesian} 
study \index{well-posed}well-posedness of \index{Bayesian!inverse problem}Bayesian \index{inverse problem}inverse problems in the Hellinger metric,
with respect to perturbations in the data; papers \cite{cotter2010approximation,harlim2019kernel}  consider stability of the \index{posterior}posterior distribution with respect to numerical approximation of
partial differential equations appearing in the \index{forward!model}forward model.
The papers \cite{hosseini2017well,hosseini2017wellb} discuss generalizations of the \index{well-posed}well-posedness theory to various classes of specific \index{non-Gaussian}non-Gaussian \index{prior}priors. 
On the other hand, \cite{iglesias2014well} contains an interesting set of examples
where the Meta Theorem stated in this chapter fails
in the sense that, whilst \index{well-posed}well-posedness holds, the \index{posterior}posterior is
H\"older with exponent less than one, rather than \index{Lipschitz}Lipschitz, with respect
to perturbations.

The \index{Bayesian!inverse problem}Bayesian approach to \index{inverse problem}inverse problems builds on, and benefits from, the vast literature on \index{Bayesian!statistics}Bayesian statistics. The paper \cite{fienberg2006did} provides a historical overview of the development and popularization of \index{Bayesian!statistics}Bayesian statistics, starting with the introduction of \index{Bayes formula}Bayes formula in 1763  \cite{bayes1763lii} and emphasizing the leading role of Savage \cite{savage1972foundations} in axiomatizing and popularizing the subjective view of probability pioneered by de Finetti \cite{de2017theory}. We refer to \cite{gelman2013bayesian} for a recent and comprehensive textbook on \index{Bayesian}Bayesian methodology. 
See \cite{nickl} for an overview of Bayesian inversion and, in particular, statistical
consistency results in this context.

A topic of debate in \index{Bayesian!statistics}Bayesian statistics, and specifically in
the \index{Bayesian!inverse problem}Bayesian approach to \index{inverse problem}inverse problems, is how to construct \index{prior}prior
probability measures from available \index{prior}prior information, which is typically not
described probabilistically.
 The papers \cite{owhadi2015brittleness,owhadi2015brittlenessb} demonstrate that this is an important question: different
\index{prior}priors, both consistent with available \index{prior}prior information, can lead to
wildly different \index{Bayesian!inference}Bayesian inference when computing \index{posterior}posterior expectations:
what the authors term \index{Bayesian!brittleness}\emph{Bayesian brittleness.}
Arguably, this issue may be dealt with through application of the scientific
method: 
  a given \index{prior}prior and \index{likelihood}likelihood 
are postulated, and 
\index{posterior}posterior predictions are made;
data acquired after making posterior predictions 
may then be used to evaluate the \index{Bayesian}Bayesian probabilistic 
model employed, and in particular the \index{prior}prior and
\index{likelihood}likelihood and, if necessary, modify it. 
The body of work on \index{Bayesian!brittleness}Bayesian brittleness
builds on related analysis in the context of \index{forward!uncertainty quantification}forward 
uncertainty quantification \cite{owhadi2013optimal}, a topic concerned with propagating uncertainty on parameters
through a model into predictions. The subject of uncertainty
quantification, both the forward and inverse varieties,
is overviewed in \cite{sullivan2015introduction,
smith2013uncertainty}.


 \chapter{\Large{\sffamily{The Linear-Gaussian Setting }}}
\label{sec: posteriordistribution}

Recall the \index{inverse problem}inverse problem of estimating an unknown parameter $\vct{u} \in \Ru$ from data $y\in \Ry$ under the model assumption 
\begin{equation}\label{eq:yurelation}
y = G(u) + \eta.
\end{equation} 
In this chapter we study the \index{linear-Gaussian setting}linear-Gaussian setting, where the \index{forward!model}forward model $G(\cdot)$ is linear and both the \index{prior}prior on $u$ and the distribution of the \index{observation!noise}observation noise $\eta$ are \index{Gaussian}Gaussian. This setting is highly amenable to analysis and arises frequently in applications. Moreover, as we will see throughout these notes, many methods employed in nonlinear or \index{non-Gaussian}non-Gaussian settings build on ideas from the \index{linear-Gaussian setting}linear-Gaussian case by performing linearization or invoking \index{Gaussian!approximation}Gaussian approximations. 
After establishing a formula for the \index{posterior}posterior pdf in Section \ref{sec:21}, we investigate in Section \ref{sec:22} the effect that the choice of \index{prior}prior has on our solution  by quantifying the spread of the \index{posterior}posterior distribution in the \index{small noise limit}small noise (approaching zero) limit. This investigation provides intuitive understanding concerning the impact of the \index{prior}prior for overdetermined, determined, and underdetermined regimes, corresponding to $\Du <\Dy,\Du =\Dy,$ and $\Du>\Dy,$ respectively. Extensions of the theory and references to the literature are discussed in Section \ref{sec:23}. 

The following will be assumed throughout this chapter.
\begin{assumption}
The relationship between unknown $u\in \Ru$, data $y\in \Ry,$ 
and noise $\eta\in \Ry$ defined by equation \eqref{eq:yurelation} holds. Moreover,
\label{a:sz1}
\begin{itemize}
	\item Linearity of the \index{forward!model}forward model: $G(\vct{u})=\mtx{A}\vct{u}$, for some $\mtx{A}\in\R^{\Dy \times \Du}$.
    \item \index{Gaussian}Gaussian \index{prior}prior: $\vct{u}\sim \pr(\vct{u})=\Nc(\mpr,\Cpr)$, where $\Cpr$ is \index{positive definite}positive definite.
    \item\index{Gaussian}Gaussian noise: $\vct{\eta}\sim \noise(\vct{\eta})=\Nc(\zerovct,\mtx{\Gamma})$, where $\mtx{\Gamma}$ is \index{positive definite}positive definite.
    \item $u$ and $\eta$ are independent: $ u\perp \eta.$ 
\end{itemize}
\end{assumption}

\section{Derivation of the \index{posterior}Posterior Distribution}\label{sec:21}


Under Assumption  \ref{a:sz1} the \index{likelihood}likelihood on $\vct{y}$ given $\vct{u}$ is \index{Gaussian}Gaussian, 
\begin{align}
	\vct{y}|\vct{u}\sim \Nc(\mtx{A}\vct{u},\mtx{\Gamma}).
\end{align}
Therefore, using Bayes formula \eqref{eq:bayesformula} we see that the \index{posterior}posterior $\post^y(u)$ is given by 
\begin{align*}
	\post^{\vct{y}}(\vct{u})&=\frac{1}{Z}\noise(\vct{y}-\mtx{A}\vct{u})\pr(\vct{u})\\
	&=\frac{1}{Z}\exp\left(-\frac{1}{2}|\vct{y}-\mtx{A}\vct{u}|_{\mtx{\Gamma}}^2\right)\exp\left(-\frac{1}{2}|\vct{u}-\mpr |_{\Cpr}^2\right)\\
	&=\frac{1}{Z}\exp\left(-\frac{1}{2}|\vct{y}-\mtx{A}\vct{u}|_{\mtx{\Gamma}}^2-\frac{1}{2}|\vct{u}-\mpr|_{\Cpr}^2\right)\\
	&=\frac{1}{Z}\exp\bigl(-\J(\vct{u})\bigr),
\end{align*}
with
\begin{equation}
\label{eq:ju}
\J(\vct{u})=\frac{1}{2}|\vct{y}-\mtx{A}\vct{u}|_{\mtx{\Gamma}}^2+\frac{1}{2}|\vct{u} - \mpr|_{\Cpr}^2.
\end{equation}
 Note that here
\begin{equation}
\label{eq:likeg}
\log \like(u)=-\frac{1}{2}|\vct{y}-\mtx{A}\vct{u}|_{\mtx{\Gamma}}^2.
\end{equation} 
Since the \index{posterior}posterior pdf can be written as the exponential of a quadratic in $u$ it follows that the \index{posterior}posterior is \index{Gaussian}Gaussian. Its mean and covariance are given in the following result.

\begin{theorem}[\index{posterior}Posterior is \index{Gaussian}Gaussian]
\label{t:sz1}
   	Under Assumption \ref{a:sz1} 
the \index{posterior}posterior distribution is  \index{Gaussian}Gaussian, 
   		\begin{align}
			\vct{u}|\vct{y} \sim \post^{\vct{y}}(\vct{u}) =\Nc(\mpost,\Cpost).\label{eq: posterior}
		\end{align}
The \index{posterior!mean estimator}posterior mean $\mpost$ and covariance $\Cpost$ are given by the
following formulae:
		\begin{align}
			\mpost&=(\mtx{A}^{\top}\mtx{\Gamma}^{-1}\mtx{A}+\Cpr^{-1})^{-1}(\mtx{A}^{\top}\mtx{\Gamma}^{-1}\vct{y} + \Cpr^{-1}\mpr),\label{eq: postmean} \\
			\Cpost&=(\mtx{A}^{\top}\mtx{\Gamma}^{-1}\mtx{A}+\Cpr^{-1})^{-1}.
			\label{eq: postvar}
		\end{align}

	\label{thm: posteriorgaussian}
\end{theorem}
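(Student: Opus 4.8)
The plan is to start from the representation $\post^{\vct{y}}(\vct{u}) = \frac{1}{Z}\exp\bigl(-\J(\vct{u})\bigr)$ already derived above and to identify $\Nc(\mpost,\Cpost)$ by completing the square in $\J(\vct{u})$. Since $\J$ is quadratic in $\vct{u}$, the exponent is a quadratic form, and the posterior will be Gaussian once I match its quadratic and linear coefficients against the canonical form $\frac{1}{2}(\vct{u}-\mpost)^T\Cpost^{-1}(\vct{u}-\mpost)$. Concretely, I would first expand the two weighted norms in \eqref{eq:ju} via $|v|_A^2 = v^T A^{-1} v$ to obtain
$$\J(\vct{u}) = \frac{1}{2}\vct{u}^T(\mtx{A}^T\mtx{\Gamma}^{-1}\mtx{A}+\Cpr^{-1})\vct{u} - \vct{u}^T(\mtx{A}^T\mtx{\Gamma}^{-1}\vct{y}+\Cpr^{-1}\mpr) + \text{const},$$
where ``const'' collects all terms independent of $\vct{u}$.

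Next I would set $\Cpost^{-1} := \mtx{A}^T\mtx{\Gamma}^{-1}\mtx{A}+\Cpr^{-1}$ and check that it is positive definite, so that $\Cpost$ exists and is itself positive definite. This is the one structural point that needs care: by Assumption \ref{a:sz1} the matrix $\Cpr$ is positive definite, hence $\Cpr^{-1}$ is positive definite, while $\mtx{A}^T\mtx{\Gamma}^{-1}\mtx{A}$ is positive semi-definite because $v^T\mtx{A}^T\mtx{\Gamma}^{-1}\mtx{A}v = (\mtx{A}v)^T\mtx{\Gamma}^{-1}(\mtx{A}v)\geq 0$ (using that $\mtx{\Gamma}$, and thus $\mtx{\Gamma}^{-1}$, is positive definite). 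Their sum is therefore positive definite and invertible, which legitimizes the formula \eqref{eq: postvar}.

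I would then complete the square, writing $\J(\vct{u}) = \frac{1}{2}(\vct{u}-\mpost)^T\Cpost^{-1}(\vct{u}-\mpost) + \text{const}'$ with $\mpost := \Cpost(\mtx{A}^T\mtx{\Gamma}^{-1}\vct{y}+\Cpr^{-1}\mpr)$, which reproduces \eqref{eq: postmean}. Substituting back gives $\post^{\vct{y}}(\vct{u}) \propto \exp\bigl(-\frac{1}{2}(\vct{u}-\mpost)^T\Cpost^{-1}(\vct{u}-\mpost)\bigr)$, the unnormalized density of $\Nc(\mpost,\Cpost)$. Since $\post^{\vct{y}}$ is a probability density and this expression integrates to the same Gaussian, its normalizing constant must coincide with the Gaussian one; hence $\post^{\vct{y}} = \Nc(\mpost,\Cpost)$ exactly, with no need to evaluate $Z$.

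The only genuine obstacle is the bookkeeping in the completing-the-square step, namely verifying that the linear terms match and that the leftover constant truly has no $\vct{u}$-dependence. This reduces to the identity $\Cpost^{-1}\mpost = \mtx{A}^T\mtx{\Gamma}^{-1}\vct{y}+\Cpr^{-1}\mpr$, which is immediate from the definition of $\mpost$. An alternative, slightly cleaner route would sidestep this algebra: observe that $(\vct{u},\vct{y})$ is jointly Gaussian, being a linear image of the Gaussian pair $(\vct{u},\eta)$, and then read off $\mpost$ and $\Cpost$ from the standard Gaussian conditioning formulae for $\vct{u}\mid\vct{y}$. The completing-the-square argument is, however, entirely self-contained given the derivation of $\J(\vct{u})$ already in place, so I would present it as the primary proof.
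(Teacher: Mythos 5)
Your proposal is correct and follows essentially the same route as the paper's proof: write $\post^{\vct{y}}(\vct{u})\propto\exp\bigl(-\J(\vct{u})\bigr)$ with $\J$ quadratic and match the quadratic and linear coefficients against $\frac{1}{2}|\vct{u}-\mpost|_{\Cpost}^2$ to read off \eqref{eq: postmean} and \eqref{eq: postvar}. Your explicit check that $\mtx{A}^T\mtx{\Gamma}^{-1}\mtx{A}+\Cpr^{-1}$ is positive definite (hence invertible) is a small but worthwhile addition that the paper leaves implicit.
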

\begin{proof}
	Since $\post^{\vct{y}}(\vct{u})=\frac{1}{Z}\exp\bigl(-\J(\vct{u})\bigr)$
with $\J(u)$ given by \eqref{eq:ju},
a quadratic function of $\vct{u}$, it follows that
the \index{posterior}posterior  is \index{Gaussian}Gaussian.
Denoting the mean and variance of $\post^{\vct{y}}(\vct{u})$ by
$\mpost$ and $\Cpost$, we can write $\J(\vct{u})$ in the following form
	\begin{align}
		\J(\vct{u})&=\frac{1}{2}|\vct{u}-\mpost|_{\Cpost}^2+ q ,
		\label{eq: ju2}
	\end{align}
where the term $q$ does not depend on $\vct{u}$.
Now matching the coefficients of the quadratic and linear terms 
in equations \eqref{eq:ju} and \eqref{eq: ju2}, we get
	\begin{align*}
		\Cpost^{-1}&=\mtx{A}^{\top}\mtx{\Gamma}^{-1}\mtx{A}+\Cpr^{-1}, \\
		\Cpost^{-1}\mpost&=\mtx{A}^{\top}\mtx{\Gamma}^{-1}\vct{y} + \Cpr^{-1}\mpr.
	\end{align*}
Therefore equations \eqref{eq: postmean} and \eqref{eq: postvar} follow. 
\end{proof}

 We saw in the previous chapter that the \index{posterior!mean estimator}posterior mean estimator and the \index{MAP estimator}MAP estimator are typically different.
However, equation \eqref{eq: ju2} shows that in the
current linear-Gaussian setting the \index{posterior!mean estimator}posterior 
mean $\mpost$ minimizes $\J(u)$ given in \eqref{eq:ju}. 
Thus, the MAP estimator and the posterior mean coincide. 

\begin{corollary}[Characterization of  Bayes Estimators]
The \index{posterior!mean estimator}posterior mean and \index{MAP estimator}MAP estimators under Assumptions \ref{a:sz1} agree, and are given by $\umap = \upm = \mpost$ defined in equation \eqref{eq: postmean}.
\end{corollary}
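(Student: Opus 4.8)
The plan is to exploit the Gaussian structure of the posterior already established in Theorem \ref{t:sz1}, which identifies $\post^{\vct{y}}=\Nc(\mpost,\Cpost)$ with $\Cpost$ positive definite. Once the posterior is known to be a nondegenerate Gaussian, both the posterior mean estimator $\upm$ and the MAP estimator $\umap$ become elementary properties of such a density, so the proof reduces to two short arguments.

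First I would handle the posterior mean. By Definition \ref{def:map}, $\upm=\int_{\Ru}u\,\post^{\vct{y}}(u)\,du$, which is precisely the mean of the Gaussian $\Nc(\mpost,\Cpost)$; hence $\upm=\mpost$ with no further work.

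Next I would treat the MAP estimator. By Definition \ref{def:map}, $\umap=\arg\max_{u}\post^{\vct{y}}(u)$, and since $\post^{\vct{y}}(u)=\frac{1}{Z}\exp\bigl(-\J(u)\bigr)$, maximizing the density is equivalent to minimizing $\J$. Using the completed-square form \eqref{eq: ju2}, namely $\J(u)=\frac{1}{2}|u-\mpost|_{\Cpost}^2+q$ with $q$ independent of $u$, minimizing $\J$ is equivalent to minimizing the weighted norm $|u-\mpost|_{\Cpost}^2=(u-\mpost)^{T}\Cpost^{-1}(u-\mpost)$. Because $\Cpost$, and hence $\Cpost^{-1}$, is positive definite, this quantity is nonnegative and vanishes only at $u=\mpost$; the minimizer is therefore unique and equals $\mpost$, giving $\umap=\mpost$. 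Combining the two computations yields $\umap=\upm=\mpost$.

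There is no serious obstacle here; the argument is essentially the standard fact that the mean and mode of a nondegenerate Gaussian coincide. The only point that genuinely requires care is the invocation of positive definiteness of $\Cpost$ to conclude that the mode is attained at a \emph{unique} point: this is exactly what fails for degenerate or heavier-tailed posteriors, where mean and mode need not agree, and it is what makes the corollary specific to the linear-Gaussian setting of Assumption \ref{a:sz1}.
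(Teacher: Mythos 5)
Your proposal is correct and follows exactly the route the paper intends: the corollary is an immediate consequence of Theorem \ref{t:sz1}, since for the nondegenerate Gaussian posterior $\Nc(\mpost,\Cpost)$ the mean is $\mpost$ by definition and the mode is the unique minimizer of $\J(u)=\frac{1}{2}|u-\mpost|_{\Cpost}^2+q$, which is again $\mpost$ by positive definiteness of $\Cpost$. The paper leaves this argument implicit (stating only that ``the posterior is Gaussian, and therefore both estimators agree''), so your write-up simply makes the same reasoning explicit.
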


Furthermore, the formula \eqref{eq:ju} demonstrates \nc that the \index{posterior!mean estimator}posterior mean is found as a compromise
between maximizing the \index{likelihood}likelihood (by making the \index{loss}\emph{loss} term 
$\frac{1}{2}|\vct{y}-\mtx{A}\vct{u}|_{\mtx{\Gamma}}^2$ small) and minimizing 
deviations from the \index{prior}prior mean (by making the \index{regularization}\emph{regularization} term 
$\frac{1}{2}|\vct{u} -\mpr|_{\Cpr}^2$ small). The relative importance given to both \index{objective}objectives is determined by the relative size of the \index{prior}prior covariance $\Cpr$ and the noise covariance $\Gamma.$
An important feature of the \index{linear-Gaussian setting}linear-Gaussian setting is that the \index{posterior}posterior covariance $\Cpost$ does not depend on the data $y;$ this is not true in general. 

 We conclude this subsection with an example.

\begin{example}	\label{exp: map}
	Let $\mtx{\Gamma}=\gamma^2\Id$, $\Cpr=\sigma^2\Id,$ $\mpr =0,$ and 
set $\lambda=\frac{\gamma^2}{\sigma^2}$. Then
	$$\J_\lambda(\vct{u})\triangleq\gamma^2\J(\vct{u})=\frac{1}{2}|\vct{y}-\mtx{A}\vct{u}|^2+\frac{\lambda}{2}|\vct{u}|^2.$$
Since $\mpost$ minimizes $\J_\lambda(\cdot)$ it follows that
	\begin{align}
		(\mtx{A}^{\top}\mtx{A}+\lambda\Id)\mpost=\mtx{A}^{\top}\vct{y}.
		\label{eq: expmean}
	\end{align}
\end{example}

Example \ref{exp: map} provides a link between \index{Bayesian!inversion}Bayesian inversion and
\index{optimization}optimization approaches to inversion: $\J_\lambda(\vct{u})$ can be seen as 
the \index{objective}objective function in a linear regression model with a regularizer 
$\frac{\lambda}{2}|\vct{u}|^2$, as used in ridge regression. Equation \eqref{eq: expmean} for $\mpost$ is exactly the normal equation 
with regularizer in the least-squares problem.
In fact, in the general linear-Gaussian setting of Assumption \ref{a:sz1}, equation \eqref{eq: postmean} can also be viewed as a generalized normal equation. This perspective helps us 
understand the structure of \index{Bayesian}Bayesian \index{regularization}regularization by linking it to the deep understanding of \index{optimization}optimization approaches to \index{inverse problem}inverse problems. A more extensive account of the \index{optimization}optimization perspective and its interplay with \index{Bayesian}Bayesian formulations will be given in the following chapter. 

 \section{\index{small noise limit}Small Noise Limit of the \index{posterior}Posterior Distribution}\label{sec:22}
 In this section we study the small \index{observation!noise}observation noise limit of the \index{posterior}posterior in the \index{linear-Gaussian setting}linear-Gaussian setting. While most of the ideas and results
can be extended beyond this setting, explicit calculations that are possible
in the \index{linear-Gaussian setting}linear-Gaussian setting provide helpful intuition. Throughout this section we assume the following.
 
 \begin{assumption}
\label{a:sz2}
In addition to Assumption \ref{a:sz1} (the \index{linear-Gaussian setting}linear-Gaussian setting), we
assume that $\vct{\eta}:=\gamma \vct{\eta}_0,$ where $\vct{\eta}_0 \sim \Nc(\zerovct,\mtx{\Gamma}_0)$; thus $\Gamma=\gamma^2\Gamma_0.$
\end{assumption}

Note that substituting $\Gamma = \gamma^2 \Gamma_0$  into \eqref{eq: postmean} and \eqref{eq: postvar} we obtain that
 \begin{align}
\mpost =& (A^{\top} \Gamma_0^{-1} A + \gamma^2 \Cpr^{-1})^{-1} (A^{\top}\Gamma_0^{-1}y +\gamma^2 \Cpr^{-1} \mpr), \label{eq:meansmallnoise}\\
\Cpost =&  \gamma^2 (A^{\top} \Gamma_0^{-1} A + \gamma^2 \Cpr^{-1})^{-1}. \label{eq:covsmallnoise}
 \end{align}

In the next three subsections we study the behavior of the \index{posterior!mean estimator}posterior mean $m$ and covariance $C$
as $\gamma \to 0^+$ ---the \index{small noise limit}small noise limit. We remark that $m,$ $C,$ and the \index{posterior}posterior $\pi^y$ depend on the noise level $\gamma$, but we will not make explicit said dependence in our notation. 
We separately consider the overdetermined, determined, and underdetermined regimes.  We recall that $\Rightarrow$ denotes \index{weak convergence}weak convergence of probability measures. We will use repeatedly that \index{weak convergence}weak convergence of \index{Gaussian}Gaussian distributions is equivalent to the convergence of their means and covariances. In particular, the weak limit of a sequence of \index{Gaussian}Gaussians with means converging to $m^+$ and covariance matrices converging to the zero matrix is a \index{Dirac}Dirac mass $\delta_{m^+}.$

 \subsection{Overdetermined Case}
 We start with the overdetermined case $\Du<\Dy$. 
\begin{theorem}[\index{small noise limit}Small Noise Limit of \index{posterior}Posterior Distribution -- Overdetermined] \label{lec2:thm1}
Suppose that Assumption \ref{a:sz2} holds, that ${\rm Null}(A)= {0}$ and that $\Du<\Dy.$
Then, in the limit $\gamma \to 0^+,$  
$$\post^y \Rightarrow \delta_{m^+},$$
where $m^+$ is the solution of
the least-squares problem 
\begin{equation}\label{lec1:thm1_1}
m^+  =\arg\min_{u \in \Ru} |\Gamma_0^{-1/2} (y- Au)   |^2.
\end{equation}
\end{theorem}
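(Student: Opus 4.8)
\emph{Proof plan.} The plan is to exploit that, by Theorem~\ref{t:sz1}, the posterior $\post^y=\Nc(\mpost,\Cpost)$ is Gaussian for every $\gamma>0$. Since weak convergence of Gaussians is equivalent to convergence of their means and covariances, it suffices to pass to the limit $\gamma^2\to 0$ in the closed-form expressions \eqref{eq:meansmallnoise} and \eqref{eq:covsmallnoise} and to show that $\Cpost\to 0$ while $\mpost\to m^+$. The conclusion $\post^y\Rightarrow\delta_{m^+}$ would then follow immediately from the fact recalled just above the theorem: a sequence of Gaussians whose covariances vanish and whose means converge to $m^+$ converges weakly to the Dirac mass $\delta_{m^+}$.

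The crux is the invertibility of $M:=A^{T}\Gamma_0^{-1}A$. First I would check that $M$ is positive definite: for $v\neq 0$ one has $\langle v, Mv\rangle=|\Gamma_0^{-1/2}Av|^2\geq 0$, and equality would force $Av=0$ (because $\Gamma_0^{-1/2}$ is invertible, as $\Gamma_0$ is positive definite) and hence $v=0$ by the hypothesis ${\rm Null}(A)=\{0\}$. Thus $M$ is invertible. Strict convexity of $u\mapsto|\Gamma_0^{-1/2}(y-Au)|^2$ then identifies $m^+$ as the unique critical point, characterized by the normal equations $M\,m^+=A^{T}\Gamma_0^{-1}y$, so that $m^+=M^{-1}A^{T}\Gamma_0^{-1}y$.

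With invertibility established, the limits are routine. As $\gamma^2\to 0$ the matrix $A^{T}\Gamma_0^{-1}A+\gamma^2\Cpr^{-1}$ converges to $M$; since matrix inversion is continuous at the invertible matrix $M$, the inverse $(A^{T}\Gamma_0^{-1}A+\gamma^2\Cpr^{-1})^{-1}$ converges to $M^{-1}$ and in particular remains bounded. Substituting into \eqref{eq:covsmallnoise} gives $\Cpost=\gamma^2(A^{T}\Gamma_0^{-1}A+\gamma^2\Cpr^{-1})^{-1}\to 0$. For the mean, the vector $A^{T}\Gamma_0^{-1}y+\gamma^2\Cpr^{-1}\mpr$ converges to $A^{T}\Gamma_0^{-1}y$, so \eqref{eq:meansmallnoise} yields $\mpost\to M^{-1}A^{T}\Gamma_0^{-1}y=m^+$, which completes the argument.

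The only genuine obstacle is the invertibility step: this is exactly where the overdetermined full-rank hypothesis ${\rm Null}(A)=\{0\}$ is used, and without it $M$ would be singular, so that neither the normal equations nor the continuity-of-inversion argument would go through. Everything after that reduces to continuity of the algebraic operations and the stated equivalence between weak convergence of Gaussians and convergence of their moments.
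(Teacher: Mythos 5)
Your proposal is correct and follows essentially the same route as the paper's proof: establish that $A^{T}\Gamma_0^{-1}A$ is positive definite (hence invertible) from ${\rm Null}(A)=\{0\}$ and the invertibility of $\Gamma_0$, pass to the limit in the explicit formulae \eqref{eq:meansmallnoise}--\eqref{eq:covsmallnoise} to get $\Cpost\to 0$ and $\mpost\to (A^{T}\Gamma_0^{-1}A)^{-1}A^{T}\Gamma_0^{-1}y$, and identify this limit with $m^+$ via the normal equations. No gaps.
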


\begin{proof}
 
Since Null$(A)= {0}$ and $\Gamma_0$ is invertible
we deduce that there is $\alpha>0$ such that, for all $u \in \Ru,$ 
 \[
  \langle u, A^{\top} \Gamma^{-1}_0 Au \rangle = | \Gamma_0^{-1/2} Au |^2 \geq \alpha |u|^2.
 \]
 Thus $A^{\top} \Gamma_0^{-1} A$ is \index{positive definite}positive definite (and hence invertible). It follows that as $\gamma \to 0^+$, the \index{posterior}posterior covariance converges to the zero matrix, $\Cpost \to 0,$ and the \index{posterior!mean estimator}posterior mean satisfies the limit
 \[
 \mpost \to  m^{*} =   (A^{\top}\Gamma_0^{-1}A)^{-1} A^{\top}\Gamma_0^{-1} y.
 \]
This proves the \index{weak convergence}weak convergence of $\post^y$ to $\delta_{m^{*}}$. It remains to characterize $m^{*}$.  Since Null$(A)= {0}$, the minimizers of the scaled loss\footnote{Note that this is a rescaling by
$\gamma^{2}$ of the negative log-likelihood
from equation \eqref{eq:likeg}.}\index{loss}
 \[
 \loss(u) :=\frac{1}{2} | \Gamma_0^{-1/2}(y - Au)|^2
  \]
  are unique and satisfy the normal equations $A^{\top}\Gamma_0^{-1}Au = A^{\top} \Gamma_0^{-1}y$. Hence $m^{*}$ solves the desired least-squares problem and coincides with $m^+$ given in \eqref{lec1:thm1_1}.
 \end{proof}

We have shown that in the overdetermined case where $A^{\top}\Gamma_0^{-1} A$ is invertible, the \index{observation!noise}small observational noise limit leads to a \index{posterior}posterior which is a \index{Dirac}Dirac, centered at the solution of the least-squares problem \eqref{lec1:thm1_1}.   Therefore, in this limit the \index{prior}prior plays no role in the \index{Bayesian!inference}Bayesian inference.

\begin{theorem}[\index{posterior!consistency}Posterior Consistency -- Overdetermined]\label{th:postconsistency}
Suppose that the assumptions of Theorem \ref{lec2:thm1} hold and that the data satisfies
\begin{equation} \label{eq:datapostcons}
y=Au^\dagger+\gamma \eta_0^\dagger, \quad\quad {\rm for \,\, fixed} \quad u^\dagger \in \Ru, \eta_0^\dagger \in \Ry.
\end{equation}
Then, for any sequence  $M(\gamma)\rightarrow\infty$ as $\gamma\to 0^+$, 
	\begin{align}
		\Prob ^{\post^y} \bigl(|\vct{u}-\vct{u}^\dagger|^2>M(\gamma)\gamma^2\bigr)\rightarrow0,
		\label{eq: maincovergence}
	\end{align}
	where $\Prob^{\post^y}$ denotes probability under the \index{posterior}posterior distribution. 
	\end{theorem}
	
\begin{remark}
	For any $\varepsilon>0$, set $M(\gamma)=\frac{\varepsilon^2}{\gamma^2}$ in Theorem \ref{thm: maintheorem} to obtain 
	$$\Prob^{\post^y} \bigl(|\vct{u}-\vct{u}^\dagger|>\varepsilon \bigr)\rightarrow0.$$
	This shows that the \index{posterior}posterior probability concentrates around the truth in the \index{small noise limit}small noise limit. 
\end{remark}

\begin{proof}[Proof of Theorem \ref{th:postconsistency}]
Throughout this proof we let $c$ be a constant independent of $\gamma$ that may change from line to line, and we denote by $\Expect$ expectation with respect to the \index{posterior}posterior distribution, which is \index{Gaussian}Gaussian with mean $\mpost$ and covariance $\Cpost$ given by equations \eqref{eq:meansmallnoise} and \eqref{eq:covsmallnoise}. Denote
$$m^{*} =   (A^{\top}\Gamma_0^{-1}A)^{-1} A^{\top}\Gamma_0^{-1} y$$
as in the proof of the previous theorem. 
We have that
\begin{equation}
\Expect \bigl[ |u-u^\dagger|^2  \bigr] \le c\Bigl( \Expect  \bigl[  |u-\mpost|^2  \bigr]  + |\mpost - m^{*}|^2 + |m^{*}-u^\dagger|^2  \Bigr).
\end{equation}
We now bound each of the three terms in the right-hand side.

For the first one,
\begin{align*}
\Expect  \bigl[ |u-\mpost|^2  \bigr]  &= \Expect\bigl[(u-\mpost)^{\top} (u-\mpost)  \bigr]  = \Expect\bigl[ {\rm Tr}[(u-\mpost)\otimes(u-\mpost)]   \bigr] \\
 &={\rm Tr} \Expect\bigl[ (u-\mpost)\otimes(u-\mpost)   \bigr]  \\
&=  {\rm Tr}(\Cpost) \le \gamma^2 {\rm Tr}  \Bigl[ (A^{\top} \Gamma_0^{-1} A)^{-1} \Bigr].
\end{align*}
For the second term, note that 
\begin{align*}
(A^{\top} \Gamma_0^{-1} A) m^{*} &=  A^{\top}\Gamma_0^{-1}y, \\
(A^{\top} \Gamma_0^{-1} A + \gamma^2 \Cpr^{-1}) \mpost &= A^{\top}\Gamma_0^{-1}y +\gamma^2 \Cpr^{-1} \mpr.
\end{align*}
Therefore
\[ \mpost-m^{*} = \gamma^2 (A^{\top}\Gamma_0^{-1} A)^{-1} (\Cpr^{-1} \mpr   -\Cpr^{-1}\mpost).\] 
Since $\mpost$ converges it is bounded, and so there is $c>0$ such that
\[ |\mpost-m^{*}|^2 \le c \gamma^4. \]
Finally, for the third term we write
\begin{align*}
m^{*} &=   (A^{\top}\Gamma_0^{-1}A)^{-1} A^{\top}\Gamma_0^{-1} Au^\dagger  + \gamma (A^{\top}\Gamma_0^{-1}A)^{-1} A^{\top}\Gamma_0^{-1}  \eta_0^{\dagger} \\
 &= u^\dagger  + \gamma (A^{\top}\Gamma_0^{-1}A)^{-1} A^{\top}\Gamma_0^{-1}  \eta_0^{\dagger},
\end{align*}
which gives
\[ |m^{*} - u^\dagger|^2 \le c \gamma^2. \]
Using \index{Markov inequality}Markov inequality and the three bounds above,
 $$\Prob^{\post^y} \bigl(|\vct{u}-\vct{u}^\dagger|^2>M(\gamma)\gamma^2 \bigr)\leq\frac{\Expect [|\vct{u}-\vct{u}^\dagger|^2]}{M(\gamma)\gamma^2}\leq\frac{c}{M(\gamma)}\rightarrow0,~{\rm as}~\gamma\rightarrow 0^+.$$
\end{proof}

\subsection{Determined Case}
As a byproduct of the proof of Theorem \ref{lec2:thm1}, we can determine the limiting behavior of $\post^y$ in the boundary case $\Du = \Dy$.
\begin{theorem}[\index{small noise limit}Small Noise Limit of \index{posterior}Posterior Distribution -- Determined]\label{lec1:thm1.5}
Suppose that Assumption \ref{a:sz2} holds, ${\rm Null}(A)= 0,$ and $\Du=\Dy$. Then, in the \index{small noise limit}small noise limit $\gamma \to 0^+$, $$\post^y \Rightarrow \delta_{A^{-1 } y}.$$
\end{theorem}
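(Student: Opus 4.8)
The plan is to obtain this result as a direct specialization of the overdetermined case, as the theorem statement itself anticipates. The first observation is that the proof of Theorem \ref{lec2:thm1} nowhere used the strict inequality $\Du<\Dy$: its only ingredients were that ${\rm Null}(A)= 0$ and that $\Gamma_0$ is invertible, which together force $A^{T}\Gamma_0^{-1}A$ to be positive definite (hence invertible). The identical argument therefore applies here verbatim and yields both $\Cpost\to 0$ and $\mpost\to m^{*} = (A^{T}\Gamma_0^{-1}A)^{-1}A^{T}\Gamma_0^{-1}y$ as $\gamma^2\to 0$. Since weak convergence of Gaussians is equivalent to convergence of their means and covariances, this already gives $\post^y\Rightarrow\delta_{m^{*}}$, and it only remains to simplify $m^{*}$ in the square case.

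The second step is to identify $m^{*}$ explicitly. When $\Du=\Dy$ and ${\rm Null}(A)= 0$, the matrix $A$ is square and invertible, so I would factor the limiting mean using $(A^{T}\Gamma_0^{-1}A)^{-1}=A^{-1}\Gamma_0(A^{T})^{-1}$, a factorization valid precisely because $A$ and $\Gamma_0$ are invertible. Substituting,
\[
m^{*} = A^{-1}\Gamma_0(A^{T})^{-1}A^{T}\Gamma_0^{-1}y = A^{-1}\Gamma_0\Gamma_0^{-1}y = A^{-1}y,
\]
where the middle equality uses $(A^{T})^{-1}A^{T}=\Id$. Combining this with the first step delivers $\post^y\Rightarrow\delta_{A^{-1}y}$, as claimed.

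I do not expect any genuine obstacle here beyond bookkeeping: the analytic work (convergence of the mean and the collapse of the covariance) was already carried out for the overdetermined regime, and what remains is the elementary matrix algebra collapsing the generalized least-squares solution to the plain inverse once $A$ is square. The only point meriting a moment's care is verifying that the positive-definiteness argument from the previous proof carries over unchanged, so that $A^{T}\Gamma_0^{-1}A$ is genuinely invertible and $m^{*}$ is well defined; this is immediate, since that argument relied solely on ${\rm Null}(A)= 0$ and not on the dimension count.
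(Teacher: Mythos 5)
Your proposal is correct and follows essentially the same route as the paper: both observe that the proof of the overdetermined case never used $\Du<\Dy$ beyond the non-invertibility of $A$, so the same argument gives $\Cpost\to 0$ and $\mpost\to m^{*}=(A^{T}\Gamma_0^{-1}A)^{-1}A^{T}\Gamma_0^{-1}y$, and then both simplify $m^{*}$ to $A^{-1}y$ via the factorization $(A^{T}\Gamma_0^{-1}A)^{-1}=A^{-1}\Gamma_0(A^{T})^{-1}$ available once $A$ is square and invertible. No gaps.
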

\begin{proof}
In the proof of Theorem \ref{lec2:thm1}, the assumption $\Du < \Dy$ is used only in that $A$ is not a square matrix and thus $A, A^{\top}$ are not invertible. Denote by $(\mpost, \Cpost)$ the mean and variance of the \index{posterior}posterior $u|y$. Using the same argument, we have $\Cpost \to 0$ and
\[
\mpost \to m^{*} =  (A^{\top}\Gamma_0^{-1}A)^{-1} A^{\top}\Gamma_0^{-1} y.
\]
Using that $A, A^\top$ are square invertible matrices we obtain
\[
m^{*} = (A^{-1} \Gamma_0(A^{\top})^{-1} ) A^{\top} \Gamma_0^{-1 } y = A^{-1 }y.
\]
Therefore, $\post^y(u) \Rightarrow \delta_{m^{*}} = \delta_{A^{-1}y}$.
\end{proof}

Note that here, as in the overdetermined case, the \index{prior}prior plays no role in the \index{small noise limit}small noise limit. Moreover, it can be shown as above that \index{posterior!consistency}posterior consistency holds. The proof is very similar to that in the overdetermined case, 
and therefore omitted. 

\begin{theorem}[\index{posterior!consistency}Posterior Consistency -- Determined]
	Suppose that the assumptions of Theorem \ref{lec1:thm1.5} hold, and that the data satisfies
\begin{equation} \label{eq:datapostcons}
y=Au^\dagger+\gamma \eta_0^\dagger, \quad\quad {\rm for \,\, fixed} \quad u^\dagger, \eta_0^\dagger \in \Ru.
\end{equation}
	Then, for any sequence $M(\gamma)\rightarrow\infty$ as $\gamma \to 0^+$, 
	\begin{align}
		\Prob ^{\post^y} \bigl(|\vct{u}-\vct{u}^\dagger|^2>M(\gamma)\gamma^2 \bigr)\rightarrow0.
		\label{eq: maincovergence}
	\end{align}
	\label{thm: maintheorem}
\end{theorem}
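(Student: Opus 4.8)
The plan is to repeat, essentially verbatim, the consistency argument given in the overdetermined case, exploiting the fact that when $\mathrm{Null}(A)=0$ the matrix $A^{T}\Gamma_0^{-1}A$ is positive definite regardless of whether $\Du<\Dy$ or $\Du=\Dy$. First I would reduce the probability bound to a second-moment bound: by Markov's inequality under the posterior,
\[
\Prob^{\post^y}\bigl\{|u-u^\dagger|^2>M(\gamma)\gamma^2\bigr\}\leq \frac{\Expect|u-u^\dagger|^2}{M(\gamma)\gamma^2},
\]
so it suffices to show $\Expect|u-u^\dagger|^2\leq c\gamma^2$ for a constant $c$ independent of $\gamma$; the conclusion then follows since $M(\gamma)\to\infty$. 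Here $\Expect$ denotes expectation under the Gaussian posterior with mean $\mpost$ and covariance $\Cpost$ given by \eqref{eq:meansmallnoise} and \eqref{eq:covsmallnoise}.

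Next I would introduce $m^{*}=(A^{T}\Gamma_0^{-1}A)^{-1}A^{T}\Gamma_0^{-1}y$, which in the determined case equals $A^{-1}y$ since $A$ is square and invertible, and split via the triangle inequality
\[
\Expect|u-u^\dagger|^2\leq c\bigl(\Expect|u-\mpost|^2+|\mpost-m^{*}|^2+|m^{*}-u^\dagger|^2\bigr).
\]
The first two terms are controlled exactly as before: $\Expect|u-\mpost|^2=\mathrm{Tr}(\Cpost)=O(\gamma^2)$ directly from \eqref{eq:covsmallnoise}, while subtracting the normal equations for $m^{*}$ and $\mpost$ gives $\mpost-m^{*}=\gamma^2(A^{T}\Gamma_0^{-1}A)^{-1}(\Cpr^{-1}\mpr-\Cpr^{-1}\mpost)$, hence $|\mpost-m^{*}|^2=O(\gamma^4)$ once $\mpost$ is known to be bounded. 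For the third term the determined case is in fact simpler: since $m^{*}=A^{-1}y$ and $y=Au^\dagger+\gamma\eta_0^\dagger$, one has $m^{*}-u^\dagger=\gamma A^{-1}\eta_0^\dagger$, so $|m^{*}-u^\dagger|^2=\gamma^2|A^{-1}\eta_0^\dagger|^2=O(\gamma^2)$.

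Combining the three bounds yields $\Expect|u-u^\dagger|^2\leq c\gamma^2$, which closes the Markov estimate. There is no genuine obstacle here, which is why the authors call the proof ``identical'': the only place where $\Du<\Dy$ entered the overdetermined argument was the remark that $A$ is not square, whereas the crucial ingredients---the invertibility and positive definiteness of $A^{T}\Gamma_0^{-1}A$, the decay $\Cpost\to 0$, and the convergence $\mpost\to m^{*}$---all rest solely on $\mathrm{Null}(A)=0$. The one thing worth double-checking is that $\mpost$ is bounded uniformly in $\gamma$, as needed for the $O(\gamma^4)$ control of the second term; this follows from its convergence to $m^{*}=A^{-1}y$ established in Theorem \ref{lec1:thm1.5}.
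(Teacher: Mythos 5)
Your argument is correct and is exactly the one the paper intends: the text explicitly omits the proof as ``identical'' to the overdetermined case, and your write-up reproduces that argument (Markov's inequality, the three-term split around $\mpost$ and $m^{*}$, with $\mathrm{Tr}(\Cpost)=O(\gamma^2)$, $|\mpost-m^{*}|^2=O(\gamma^4)$, and $|m^{*}-u^\dagger|^2=\gamma^2|A^{-1}\eta_0^\dagger|^2$), correctly noting that everything rests on the positive definiteness of $A^{T}\Gamma_0^{-1}A$ rather than on $\Du<\Dy$.
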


 \subsection{Underdetermined Case}
Finally we consider the underdetermined case $\Du > \Dy$. We assume that $A \in \R^{\Dy \times \Du}$ with ${\rm Rank}(A) = \Dy$ and write
 \begin{equation}\label{lec1:thm2_0}
 A = (A_0\  0) Q^{\top} =(A_0 \ 0 )(Q_1\  Q_2)^{\top}= A_0Q_1^{\top},
 \end{equation}
 with $A_0 \in \R^{\Dy \times \Dy} $ an invertible matrix, $Q = (Q_1 \ Q_2) \in \R^{\Du \times \Du}$ an orthogonal matrix so that $Q^{\top} Q= I$, $Q_1 \in \R^{\Du \times \Dy}, Q_2 \in \R^{\Du \times (\Du-\Dy)}$. We have the following result: 
  \begin{theorem}[\index{small noise limit}Small Noise Limit of \index{posterior}Posterior Distribution -- Underdetermined]\label{lec1:thm2}
Suppose that Assumption \ref{a:sz2} holds, that ${\rm Rank}(A) = \Dy,$ and $\Du>\Dy$. In the \index{small noise limit}small noise limit $\gamma \to 0^+$, 
$$\post^y \Rightarrow \Nc(m^+, C^+),$$ where
\[
\begin{aligned}
m^+ &=\Cpr Q_1 (Q_1^{\top} \Cpr Q_1)^{-1}  A_0^{-1}y + Q_2(Q_2^{\top}\Cpr^{-1}Q_2)^{-1}Q_2^{\top} \Cpr^{-1}\mpr, \\
C^+ &=Q_2(Q_2^{\top}\Cpr^{-1}Q_2)^{-1} Q_2^{\top}. \\
\end{aligned}
\]
  \end{theorem}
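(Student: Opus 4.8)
The plan is to exploit the fact that, for every $\gamma>0$, the posterior $\post^y=\Nc(\mpost,\Cpost)$ is Gaussian, with $\mpost,\Cpost$ given by \eqref{eq:meansmallnoise}--\eqref{eq:covsmallnoise}. Since weak convergence of Gaussians is equivalent to convergence of their means and covariances, it suffices to prove that $\mpost\to m^+$ and $\Cpost\to C^+$ as $\gamma^2\to0$. The difficulty is that $A^{T}\Gamma_0^{-1}A$ is singular (it has rank $\Dy<\Du$), so these limits cannot be read off by naively setting $\gamma=0$; the regularizing term $\gamma^2\Cpr^{-1}$ is precisely what keeps the precision invertible, and one must track how it survives in the limit along the directions not probed by $A$.

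First I would pass to the orthonormal basis supplied by $Q=(Q_1\ Q_2)$, which splits $\Ru$ into the $\Dy$ directions seen by the data, $\mathrm{Range}(Q_1)$, and the $\Du-\Dy$ directions informed only by the prior, $\mathrm{Range}(Q_2)$. Writing the precision as $\Cpost^{-1}=\frac{1}{\gamma^2}A^{T}\Gamma_0^{-1}A+\Cpr^{-1}$ and using the factorization $A=A_0Q_1^{T}$ from \eqref{lec1:thm2_0}, I get $A^{T}\Gamma_0^{-1}A=Q_1 M Q_1^{T}$ with $M:=A_0^{T}\Gamma_0^{-1}A_0$ invertible on $\R^\Dy$. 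Conjugating by $Q$ then yields the block form
\[
Q^{T}\Cpost^{-1}Q=\begin{pmatrix}\frac{1}{\gamma^2}M+P_{11} & P_{12}\\ P_{21} & P_{22}\end{pmatrix},\qquad P_{ij}:=Q_i^{T}\Cpr^{-1}Q_j,
\]
in which only the top-left block blows up as $\gamma^2\to0$ (here $P_{22}=Q_2^{T}\Cpr^{-1}Q_2$ is positive definite, hence invertible).

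Next I would compute the covariance limit by block inversion (Schur complement of the top-left block). Since $\bigl(\frac{1}{\gamma^2}M+P_{11}\bigr)^{-1}=\gamma^2(M+\gamma^2P_{11})^{-1}=O(\gamma^2)\to0$, the three blocks of $(Q^{T}\Cpost^{-1}Q)^{-1}$ touching the first coordinate vanish and the bottom-right block converges to $P_{22}^{-1}=(Q_2^{T}\Cpr^{-1}Q_2)^{-1}$; transforming back gives $\Cpost\to Q_2(Q_2^{T}\Cpr^{-1}Q_2)^{-1}Q_2^{T}=C^+$. For the mean I would solve the linear system $Q^{T}\Cpost^{-1}Q\,(Q^{T}\mpost)=Q^{T}\bigl(\frac{1}{\gamma^2}A^{T}\Gamma_0^{-1}y+\Cpr^{-1}\mpr\bigr)$ blockwise; writing $Q^{T}\mpost=(\alpha^{T},\beta^{T})^{T}$ and eliminating $\beta$ via the second block equation, the first block equation multiplied by $\gamma^2$ becomes $(M+\gamma^2(\cdots))\alpha=A_0^{T}\Gamma_0^{-1}y+\gamma^2(\cdots)$, which forces $\alpha\to\alpha^{\ast}:=A_0^{-1}y$ (using $M^{-1}A_0^{T}\Gamma_0^{-1}=A_0^{-1}$); the second block equation then gives $\beta\to\beta^{\ast}:=(Q_2^{T}\Cpr^{-1}Q_2)^{-1}\bigl(Q_2^{T}\Cpr^{-1}\mpr-Q_2^{T}\Cpr^{-1}Q_1 A_0^{-1}y\bigr)$.

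The main obstacle is the final step: reconciling the limiting mean $m^+=Q_1\alpha^{\ast}+Q_2\beta^{\ast}$ produced above with the formula stated in the theorem. The $Q_2$-part already matches the second summand of the claimed $m^+$, but the $Q_1$-part appears as $Q_1A_0^{-1}y-Q_2(Q_2^{T}\Cpr^{-1}Q_2)^{-1}Q_2^{T}\Cpr^{-1}Q_1A_0^{-1}y$, which must be shown to equal $\Cpr Q_1(Q_1^{T}\Cpr Q_1)^{-1}A_0^{-1}y$. Stripping off $A_0^{-1}y$, this reduces to the matrix identity
\[
\bigl(I-Q_2(Q_2^{T}\Cpr^{-1}Q_2)^{-1}Q_2^{T}\Cpr^{-1}\bigr)Q_1=\Cpr Q_1(Q_1^{T}\Cpr Q_1)^{-1}.
\]
I would prove this by recognizing the left factor as the oblique projector with null space $\mathrm{Range}(Q_2)$ and range $\{v:Q_2^{T}\Cpr^{-1}v=0\}$: the right-hand side $R:=\Cpr Q_1(Q_1^{T}\Cpr Q_1)^{-1}$ satisfies $Q_2^{T}\Cpr^{-1}R=Q_2^{T}Q_1(Q_1^{T}\Cpr Q_1)^{-1}=0$ by the orthogonality $Q_2^{T}Q_1=0$, so $R$ lies in the projector's range, while $Q_1^{T}(Q_1-R)=I-I=0$ shows $Q_1-R\in\mathrm{Range}(Q_1)^{\perp}=\mathrm{Range}(Q_2)$; uniqueness of the projection decomposition then yields the identity. (Equivalently, one can identify the limit with the prior $\Nc(\mpr,\Cpr)$ conditioned on $Q_1^{T}u=A_0^{-1}y$ and invoke the Schur-complement form of the Gaussian conditionals, but the direct verification above avoids separately justifying that exchange of limits.)
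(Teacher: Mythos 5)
Your proposal is correct, but it follows a genuinely different route from the paper's. The paper argues probabilistically: it first proves the decomposition of the identity \eqref{lec1:lem}, uses it to split $u=Su_1+Tu_2$ with $S=\Cpr Q_1(Q_1^{T}\Cpr Q_1)^{-1}$, $T=Q_2(Q_2^{T}\Cpr^{-1}Q_2)^{-1}$, $u_1=Q_1^{T}u$, $u_2=Q_2^{T}\Cpr^{-1}u$, shows $u_1\perp u_2$ and that the data sees only $u_1$ through $y=A_0u_1+\eta$, invokes the determined-case Theorem \ref{lec1:thm1.5} to get $\Prob(u_1|y)\Rightarrow\delta_{A_0^{-1}y}$, and reads off $m^+$ and $C^+$ from the product limit $\Prob(u_2)\otimes\delta_{A_0^{-1}y}(u_1)$. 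You instead work directly with the explicit precision $\Cpost^{-1}=\gamma^{-2}A^{T}\Gamma_0^{-1}A+\Cpr^{-1}$, conjugate by $Q$, and extract the limits of covariance and mean by Schur-complement block inversion; this is entirely self-contained linear algebra and never requires establishing independence of the two components or justifying weak convergence of a Dirac--Gaussian product measure. The two routes meet at the same essential matrix fact: your oblique-projector identity $\bigl(I-Q_2(Q_2^{T}\Cpr^{-1}Q_2)^{-1}Q_2^{T}\Cpr^{-1}\bigr)Q_1=\Cpr Q_1(Q_1^{T}\Cpr Q_1)^{-1}$ is precisely the paper's decomposition lemma \eqref{lec1:lem} multiplied on the right by $Q_1$, and your range/null-space verification of it is a valid alternative to the paper's full-rank argument for that lemma. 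What the paper's approach buys in exchange for the extra probabilistic bookkeeping is the structural insight that the posterior factorizes into a data-determined coordinate and a prior-determined coordinate, which the subsequent remark about residual uncertainty in ${\rm Span}(Q_2)$ exploits. One point worth making explicit when you write this up: eliminate $\beta$ \emph{before} passing to the limit in the $\alpha$-equation, so that the coefficient matrix becomes $M+\gamma^2\bigl(P_{11}-P_{12}P_{22}^{-1}P_{21}\bigr)$, which is uniformly invertible for small $\gamma$; this removes any need to assume a priori that $\alpha$ and $\beta$ remain bounded as $\gamma\to 0$.
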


Since ${\rm Rank}(C^+) = {\rm Rank}(Q_2) =\Du-\Dy<\Du$ this theorem
demonstrates that, in the \index{small noise limit}small \index{observation!noise}observational noise limit, the \index{posterior}posterior
has no uncertainty in a subspace of dimension $\Dy$, but
retains uncertainty in a subspace of dimension $\Du-\Dy$. As a consequence, there is no \index{posterior!consistency}posterior consistency in the underdetermined case.

\begin{example}[Small Noise Limit -- Underdetermined]  To help understand the result in Theorem \ref{lec1:thm2}, 
we consider a simple explicit example. Assume that $A = (A_0 \ 0) \in \R^{\Dy \times \Du}, \Gamma = \gamma^2 \Gamma_0 =\gamma^2  I_\Dy, \Cpr =I_\Du$, $\mpr = 0$.  Let $u =(u_1,u_2)^\top  
\sim \Nc(0, I_\Du)$, with $u_1 \in \Ry, u_2 \in \R^{\Du - \Dy}$. \nc
The data then satisfies
\[
y = Au + \eta = A_0 u_1 +\eta, \  \eta \sim \Nc(0, \gamma^2 I_\Dy).
\]
The \index{posterior}posterior $u| y$ is $\post^y(u) = \frac{1}{Z_{\gamma}} \exp(-  \J_{\gamma}(u) )$, where 
\begin{align}
 \J_{\gamma}(u) &= \frac{ 1 }{2\gamma^2} | y - A_0 u_1 |^2 + \frac{1}{2} | u |^2  \notag \\
 &=\left(  \frac{ 1 }{2\gamma^2} | y - A_0 u_1 |^2 + \frac{1}{2} | u_1 |^2\right)
 +\frac{1}{2} |u_2|^2.  \label{lec1:thm2_1}
\end{align}
It is clear that 
\[
\post^{y}(u_1) \Rightarrow \delta_{A_0^{-1} y} (u_1).
\]
Once $u_1$ is fixed as $A_0^{-1}y$, the first term in \eqref{lec1:thm2_1} is a constant $\frac{1}{2} | A_0^{-1}y |^2$. Since $u_1$ and $u_2$ are independent we can derive, formally, the limiting \index{posterior}posterior as follows
\[
 \post^y(u) \Rightarrow \delta_{A_0^{-1} y} (u_1) \otimes \frac{1}{Z} \exp \Bigl(- \frac{1}{2} |u_2|^2\Bigr) = \delta_{A_0^{-1} y} (u_1) \otimes \Nc(0, I_{\Du-\Dy}),
\]
where $Z = \int_{\R^{\Du-\Dy}} \exp(-\frac{1}{2}|u_2|^2 ) du_2$. In fact, this is exactly the limiting \index{posterior}posterior measure given in Theorem \ref{lec1:thm2}.
\end{example}

To prove Theorem \ref{lec1:thm2}, we use the following decomposition of the identity $I_\Du.$
\begin{lemma}
Let $\Cpr \in \R^{\Du \times \Du}$ be invertible and $Q = [ Q_1 \ Q_2 ]$ be an orthogonal matrix with $Q_1 \in \R^{\Du \times \Dy}, Q_2 \in \R^{\Du \times (\Du-\Dy)}$. We have the following decomposition of $I_\Du$
\begin{equation}\label{lec1:lem}
I_\Du =  \Cpr Q_1 (Q_1^{\top} \Cpr Q_1)^{-1}   Q_1^{\top} +Q_2(Q_2^{\top}\Cpr^{-1}Q_2)^{-1} Q_2^{\top}\Cpr^{-1}.
\end{equation}
\end{lemma}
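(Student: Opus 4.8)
The plan is to recognize the two summands on the right-hand side of \eqref{lec1:lem} as a pair of complementary oblique projections and to show directly that they act as the identity on a basis of $\R^\Du$. Write $P_1 = \Cpr Q_1 (Q_1^{T} \Cpr Q_1)^{-1} Q_1^{T}$ and $P_2 = Q_2(Q_2^{T}\Cpr^{-1}Q_2)^{-1} Q_2^{T}\Cpr^{-1}$, so that the claim becomes $P_1 + P_2 = I_\Du$.

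First I would check that the inner matrices $Q_1^{T}\Cpr Q_1 \in \R^{\Dy\times\Dy}$ and $Q_2^{T}\Cpr^{-1}Q_2 \in \R^{(\Du-\Dy)\times(\Du-\Dy)}$ are invertible, so that $P_1,P_2$ are well defined: since $\Cpr$ is positive definite and $Q_1,Q_2$ have full column rank (as blocks of the orthogonal matrix $Q$), both are symmetric positive definite. I would also record the orthogonality relations coming from $Q^{T}Q = I$, namely $Q_1^{T}Q_1 = I_\Dy$, $Q_2^{T}Q_2 = I_{\Du-\Dy}$, and crucially $Q_1^{T}Q_2 = 0$.

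Next I would produce a basis of $\R^\Du$ adapted to $P_1$ and $P_2$: the columns of $\Cpr Q_1$ (spanning a $\Dy$-dimensional subspace) together with the columns of $Q_2$ (spanning a $(\Du-\Dy)$-dimensional subspace). To see these span $\R^\Du$ it suffices to check the two subspaces intersect trivially: if $\Cpr Q_1 a = Q_2 b$, then applying $Q_2^{T}\Cpr^{-1}$ and using $Q_2^{T}Q_1 = 0$ forces $Q_2^{T}\Cpr^{-1}Q_2 \, b = 0$, hence $b = 0$ and then $a = 0$; since the dimensions sum to $\Du$ the span is all of $\R^\Du$. On this basis the computation is immediate: using $Q_1^{T}Q_2 = 0$ one gets $P_1(\Cpr Q_1 a) = \Cpr Q_1 a$, $P_2(\Cpr Q_1 a) = 0$, $P_1(Q_2 b) = 0$, and $P_2(Q_2 b) = Q_2 b$, so $P_1 + P_2$ fixes every basis vector and therefore equals $I_\Du$.

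The steps are all routine linear algebra, so there is no deep obstacle; the only point requiring care is the justification that the columns of $\Cpr Q_1$ and of $Q_2$ together form a basis of $\R^\Du$, which rests on the invertibility of $Q_2^{T}\Cpr^{-1}Q_2$. As an alternative to the spanning-set argument, I could instead verify idempotency $P_1^2 = P_1$, $P_2^2 = P_2$ and mutual annihilation $P_1 P_2 = P_2 P_1 = 0$ (again via $Q_1^{T}Q_2 = 0$), conclude that $P_1 + P_2$ is idempotent, and then compute ${\rm Tr}(P_1) = \Dy$ and ${\rm Tr}(P_2) = \Du - \Dy$ by the cyclic property of the trace; an idempotent matrix of full rank $\Du$ must be $I_\Du$.
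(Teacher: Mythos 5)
Your proof is correct and is essentially the paper's argument viewed from the other side: the paper shows $B^{T}(R-I_\Du)=0$ for the invertible matrix $B=(Q_1 \ \Cpr^{-1}Q_2)$, whereas you show $(P_1+P_2-I_\Du)v=0$ for $v$ ranging over the columns of $\Cpr B=(\Cpr Q_1 \ Q_2)$, with the same key ingredients ($Q_1^{T}Q_2=0$ plus a full-rank check, which you do by trivial intersection of subspaces and the paper does via a block-triangular factorization). Both your main route and your idempotency-plus-trace alternative are sound.
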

\begin{proof}
Denote by $R$ the right-hand side of \eqref{lec1:lem}. Since $Q$ is orthogonal, we have $Q_1^{\top} Q_2 = 0, Q_2^{\top}Q_1 = 0$ and thus
\[
Q_1^{\top}(R- I) = 0, \quad Q^{\top}_2\Cpr^{-1}  (R-I) = 0 .
\]
If $B :=(Q_1 \ \Cpr^{-1} Q_2)$ is full rank, the above identities imply that $B^{\top}(R-I) = 0$ and thus $R= I$. Note that
\[
Q^{\top} B =  \left[
\begin{array}{c}
Q_1^{\top} \\
Q_2^{\top} \\
\end{array}
\right] 
[ Q_1 \ \Cpr^{-1} Q_2 ]   =
 \left[
\begin{array}{cc}
I_\Dy  & Q_1^{\top}\Cpr^{-1} Q_2 \\
0 &  Q_2^{\top}\Cpr^{-1} Q_2 \\
\end{array}
\right].
\]
Since the last matrix is invertible, $B$ is invertible and the proof is complete.
\end{proof}

\begin{proof}[Proof of Theorem \ref{lec1:thm2}]
Using \eqref{lec1:lem} we can decompose $u$ as follows
\[
\begin{aligned}
u &=\underbrace{  \Cpr Q_1 (Q_1^{\top} \Cpr Q_1)^{-1} }_{S} \underbrace{Q_1^{\top} u}_{u_1} + \underbrace{Q_2(Q_2^{\top}\Cpr^{-1}Q_2)^{-1}}_{T} \underbrace{Q_2^{\top}\Cpr^{-1}u}_{u_2}  = Su_1 + Tu_2 . \\
\end{aligned}
\]
Here $u_1$ and $u_2$ are \index{Gaussian}Gaussian with $u_2 \sim \Nc(Q_2^{\top}\Cpr^{-1} \mpr , Q_2^{\top}\Cpr^{-1} Q_2  )$. The identity
\[
{\rm Cov}(u_1,u_2) = Q_1^{\top} {\rm Cov}(u,u) \Cpr^{-1} Q_2 = Q_1^{\top} Q_2 = 0
\]
shows that $u_1$ and $u_2$ are independent, written $u_1 \perp u_2$. From \eqref{lec1:thm2_0}, we have
\begin{equation}\label{lec1:thm2_2}
y = Au + \eta  = A_0 Q_1^{\top} u+ \eta= A_0 u_1 + \eta.
\end{equation}
Since $u\perp \eta$ and $u_1 \perp u_2$, we have that $u_2 \perp y, u_1$. We apply conditional probability to yield
\[
\post^y(u_1, u_2)  :=\Prob(u_1, u_2 | y) =\Prob(u_2) \Prob(u_1 | y)  .
\]
Equation \eqref{lec1:thm2_2} and Theorem \ref{lec1:thm1.5} shows that $\Prob(u_1 | y ) \Rightarrow \delta_{A_0^{-1}y} (u_1)$ as the noise vanishes, that is, as  $\gamma \to 0^+$.  
Note that $u_2 \perp u_1$ and $u_2 \perp y$. The limiting \index{posterior}posterior measure  $(u_1,u_2) | y$ is
\begin{equation}\label{lec1:thm2_3}
\post^y(u_1, u_2) \Rightarrow \Prob(u_2) \otimes \delta_{A_0^{-1}y}(u_1)
\end{equation}
as $\gamma \to 0^+$. Recall $u = Su_1 + Tu_2$ and $u_2 \sim \Nc(Q_2^{\top}\Cpr^{-1} \mpr , Q_2^{\top}\Cpr^{-1} Q_2  )$. The mean and variance of the limiting \index{posterior}posterior measure $u|y$ is
\[
\begin{aligned}
m^+ &= \Expect[ Su_1 + Tu_2 | y ] =  SA_0^{-1}y +T\Expect[ u_2 ] =SA_0^{-1}y + TQ_2^{\top}\Cpr^{-1}\mpr, \\
C^+ & = {\rm Cov }(Su_1 + Tu_2 | y) = {\rm  Cov }( Tu_2) = T   Q_2^{\top}\Cpr^{-1} Q_2 T^{\top}  =Q_2(Q_2^{\top}\Cpr^{-1}Q_2)^{-1} Q_2^{\top}. \\
\end{aligned}
\]
We have thus completed the proof.
\end{proof}

Equation \eqref{lec1:thm2_3} shows that in the limit of zero  \index{observation!noise}observational noise, the uncertainty is only in the variable $u_2$. Since ${\rm Span}(T) = {\rm Span}(Q_2)$ and $u = SA_0^{-1} y + Tu_2$, the uncertainty we observed is in ${\rm Span}(Q_2)$. The \index{prior}prior plays a role in the \index{posterior}posterior measure, 
in the limit of zero \index{observation!noise}observational noise, but only in the
variables $u_2.$

\section{Discussion and Bibliography}\label{sec:23}
The linear setting plays, for several reasons, a central role in the study of
\index{inverse problem}inverse problems. First, linear \index{inverse problem}inverse problems are ubiquitous in applications, and are challenging to solve when the matrix defining the linear \index{forward!model}forward model is ill-conditioned, or when the system is severely underdetermined. 
Second, in the \index{linear-Gaussian setting}linear-Gaussian setting explicit solutions are available; these explicit solutions can be used to give insight into the solution of nonlinear \index{inverse problem}inverse problems. Underlying the derivation of these formulae is the fact, shown in this chapter, that a \index{Gaussian}Gaussian \index{likelihood}likelihood function supplemented with a \index{Gaussian}Gaussian \index{prior}prior leads to a \index{posterior}posterior that is again \index{Gaussian}Gaussian. In statistical terms, this constitutes an example of a 
\emph{conjugate \index{prior}prior} \cite{gelman2013bayesian}, namely a choice of \index{prior}prior for a given \index{likelihood}likelihood such that the \index{posterior}posterior belongs to the same family as the \index{prior}prior. A third reason for the central importance of linear \index{inverse problem}inverse problems is that they arise naturally in sequential \index{data assimilation}data assimilation, as we will see in the second part of these notes. 
The paper \cite{franklin1970well},
which concerns the \index{linear-Gaussian setting}linear-Gaussian setting, was arguably the first to formulate
\index{Bayesian!inversion}Bayesian inversion in function space, for the specific problem of determining
the initialization of the heat equation from the solution at later times.
The paper \cite{lehtinen1989linear} studied the \index{linear-Gaussian setting}linear-Gaussian setting
more generally. A computational framework for discretization of linear-Gaussian Bayesian inverse problems in function space was introduced in \cite{bui2013computational}.

In this chapter we have studied several \index{small noise limit}small noise limits, and established a basic form of \index{posterior!consistency}posterior consistency.
 Intuitively, small \index{observation!noise}observation noise would seem 
desirable in the reconstruction of the unknown parameter; however,
and perhaps counterintuitively,  it often makes the computational  solution to the \index{inverse problem}inverse problem more challenging. A concrete manifestation of this phenomenon is analyzed in the context of \index{importance sampling}importance sampling in \cite{agapiou2017importance}. For a treatment of \index{posterior!consistency}posterior consistency in infinite dimensions we refer to \cite{knapik2011bayesian,agapiou2013posterior,nickl2017bernstein}, and for
 the consistency problem in the classical statistical setting
to the books \cite{gine2015mathematical,van1998asymptotic}. 
In certain large data regimes, the \index{Bernstein-von Mises theorem}Bernstein-von Mises theorem \cite{doob1949application} guarantees that the \index{Bayesian}Bayesian \index{posterior}posterior solution is approximately \index{Gaussian!approximation}Gaussian  \cite{nickl2017bernstein,nickl2019bernstein,giordano2020consistency} 
and that the \index{prior}prior distribution plays a negligible role in the \index{posterior}posterior, thus providing theoretical support to the \index{Bayesian}Bayesian approach. We emphasize, however, that in the underdetermined \index{inverse problem}inverse problem setting one cannot expect the conclusions to hold, as demonstrated in this chapter. Furthermore, recent work
\cite{nickl2021some} demonstrates specific phenomena, including potential 
obstacles to \index{posterior!consistency}consistency theorems, that may result
in the setting of infinite-dimensional \index{Bayesian!inversion}Bayesian inversion. For non-statistical \index{optimization}optimization-based
approaches to \index{inverse problem}inverse problems, and consistency in particular, see \cite{engl1996regularization} and the references therein.

 \chapter{\Large{\sffamily{Optimization Perspective }}}\label{chap:optimization}

In this chapter we explore the properties of \index{Bayesian!inversion}Bayesian inversion
from the perspective of an \index{optimization}optimization problem which corresponds
to maximizing the \index{posterior}posterior probability: that is, to finding a 
\index{MAP estimator}maximum a posteriori (MAP) estimator, or mode of the \index{posterior}posterior distribution. We demonstrate the properties of the point estimator resulting
from this \index{optimization}optimization problem, showing its positive and negative
attributes, the latter motivating our work in the following three chapters. 
We also introduce, and study, basic gradient-based \index{optimization}optimization algorithms.

The chapter is organized as follows. We first introduce the problem setting in Section \ref{sec:31}. Two theoretical results are presented in Section \ref{sec:32}. The first shows that the \index{MAP estimator}MAP estimator is attained under appropriate assumptions, while the second provides an interpretation of \index{MAP estimator}MAP estimation in terms of maximizing the probability of infinitesimally small balls. Section \ref{sec:33} contains several examples that illustrate some possible limitations of \index{MAP estimator}MAP estimation. \index{gradient descent}Gradient descent and \index{gradient descent!stochastic}stochastic gradient descent algorithms are described in Section \ref{sec:34}. Both of these algorithms are important examples of gradient-based \index{optimization}optimization algorithms, which we interpret as arising from 
time-discretization of an underlying  differential equation. 
The chapter closes in Section \ref{sec:35} with bibliographical remarks. 

\section{The Setting}\label{sec:31}

Once again we work in the \index{inverse problem}inverse problem setting of finding $u\in \Ru$
from $y\in \Ry$ given by
$$ y = G(u) + \eta$$
 with noise $\eta\sim\noise$ and \index{prior}prior $u\sim\pr,$ as in Assumption \ref{a:jc1}. The \index{posterior}posterior pdf $\post^y(u)$ on $u|y$ is given by \index{Bayes theorem}Theorem \ref{t:bayes} and has the form
\[
\post^y (\vct u) = \frac{1}{Z} \noise (\vct y- \vct G(\vct u)) \pr (\vct u).
\]
Generalizing the definition from the previous chapter,
concerning only the Gaussian setting, we define a \index{loss}\emph{loss function}
\[
\loss (\vct u) = - \log \noise \bigl(\vct y-\vct G(\vct u)\bigr),
\]
and a \emph{regularizer}
\[
\reg(\vct u) = - \log \pr(\vct u).
\]
 Note that the loss is equal to the negative \index{likelihood} log-likelihood: 
$\loss(u)=-\log \like(u).$ 
When added together, these two functions of $u$ comprise
an \index{objective}\emph{objective function} of the form
\[
\J(\vct u) = \loss(\vct u) + \reg(\vct u).
\]
Furthermore
\[
\post^y (\vct u) = \frac{1}{Z} \noise \bigl(\vct y-\vct G(\vct u)\bigr) \pr (\vct u) \propto e^{-\J(\vct u)}.
\]
We see that minimizing the \index{objective}objective function $\J(\cdot)$
is equivalent to maximizing the \index{posterior}posterior pdf $\post^y(\cdot)$. 
Therefore, recalling Definition \ref{def:map}, the \index{MAP estimator}MAP estimator can be rewritten in terms of $\J$ as follows:
\begin{align*}
\umap &= \arg\max_{\vct{u}\in\Ru}\post^{\vct{y}}(\vct{u}) \\
&= \arg \min_{u \in \Ru} \J(u).
\end{align*} 

We will provide conditions under which the \index{MAP estimator}MAP estimator is attained in Theorem \ref{thm:achievable-obj}, and we will give an interpretation of \index{MAP estimator}MAP estimators in terms of maximizing the probability of infinitesimal balls in Theorem
\ref{thm:ball-density}. This interpretation can be used to generalize the definition of \index{MAP estimator}MAP estimators to measures that do not possess a \index{Lebesgue}Lebesgue density.

\begin{example}[MAP Estimator -- Linear-Gaussian Setting]
Consider the \index{linear-Gaussian setting}linear-Gaussian setting of Assumption \ref{a:sz1}.
Then, since the \index{posterior}posterior is \index{Gaussian}Gaussian, its mode agrees with its mean, which is given by $\mpost$ as defined in Theorem \ref{t:sz1}.
\end{example}

\begin{example}[Loss Function -- Gaussian Observational Noise]
	\label{ex:loss-l2}
	If $\eta = \Nc(\zerovct ,\mtx \Gamma)$, then
	\(\noise\bigl(\vct y-\vct G(\vct u)\bigr) \propto \exp(- \frac{1}{2} |\vct y-\vct G(\vct u)|_{\mtx \Gamma} ^2)\). So the \index{loss}loss in this case is \(\loss(\vct u) = \frac{1}{2} |\vct y-\vct G(\vct u)|_{\mtx \Gamma}^2\), a \(\mtx \Gamma\)-weighted \(\losss_2\) loss.
\end{example}

\begin{example}[\(\losss_2\) Regularizer -- Gaussian Prior]
	\label{ex:reg-l2}
	If we have \index{prior}prior \(\pr(\vct u) = \Nc(\zerovct ,\mtx \Cpr)\), then 
ignoring $u$-independent normalization factors, which appear as constant
shifts in $\J(\cdot)$, we may take the regularizer as \(\reg(\vct u) = \frac{1}{2} |\vct u|_{\mtx \Cpr}^2\). In particular, if \(\mtx \Cpr = \lambda^{-1} \Id \), then \(\reg(\vct u) = \frac{\lambda}{2} |\vct u|^2\), an \(\losss_2\) regularizer.
\end{example}

If we combine Example \ref{ex:loss-l2} and Example \ref{ex:reg-l2}, we obtain a canonical \index{objective}objective function 
$$\J(\vct u) = \frac{1}{2} |\vct y-\vct G(\vct u)|_{\mtx \Gamma}^2 + \frac{\lambda}{2} |\vct  u|^2.$$ 
To connect with future discussions, here \(\lambda\) corresponds to \index{prior}prior precision, and may be learned from data:  an example of a 
\index{Bayesian!hierarchical}\emph{hierarchical} 
formulation of Bayesian inversion. 

\begin{example}[\(\losss_1\) Regularizer -- Laplace Prior]
	\label{ex:reg-l1}
	As an alternative to the \(\losss_2\) regularizer, 
consider \(\vct u=(u_1,\ldots,u_\Du)\) with \(u_\du\) having \index{prior}prior
distribution \index{i.i.d.} i.i.d. \index{Laplace distribution}Laplace. Then \(\pr(\vct u) \propto \exp(-\lambda \sum_{\du=1}^\Du |u_\du|) = \exp(-\lambda |\vct u|_1)\). In this case
\(\reg( \vct u ) = \lambda |\vct u|_1\), an \(\losss_1\) regularizer. 
If we combine this \index{prior}prior with the weighted \(\losss_2\) \index{loss}loss above,
then we obtain the \index{objective}objective function $$\J(\vct u) = \frac{1}{2} |\vct y - \vct G ( \vct u ) |_{\mtx \Gamma}^2 + \lambda |\vct u |_1.$$ 
Even though this \index{objective}objective function promotes sparse solutions, samples
from the underlying \index{posterior}posterior distribution are typically not sparse. 
\end{example} 

\section{Theory}\label{sec:32}

For any \index{optimization}optimization problem for an \index{objective}objective function with a finite
infimum, it is of interest to determine whether the infimum is attained. 
We have the following result which shows that, under suitable conditions on $\J$, the infimum of $\J$
is attained and hence that the formulation of the \index{MAP estimator}MAP estimator through maximization
of $\post^y$ (equivalently minimization of $\J$) is well-defined.

		\begin{theorem}[Attainable \index{MAP estimator}MAP Estimator]
			\label{thm:achievable-obj}
			Assume that $\J$ is non-negative, continuous and that $\J(u) \to \infty$ as $|u| \to \infty. $
Then $\J$ attains its infimum. Therefore, 
the \index{MAP estimator}MAP estimator of $u$ based on the \index{posterior}posterior $\post^y(u) \propto \exp\bigl(-\J(u)\bigr)$ is attained.
		\end{theorem}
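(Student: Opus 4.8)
The plan is to use the \emph{direct method}: extract a minimizing sequence, use coercivity to show it is bounded, pass to a convergent subsequence by finite-dimensional compactness, and invoke continuity to conclude that the limit attains the infimum. Since $\J$ is non-negative, the quantity $m := \inf_{u \in \Ru} \J(u)$ is a finite non-negative real number, so it is bounded below and the infimum is well defined. By the definition of infimum, I would fix a minimizing sequence $(u_n)_{n\in\N} \subset \Ru$ with $\J(u_n) \to m$ as $n \to \infty$.

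The key step is to show that $(u_n)$ is bounded, and this is precisely where the growth hypothesis enters. Suppose for contradiction that $(u_n)$ were unbounded; then there would be a subsequence $(u_{n_k})$ with $|u_{n_k}| \to \infty$. The assumption that $\J(u) \to \infty$ as $|u| \to \infty$ would then force $\J(u_{n_k}) \to \infty$, contradicting $\J(u_{n_k}) \to m < \infty$. Hence $(u_n)$ is contained in some bounded subset of $\Ru$.

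Because $\Ru$ is finite-dimensional, Bolzano--Weierstrass yields a convergent subsequence $u_{n_k} \to u^\star$ for some $u^\star \in \Ru$. Continuity of $\J$ then gives $\J(u^\star) = \lim_{k} \J(u_{n_k}) = m$, so the infimum is attained at $u^\star$. Finally, since $\post^y(u) \propto \exp\bigl(-\J(u)\bigr)$ and $t \mapsto e^{-t}$ is strictly decreasing, maximizing $\post^y$ over $\Ru$ is equivalent to minimizing $\J$; thus $\umap = u^\star$ exists and the MAP estimator is attained.

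I do not expect any genuine obstacle here: the argument is a standard compactness-plus-continuity proof, and the single substantive step is the boundedness of the minimizing sequence, which follows directly from coercivity. The one point worth stating carefully is that finite-dimensionality of $\Ru$ is what makes closed bounded sets compact, so that boundedness of the minimizing sequence suffices to extract a convergent subsequence; in an infinite-dimensional setting this step would require additional structure (e.g.\ weak compactness together with lower semicontinuity of $\J$).
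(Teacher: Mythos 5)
Your proof is correct and takes essentially the same approach as the paper: both use coercivity to confine the minimization to a bounded (hence compact) subset of $\Ru$ and then invoke continuity to obtain attainment, with your minimizing-sequence/Bolzano--Weierstrass argument simply unpacking the extreme value theorem that the paper applies on the closed ball $B(0,R)$. Your closing remark about which hypotheses would need strengthening in infinite dimensions is a nice observation but not needed here.
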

		\begin{proof}
%
			By the assumed growth and non-negativity of $\J$, there is $R$ such that $\inf_{u\in \Ru} \J(u) = \inf_{u\in B(0,R)}   \J(u),$ 
where  (recall)  $B(0,R)$ denotes the closed ball of radius $R$ around the origin. Since $\J$ is assumed to be continuous, its infimum over $B(0,R)$ is attained and the proof is complete. 
		\end{proof}

\begin{remark}
Suppose that: 
\begin{enumerate}
				\item \(\vct G \in C(\Ru, \Ry)\), i.e. \(\vct G\) is a continuous function; 
				\item the \index{objective}objective function \(\J(\vct u)\) has \(\losss_2\) \index{loss}loss as defined in Example \ref{ex:loss-l2} and \(\losss_p\) regularizer \(\reg(\vct u) = \frac{\lambda}{p} |\vct u|_p^p\), \(p \in [1,\infty)\).
			\end{enumerate}
Then the assumptions on $\J$ in Theorem \ref{thm:achievable-obj} are satisfied.  This shows that if \(\vct G\) is continuous, the infimum of $\J$ defined with \(\losss_2\) \index{loss}loss and \(\losss_p\) regularizer is attained at the \index{MAP estimator}MAP estimator of the  corresponding \index{Bayesian}Bayesian problem with \index{posterior}posterior pdf proportional to $\exp\bigl(-\J(u)\bigr)$. 
\end{remark}

\begin{remark}
Notice that the assumption that $\J(u) \to \infty$ is not restrictive: this condition needs to hold in order to be able to normalize $\post^y (u)\propto \exp\bigl(-\J(u)\bigr)$ into a pdf, which is implicitly assumed in the second part of the theorem statement. 
\end{remark}

Intuitively, the \index{MAP estimator}MAP estimator maximizes \index{posterior}posterior probability. We make this
precise in the following theorem, which links the \index{objective}objective function $\J(\cdot)$
to small ball probabilities. 

		\begin{theorem}[\index{objective}Objective Function and \index{posterior}Posterior Probability]
			\label{thm:ball-density}
			 Under the same assumptions as in Theorem \ref{thm:achievable-obj}, let
			\[
			\alpha(\vct u,\delta) := \int_{\vct v \in B(\vct u, \delta)} \post^y(\vct v) d\vct v = \Prob^{\post^y}\bigl(B(\vct u,\delta)\bigr)
			\] 
			be the \index{posterior}posterior probability of a ball with radius \(\delta\) 
centered at \(u\). Then, for all \(\vct u, \vct u' \in \Ru\), we have
			\[
			\lim_{\delta \rightarrow 0} \frac{\alpha(\vct u, \delta)}{\alpha(\vct u', \delta)} = e^{\J(\vct u') - \J(\vct u)}.
			\]
		\end{theorem}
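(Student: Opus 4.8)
The plan is to unwind the definition of $\alpha$, cancel the (fixed, positive, finite) normalization constant, and reduce everything to the continuity of $\J$ on shrinking balls. Writing $\post^y(\vct v) = Z^{-1} e^{-\J(\vct v)}$ with $Z = \int_{\Ru} e^{-\J(\vct v)}\,d\vct v$, the assumptions of Theorem \ref{thm:achievable-obj} (continuity, non-negativity, coercivity of $\J$) are exactly what guarantee $0 < Z < \infty$, so that $\post^y$ is a genuine pdf and each $\alpha(\vct u_i,\delta)$ is finite and strictly positive (the integrand $e^{-\J}$ is everywhere positive, so the integral over any ball of positive radius is positive, and the denominator in the quotient is legitimate). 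In the ratio $\alpha(\vct u_1,\delta)/\alpha(\vct u_2,\delta)$ the constant $Z$ cancels, so I may work with the unnormalized integrals. Moreover, by translation invariance both balls have the same Lebesgue volume $V_\delta := |B(0,\delta)|$, so I would divide numerator and denominator by $V_\delta$, again without changing the ratio.

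Next I would localize at a single point. For any $\vct u$, factoring out the value at the center gives
\[
\frac{1}{V_\delta}\int_{B(\vct u,\delta)} e^{-\J(\vct v)}\,d\vct v = e^{-\J(\vct u)} \cdot \frac{1}{V_\delta}\int_{B(\vct u,\delta)} e^{-(\J(\vct v)-\J(\vct u))}\,d\vct v,
\]
and the claim I need is that the averaged factor on the right converges to $1$ as $\delta \to 0$. This is the only analytic content of the proof, and it follows from a squeeze argument using continuity of $\J$ at $\vct u$: given $\eps > 0$, choose $\delta_0 > 0$ so that $|\J(\vct v) - \J(\vct u)| < \eps$ whenever $\vct v \in B(\vct u,\delta_0)$; then for all $\delta < \delta_0$ the integrand lies between $e^{-\eps}$ and $e^{\eps}$, whence the average lies between $e^{-\eps}$ and $e^{\eps}$ as well. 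Letting $\eps \downarrow 0$ yields
\[
\frac{1}{V_\delta}\int_{B(\vct u,\delta)} e^{-\J(\vct v)}\,d\vct v \;\longrightarrow\; e^{-\J(\vct u)}, \qquad \delta \to 0.
\]

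Finally I would combine the two pieces. Applying the displayed limit with $\vct u = \vct u_1$ and $\vct u = \vct u_2$ and taking the quotient (the $V_\delta$ factors having already been inserted symmetrically) gives
\[
\lim_{\delta \to 0}\frac{\alpha(\vct u_1,\delta)}{\alpha(\vct u_2,\delta)} = \frac{e^{-\J(\vct u_1)}}{e^{-\J(\vct u_2)}} = e^{\J(\vct u_2)-\J(\vct u_1)},
\]
as claimed. I do not anticipate a serious obstacle here: the computation is elementary once one observes the cancellation of both $Z$ and $V_\delta$. The single point requiring care is the squeeze estimate above, and specifically the bookkeeping that the common ball volume $V_\delta$ may be introduced and removed freely because Lebesgue measure is translation invariant; beyond that, only the pointwise continuity of $\J$ at $\vct u_1$ and $\vct u_2$ is used, while the coercivity and non-negativity hypotheses enter solely to ensure $\post^y$ is well-defined.
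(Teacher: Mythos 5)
Your proof is correct and takes essentially the same route as the paper's: both arguments rest on the continuity of $\J$ to squeeze $e^{-\J(\vct v)}$ between $e^{-\J(\vct u)\mp\eps}$ on a small ball, cancel the common ball volume (and the normalization constant $Z$) in the ratio, and let $\eps\downarrow 0$. Your phrasing via the volume-averaged integral converging to $e^{-\J(\vct u)}$ is just a repackaging of the paper's two-sided bound $B_\delta e^{-\J(\vct u)-\eps}\leq \int_{B(\vct u,\delta)}e^{-\J(\vct v)}d\vct v\leq B_\delta e^{-\J(\vct u)+\eps}$.
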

		\begin{proof}
Let $u, u' \in \Ru$ and let $\epsilon>0.$ By continuity of $\J$ we have that, for all $\delta$ sufficiently small,
\begin{align*}
e^{-\J(u) - \epsilon} \le e^{-\J(v)} &\le e^{-\J(u) + \epsilon} \quad \text{for all} \,\, v \in B(u, \delta), \\
e^{-\J(u') - \epsilon} \le e^{-\J(v)} &\le e^{-\J(u') + \epsilon} \quad \text{for all}  \,\, v \in B(u', \delta).
\end{align*}
Therefore, for all $\delta$ sufficiently small,
\begin{align*}
&B_\delta e^{-\J(\vct u) - \epsilon}
			\leq 
			\int_{v \in B(u,\delta)} e^{-\J(\vct v)} d\vct v \leq 
			 B_\delta e^{-\J(\vct u) + \epsilon}, \\
&B_\delta e^{-\J(\vct u') - \epsilon}
			\leq 
			\int_{v \in B(u',\delta)} e^{-\J(\vct v)} d\vct v \leq 
			 B_\delta e^{-\J(\vct u') + \epsilon}, 
\end{align*}
		where \(B_\delta \) is the \index{Lebesgue}Lebesgue measure of a ball with radius \(\delta\). Taking the ratio of $\alpha$'s and using the above bounds we obtain that, for all $\delta$ sufficiently small,		
		\[
			e^{\J(\vct u') - \J(\vct u) -2\epsilon} \leq \frac{\alpha(\vct u, \delta) }{\alpha(\vct u', \delta)} \leq  e^{\J(\vct u') - \J ( \vct u) + 2\epsilon}. 
			\]
			Since $\epsilon>0$ is arbitrary, the desired result follows.

		\end{proof}

\begin{remark}
This theorem shows that maximizing the probability of an
infinitesimally small ball is the same as minimizing the \index{objective}objective
function $\J(\cdot).$ This is intuitive in finite dimensions, but
the proof above generalizes beyond measures
which possess a \index{Lebesgue}Lebesgue density, and may be used in infinite
dimensions. 
\end{remark}

\section{Examples}\label{sec:33}

By means of examples, we now probe whether the \index{MAP estimator}MAP
estimator captures useful information about the \index{posterior}posterior distribution.

\begin{example}[Summarizing Single-Peaked Posterior]
	If the \index{posterior}posterior is single-peaked, such as a \index{Gaussian}Gaussian or a \index{Laplace distribution}Laplace distribution, as shown in Figure \ref{fig:gaussian-laplace}, the \index{MAP estimator}MAP estimator, i.e. minimizer of the \index{objective}objective function, reasonably summarizes the most likely value of the unknown parameter.
\end{example}

\begin{figure}
  \centering
  \includegraphics[width=0.7\columnwidth]{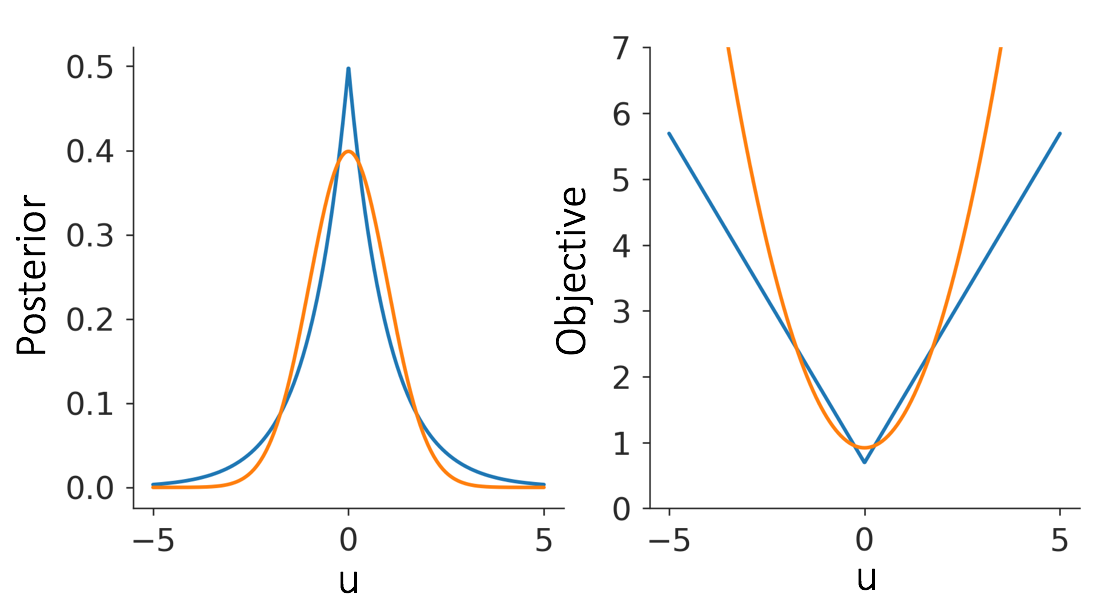}
  \caption{\index{posterior}Posterior (left) and \index{objective}objective function (right) for \(\Nc(0,1)\) \index{posterior}posterior (orange) and \index{Laplace distribution} \(\mathrm{Laplace}(0,1)\) \index{posterior}posterior (blue). }
  \label{fig:gaussian-laplace}
\end{figure}

We next consider several examples where a point estimator ---or a \(\delta\)-radius ball with small $\delta$--- fails to adequately summarize the \index{posterior}posterior distribution.
\begin{example}[Summarizing Multiple-Peaked Posterior]
	If the \index{posterior}posterior is rather unevenly distributed, such as a slab-and-spike distribution, as shown in Figure~\ref{fig:slab-n-spike}, then it is
less clear that the \index{MAP estimator}MAP estimator usefully summarizes the \index{posterior}posterior. For example, for the case in Figure \ref{fig:slab-n-spike} we may want the solution output of our \index{Bayesian}Bayesian problem to be a weighted average of two \index{Gaussian}Gaussian distributions, or two point estimators each with a separate 
mean located at one of the two minima of the \index{objective}objective functions, 
and weight describing the probability mass associated with each of
those two points.
\end{example}

\begin{figure}
	\centering
	\includegraphics[width=0.9\columnwidth]{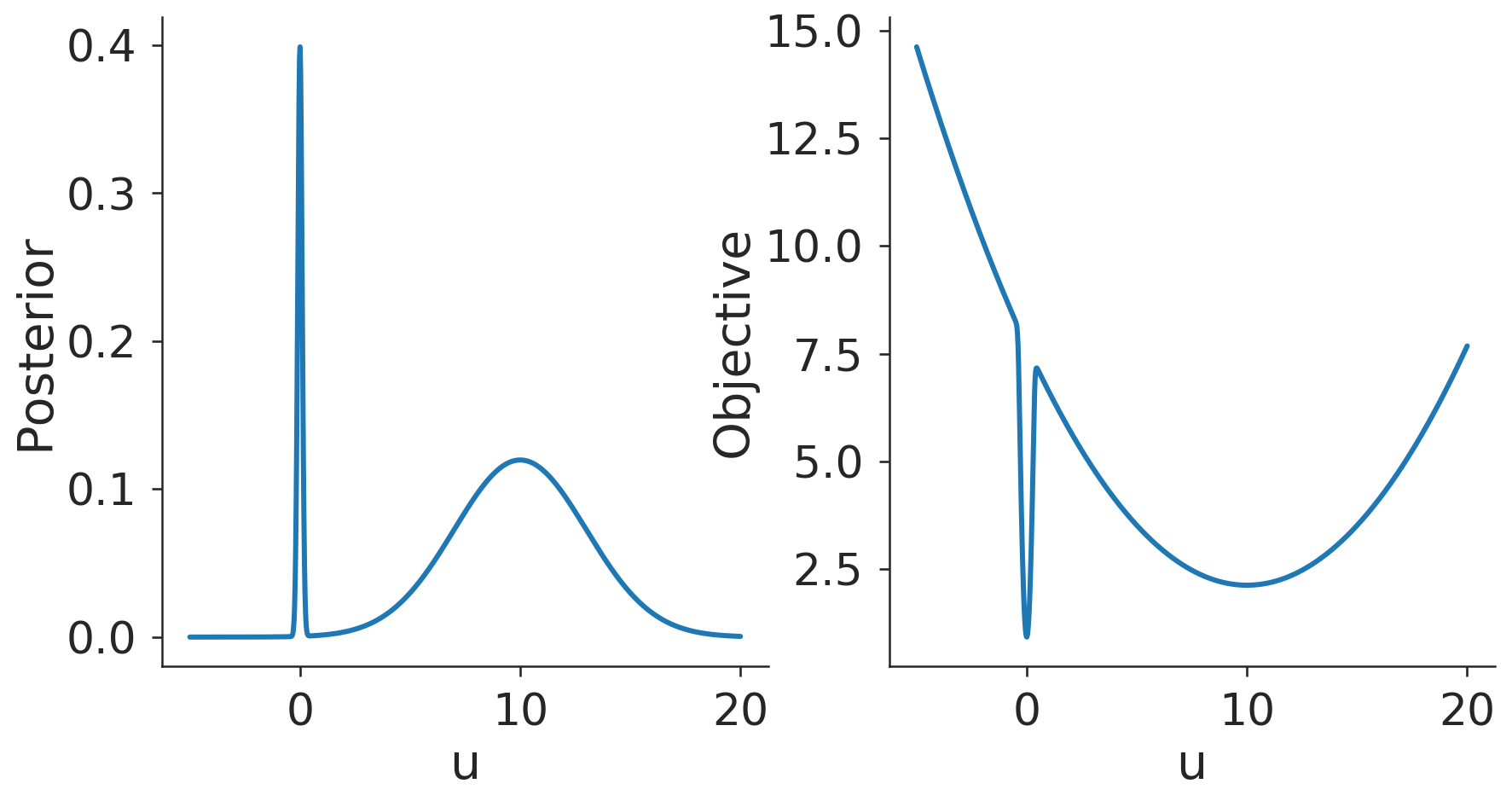}
	\caption{\index{posterior}Posterior (left) and \index{objective}objective function (right) for a \index{posterior}posterior that is a sum of two \index{Gaussian}Gaussian distributions, \(\Nc(0,0.1^2)\) with probability \(0.1\) and \(\Nc(10,3^2)\) with probability \(0.9\).}
	\label{fig:slab-n-spike}
\end{figure}

\begin{example}[Summarizing Rough Posteriors]
	In addition to a multiple-peak \index{posterior}posterior, there are cases where the \index{objective}objective function and the associated \index{posterior}posterior pdf are simply very rough. In these cases, the small-scale roughness should be ignored, while the large-scale variation should be captured.  For example, the \index{objective}objective function in Figure \ref{fig:rough}  is very rough and has a unique minimizer at a point far from \(0\). However, it also has a larger-scale pattern: it tends to be smaller around \(0\), while larger away from \(0\). The \index{MAP estimator}MAP estimator cannot capture this large scale pattern, as it is found by minimizing the \index{objective}objective function. It is arguably the
case that \(u=0\) is a better point estimate. An alternative way to interpret this phenomenon is that there is a natural ``temperature'' to this problem, in the sense that variations lower than this temperature could be viewed as random noise that do not capture meaningful information.
\end{example}
\begin{figure}
	\centering
	\includegraphics[width=0.9\columnwidth]{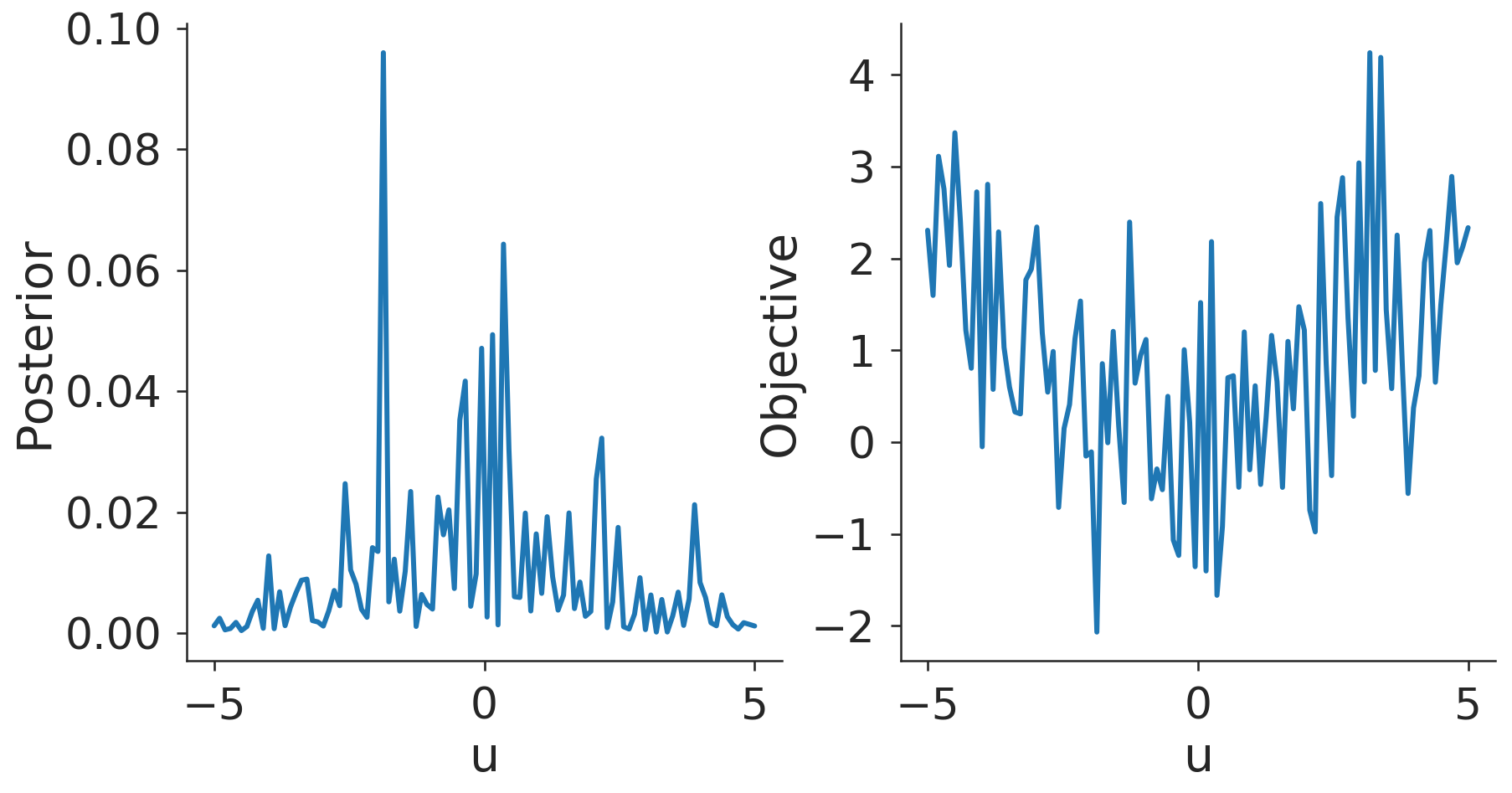}
	\caption{\index{posterior}Posterior (left) and \index{objective}objective function (right) from an \index{objective}objective function that is very rough in the small scale, but contains a regular pattern on the larger scale. This specific example is generated by white noise summed with a quadratic function for the \index{objective}objective function, and the \index{posterior}posterior is computed from the \index{objective}objective function.}
	\label{fig:rough}
\end{figure}

The preceding examples suggest that multi-peak distributions, or multi-minimum 
\index{objective}objective functions, can cause problems for \index{MAP estimator}MAP estimation.
Next we illustrate that if the dimension \(\Du\) of the parameter \(\vct u \in \Ru\) is high, then a single point estimator, even if a \index{MAP estimator}MAP estimator, is typically not a good summary of the \index{posterior}posterior.
\begin{figure}[ht!]
	\centering
	\includegraphics[width=0.7\columnwidth]{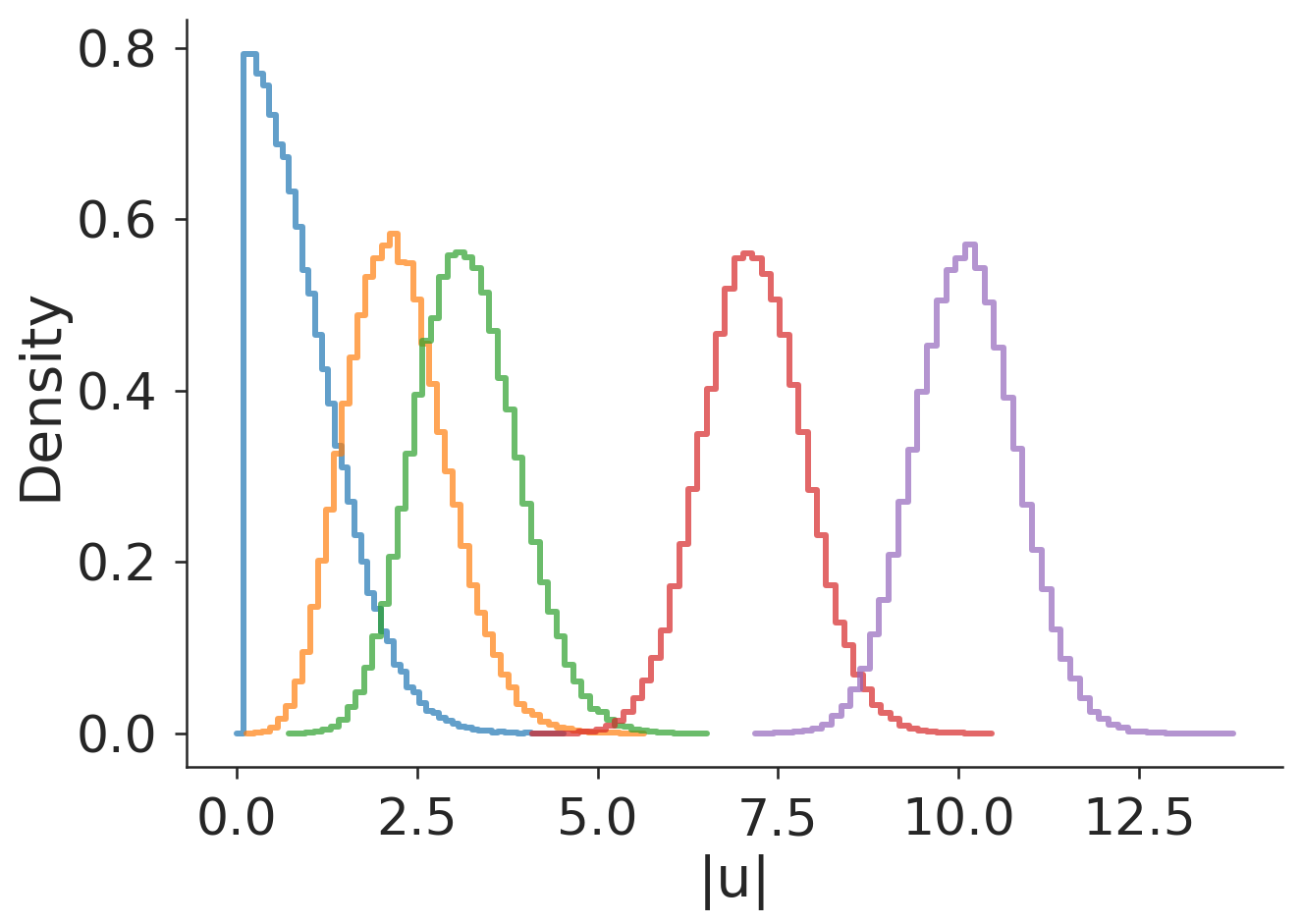}
	\caption{Empirical density of \(\ell_2\) norm of \(\Nc(\zerovct,\Id)\) random vectors for various dimension: \(\Du=1\) (blue), \(\Du=5\) (orange), \(\Du=10\) (green), \(\Du=50\) (red), and \(\Du=100\) (purple). The empirical density is obtained from \(10000\) samples for each distribution.}
	\label{fig:highdim}
\end{figure}
\begin{example}[Summarizing High-Dimensional Posterior]
	We consider what is the ``typical size'' of a vector \(\vct u\) drawn 
from the standard \index{Gaussian}Gaussian distribution \(\Nc(\zerovct, \Id)\), as the dimension
increases. In Figure \ref{fig:highdim} we display the empirical density of the norm of such random vectors. We can see that at low dimensions, such as when \(\Du=1\), obtaining a value close to the mode \(\vct u=0\) is highly likely. In higher dimensions, however,
the probability for a vector from this distribution to have a small $\ell_2-$norm becomes increasingly small as $\Du$ grows. 
 For example, let us consider the probability for the norm to be less than \(5\). Then \(\Prob (|\vct u| <5)\) is \(0.99999943\) when \(\Du=1\), \(0.99986\) when \(\Du=5\), \(0.99465\) when \(\Du=10\), \(0.001192\) when \(\Du=50\), and \(1.135 \times 10^{-15}\) when \(\Du=100\). So we see that, as the dimension increases, with probability close to \(1\) a sample from the \index{posterior}posterior would 
have a norm far from \(0\). Indeed, for \(\Du=1000\), the \(5\)th and \(95\)th percentiles are respectively \(30.3464\) and \(32.7823\). This means when \(d=1000\), we most likely will find a vector with size around \(31\), not \(0\). 
Another way to see this is that, since the components $u_\du$ of $u$
are \index{i.i.d.}i.i.d. standard unit \index{Gaussian}Gaussians we have that, by the strong law of large
numbers,
$$\frac{1}{\Du}\sum_{\du=1}^\Du u_\du^2 \to 1$$
as $\Du \to \infty$ almost surely. Thus, with high probability, the  
$\ell_2-$norm is of size $\sqrt{\Du}.$
This example suggests that in high dimension, a point estimator may not capture enough information about the density.
\end{example}

The preceding examples demonstrate that \index{MAP estimator}MAP estimators should be treated
with caution, as they may not capture the desired \index{posterior}posterior information in
many cases. This motivates the study of alternative ways ---beyond \index{MAP estimator}MAP estimators--- to capture information from the \index{posterior}posterior distribution. One such approach is to fit one or several \index{Gaussian}Gaussian distributions to the \index{posterior}posterior by minimizing an appropriate \index{distance}distance-like measure between distributions. This is the topic of the next chapter. 
However, in the remainder of this chapter we discuss gradient-based methods for minimization.  These may be
useful for MAP estimation, and also for fitting Gaussian approximations.

\section{Gradient-Based Optimization Algorithms}\label{sec:34}
\label{ssec:GA}

In this section we discuss algorithms for the minimization of
\(\J: \R^{\Du} \mapsto \R.\) Algorithms for the \index{optimization}optimization of functions
of this type are numerous, and vary considerably in type. In order 
to focus our discussion, we devote our attention entirely to 
gradient-based algorithms. These are organized around a single
important principle, and are also of interest due
to their use in parameter estimation arising in machine learning
(a form of \index{inverse problem}inverse problem).

\subsection{Gradient Flow}

Our starting point is the differential equation
\begin{equation}
\label{eq:gf}
\frac{du}{dt}=-D\J(u), \quad u(0)=u_0.
\end{equation}
A straightforward calculation shows that
\begin{equation}
\label{eq:gf}
\frac{d}{dt}\bigl(\J(u)\bigr)=\Bigl\langle D\J(u), \frac{du}{dt} 
\Bigr\rangle=-|D\J(u)|^2.
\end{equation}
This calculation is at the core of gradient-based \index{optimization}optimization
algorithms. Since the time-derivative of $u(t)$ gives the tangent to the
trajectory, it demonstrates that evolving in the direction of the
negative gradient of $\J(u)$ will cause  
$\J\bigl(u(t)\bigr)$ to be non-increasing as a function of time; indeed
$\J\bigl(u(t)\bigr)$ will actually decrease until $u$ is at a critical
point of $\J(\cdot)$: a point at which the gradient is zero, including
local minima, local maxima and saddle points.

For any $K\in \R^{\Du \times \Du},$ that we will assume \index{positive definite}positive definite in what follows, we may also consider the \emph{preconditioned} gradient flow
\begin{equation}
\label{eq:gf2}
\frac{du}{dt}=-KD\J(u), \quad u(0)=u_0.
\end{equation} 

\subsection{\index{gradient descent}Gradient Descent}

In order to turn the gradient flow \eqref{eq:gf2} into an \index{optimization}optimization algorithm,
we discretize it by the Euler method with variable time-step $\alpha_\ell>0.$

\FloatBarrier
\begin{algorithm}
\caption{\label{algMH} \index{gradient descent}Gradient Descent Algorithm}
\begin{algorithmic}[1]
\vspace{0.1in}
\STATE {\bf Input}: \index{objective}Objective function $\J:\R^\Du \to \R,$ \index{positive definite}positive definite matrix $K,$ initialization $u_0\in \R^\Du,$ number of steps $L,$  rule for choosing the step-sizes  $\{\alpha_\ell\}_{\ell = 0}^{L-1}.$  \\
\vspace{.04in}
\STATE For $\ell = 0,1,\dots,L -1$ do:
$$u_{\ell+1}=u_\ell-\alpha_\ell K D\J(u_\ell).$$
\STATE{\bf Output}: Deterministic iterates $u_0, u_1, \ldots, u_L.$
\label{alg_1}
\end{algorithmic}
\end{algorithm}
\FloatBarrier

It is natural to ask how $\alpha_\ell$ should be chosen. In order to get 
insight into this issue, we study in detail the case where $K=I$ and $\J(u)$ is quadratic. The latter condition ensures that  the iteration for $u_\ell$ is linear in the case of fixed $\alpha_\ell$;
it is however nonlinear when $\alpha_\ell$ is adapted, as it is here, on
the basis of $u_\ell.$

Let $A \in \R^{\Du \times \Du}$ be \index{positive definite}positive definite, let $b \in \R^\Du,$
and define
\begin{equation}
\J(u)=\frac12|b-Au|_{A}^2.
\end{equation}
This strictly convex function has minimum $u^\star$ which is the solution
of the linear system
\begin{equation}
\label{eq:linmin}
Au^\star=b.
\end{equation}
The gradient flow \eqref{eq:gf} gives the
linear differential equation
$$\frac{du}{dt}=b-Au$$
and has unique globally attracting fixed point at $u^\star.$

The resulting discrete time-step algorithm is
$$u_{\ell+1}=u_\ell+\alpha_\ell(b-Au_\ell).$$
The first question we ask is how $\alpha_\ell$ should be chosen to maximize 
the decrease in $\J(\cdot)$ in one step of the algorithm. 
We address this in the next lemma and then, using this optimal
time-step, we study the convergence properties of the algorithm.
With this goal in mind, it is helpful to define the {\em residual function}
$r: \R^d \to \R^d$ by $r(u)=b-Au.$ Given the sequence $\{u_\ell\}$
we may then define the {\em residual vector} $r_\ell=r(u_\ell).$
Then $\J(u)=\frac12|r(u)|_A^2$, $\J(u_\ell)=\frac12|r_\ell|^2_{A}$ and
$$u_{\ell+1}=u_\ell+\alpha_\ell r_\ell.$$

\begin{lemma}
Choosing
\begin{equation*}
\alpha_\ell = \frac{|r_\ell|^2}{|r_\ell|_{{\mA}}^2}
\end{equation*}
leads to the maximal decrease in $\J(\cdot)$ and to the algorithm
\begin{equation}
\label{eq:gdopt}
u_{\ell+1}=u_\ell+\frac{|r_\ell|^2}{|r_\ell|_{\mA}^2} r_\ell.
\end{equation}
\end{lemma}

\begin{proof}
We have
\begin{align*}
Au_{\ell+1}&=Au_\ell+\alpha_\ell Ar_\ell,\\
b&=b,
\end{align*}
so that subtracting gives
$$r_{\ell+1}=r_\ell-\alpha_\ell Ar_\ell.$$
From this it follows that
\begin{equation}\label{eq:auxeq}
\J(u_{\ell+1})=\J(u_\ell) - \alpha_\ell |r_\ell|^2 + \frac12 \alpha_\ell^2
|r_\ell|_{{\mA}}^2.
\end{equation}
The right-hand side is quadratic in $\alpha_\ell$ and minimized at the
prescribed choice of $\alpha_\ell.$
\end{proof}

\begin{theorem}[Conditioning of $A$ and Decrease of $\J$]
Let $A$ have maximal and minimal eigenvalues $\lambda_{\max} \ge 
\lambda_{\min}>0,$ respectively. Then
\begin{equation}\label{eq:auxineq}
\J(u_{\ell+1}) \le \Bigl(1-\frac{\lambda_{\min}}{\lambda_{\max}}\Bigr)\J(u_\ell).
\end{equation}
\end{theorem}

\begin{proof}
Substituting the optimal choice of $\alpha_\ell$ into equation \eqref{eq:auxeq} gives
\begin{align*}
\J(u_{\ell+1})&=\J(u_\ell)-\frac12 \frac{|r_\ell|^4}{|r_\ell|_{{\mA}}^2}\\
&=\J(u_\ell)- \frac{ |r_\ell|^4}{|r_\ell|_{{\mA}}^2 |r_\ell|_{A}^2}\J(u_\ell).
\end{align*}
Applying the result of Lemma \ref{l:ratios} below gives the desired result. 
\end{proof}

\begin{remark}
Inequality \eqref{eq:auxineq} suggests slow convergence of the algorithm for matrices
$A$ which have a large condition number, i.e. for which
$\lambda_{\max} \gg \lambda_{\min}.$
In principle this can be ameliorated by preconditioning the algorithm
by choosing $K=A^{-1}$ so that the preconditioned steepest
descent iteration becomes
$$u_{\ell+1}=u_\ell+\alpha_\ell(A^{-1}b-u_\ell).$$
The optimal choice of $\alpha_\ell$ for this iteration becomes $\alpha_\ell = 1,$ which gives $u_{\ell +1 } = A^{-1} b.$ Thus, the algorithm converges in one step, regardless of the initial condition. However, implementing the algorithm with $K=A^{-1}$ would require computation of $A^{-1}b;$
the goal of the descent algorithm is, of course, to avoid computation
of $A^{-1}$ in the first place. This discussion illustrates nonetheless the potential practical advantage of preconditioning using a \index{positive definite}positive definite matrix  $K\approx A^{-1}$ whose action on vectors
can nonetheless be computed much more cheaply than that of $A^{-1}$ itself. 
\end{remark}

\begin{lemma}
For any $u \in \R^{\Du},$
\label{l:ratios}
\begin{equation*}
\frac{|u|^4}{|u|_{{\mA}}^2 |u|_{A}^2}
  \geq
  \frac{\lambda_{\min}}{\lambda_{\max}}.
\end{equation*}

\end{lemma}

\begin{proof}
Since $A$ is assumed to be \index{positive definite}positive definite, the eigenvalue problem for $A$ has solutions with the form
\begin{align*}
      A\phi_{i} &= \lambda_{i}\phi_{i},\quad i=1,\dots, \Du, \\
      \la \phi_{i}, \phi_{j}\ra &= \delta_{ij}, \quad i,j = 1, \ldots, \Du,
\end{align*}
where we may assume the ordering
$$0<\lambda_{\min}:= \lambda_{1} \leq \cdots \le
\lambda_\Du =:\lambda_{\max}.$$
Expanding $u \in \R^d$ in this eigenbasis, we have
\begin{equation*}
u = \sum_{\du=1}^{\Du}u_{\du}\phi_{\du}
\end{equation*}
with $u_\du=\la u, \phi_\du \ra.$
Now, noting that
$$|u|^2= \sum_{\du=1}^\Du  u_\du^2, \quad |u|^2_{{\mA}}=\sum_{\du=1}^\Du \lambda_\du u_\du^2,
\quad |u|^2_{A}=\sum_{\du=1}^\Du \frac{u_\du^2}{\lambda_\du},$$
we get
\begin{align*}
|u|^2 &= \sum_{\du=1}^{\Du}u_{\du}^{2} \geq
\lambda_{\min}|u|_{A}^2,\\
|u|_{{\mA}}^2 &= \sum_{\du=1}^{\Du}\lambda_{\du}u_{\du}^{2} \leq
\lambda_{\max}|u|^2.
\end{align*}
The desired result follows.
\end{proof}

\subsection{\index{gradient descent!stochastic}Stochastic Gradient Descent}

%
%
%


Here we consider optimizing a stochastically defined \index{objective}objective function. 
This concerns the setting where
\begin{equation}\label{eq:JforSGD}
\J(u)=\int_{B} F(u,z)  \zeta  (z) \, dz,
\end{equation}
$B \subseteq \R^{ \dz },$ and $ \zeta  $ is the pdf of a random variable
$z \in B$. The goal is \index{optimization}optimization of $\J(u).$

\index{gradient descent!stochastic}Stochastic gradient descent is designed to numerically solve this \index{optimization}optimization problem in cases where explicit evaluation
of $\J(u)$, and its gradient $D\J(u)$, is not possible because
doing so involves an integration over $B$. It is assumed, however,
that $D_u F(u,z)$ can be evaluated  
for any fixed $z \in B \subseteq\R^{ \dz }.$ 
The proposed algorithm is then the following:
\FloatBarrier
\begin{algorithm}
\caption{\label{algSGD} \index{gradient descent!stochastic}Stochastic Gradient Descent Algorithm}
\begin{algorithmic}[1]
\vspace{0.1in}
\STATE {\bf Input}: \index{objective}Objective function $\J$ defined implicitly by \eqref{eq:JforSGD}, \index{positive definite}positive definite matrix $K,$ initialization $u^{(0)}\in \R^\Du,$ number of steps $L,$  rule for choosing the step-sizes  $\{\alpha_\ell\}_{\ell = 0}^{L-1}.$  \\
\vspace{.04in}
\STATE For $\ell = 0,1,\dots,L -1$ do: \\
$u^{(\ell+1)}=u^{(\ell)}-\alpha_\ell K D_u F(u^{(\ell)},z^{(\ell)})\quad   \text{with} \,\,  z^{(\ell)}\sim   \zeta  \,\, \index{i.i.d.}\text{i.i.d.}$
\STATE{\bf Output}: Random iterates $u^{(0)}, u^{(1)}, \ldots, u^{(L)}.$
\label{alg_1}
\end{algorithmic}
\end{algorithm}
\FloatBarrier
The output of the algorithm defines
an (in general) inhomogeneous \index{Markov chain}Markov chain; it will be homogeneous
if $\alpha_\ell$ is constant in $\ell$. \index{Markov chain}Markov chains are discussed in more detail in Chapter \ref{ch:6}. In what follows we will show the convergence of the algorithm in a simple setting, amenable to a concrete analysis. We will also motivate the importance of the algorithm in a machine learning context.

Our convergence analysis will rely on the following assumption.
\begin{assumption}\label{assumptionobjective}
The \index{objective} objective function $\J$ in \eqref{eq:JforSGD} satisfies:
\begin{itemize}
\item[(i)] There exists $c_1$ such that, for all $u \in \R^d,$ $\sup_{z \in B} | D_u F(u,z) |^2 \le c_1 .$
\item[(ii)] There exists $c_2>0$ such that, for all $u, v \in \R^d,$ 
\begin{equation}\label{eq:strongconvex}
\J(v) \ge \J(u) + \langle D\J(u), v - u \rangle + \frac{c_2}{2} |u - v|^2. 
\end{equation}
\end{itemize}
\end{assumption}

Note that item (i) in Assumption \ref{assumptionobjective} implies a \index{Lipschitz}Lipschitz condition on $F$ over its second argument, while the second item assumes strong convexity of $\J.$ In particular, this second condition implies that, if $\J$ is sufficiently smooth, its Hessian satisfies $D^2 \J \ge c_2 I, $ that is, for all $u \in \R^d$ the matrix $D^2 \J(u) - c_2 I$ is \index{positive definite}positive definite. 

\begin{theorem}[Convergence of \index{gradient descent!stochastic}Stochastic Gradient Descent]
\label{t:add1}
Suppose that Assumption \ref{assumptionobjective} holds. Suppose further that the step-sizes  are positive with $\alpha_\ell \to 0 $ and $\sum_{\ell = 0}^{\infty}  \alpha_\ell= \infty.$ Then the \index{objective!function}objective function $\J$  has a unique minimizer $u^\star$ and  the output of Algorithm \ref{algSGD} satisfies $\Expect \bigl[ |u^{(\ell)} - u^\star|^2 \bigr] \to 0$ as $\ell \to \infty.$
\end{theorem}
\begin{proof}
The existence and uniqueness of the minimizer $u^\star$ of $\J$ follows by the strong convexity in Assumption \ref{assumptionobjective}  item (ii).
Denote $e_\ell = \Expect \bigl[ |u^{(\ell)} - u^\star|^2 \bigr].$ Then, from the definition of the stochastic gradient descent updates, we have that
\begin{align}\label{eq:auxrecur}
\begin{split}
e_{\ell+1} &= \Expect \Bigl[ | u^{(\ell)} - u^\star  - \alpha_\ell D_u F(u^{(\ell)} , z^{(\ell)}) |^2   \Bigr] \\
& = e_\ell + \alpha_\ell^2 \Expect\Bigl[ |D_u F(u^{(\ell)},z^{(\ell)}) |^2  \Bigr]  
- 2 \alpha_\ell \Expect \Bigl[ \bigl\langle u^{(\ell)} - u^\star, D_u F(u^{(\ell)}, z^{(\ell)} ) \bigr\rangle \Bigr] .
\end{split}
\end{align}
By the law of total expectation and the definition of $\J$ in \eqref{eq:JforSGD}, we can rewrite the last expectation in the right-hand side as
\begin{align}\label{eq:auxder}
\begin{split}
\Expect \biggl[  \Expect\Bigl[ \bigl\langle u^{(\ell)} - u^\star, D_u F(u^{(\ell)}, z^{(\ell)} ) \bigr\rangle  \mid  u^{(1)}, \ldots, u^{(\ell)},   z^{(1)}, \ldots, z^{(\ell-1)} \Bigr] \biggr] 
= \Expect\Bigl[ \langle u^{(\ell)} - u^\star, D \J(u^{(\ell)}) \rangle \Bigr]. 
\end{split}
\end{align}
Therefore, using Assumption  \ref{assumptionobjective} items (i) and (ii) to bound the second and third terms in the right-hand side of   \eqref{eq:auxrecur}, we deduce that
\begin{align*}
e_{\ell+1} &\le (1 - \alpha_\ell c_2) e_\ell + \alpha_\ell^2 c_1.
\end{align*}
It follows that, for any $\epsilon >0,$ 
$$ e_{\ell+1} - \epsilon \le (1 - \alpha_\ell c_2) (e_\ell - \epsilon) + \alpha_\ell ( \alpha_\ell c_1 - \epsilon c_2). $$
Note that, for all sufficiently large $\ell$,
$\alpha_\ell ( \alpha_\ell c_1 - \epsilon c_2) <0$. Thus we obtain that, 
for all sufficiently large $\ell,$ 
$$ e_{\ell+1} - \epsilon \leq (1 - \alpha_\ell c_2) (e_\ell - \epsilon).$$
Iterating this inequality gives that, for  some $\ell$ sufficiently 
large and all $m \in \N$, 
$$e_{\ell+m} - \epsilon \le \prod_{j = 0}^{m-1} (1 - \alpha_{\ell + j} c_2) ( e_\ell - \epsilon).$$
Recall that for $x \in (0,1)$ we have that $\log(1-x) \le -x$ (a proof can be found in Chapter \ref{ch4}, Lemma \ref{l:kllemma}).
 Now notice that, as $m \to \infty,$
\begin{equation}
0 \le \prod_{j = 0}^{m-1} (1 - \alpha_{\ell + j}  c_2) \le \exp \biggl(  \,  \sum_{j= 0}^{m-1} -\alpha_{\ell + j}  c_2   \biggr)\to 0
\end{equation} 
since by assumption $\sum_{\ell = 0}^{\infty} \alpha_\ell= \infty.$
Thus $e_{\ell+m} \le \epsilon$ for all $m$ large enough, and the desired result follows since $\epsilon$ is arbitrary. 
\end{proof}


\begin{example}[Stochastic Gradient Descent in Machine Learning]
\label{ex:add1}
Although the original motivation for the algorithm was settings in which
$D\J(u)$ is not explicitly calculable, the methodology has gained importance
in machine learning \index{optimization}optimization tasks where the motivation is different.
Consider an \index{objective}objective function defined by
$$\J(u)=\frac12 \sum_{i=1}^{ \dz } |y^i-G^i(u)|^2,$$
where each $y^i \in \R^\Dy$ represents data arising from a \index{forward!model}forward model $G^i$.
We may write this \index{objective}objective in the form of equation \eqref{eq:JforSGD} as follows. Define
$$F(u,z)=  \frac{ \dz }{2} \sum_{i=1}^{ \dz } z_i |y^i-G^i(u)|^2$$
with $z:=(z_1,\ldots, z_{ \dz }) \in \R^{ \dz }.$
Define $e^i:=(0,\ldots,1,\ldots,0)^\top$, the $i$-th unit
vector and let
$$ \zeta  (z)=\frac{1}{ \dz }\sum_{i=1}^{ \dz } \delta(z - e^i).$$
Then
$$\J(u)=\int_{B} F(u,z)  \zeta (z) \, dz$$
with $B$ any bounded set containing all the unit vectors 
$\{e^i\}_{i=1}^{ \dz }.$ This is because if $z \sim  \zeta ,$ then $\Expect[z] =  {\dz}^{-1}  (1, \ldots, 1)^\top \in \R^{ \dz }.$

In this setting the \index{gradient descent!stochastic}stochastic gradient descent algorithm becomes
\begin{align} 
\label{eq:algx}
\begin{split}
i(\ell) &\sim \mathfrak{u}(\{1, \ldots,  \dz \}) \, \,\index{i.i.d.}\text{i.i.d.}  \\
u^{(\ell+1)} &=u^{(\ell)} + \alpha_\ell  \dz   DG^{i(\ell)}(u^{(\ell)})^\top\Bigl(y^{i(\ell)}-G^{i(\ell)}(u^{(\ell)})\Bigr),
\end{split}
\end{align}
where the notation signifies that $i(\ell)$ is chosen uniformly at random from the index set
$\{1,\ldots,  \dz \}.$ 
In the context of machine learning this algorithm has several potential
advantages over standard \index{gradient descent}gradient descent: i) if $ \dz $ is massive (large data sets)
then it is not necessary to hold the entirety of $D\J(u)$ in memory
at any one time; ii) if the data is received in a streaming fashion then
the algorithm can be implemented in a non-random fashion where the indices
$i(\ell)$ are traversed systematically as the components of the data are
received; (iii)  it is observed empirically that the randomness
induced by sampling terms from the summand defining $D\J(u)$ promotes
improved \index{optimization}optimization  for nonconvex $\J(u)$,
in comparison with standard \index{gradient descent}gradient descent,
because the randomness allows escape from local minima and allows for
more rapid traversing of saddle-point neighbourhoods.
\end{example}

We now consider the setting of Example \ref{ex:add1} in which
$G^i(u)=A^i u$ for some \index{positive definite}positive definite matrix $A^i \in \R^{\Du \times \Du}$, $y^i \in \R^{\Du}$
and we modify the definition of $\J$ so that each term employs
a different norm:
$$\J(u)=\frac12 \sum_{i=1}^{ \dz } |y^i -A^i u|_{A^i}^2.$$
We define
$$\bar{y}=\frac{1}{ \dz }\sum_{i=1}^{ \dz } y^i, \quad \bar{A}=\frac{1}{ \dz }\sum_{i=1}^{ \dz } A^i.$$
A straightforward calculation reveals that $\bar{A}$ is \index{positive definite}positive definite
and $\J(u)$ has a unique minimizer 
$u^\star$ solving the equation $\bar{A}u^\star=\bar{y}.$

In this setting the analog of the algorithm from \eqref{eq:algx} becomes
\begin{equation} 
\label{eq:algy}
u^{(\ell+1)}  =u^{(\ell)}  + \alpha_{\ell}  \Bigl(y^{i(\ell)}-A^{i(\ell)}u^{(\ell)} \Bigr),
\end{equation}
where $i(\ell)$ is chosen uniformly at random from $\{1,\dots,  \dz \}$
\index{i.i.d.}i.i.d. at every step, and independently from $u^{(\ell)}.$
 This gives an (in general inhomogeneous)
\index{Markov chain}Markov chain.
Theorem \ref{t:add1} concerning stochastic gradient descent 
made the assumption that the time-step $\alpha_\ell$ decreases to zero
with increasing $\ell.$ Here we choose a fixed time-step $\alpha$
leading to a homogeneous \index{Markov chain}Markov chain;
we prove a positive result about the convergence of the algorithm
in an average sense.

\begin{theorem}[Convergence of Stochastic Gradient Descent -- Constant Step-Size]
\ Let $\alpha_\ell=\alpha>0$ and assume 
that, in \eqref{eq:algy},  $\lim_{\ell \to \infty} \Expect\bigl[ u^{(\ell)} \bigr]$ exists. Then the limit is
given by $u^\star.$
\end{theorem}

\begin{proof}
Take expectation in \eqref{eq:algy} conditional on knowing $u^{(\ell)}$ to obtain
\begin{align*}
\Expect \bigl[  u^{(\ell+1)}   |u^{(\ell)} \bigr]&   =u^{(\ell)} + \alpha \Bigl(\Expect[y^{i(\ell)} ]-\Expect \bigl[ A^{i(\ell)} u^{(\ell)} \bigr] \Bigr)\\
&=u^{(\ell)} + \alpha \Bigl(\bar{y}-\bar{A}u^{(\ell)}\Bigr).
\end{align*}
Taking expectation over $u^{(\ell)}$ gives
$$\Expect \bigl[ u^{(\ell+1)} \bigr] =\Expect \bigl[u^{(\ell)}\bigr] + \alpha \Bigl(\bar{y}-\bar{A}\Expect \bigl[ u^{(\ell)}\bigr]\Bigr).$$
Taking the limit $\ell \to \infty$ and assuming $\lim_{\ell \to \infty} \Expect \bigl[ u^{(\ell)} \bigr]$ 
exists and is given by $u^{\dagger}$ yields 
$$u^{\dagger}=u^{\dagger} + \alpha \bigl(\bar{y}-\bar{A}u^{\dagger}\bigr).$$
Hence $\bar{A}u^{\dagger}=\bar{y}$ and by the invertibility of $\bar{A}$
it follows that $u^{\dagger}=u^{\star}.$
\end{proof}

\section{Discussion and Bibliography}\label{sec:35}
Standard textbooks on \index{optimization}optimization include \cite{nocedal2006numerical,dennis1996numerical,boyd2004convex}. 
The \index{optimization}optimization perspective on inversion predates the development of
the \index{Bayesian}Bayesian approach as a computational tool, because it is typically
far cheaper to implement. The subject of classical \index{regularization}regularization 
techniques for inversion is discussed in \cite{engl1996regularization}.
The concept of \index{MAP estimator}MAP estimators, which links probability to
\index{optimization}optimization, is discussed in the books 
\cite{kaipio2006statistical,tarantola2005inverse} 
in the finite-dimensional setting.
The paper \cite{dashti2013map} studies this connection precisely: it
defines the \index{MAP estimator}MAP estimator for infinite-dimensional \index{Bayesian!inverse problem}Bayesian \index{inverse problem}inverse problems, and the corresponding \index{variational}variational formulation, in the setting of \index{Gaussian}Gaussian
\index{prior}priors and Gaussian noise. 
The paper \cite{helin2015maximum} studies related ideas, but in the 
\index{non-Gaussian}non-Gaussian setting, and \cite{agapiou2017sparsity} generalizes 
the \index{variational}variational formulation of \index{MAP estimator}MAP estimators to \index{non-Gaussian}non-Gaussian \index{prior}priors that 
are sparsity promoting. Recent work sets \index{MAP estimator}MAP estimators
for PDE-based \index{inverse problem}inverse problems within the existing framework of
statistical estimation theory \cite{nickl2018convergence}, 
and also within the framework of $\Gamma$-convergence \cite{ayanbayev2021convergence}. 
The paper \cite{worthen2014towards} shows an  example  of \index{optimization}optimization
based inversion  in  a large-scale geophysical application.

A discussion of gradient-based descent in both continuous and
discrete time may be found in \cite{stuart1998dynamical}.
Stochastic analogues of \eqref{eq:gf} may be used to sample
the probability distribution $\exp\bigl(-\beta \J(u)\bigr)$
and an introduction to this subject may be found in
\cite{pavliotis2014stochastic}. The idea of using \emph{stochastic
approximation} for solving nonlinear equations defined via
an expectation was introduced in the paper \cite{robbins1951stochastic}.
The specific analysis in the case of such equations defined
as a gradient, and in particular the statement and proof
of a  result closely related to 
Theorem \ref{t:add1}, may be found in \cite{kiefer1952stochastic}. 
The link to machine learning, described in Example \ref{ex:add1}, is 
overviewed in \cite{goodfellow2016deep}. The paper \cite{bottou2018optimization} provides an accessible introduction to \index{optimization}optimization methods for large-scale machine learning.



 \chapter{\Large{\sffamily{Gaussian Approximation}}} 
\label{ch4}

Recall the \index{inverse problem}inverse problem of finding $u$ from $y$ given by \eqref{eq:jc0}, 
and the \index{Bayesian}Bayesian formulation which follows from Assumption \ref{a:jc1}. 
In the previous chapter we explored the idea of obtaining a point estimator using an \index{optimization}optimization perspective arising
from maximizing the \index{posterior}posterior pdf. We related this idea to finding the center of a ball of radius $\delta$ with maximal probability in the limit $\delta \to 0^+.$
Whilst the idea is intuitively appealing, and reduces the complexity of \index{Bayesian!inference}Bayesian inference from determination of a pdf to
determination of a single point, the approach has a number of limitations, 
in particular for noisy, multi-peaked or high-dimensional 
\index{posterior}posterior distributions; the examples in the previous chapter 
illustrated these limitations. 

In this chapter we again adopt an \index{optimization}optimization approach to the problem of \index{Bayesian!inference}Bayesian inference, but instead seek a \index{Gaussian}Gaussian distribution $p = \Nc(\vct{\mu}, \vct{\Sigma})$ that minimizes some \index{distance}distance-like measure from the \index{posterior}posterior 
$\post^y(\vct{u})$. However, rather than using a metric to define the
\index{distance}distance, we use the Kullback-Leibler \index{divergence!Kullback-Leibler}divergence introduced in Section \ref{sec:41}. Since this  \index{divergence}divergence 
is not symmetric, we obtain to two distinct minimization problems described, in turn, in Sections \ref{sec:existence} and \ref{sec:43}. Both  approaches are compared in Section \ref{sec:44}. In Section \ref{sec:posterior} we show how \index{Bayes theorem}Bayes theorem itself can be formulated through a closely related minimization principle. The chapter closes in Section \ref{sec:46} with extensions and bibliographical remarks.

\section{The Kullback-Leibler Divergence}\label{sec:41}
\begin{definition}\label{def:KLdivergence}
Let $\post,  \post' >0$ be two pdfs on $\Ru$.\footnote{The definition extends to situations where the support of $\post'$ is not
the whole of $\Ru$, provided $\post$ is absolutely continuous with respect to
$\post'$.}
The \index{divergence!Kullback-Leibler}{\em Kullback-Leibler divergence},  also known as {\em relative entropy}, 
of $\post$ with respect to $\post'$ is defined by 
\begin{equation*}
\begin{split}
\dkl(\post \| \post') &:=  \int_{\Ru} \log{\left(\frac{\post(u)}{\post'(u)}\right) \post(u)du  }\\
 &= \Expect{^\post\left[ \log{\left(\frac{\post}{\post'}\right) }\right]} \\
 &= \Expect{^{\post'}\left[ \log{\left(\frac{\post}{\post'}\right)\frac{\post}{\post'} }\right]}.
\end{split}
\end{equation*}
\end{definition}


\index{divergence!Kullback-Leibler}Kullback-Leibler  is a \index{divergence}divergence in that $\dkl(\post \|\post') \ge 0,$ with equality if and only if $\post = \post'.$
From the definition it is clear
that $\dkl(\post \|\post')=0$ if $\post = \post';$ that it is otherwise
strictly positive is proved in Lemma \ref{l:kllemma} below, as a consequence
of the analogous property for the \index{distance!Hellinger}Hellinger or \index{distance!total variation}total variation distances. 
However, unlike \index{distance!Hellinger}Hellinger and \index{distance!total variation}total variation, it does not define a metric. In particular, 
the  \index{divergence!Kullback-Leibler}Kullback-Leibler  divergence is not symmetric: in general,
$$ \dkl(\post \| \post') \neq \dkl(\post' \| \post),$$
a fact that will be important in this chapter. 
Nevertheless, it is useful for at least four reasons: (1) it provides an upper bound for many \index{distance}distances,  as illustrated in Lemma \ref{l:kllemma} below; (2) its logarithmic structure allows explicit computations
that are difficult using actual \index{distance}distances; (3) it satisfies many convenient analytical properties such as being convex in both arguments and lower-semicontinuous in the topology of \index{weak convergence}weak convergence; and (4) it has an information theoretic and physical interpretation.

\begin{example} \label{ex:klg}
Consider two Gaussian densities $p_1$ and $p_2$ on $\R^d$
with means $\mu_1, \mu_2$ and positive definite covariance matrices $\Sigma_1, \Sigma_2$.
Then
$$ \dkl(p_1\|p_2)=\frac{1}{2}\Bigl(\log \frac{{\rm det} \Sigma_2}
{{\rm det} \Sigma_1}-d 
+|\mu_1-\mu_2|_{\Sigma_2}^2+{\rm Tr}(\Sigma_2^{-1}\Sigma_1)\Bigr).$$ 
\end{example}

 The following lemma establishes upper-bounds on \index{distance!total variation}total variation and \index{distance!Hellinger}Hellinger distances in terms of the \index{divergence!Kullback-Leibler}Kullback-Leibler divergence. Note that as a corollary we obtain a proof of the fact that $\dkl(\post \|\post') > 0$ if $\post \neq \post'.$ 
\begin{lemma} \label{l:kllemma}
The \index{divergence!Kullback-Leibler}Kullback-Leibler  divergence provides
the following upper bounds for \index{distance!Hellinger}Hellinger and \index{distance!total variation}total variation distance:
$$ \dhell(\post, \post')^2 \leq \frac{1}{2}\dkl(\post \| \post'),
\quad \dtv(\post, \post')^2 \leq \dkl(\post \| \post').$$
\end{lemma}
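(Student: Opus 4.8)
The plan is to prove the Hellinger bound directly from an elementary inequality and then obtain the total variation bound essentially for free by invoking Lemma \ref{l:dh2}. The key observation for the first inequality is to rewrite both sides in terms of the Hellinger affinity $\int \sqrt{\post\post'}\,du$. Expanding the square in the definition of $\dhell$ and using that $\post$ and $\post'$ integrate to one, I would first record the identity $\dhell(\post,\post')^2 = 1 - \int_{\Ru}\sqrt{\post(u)\post'(u)}\,du$, so that the target inequality $\dhell^2 \le \tfrac12\dkl$ becomes equivalent to the cleaner statement $\dkl(\post\|\post') \ge 2 - 2\int_{\Ru} \sqrt{\post\post'}\,du$.

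For the main estimate I would write the divergence as $\dkl(\post\|\post') = -2\int_{\Ru}\post(u)\log\sqrt{\post'(u)/\post(u)}\,du$ and apply the elementary bound $\log t \le t-1$, valid for all $t>0$; here $\post,\post'>0$ by hypothesis, so the choice $t = \sqrt{\post'/\post}$ is well defined. Because multiplying by the factor $-2\post \le 0$ reverses the inequality, this yields the pointwise estimate $-2\post\log\sqrt{\post'/\post} \ge 2\post - 2\sqrt{\post\post'}$, and integrating over $\Ru$ gives exactly $\dkl \ge 2 - 2\int\sqrt{\post\post'}\,du = 2\dhell^2$, which is the first claim. An equivalent route is to apply Jensen's inequality to the concave function $\log$ against the probability measure with density $\post$: this gives $-\tfrac12\dkl \le \log\int\sqrt{\post\post'}\,du = \log(1-\dhell^2) \le -\dhell^2$, using $\log(1-x)\le -x$ on $[0,1)$; either route produces the same bound.

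Finally, the total variation bound follows by composition rather than a fresh argument. Lemma \ref{l:dh2} supplies $\tfrac1{\sqrt2}\dtv(\post,\post') \le \dhell(\post,\post')$, hence $\dtv^2 \le 2\dhell^2 \le \dkl$, which is precisely the second claim. This also explains why the stated constant for total variation is the non-sharp $1$ rather than the $\tfrac12$ of the usual Pinsker inequality: it is simply the cost of routing through Hellinger instead of estimating $\dtv$ against $\dkl$ directly.

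I expect the only genuinely delicate point to be the sign bookkeeping in the main step, namely keeping the inequality direction straight when multiplying $\log t \le t-1$ by the nonpositive factor $-2\post$, together with the mild care needed to justify the manipulations involving $\sqrt{\post'/\post}$, which is guaranteed by the standing assumption $\post,\post'>0$ in the definition of $\dkl$. No integrability obstacles arise, since each bound reduces to comparisons among quantities already known to be finite (the densities integrate to one and the affinity lies in $[0,1]$).
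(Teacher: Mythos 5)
Your proof is correct and follows essentially the same route as the paper's: the core step is the elementary inequality $\log t \le t-1$ applied at $t=\sqrt{\post'/\post}$ (the paper phrases it via $\phi(x)=x-1-\log x\ge 0$, giving $\sqrt{x}-1\ge\tfrac12\log x$), and the total variation bound is likewise obtained by composing with Lemma \ref{l:dh2} rather than by a direct Pinsker-type argument. The Jensen alternative you sketch is a nice extra, but the main argument is the paper's argument.
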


\begin{proof}

The second inequality follows from the first one by Lemma \ref{l:dh2}; thus
we prove only the first inequality. 
Consider the function $\phi:\R^+ \mapsto \R$ 
defined by
$$\phi(x) = x - 1 - \log{x}.$$
Note that
\begin{equation*}
\begin{split}
\phi'(x) &= 1 - \frac{1}{x}, \\
\phi''(x) &= \frac{1}{x^2}, \\
\phi(\infty) & = \phi(0) =\infty.
\end{split}
\end{equation*}
Thus, the function is convex on its domain.
As the minimum of $\phi$ is attained at $x = 1$, and as $\phi(1) = 0$,
we deduce that $\phi(x) \geq 0$  for all $x \in (0, \infty).$
Hence, 
\begin{equation*}
\begin{split}
x-1 &\geq \log{x} \hspace{1cm} \text{ for all } x \ge 0, \\
\sqrt{x} - 1&\geq \frac{1}{2}\log{x} \hspace{1cm} \text{for all } x  \ge 0.
\end{split}
\end{equation*}
We can use this last inequality to bound the \index{distance!Hellinger}Hellinger distance:
\begin{equation*}
\begin{split}
\dhell(\post,\post')^2 &= \frac{1}{2}\int \left( 1 - \sqrt{\frac{\post'}{\post}} \right)^2 \post du \\
&=  \frac{1}{2}\int \left( 1 + \frac{\post'}{\post} -  2\sqrt{\frac{\post'}{\post}} \right) \post du \\
&= \int \left( 1 - \sqrt{\frac{\post'}{\post}} \right) \post du \leq -\frac{1}{2}\int \log\Bigl({\frac{\post'}{\post}}\Bigr)\post du = \frac{1}{2}\dkl(\post \| \post').
\end{split}
\end{equation*}
\end{proof}

\section{Best \index{Gaussian}Gaussian Fit by Minimizing $\dkl(p\|\post)$}
\label{sec:existence}

In this section we prove the existence of a best \index{Gaussian!approximation}Gaussian approximation $p = \Nc(\mu, \Sigma)$ to a given pdf $\pi$ in the sense that $\dkl ( p \| \pi)$ is minimized. As part of our analysis, we will show that Gaussian pdfs $p$ that minimize $\dkl ( p \| \pi)$ can be found by solving a stochastic optimization algorithm to determine optimal mean and covariance. Therefore, the stochastic gradient descent algorithm studied in Chapter \ref{chap:optimization} provides a natural method to find a best Gaussian fit. 
While the existence of a minimizer and the applicability of stochastic gradient descent apply more broadly, we focus our discussion on the case where $\pi = \pi^y$ is a \index{posterior}posterior distribution satisfying the following assumption:

\begin{assumption}\label{asp:1}
The \index{posterior}posterior distribution  $\pi(u) = \frac{1}{Z} \exp \bigl(-\loss(u)\bigr) \rho(u)$ satisfies: 
\begin{itemize}
\item The \index{loss}loss function $\loss(u)$ 
is non-negative and bounded above. 
\item The \index{prior}prior is a centered isotropic \index{Gaussian}Gaussian: $\pr(u) =  \Nc(0, \lambda^{-1}\Id).$
\end{itemize}
\end{assumption}
Let $\mathcal{A}$ be the set of \index{Gaussian}Gaussian distributions on $\Ru$ with \index{positive definite}positive definite covariance,
$$ \mathcal{A} = \{\Nc(\vct{\mu},\mtx{\Sigma}): \vct{\mu} \in \Ru, \mtx{\Sigma} \in \R^{\Du\times \Du} \index{positive definite}\text{ positive definite}\}.$$

We have the following theorem, which establishes the existence of a  best \index{Gaussian!approximation}Gaussian approximation. We remark, however, that minimizers need not be unique. Note that $\mathcal{A}$ is an open set
since the set of \index{positive definite}positive definite matrices is open.
It is thus implicit in the theorem that the infimum
is indeed attained with \index{positive definite}positive definite covariance. 

\begin{theorem}[Best \index{Gaussian!approximation}Gaussian Approximation]
\label{t:bi1}
Under Assumption \ref{asp:1}, there exists at least one probability distribution $p \in \mathcal{A}$ at which the infimum 
$$\inf_{p \in \mathcal{A}} \dkl (p \| \post) $$
is attained.
\end{theorem}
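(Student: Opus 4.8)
The plan is to apply the direct method in the calculus of variations to the map $(\vct\mu,\mtx\Sigma)\mapsto \dkl(p\|\post)$ over the open parameter domain $\{(\vct\mu,\mtx\Sigma):\vct\mu\in\Ru,\ \mtx\Sigma\succ 0\}$, where $p=\Nc(\vct\mu,\mtx\Sigma)$. The first step is a decomposition that isolates the role of each hypothesis in Assumption \ref{asp:1}. Writing $\post(u)=\frac{1}{Z}e^{-\loss(u)}\pr(u)$ and expanding the logarithm, I would establish
\[
\dkl(p\|\post)=\log Z+\Expect^p[\loss]+\dkl(p\|\pr).
\]
Because $\loss\geq 0$ we have $\Expect^p[\loss]\geq 0$, and because $\loss$ is bounded above by some $\loss_{\max}$ we have $\Expect^p[\loss]\leq \loss_{\max}$; both bounds are used below. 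Note also that $Z\in[e^{-\loss_{\max}},1]$, so $\log Z$ is a finite constant.

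Second, since both $p$ and $\pr=\Nc(0,\lambda^{-1}\Id)$ are Gaussian, the term $\dkl(p\|\pr)$ is explicit:
\[
2\,\dkl(p\|\pr)=\lambda\,\mathrm{tr}(\mtx\Sigma)+\lambda|\vct\mu|^2-\Du-\Du\log\lambda-\log\det\mtx\Sigma.
\]
I would read coercivity off this formula. Writing the eigenvalues of $\mtx\Sigma$ as $\sigma_1,\dots,\sigma_\Du>0$, each scalar term $\lambda\sigma_i-\log\sigma_i$ is bounded below (minimized at $\sigma_i=\lambda^{-1}$) and tends to $+\infty$ as $\sigma_i\to 0^+$ or $\sigma_i\to\infty$; combined with the $\lambda|\vct\mu|^2$ term this shows that any sublevel set $\{\dkl(p\|\pr)\leq M\}$ forces $|\vct\mu|\leq R$ and $a\,\Id\preceq\mtx\Sigma\preceq b\,\Id$ for some $0<a\leq b<\infty$ depending on $M$. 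Since $\Expect^p[\loss]\geq 0$ and $\log Z$ is constant, sublevel sets of the full objective $\dkl(p\|\post)$ are contained in sublevel sets of $\dkl(p\|\pr)$, hence in a compact subset $K$ of the open positive-definite cone.

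Third, continuity: the Gaussian-to-Gaussian KL term is a smooth function of $(\vct\mu,\mtx\Sigma)$, and the remaining term is continuous because $\loss$ is bounded. Indeed, if $(\vct\mu_n,\mtx\Sigma_n)\to(\vct\mu,\mtx\Sigma)$ then the Gaussian densities converge in total variation, so that $|\Expect^{p_n}[\loss]-\Expect^p[\loss]|\leq 2\loss_{\max}\,\dtv(p_n,p)\to 0$ by Lemma \ref{lemmatv}. With coercivity and continuity in hand the direct method concludes: take a minimizing sequence, use coercivity to confine its tail to $K$, extract a convergent subsequence with limit $(\vct\mu_*,\mtx\Sigma_*)\in K$ (so $\mtx\Sigma_*\succ 0$ and the limit lies in $\mathcal{A}$), and use continuity to identify its value with the infimum.

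The main obstacle is ensuring that the limiting covariance remains strictly positive definite rather than degenerating to the boundary of the cone, which would yield a singular object outside $\mathcal{A}$. This is exactly what the $-\log\det\mtx\Sigma$ term rules out, and it is essential here that $\Expect^p[\loss]$ stays bounded so that it cannot cancel this blow-up — this is precisely where both halves of Assumption \ref{asp:1} (nonnegativity for the lower bound in the coercivity estimate, boundedness above for continuity and for controlling the loss term) are consumed. A secondary technical point is the total-variation continuity of $(\vct\mu,\mtx\Sigma)\mapsto\Nc(\vct\mu,\mtx\Sigma)$, which is standard but must be combined with boundedness of $\loss$ to transfer into continuity of the expectation.
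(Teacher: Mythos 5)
Your proof is correct and follows essentially the same route as the paper: compute $\dkl(p\|\post)$ explicitly in terms of $(\vct{\mu},\mtx{\Sigma})$, observe that up to additive constants it equals $\Expect^p[\loss]+\dkl(p\|\pr)$, establish coercivity as the parameters degenerate, and conclude by continuity and compactness. Your eigenvalue-wise coercivity estimate, which confines sublevel sets to $a\,\Id\preceq\mtx{\Sigma}\preceq b\,\Id$, is in fact slightly more careful than the paper's confinement to $r\le\mathrm{tr}(\mtx{\Sigma})\le R$ (trace bounds alone do not keep $\mtx{\Sigma}$ away from the singular boundary of the cone), and your total-variation argument for the continuity of $(\vct{\mu},\mtx{\Sigma})\mapsto\Expect^p[\loss]$ makes explicit a step the paper leaves implicit.
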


\begin{proof}
The  \index{divergence!Kullback-Leibler}Kullback-Leibler divergence can be computed explicitly as
\begin{equation*}
\begin{split}
\dkl (p \| \post) = &\Expect^p  \bigl[ {\log{p}}\bigr] - \Expect^p\bigl[ \log \post\bigr] \\
 = &\Expect^p
\left[
-\frac{1}{2}\vert \vct{u} - \vct{\mu} \vert^2_{\vct{\Sigma}} 
- \frac{1}{2}\log \left( (2\pi)^\Du \text{det} \vct{\Sigma} \right) \right. + \left. \loss (\vct{u}) + \frac{\lambda}{2} \vert \vct{u} \vert^2 +\log Z 
\right].
\end{split}
\end{equation*}
Note that $Z$ is the normalization constant for $\post$ and is independent
of $p,$ and hence of $\vct{\mu}$ and $\vct{\Sigma}.$
We can represent a random variable $\vct{u} \sim p$ by writing
 $\vct{u} = \vct{\mu} + \vct{\Sigma}^{1/2}\xi$, where $\xi \sim \Nc(\vct{0}, \vct{\Id})$, 
and hence
$$ \vert \vct{u} \vert^2 = \vert \vct{\mu} \vert^2 + \vert \Sigma^{1/2}\xi|^2 + 2\langle \vct{\mu}, \vct{\Sigma}^{1/2}\xi \rangle.$$
Using this we obtain
\begin{equation*}
\begin{split}
\dkl (p \| \post) &=  - \frac{\Du}{2} - \frac{\Du}{2}\log(2\pi) - \frac{1}{2}\log \text{det}\vct{\Sigma}   +\Expect^p \bigl[ \loss (\vct{u}) \bigr]+  \frac{\lambda}{2} \vert \vct{\mu} \vert^2+ \frac{\lambda}{2} \text{tr}(\vct{\Sigma})   + \log Z.\\
\end{split}
\end{equation*}
 Define 
\begin{equation*}
\begin{split}
\J (\mu,\Sigma) &= \frac{\lambda}{2} \vert \vct{\mu} \vert^2 + \frac{\lambda}{2} \text{tr}(\vct{\Sigma}) - \frac{1}{2}\log \text{det}\vct{\Sigma}+\Expect^p \bigl[ \loss (\vct{u}) \bigr],\\
\J_0 (\mu,\Sigma) &= \frac{\lambda}{2} \vert \vct{\mu} \vert^2 + \frac{\lambda}{2} \text{tr}(\vct{\Sigma}) - \frac{1}{2}\log \text{det}\vct{\Sigma}.
\end{split}
\end{equation*}
Note that since $\loss$ is assumed to be bounded above, $ \J  \to \infty$ if and only
if $ \J_0  \to \infty.$ Furthermore, writing positive definite $\Sigma=QDQ^\top$
where $Q$ is orthogonal and $D$ is diagonal with non-negative entries
$\{\sigma_i\}_{i=1}^d$ we find that
\begin{equation*}
\begin{split}
 \J (\mu,\Sigma) &= \frac{\lambda}{2} \vert \vct{\mu} \vert^2 + 
\frac12 \sum_{i=1}^d \Bigl(\lambda \sigma_i-\log(\sigma_i)\Bigr)
+\Expect^p \bigl[ \loss (\vct{u}) \bigr],\\
 \J_0 (\mu,\Sigma) &= \frac{\lambda}{2} \vert \vct{\mu} \vert^2 + 
\frac12 \sum_{i=1}^d \Bigl(\lambda \sigma_i-\log(\sigma_i)\Bigr)
\end{split}
\end{equation*}
For any $\Sigma,$ $ \J_0  (\mu,\Sigma) \to \infty$   
(and hence $ \J  (\mu,\Sigma) \to \infty$) as $|\mu|\to \infty.$
Furthermore, for any $\mu$ and any $i$, $ \J_0 (\mu,\Sigma) \to \infty$ 
(and hence $ \J  (\mu,\Sigma) \to \infty$)
as $\sigma_i\to 0^+$ or $\sigma_i \to \infty.$
Now define, for $\Sigma=QDQ^\top$ as above, 
$$\tilde{\mathcal{A}} := \{(\vct{\mu},\mtx{\Sigma}): \vct{\mu} \in \Ru, \mtx{Q} \in \R^{\Du\times \Du}: Q^\top Q=I , \,\,\,|\mu|\le M,\,\,\, r\le \sigma_i \le R 
\,\,\forall i\}.$$
Note that $ \J  (0,\Id) <\infty.$
Thus there are $M,r,R>0$ such that the infimum of $ \J (\mu,\Sigma)$ over $\mu\in \R^\Du$ and \index{positive definite}positive definite $\Sigma$  
is equal to the infimum of $ \J (\mu,\Sigma)$ over the
closed and bounded set $\tilde{\mathcal{A}}.$
Since $ \J $ is continuous in $\tilde{\mathcal{A}}$ 
it achieves its infimum, and the proof is complete.
\end{proof}

\begin{remark}[Minimizing $\dkl(p\|\post)$ with \index{gradient descent!stochastic}Stochastic Gradient Descent]
The proof of Theorem \ref{t:bi1} shows that a best \index{Gaussian!approximation}Gaussian approximation $p \in \mathcal{A}$ to $\pi$ can be found by minimizing the \index{objective}objective 
\begin{align*}
 \J  (\mu,\Sigma) &=  \Expect^{\xi \sim \Nc(0,I) }  \Bigl[  \frac{\lambda}{2} \vert \vct{\mu} \vert^2 + \frac{\lambda}{2} \text{tr}(\vct{\Sigma}) - \frac{1}{2}\log \text{det}\vct{\Sigma}+\loss ( \mu + \Sigma^{1/2} \xi) \Bigr] \\
 &= \int_{\R^d}  F  (u, z)  \zeta(z)\, dz,  
\end{align*}
where $u = (\mu, \Sigma),$ $ \zeta  = \Nc(0,I),$ and 
$$ F  (u,z) =  \frac{\lambda}{2} \vert \vct{\mu} \vert^2 + \frac{\lambda}{2} \text{tr}(\vct{\Sigma}) - \frac{1}{2}\log \text{det}\vct{\Sigma}+\loss ( \mu + \Sigma^{1/2} z).$$
Thus, this optimization problem can be solved using the stochastic gradient descent algorithm described in Chapter \ref{chap:optimization}.
\end{remark}

\section{Best \index{Gaussian}Gaussian Fit by Minimizing $\dkl(\post\|p)$}
\label{sec:43}
In this section we show that the best \index{Gaussian!approximation}Gaussian approximation in Kullback-Leibler with respect to its second argument is unique and given by moment matching.
\begin{theorem}[Best \index{Gaussian!approximation}Gaussian Approximation by Moment Matching]\label{th:gaussianapprox1}
Assume that $\bar{\mu} := \Expect^{\post}[u]$ is finite and that
$\bar{\Sigma} := \Expect^{\post}\bigl[(u-\bar{\mu})\otimes(u-\bar{\mu})\bigr]$ is
\index{positive definite}positive definite. Then the 
infimum $$\inf_{p \in \mathcal{A}} \dkl (\post\|p)$$
is attained at the element in $\mathcal{A}$ with mean $\bar{\mu}$ and
covariance $\bar{\Sigma}.$
\end{theorem}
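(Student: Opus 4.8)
The plan is to guess the minimizer to be $p^\ast := \Nc(\bar\mu,\bar\Sigma) \in \mathcal{A}$ and to show directly that it beats every competitor, exploiting a Pythagorean-type splitting of the relative entropy. The starting point is the elementary identity
$$\dkl(\post\|p) = \dkl(\post\|p^\ast) + \Expect^\post\left[\log\frac{p^\ast(u)}{p(u)}\right],$$
which follows at once from $\log(\post/p) = \log(\post/p^\ast) + \log(p^\ast/p)$ together with the linearity of $\Expect^\post$. The first term on the right is a constant, independent of the competitor $p$, so the whole problem reduces to understanding the second term.

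The key observation is that, for any $p = \Nc(\mu,\Sigma) \in \mathcal{A}$, the integrand $\log\bigl(p^\ast(u)/p(u)\bigr)$ is a polynomial of degree at most two in $u$ (the two Gaussian exponents are quadratic forms, and the normalization factors are constants). Hence its expectation against any probability measure depends only on the first and second moments of that measure. Since by construction $p^\ast$ and $\post$ share the mean $\bar\mu$ and the covariance $\bar\Sigma$, I may replace the $\post$-expectation by the $p^\ast$-expectation without changing its value:
$$\Expect^\post\left[\log\frac{p^\ast}{p}\right] = \Expect^{p^\ast}\left[\log\frac{p^\ast}{p}\right] = \dkl(p^\ast\|p).$$

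Combining the two displays yields $\dkl(\post\|p) = \dkl(\post\|p^\ast) + \dkl(p^\ast\|p)$ for every $p \in \mathcal{A}$. Because relative entropy is nonnegative and vanishes exactly when its two arguments coincide (the divergence property recalled after the definition of $\dkl$), the term $\dkl(p^\ast\|p)$ is $\geq 0$, with equality iff $p = p^\ast$, i.e. iff $\mu = \bar\mu$ and $\Sigma = \bar\Sigma$. Therefore $\dkl(\post\|p) \geq \dkl(\post\|p^\ast)$ for all $p \in \mathcal{A}$, with equality only at $p = p^\ast$, which shows the infimum is attained and is attained uniquely at the moment-matching Gaussian $\Nc(\bar\mu,\bar\Sigma)$.

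The step I expect to require the most care is the moment-matching replacement $\Expect^\post[\log(p^\ast/p)] = \Expect^{p^\ast}[\log(p^\ast/p)]$: one must verify that the degree-two integrand is genuinely $\post$-integrable, which is exactly what the hypotheses $\bar\mu$ finite and $\bar\Sigma$ finite and positive-definite guarantee, and that $\Sigma$ positive-definite makes $p$ fully supported so that $\log(p^\ast/p)$ is finite and the manipulations are legitimate. One should also note the argument is informative only when $\dkl(\post\|p^\ast) < \infty$; if that constant were infinite the infimum would be infinite for every competitor. A more pedestrian alternative, should one wish to avoid the splitting identity, is to write $\dkl(\post\|p) = \mathrm{const} - \Expect^\post[\log p]$, minimize the resulting function of $(\mu,\Sigma)$ by completing the square in $\mu$ (which forces $\mu = \bar\mu$ since $\Sigma^{-1}$ is positive-definite) and then minimizing $\tfrac12\,\mathrm{tr}(\Sigma^{-1}\bar\Sigma) + \tfrac12\log\det\Sigma$ over positive-definite $\Sigma$, whose unique minimizer $\Sigma = \bar\Sigma$ is exposed by diagonalizing $\bar\Sigma^{-1/2}\Sigma\bar\Sigma^{-1/2}$ and using that $t \mapsto t^{-1} + \log t$ is uniquely minimized at $t = 1$; but the moment-matching identity is cleaner and delivers uniqueness immediately.
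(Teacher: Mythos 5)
Your proof is correct, but it takes a genuinely different route from the paper. The paper works with $\dkl(\post\|p) = \mathrm{const} - \Expect^{\post}[\log p]$, locates the critical point $(\bar\mu,\bar\Sigma)$ by matrix calculus, and then certifies global optimality by re-parametrizing $p$ in the natural (exponential-family) parameter $\theta$ and showing the Hessian of the objective equals $\mathrm{Cov}^{p_\theta}(T)\succeq 0$, i.e.\ convexity in $\theta$. You instead use the Pythagorean identity
\begin{equation*}
\dkl(\post\|p) \;=\; \dkl(\post\|p^\ast) \;+\; \dkl(p^\ast\|p), \qquad p^\ast=\Nc(\bar\mu,\bar\Sigma),
\end{equation*}
obtained by splitting $\log(\post/p)$ and observing that $\log(p^\ast/p)$ is a quadratic polynomial in $u$, so its $\post$-expectation depends only on the first two moments and may be replaced by the $p^\ast$-expectation. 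Your route is shorter, avoids the matrix calculus and the convexity verification entirely, and yields \emph{uniqueness} of the minimizer for free (the paper's Hessian is only positive semi-definite, so its argument certifies a global minimizer but not uniqueness on its own); you also correctly flag the integrability points (finite second moments of $\post$, full support of $p$, and the degenerate case $\dkl(\post\|p^\ast)=\infty$, where the additive split is still legitimate because the quadratic term is finite). What the paper's approach buys in exchange is the explicit exponential-family machinery, which it then reuses in the subsequent remark to extend the conclusion to general exponential families; your moment-matching replacement in fact also generalizes to that setting, since the log-ratio of two members of the same exponential family is an affine function of the sufficient statistic, but the paper makes that structure the centerpiece.
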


\begin{proof}
By definition 
\begin{align}
\dkl(\post \|p) = -\Expect^{\post}{[\log{p}]} + \Expect^{\post}{[\log{\post}]}.
\label{Dkl}
\end{align}
Since the second term does not involve $p$, we study minimization of
\begin{align*}
-\Expect^{\post}{[\log{p}]} &= -\Expect^{\post}{\left[ \log\left(\frac{1}{\sqrt{(2\pi)^\Du\text{det}\Sigma}}\text{exp}\left(-\frac{1}{2}\big|u-\mu\big|_{\Sigma}^2\right)\right)\right]}\\
 &= \frac{1}{2}\Expect^{\post}{\left[\big|u-\mu\big|_{\Sigma}^2\right]} +\frac{1}{2}\log{\text{det}\Sigma}+\frac{\Du}{2}\log{2\pi}.
\end{align*}
Let $\precision = \Sigma^{-1}$. Then our task is equivalent to minimizing the
following function of $\mu$ and $\precision$:
\begin{align*}
 \J  (\mu,\precision) = \frac{1}{2}\Expect^{\post}{\left[ \langle u-\mu, \precision(u-\mu)\rangle\right]-\frac{1}{2}\log{\text{det}\precision}}.
\end{align*}
First we find the critical points of $ \J $ by taking its first order partial derivative with respect to $\mu$ and $\precision$ and setting both to zero:
\begin{align*}
\partial_\mu  \J  &= -\Expect^{\post}\left[\precision(u-\mu)\right] = 0;\\
\partial _\precision  \J  &= \frac{1}{2}\partial_\precision \left( \Expect^{\post}\left[(u-\mu)\otimes(u-\mu):\precision\right]\right)-\frac{1}{2\text{det}\precision}\partial_\precision \text{det}\precision\\
& = \frac{1}{2}\Expect^{\post}\left[(u-\mu)\otimes(u-\mu)\right] - \frac{1}{2}\precision^{-1} = 0; \quad\quad 
\end{align*}
here we have used the relation $\partial_\precision\text{det}\precision = \text{det}{\precision}\cdot \precision^{-1}.$ Solving the above two equations gives us the critical point, expressed in terms
of mean and covariance, 
$$(\bar{\mu},\bar{\Sigma}) = (\Expect^{\post}[u], \Expect^{\post}[(u-\bar{\mu})\otimes(u-\bar{\mu})]).$$ 
The fact that the critical point $(\bar{\mu},\bar{\Sigma}^{-1})$ is a minimizer of $ \J $ follows because $ \J $ is convex. Indeed, note that $ \J $ is the  sum of two convex functions: a \index{positive definite}positive definite quadratic form and a negative log-determinant.  
\end{proof}

\begin{remark}[Minimizing $\dkl(\post \|p)$ with \index{Monte Carlo}Monte Carlo]
Theorem \ref{th:gaussianapprox1} shows that the Gaussian $p \in \mathcal{A}$ closest to $\post$ in the sense of minimizing $\dkl(\post \|p)$ is the Gaussian with the same mean and covariance $(\bar{\mu},\bar{\Sigma})$ as $\post.$ Both mean and covariance can be computed using Monte Carlo methods, a family of algorithms designed to computed expected values with respect to a given target distribution using samples. Monte Carlo algorithms will be studied in Chapter \ref{lecture5}.
\end{remark}

\section{Comparison Between $\dkl(\post \|p)$ and $\dkl(p \|\post)$}\label{sec:44}
It is instructive to compare the two different minimization
problems, both leading to a ``best \index{Gaussian}Gaussian'', that we described
in the preceding two sections.
We write the two relevant \index{divergence}divergences as follows and then explain the
nomenclature:
\begin{align*}
&\dkl(p\|\post) = \Expect^{p}{\left[\log\left(\frac{p}{\post}\right)\right]}= \Expect^{p}{[\log{p}]} - \Expect^{p}{[\log{{\post}}]}, \quad \text{``Mode-seeking''}\\
&\dkl(\post\|p) = \Expect^{\post}{\left[\log\left(\frac{\post}{p}\right)\right]}= \Expect^{\post}{[\log{\post}]} - \Expect^{\post}{[\log{{p}}]}. \quad \text{``Mean-seeking''}
\end{align*}

\begin{figure}[!tbp]
\hspace*{\fill}%
  \centering
  \subfloat[Minimizing $\dkl(p\|\post)$]{\includegraphics[width=0.4\textwidth]{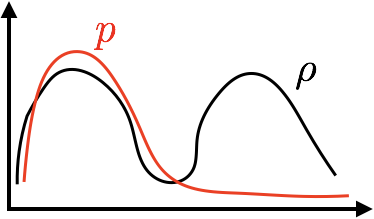}}
  \hfill
  \subfloat[Minimizing $\dkl(\post\|p)$]{\includegraphics[width=0.4\textwidth]{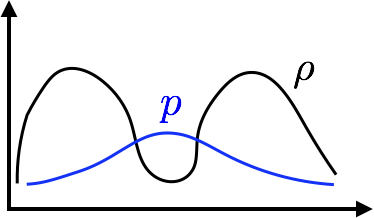}}
  \hspace*{\fill}
  \caption{(a) Minimizing $\dkl(p\|\post)$ can lead to serious information \index{loss}loss while (b) minimizing $\dkl(\post\|p)$ ensures a comprehensive consideration of all components of $\post$.}
\label{f:mvm}
\end{figure}

Note that when minimizing $\dkl(p\|\post)$ we want $\log{\frac{p}{\post}}$ to be small  in regions of high probability under $p,$ which can happen when $p\simeq \post$ or when $p$ is much smaller than $\post.$ This illustrates
the fact that minimizing $\dkl(p\|\post)$ may miss out components of $\post$. 
For example, in Figure \ref{f:mvm}(a)  $\post$ is a bimodal 
distribution but minimizing $\dkl(p\|\post)$ over \index{Gaussian}Gaussians $p$ can only give a single mode 
approximation which is achieved by matching one of the modes;
we may think of this as  ``mode-seeking''. In contrast, when minimizing 
$\dkl(\post\|p)$ over \index{Gaussian}Gaussians $p$ we want $\log{\frac{\post}{p}}$ to be small where $p$ appears as the denominator. This implies that wherever $\post$ has some mass we must let $p$ also have some mass there in order to keep $\frac{\post}{p}$ as close as possible to one. Therefore, the minimization is carried out by allocating the mass of $p$ in a way such that on average the \index{divergence}divergence between $p$ and $\post$ attains its minimum, as shown in Figure \ref{f:mvm}(b); hence 
the label ``mean-seeking.''  Different applications will favor different
choices between the mean and mode seeking approaches to \index{Gaussian!approximation}Gaussian approximation. 
\label{sec:difference}

\section{Variational\index{variational!formulation of Bayes theorem} Formulation of \index{Bayes theorem}Bayes Theorem}
\label{sec:posterior}
This chapter has been concerned with finding the best \index{Gaussian!approximation}Gaussian approximation
to a measure with respect to  \index{divergence!Kullback-Leibler}Kullback-Leibler  divergences. \index{Bayes theorem}Bayes Theorem \ref{t:bayes}
itself can be formulated through a closely related minimization
principle.  Consider a \index{posterior}posterior $\post^y(u)$ in the following form:
\begin{align*}
\post^y(u) = \frac{1}{Z}\text{exp}\bigl(-\loss(u)\bigr)\pr(u),
\end{align*}
where $\pr(u)$ is the \index{prior}prior, $\loss(u)$ is the negative \index{likelihood}log-likelihood,
and $Z$ the normalization constant. We assume here for exposition that all pdfs are positive. Dropping the superscript $y$ from $\post^y$ for
notational simplicity, we express $\dkl(p\|\post)$ in terms of the \index{prior}prior as follows:
\begin{align*}
\dkl(p\|\post) &= \int_{\Ru} \text{log}\left( \frac{p}{\post}\right)p \, du\\
&= \int_{\Ru} \text{log}\left( \frac{p}{\pr}\frac{\pr}{\post}\right)p \, du\\
&=  \int_{\Ru} \text{log}\left( \frac{p}{\pr}\text{exp}\bigl(\loss(u)\bigr)Z\right)p \, du\\
&= \dkl(p\|\pr) + \Expect^{p}[\loss(u)] +\log{Z}.
\end{align*}
If we define 
$${\mathcal J}(p) =  \dkl(p\|\pr) + \Expect^{p}[\loss(u)],$$ 
then we have the following:
\begin{theorem}[\index{Bayes theorem}Bayes Theorem as an Optimization Principle]
The \index{posterior}posterior distribution $\post$ is given by the following minimization principle:
$$\post = \text{\rm argmin}_{p\in \pdfs} {\mathcal J}(p),$$ 
where $\pdfs$ contains all pdfs on $\Ru$.
\end{theorem}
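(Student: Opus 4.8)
The plan is to observe that the functional $\mathcal{J}(p)$ differs from $\dkl(p\|\post)$ only by an additive constant, so that minimizing $\mathcal{J}$ over $\pdfs$ is equivalent to minimizing the K-L divergence of $p$ from the posterior, whose unique minimizer is the posterior itself.

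First I would invoke the identity derived immediately above the theorem statement, namely
\[
\dkl(p\|\post) = \dkl(p\|\pr) + \Expect^p[\loss(u)] + \log Z = \mathcal{J}(p) + \log Z.
\]
Since $Z = \int_{\Ru} \exp\bigl(-\loss(u)\bigr)\pr(u)\,du$ is a fixed positive constant, independent of the candidate density $p$, the term $\log Z$ plays no role in the optimization. Hence for every $p \in \pdfs$ we have $\mathcal{J}(p) = \dkl(p\|\post) - \log Z$, and so the two functionals $\mathcal{J}(\cdot)$ and $\dkl(\cdot\,\|\post)$ share exactly the same set of minimizers.

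Next I would appeal to the fundamental property of the K-L divergence recorded after its definition: $\dkl(p\|\post) \geq 0$ for all $p$, with equality if and only if $p = \post$. This immediately identifies $\post$ as the unique minimizer of $p \mapsto \dkl(p\|\post)$, attaining the value $0$, and therefore as the unique minimizer of $\mathcal{J}$, attaining the value $-\log Z$. This yields $\post = \arg\min_{p\in\pdfs}\mathcal{J}(p)$, as claimed.

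There is essentially no hard analytic obstacle here; the statement is a repackaging of the already-established algebraic identity together with the non-negativity of K-L (Gibbs' inequality). The only points meriting a word of care are the standing positivity assumption on all densities (so that the logarithms and the rearrangement $\tfrac{p}{\post} = \tfrac{p}{\pr}\exp\bigl(\loss(u)\bigr)Z$ are legitimate) and the finiteness and positivity of $Z$; granting these, as the excerpt does, the manipulation is valid and the conclusion follows. It is worth emphasizing that the minimization is taken over all of $\pdfs$ rather than over the Gaussian class $\mathcal{A}$, which is precisely why the exact posterior $\post$, and not merely a Gaussian approximation to it, emerges as the minimizer.
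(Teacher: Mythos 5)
Your proposal is correct and follows essentially the same route as the paper's own proof: both use the identity $\dkl(p\|\post) = \mathcal{J}(p) + \log Z$ with $\log Z$ independent of $p$, and then the fact that $\dkl(\cdot\|\post)$ is uniquely minimized at $\post$. Your version simply spells out the Gibbs-inequality step and the standing positivity assumptions in more detail.
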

\begin{proof}
Note that
$$\dkl(p\|\post)={\mathcal J}(p)+\log{Z}.$$
Since $Z$ is the normalization constant for $\post$ and is independent of $p$, 
the minimizer of $\dkl(p\|\post)$ over $p\in\pdfs$ will also be the minimizer of ${\mathcal J}(p).$ Since the unique global minimizer of $\dkl(p\|\post)$ is attained at $p=\post,$ the result follows. 
\end{proof}

The posterior distribution $\pi$
is the minimizer of ${\mathcal J}(p)$ over all
pdfs. However, we can approximate $\pi$
by minimizing ${\mathcal J}(p)$ over a subset of all pdfs.
 The following example of this connects to earlier parts
of the chapter; further  discussion on other computational methods and theoretical insights that stem from viewing Bayes theorem as an optimization problem
may be found in the conclusion  Section \ref{sec:46}.

\begin{example}[Optimization over Gaussians]
If we approximate $\pi$ by minimizing ${\mathcal J}(p)$ over Gaussians
then we obtain the 
methodology studied in Section \ref{sec:existence}. 
\end{example}


\section{Discussion and Bibliography}\label{sec:46}
The definition of the \index{divergence!Kullback-Leibler}Kullback-Leibler divergence, and upper-bounds
in terms of probability metrics, can be found in \cite{gibbs2002choosing}.
For a basic introduction to \index{variational!Bayesian method}variational Bayesian methods,
including the moment-matching version of \index{Gaussian!approximation}Gaussian
approximation, see \cite{bishop}. The idea of approximating a \index{target distribution}target distribution $\post$ by  minimizing the \index{divergence!Kullback-Leibler}Kullback-Leibler divergence within a family of admissible distributions is popular in probabilistic machine learning. \index{variational!Bayesian method}Variational Bayesian methods \cite{jordan1999introduction,wainwright2008graphical} minimize $\dkl( p \| \post)$; in contrast, expectation 
propagation methods \cite{minka2013expectation}, which seek a factorized
approximate distribution, proceed by minimizing $\dkl( \post \| p).$ 
We refer to \cite{wainwright2008graphical,blei2017variational} for accessible introductions to \index{variational!Bayesian method}variational Bayesian methods and further pointers to the literature. 

In this chapter we have focused on \index{Gaussian!approximation}Gaussian approximations, but other families of admissible distributions can be considered. The family of admissible distributions should in practice be large enough to allow for accurate approximation of the \index{target distribution}target distribution, while also allowing for efficient \index{optimization}optimization. \index{Gaussian!approximation}Gaussian approximations are useful in \index{Bayesian!inverse problem}Bayesian \index{inverse problem}inverse problems and are invoked by many \index{data assimilation}data assimilation algorithms, as we shall see in Chapter \ref{lecture10}. In probabilistic machine learning it is common to invoke mean-field rather than \index{Gaussian!approximation}Gaussian approximations, and a variety of efficient \index{optimization}optimization algorithms are available in this context \cite{bishop}. Recent works that employ variational inference techniques for the solution of \index{inverse problem}inverse problems include \cite{agrawal2021variational,law2021sparse}. 

The problem of finding a \index{Gaussian!approximation}Gaussian approximation of a general 
finite-dimensional probability distribution is studied in \cite{lu2016gaussian}, and
infinite-dimensional formulations are considered in \cite{pinski2015kullback} and the
companion paper \cite{pinski2015algorithms}. 
\index{Gaussian!approximation}Gaussian approximation of \index{small noise limit}small noise diffusions are studied in \cite{sanz2016gaussian}.
The approximation in Theorem \ref{t:bi1}
consists of a single \index{Gaussian!approximation}Gaussian distribution. 
If the \index{posterior}posterior has more than one mode, a single \index{Gaussian}Gaussian may not be appropriate. 
For an approximation composed of \index{Gaussian}Gaussian mixtures, the reader is referred to \cite{lu2016gaussian}. The paper \cite{trillos2019variational} highlights how minimization of \index{divergence!Kullback-Leibler}Kullback-Leibler divergence arises naturally in the \index{optimization}optimization of local entropy and heat regularized costs in deep learning.

The formulation of \index{Bayes theorem}Bayes theorem as an \index{optimization}optimization principle is
 well known; see the book \cite{mackay2003information} and the
paper \cite{bissiri2016general} for clear expositions of this subject. There are at least three advantages of viewing \index{Bayes theorem}Bayes theorem as an optimization problem. First, the \index{variational!formulation of Bayes theorem}variational formulation provides a natural way to approximate the \index{posterior}posterior by restricting the minimization problem to distributions satisfying some computationally desirable property. For instance, \index{variational!Bayesian method}variational Bayesian methods often restrict the minimization to densities with a 
factorizable
structure implied by independence with respect to the components of the
unknown $u$; similarly, in Section \ref{sec:existence} we have studied restriction to the class of \index{Gaussian}Gaussian distributions. Second, \index{variational!formulation of Bayes theorem} variational formulations can be used to show convergence of \index{posterior}posterior distributions indexed by some parameters using techniques from calculus of variations. For instance, the papers \cite{trillos2017consistency} and \cite{garcia2018continuum} exploit the variational formulation of \index{Bayes theorem}Bayes theorem to establish convergence of \index{Bayesian}Bayesian procedures. Third, \index{variational!formulation of Bayes theorem} variational formulations provide natural paths, defined by a gradient flow, towards the \index{posterior}posterior. Understanding these flows and their rates of convergence is helpful in the design and choice of \index{sampling}sampling algorithms \cite{trillos2018bayesian}. 



 For more information about the properties of the \index{exponential family}exponential family we refer to \cite{nielsen2009statistical}, and for background on matrix calculations that were used in this chapter we refer to  \cite{petersen2008matrix}.

\chapter{\Large{\sffamily{Monte Carlo Sampling and Importance Sampling }}}\label{lecture5}
In this chapter we introduce \index{Monte Carlo}Monte Carlo sampling and \index{importance sampling}importance sampling. 
These are two general techniques for estimating expectations with respect to
a given pdf $\post.$ \index{Monte Carlo}Monte Carlo generates 
 independent 
samples from $\post$ and combines them with equal weights, whilst 
\index{importance sampling}importance sampling uses independent samples,
weighted appropriately, from a different distribution. 
 In quantifying the error in \index{Monte Carlo}Monte Carlo and \index{importance sampling}importance sampling,
we will use a \index{distance!between random probability measures}distance on random probability measures that reduces
to \index{distance!total variation}total variation in the case of deterministic probability
measures; and we will introduce the \index{divergence!$\chi^2$}$\chi^2$ divergence.

In \index{Bayesian!inverse problem}Bayesian \index{inverse problem}inverse problems, we are typically unable  to directly generate
samples from the \index{posterior}posterior distribution $\post^y$ itself,
so that Monte Carlo sampling is not viable; however, importance
sampling may be used. For example, it is often possible to generate
samples from the \index{prior}prior; importance sampling can then be used
to reweight samples from the \index{prior}prior
distribution, to approximate \index{posterior}posterior expectations.

Recall that for any pdf $p$ and function $\varphi:\Ru\longrightarrow\R,$ we denote 
\begin{equation}
\label{eq:measa}
p(\varphi)=\Expect^{p}\left[\varphi(u)\right]=\int_{\Ru} \varphi(u)p(u) \, du.
\end{equation}
Thus we view the pdf $p$ as a linear 
functional on the space of real-valued functions on $\Ru$. 
Our task in this chapter is to evaluate $\post(f)$ for \index{target distribution}target distribution $\post$ on $\Ru$
and for a given test function $f:\Ru\longrightarrow\R$. Thus, we are interested in computing 
\begin{equation}
\label{eq:measb}
\post(f)=\int_{\Ru} f(u)\post(u) \, du.
\end{equation}
\index{Monte Carlo}Monte Carlo sampling approximates this integral using samples
from the \index{target distribution}target $\post.$ 

To describe \index{importance sampling}importance sampling, we note that for any pdf  $\pr$ such that the support of $\post$ is contained in the support of $\pr,$  equation \eqref{eq:measb}
can be rewritten as 
\begin{equation}
\label{eq:measc}
\post(f)=\int_{\Ru} f(u)\Bigl(\frac{\post(u)}{\pr(u)}\Bigr)\pr(u) \, du=\rho(fw),
\end{equation}
where
$$w(u)=\frac{\post(u)}{\pr(u)}.$$
We assume that the ratio $\post(u)/\rho(u)$ is only known up to a normalization constant and write
\begin{equation}
\label{eq:MC1}
w(u)=\frac{\post(u)}{\pr(u)} = \frac{1}{Z}\like(u), 
\end{equation}
where the unknown normalizing constant is defined by $Z = \rho(\like).$
Noting that $w(u)=Z^{-1}\like(u)$, we obtain from
\eqref{eq:measc}
\begin{equation}
\label{eq:measd}
\post(f)=\frac{\rho(f\like)}{\rho(\like)}.
\end{equation}
\index{importance sampling}Importance sampling methods are based on approximating the two integrals 
on the right-hand side of this identity with \index{Monte Carlo}Monte Carlo,  using samples from $\pr.$
Note that it is not necessary to know $Z$ to implement this method.

A particular application of
\index{importance sampling}importance sampling in the context of \index{Bayes theorem}Bayes theorem is the setting
where $\pr$ is the \index{prior}prior, $\post$ the \index{posterior}posterior
and $g$ the \index{likelihood}likelihood. However, the \index{importance sampling}importance sampling method is 
not restricted to this splitting of the \index{posterior}posterior into a product
of \index{likelihood}likelihood and \index{prior}prior; and
indeed, depending on the specific test function $f$ of interest, the importance
sampling method  may be far from optimal if applied with this choice of $\pr$.

To summarize, \index{Monte Carlo}Monte Carlo approximates $\post(f)$ using \eqref{eq:measb} and 
samples from $\post$; \index{importance sampling}importance sampling approximates $\post(f)=\pr(fw)$
using \eqref{eq:measd} and samples from $\pr$. Underlying the approximations
of integrals are approximations of measures. For this reason, it is convenient in
this chapter to generalize the concept of pdf to include 
\index{Dirac}Dirac mass distributions. A \index{Dirac}Dirac mass at $v$ will be viewed as
having pdf $\delta(\cdot-v)$ where $\delta(\cdot)$ integrates to one
and takes the value zero everywhere except at the origin. This \index{Dirac}Dirac
mass is also sometimes written as $\delta_{v}(\cdot).$ 

This chapter is organized as follows. We first introduce and analyze \index{Monte Carlo}Monte Carlo sampling in Section \ref{sec:MCdefn&thm}. \index{importance sampling} Importance sampling is then studied in Section \ref{sec:ISdefn&thm}. We close in Section \ref{sec:53} with pointers to the extant literature on this subject. 

\section{\index{Monte Carlo}Monte Carlo Sampling}
\label{sec:MCdefn&thm}
\index{Monte Carlo}Monte Carlo sampling applies when it is possible to generate
\index{i.i.d.}i.i.d. samples  $u^{(\sam)} \sim \post,$ $1\le \sam \le \Sam.$ The method approximates the \index{target distribution}target distribution $\post$ by a sum of \index{Dirac}Dirac masses located at the samples $u^{(\sam)},$ each given equal weight $1/N.$ This leads to the \index{Monte Carlo}Monte Carlo estimator $\pMC$ of $\post$ given by 
\begin{equation}
\label{eq:indeed}
\pMC := \frac{1}{\Sam} \sum_{\sam=1}^{\Sam}\delta(u-u^{(\sam)}).
\end{equation}
We summarize this simple procedure in the following algorithm:
\FloatBarrier
\begin{algorithm}
\caption{\label{algMH} \index{Monte Carlo}Monte Carlo Sampling Algorithm}
\begin{algorithmic}[1]
\vspace{0.1in}
\STATE {\bf Input}: \index{target distribution}Target distribution $\post$, number of samples $\Sam.$  \\
\vspace{.04in}
\STATE Sample $u^{(\sam)} \sim \post$ \index{i.i.d.}i.i.d. \quad $\sam \in \{1,\dots, \Sam\}.$
\STATE{\bf Output}: \index{target distribution}Target approximation $\post \approx \pMC: = \frac{1}{\Sam} \sum_{\sam=1}^\Sam \delta(u-u^{(\sam)}).$
\label{alg_1}
\end{algorithmic}
\end{algorithm}
\FloatBarrier

This algorithm leads to the following estimator of $\post(f):$
\begin{equation*}
\pMC (f) = \frac{1}{\Sam}\sum_{\sam=1}^{\Sam}f(u^{(\sam)}),\,\, u^{(\sam)} \sim \post \quad \index{i.i.d.}{\rm  i.i.d.}
\end{equation*}

We are interested in determining whether the estimator $\pMC (f)$ of $\post(f)$ is accurate regardless of the specific test function $f$. For this reason, we seek to understand whether the \index{Monte Carlo}Monte Carlo estimator $\pMC$ is a good approximation to $\post$ in a suitable metric. This perspective will also
be useful in analyzing \index{importance sampling}importance sampling in this chapter,
and when analyzing sequential methods  
 for \index{data assimilation}data assimilation in  
Chapters \ref{ch11} and \ref{ch12}.
Note that $\pMC$ is a \emph{random} probability measure due to sampling, and so in order to formalize this question we need a \index{distance!between random probability measures}distance between random probability measures. To this end, for random probability measures $\post$ and $\post'$, we define
\begin{equation}
\label{eq:rdist}
	d(\post, \post') = \sup_{|f|_\infty \leq 1} \Bigl( \Expect\Bigl[\bigl(\post(f) - \post'(f)\bigr)^2\Bigr] \Bigr)^{1/2},
\end{equation}
where the expectation is taken over the random variable, in our case 
the randomness from sampling $\post$.  It is possible to show that \index{distance!between random probability measures}$d(\cdot,\cdot)$ indeed defines a distance between random probability measures. Furthermore, when \(\post,\post'\) are deterministic,
then we have \(d(\post,\post') = 2\dtv(\post,\post')\). Using
this \index{distance!between random probability measures}distance between random probability measures, we have the following result.

\newcommand*\diff{\mathop{}\!\mathrm{d}}
      \begin{theorem}[\index{Monte Carlo}Monte Carlo Error]
\label{t:MC}
       For $f:\Ru\longrightarrow\mathbb{R}$ denote $|f|_\infty := \sup_{u\in \Ru} |f(u)|.$ We have
       \begin{equation*}
        \begin{split}
        \sup_{|f|_\infty\leq1} \left|\Expect{\left[\pMC (f)-\post(f)\right]}\right|  &= 0,\\
        d(\pMC, \post)^2 &\leq \frac{1}{\Sam}. 
        \end{split}
        \end{equation*}
      \end{theorem}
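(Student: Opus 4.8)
The plan is to read off the two claims as the two most basic facts about an i.i.d.\ Monte Carlo average: the first identity is unbiasedness, and the second is a variance bound. Both follow from the fact that $\post^{\Sam}_{MC}(f)=\frac{1}{\Sam}\sum_{\sam=1}^{\Sam} f(u^{(\sam)})$ is an average of i.i.d.\ copies of $f(u)$ with $u\sim\post$.

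First I would establish unbiasedness. Since each $u^{(\sam)}\sim\post$, for any $f$ we have $\Expect[f(u^{(\sam)})]=\post(f)$, so by linearity of expectation
\[
\Expect\bigl[\post^{\Sam}_{MC}(f)\bigr]=\frac{1}{\Sam}\sum_{\sam=1}^{\Sam}\Expect\bigl[f(u^{(\sam)})\bigr]=\post(f).
\]
Hence $\Expect[\post^{\Sam}_{MC}(f)-\post(f)]=0$ for \emph{every} $f$, and the supremum over $|f|_\infty\leq 1$ is trivially zero.

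For the second identity, unbiasedness means the mean-square error equals the variance of the estimator. Writing $\post^{\Sam}_{MC}(f)-\post(f)=\frac{1}{\Sam}\sum_{\sam=1}^{\Sam}\bigl(f(u^{(\sam)})-\post(f)\bigr)$ and expanding the square, I would use independence to discard the cross terms: for $\sam\neq m$ the centered factors are independent, so
\[
\Expect\Bigl[\bigl(f(u^{(\sam)})-\post(f)\bigr)\bigl(f(u^{(m)})-\post(f)\bigr)\Bigr]=0.
\]
Only the $\Sam$ diagonal terms survive, each equal to ${\rm Var}\bigl(f(u^{(1)})\bigr)=\post(f^2)-\post(f)^2$, giving
\[
\Expect\Bigl[\bigl(\post^{\Sam}_{MC}(f)-\post(f)\bigr)^2\Bigr]=\frac{1}{\Sam}\bigl(\post(f^2)-\post(f)^2\bigr).
\]
Bounding $\post(f^2)-\post(f)^2\leq \post(f^2)\leq |f|_\infty^2\leq 1$ whenever $|f|_\infty\leq 1$ then yields the uniform bound $\frac{1}{\Sam}$.

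I do not expect any genuine obstacle. The only points to handle carefully are that the $\frac{1}{\Sam}$ bound must be uniform over the admissible class (the constant $1$ arises solely from $|f|_\infty^2\leq 1$, independently of the particular $f$), and that the cross-term cancellation relies on both independence of the samples and the fact that each summand is centered about $\post(f)$.
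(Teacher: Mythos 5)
Your proposal is correct and follows essentially the same route as the paper: unbiasedness by linearity of expectation, then expanding the centered sum, discarding cross terms by independence, and bounding the surviving variance term by $\post(f^2)\leq |f|_\infty^2\leq 1$. No gaps.
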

      
      \begin{proof}
To prove the first result, namely that the estimator is unbiased, we use linearity of the expected value and that $u^{(\sam)} \sim \post$:
\begin{align*}
\Expect \left[\pMC (f) \right]  &= \Expect \Bigl[   \frac{1}{\Sam} \sum_{\sam=1}^{\Sam} f\bigl(  u^{(n)} \bigr) \Bigr] \\
&= \frac{1}{\Sam} N \post(f) = \post(f) = \Expect \bigl[ \post(f) \bigr]. 
\end{align*}
Therefore the supremum over $|f|_\infty \le 1$ is a supremum over a quantity that is
zero, for any $f$, and the result follows.

For the second result, note that since $\pMC (f)$ is unbiased, its variance agrees with its mean squared error. Now using that the $u^{(\sam)} \sim \post$ are independent we deduce that
\begin{align*}
\text{Var} \left[\pMC (f) \right]  &= \text{Var} \Bigl[   \frac{1}{\Sam} \sum_{\sam=1}^{\Sam} f \bigl( u^{(\sam)}\bigr) \Bigr] \\
&= \frac{1}{\Sam^2} \Sam \text{Var}_\post[f] =  \frac{1}{\Sam} \text{Var}_{\post}[f] . 
\end{align*}
For $|f|_\infty \le 1,$ we have
        \begin{equation*}
        \text{Var}_{\post}[f] = \post(f^2) -\post(f)^2 \leq \post(f^2) \le 1, 
        \end{equation*}
and therefore
        \begin{equation*}
        \sup_{|f|_\infty\leq1} \left|\Expect{\left[\left(\pMC (f)-\post(f)\right)^2\right]}\right| = \sup_{|f|_\infty\leq1} \left| \frac{1}{\Sam}\text{Var}_{\post}[f]\right| \leq \frac{1}{\Sam}.
        \end{equation*}
      \end{proof}

The theorem shows that the \index{Monte Carlo}Monte Carlo estimator  $\pMC$ is
an unbiased approximation for the \index{posterior}posterior $\post$ and that, by choosing
$\Sam$ large enough,
expectation of any bounded function $f$
can in principle be approximated by \index{Monte Carlo}Monte Carlo
sampling to arbitrary accuracy. Furthermore, although the
convergence is slow with respect to $\Sam$ --the mean squared error decays like $N^{-1}$ so the typical error only decays like $N^{-1/2}$ -- 
there is no dependence on the
dimension of the problem or on the properties of $f$, other than its
supremum. Moreover, the proof of Theorem \ref{t:MC} shows that, in fact, the Monte Carlo error in the approximation of $\post(f)$ is 
determined by the variance of $f$ under $\pi.$

      \begin{example}[Approximation of an Integral]
\label{ex:q}
       Let $f:\mathbb{R}\longrightarrow\mathbb{R}$ be a sigmoid function defined on $\mathbb{R}$ and shown in Figure \ref{F:MC1}(a) below as the blue solid curve.
For the \index{target distribution}target distribution $\post$ we take a mixture of two \index{Gaussian}Gaussians found
by choosing from $\Nc(-5,1)$ with probability $1/10$ and from
$\Nc(5,1)$ with probability $9/10.$ We wish to approximate the expected value,
under $\post$, of $f(u) \times \mathbb{I}_{[a,b]}(u)$ where
$$\mathbb{I}_{[a,b]}(u) = \begin{cases} 1 & \text{ if } u \in [a,b], \\ 0 & \text{ otherwise}. \end{cases}$$ We use \index{Monte Carlo}Monte Carlo sampling to generate $\Sam$ random samples $u^{(1)},\ldots,u^{(\Sam)}$ and compute the error between the actual integral and the \index{Monte Carlo}Monte Carlo estimator. The integral and estimator are in the form: 
        \begin{equation*}
        \begin{split}
        &\post(f) = \int_{a}^{b} f(u)\post(u) \diff u,\\
        &\pMC (f) = \frac{1}{\Sam} \sum_{\sam=1}^{\Sam}f(u^{(\sam)})\mathbb{I}_{[a,b]}(u^{(\sam)}).
        \end{split}
        \end{equation*}
The results of a set of numerical experiments with $a=-5, b=5$ and varying $\Sam$ are shown
in Figure \ref{F:MC1}(b).
A randomly chosen subset of the samples used when $\Sam=100$ is displayed in Figure \ref{F:MC1}(a); only samples in $[-5,5]$ are shown, since other samples do not contribute
to the estimator in this case.
  \end{example}
    
       \begin{figure}[thbp]
        \centering
        \begin{tabular}{@{}c@{}}
            \includegraphics[width=0.5\textwidth]{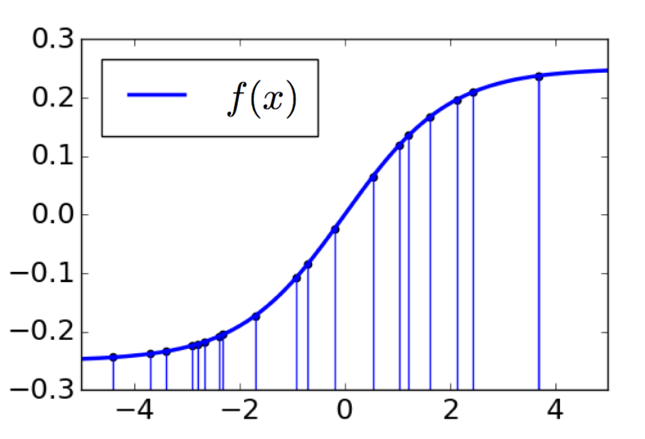}
        \end{tabular}
        \hspace{0.0in}
        \begin{tabular}{@{}c@{}}
            \includegraphics[width=0.55\textwidth]{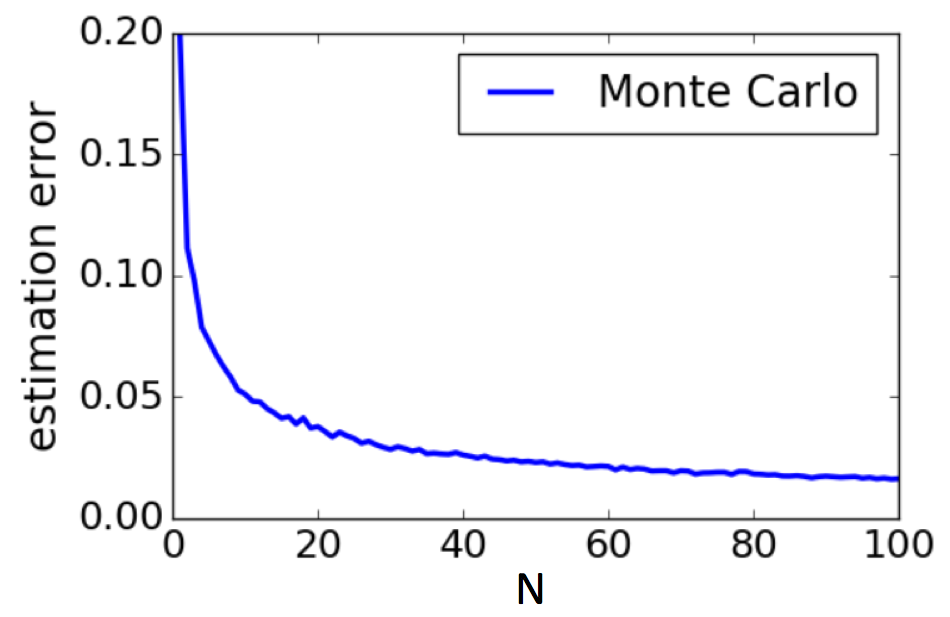}
        \end{tabular}
        \caption{Large sample size $N$ reduces the estimation error by the \index{Monte Carlo}Monte Carlo method.}
        \label{F:MC1}
       \end{figure}

\section{Importance Sampling}
\label{sec:ISdefn&thm}
\index{Monte Carlo}Monte Carlo sampling can only be used when it is possible to sample from the desired 
\index{target distribution} target distribution $\post$. When it is not possible to sample from $\post$, 
we can draw samples from another \index{proposal!distribution}{\em proposal} distribution $\pr$ instead.
Consider $\post$ as in equation \eqref{eq:MC1}. Given a test function $f: \R^d \to \R$ we can rewrite its expectation with respect to $\post$ in terms of expected values with respect to $\pr$ as in equation  \eqref{eq:measd}.
Approximating the numerator and the denominator using \index{Monte Carlo}Monte Carlo with
samples from $\pr$ gives
\begin{align*}
\post(f) & \approx  \sum_{\sam=1}^{\Sam }w^{(\sam)} f\left(u^{(\sam)}\right),\,\, u^{(\sam)} \sim \pr \,\, \index{i.i.d.}{\rm i.i.d.}  \\
&= \pIS (f), 
\end{align*}
where 
\begin{equation*}
w^{(\sam)} := \frac{\like\left(u^{(\sam)}\right)}{\sum_{m=1}^{\Sam}\like\left(u^{(m)}\right)}, \quad \quad \pIS := \sum_{\sam=1}^{\Sam }w^{(\sam)}\delta(u-u^{(\sam)}).
\end{equation*}     
Thus, given $\Sam$ samples $u^{(1)}, \ldots, u^{(\Sam)}$ generated \index{i.i.d.}i.i.d. according to $\pr,$ we can estimate $\post$ with the \emph{particle approximation measure} $\pIS.$

\FloatBarrier
\begin{algorithm}
\caption{\label{algIS} \index{importance sampling} Importance Sampling Algorithm}
\begin{algorithmic}[1]
\vspace{0.1in}
\STATE {\bf Input}: \index{target distribution}Target distribution $\post(u) = \frac{1}{Z} \like(u) \pr(u)$, \index{proposal!distribution}proposal distribution $\pr,$ number of samples $\Sam.$  \\
\vspace{.04in}
\STATE  Sample $u^{(\sam)} \sim \pr$\index{i.i.d.} i.i.d.  \quad $\sam \in \{1,\dots, \Sam\}.$ 
\STATE  Compute $$w^{(\sam)} := \frac{\like\left(u^{(\sam)}\right)}{\sum_{m=1}^{\Sam}\like\left(u^{(m)}\right)}.$$
\STATE{\bf Output}: \index{target distribution}Target approximation $\post \approx \pIS := \sum_{\sam=1}^{\Sam }w^{(\sam)} \delta(u-u^{(\sam)}).$
\label{alg_1}
\end{algorithmic}
\end{algorithm}
\FloatBarrier

We emphasize that implementation of this algorithm does not assume knowledge of 
the normalizing constant $Z,$ but only that $g$ can be evaluated and that $\pr$ can 
be sampled from. In particular, note that the  algorithm is invariant under $g \mapsto \lambda g$ for any scalar $\lambda.$ 
  Algorithm \ref{algIS}  leads to the following estimator of $\post(f):$
\begin{equation*}
\pIS(f)=  \sum_{\sam=1}^{\Sam}  w^{(\sam)}  f(u^{(\sam)}),\,\, u^{(\sam)} \sim \pr \quad \index{i.i.d.}{\rm i.i.d.}
\end{equation*}

  \begin{example}[Change of Measurement]
\label{ex:q2}
We consider a similar set-up as in Example \ref{ex:q}, integrating
a sigmoid function, shown in blue in Figure \ref{Fig. 3}, with respect to
a pdf $\post$ which is bimodal, shown in red in
Figure \ref{Fig. 3}; we again restrict the support of the desired integral.
We estimate the integral using \index{importance sampling}importance sampling based on $\Sam$ random 
samples $u^{(1)},\ldots,u^{(\Sam)}$ from the measure 
$\pr=\Nc(\mu,\,\sigma^{2})$, shown in green in Figure \ref{Fig. 3}. 
The estimator of the integral is given by 
        \begin{align*}
        \begin{split}
        \pIS(f) &= \sum_{\sam=1}^{\Sam}w^{(\sam)} f(u^{(\sam)}){\mathbbm{1}}_{[a,b]}(u^{(\sam)}),\\
        w^{(\sam)}&=\frac{\like(u^{(\sam)})}{\sum_{m=1}^{\Sam}\like(u^{(m)})}.
        \end{split}
        \end{align*}
Here $g$ is a function proportional to the ratio of the densities of
$\post$ and $\pr.$        
If $\post(u^{(\sam)})>\pr(u^{(\sam)})$, the samples should have been denser, 
so we raise the weight on $f(u^{(\sam)})$ in proportion to
$\frac{\post(u^{(\sam)})}{\pr(u^{(\sam)})}> 1$. If $\post(u^{(\sam)})<\pr(u^{(\sam)})$, the samples should have been less dense, so we lower the weight on $f(u^{(\sam)})$ in proportion to $\frac{\post(u^{(\sam)})}{\pr(u^{(\sam)})}< 1$.
      \end{example}

        \begin{figure}[thbp]
        \centering
        \begin{tabular}{@{}c@{}}
            \includegraphics[width=0.5\textwidth]{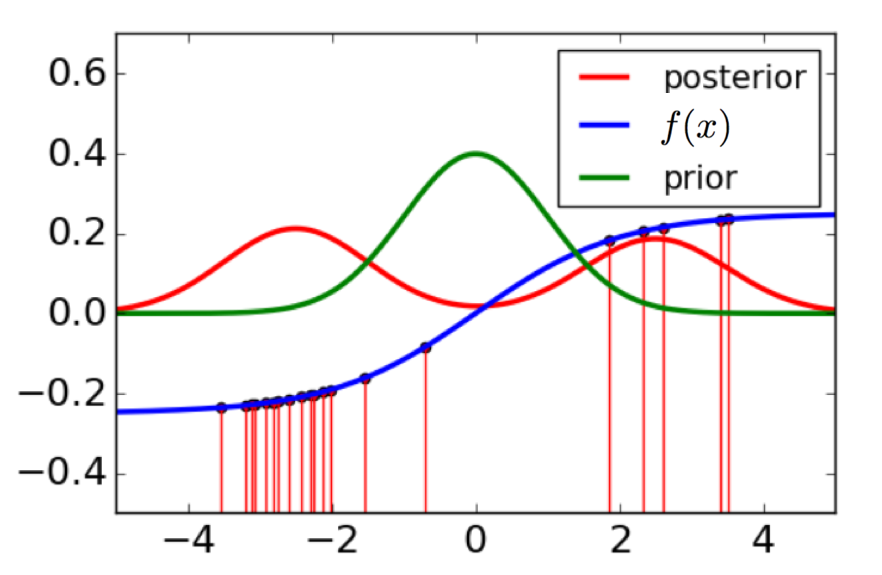}
        \end{tabular}
        \hspace{0.0in}
        \begin{tabular}{@{}c@{}}
            \includegraphics[width=0.5\textwidth]{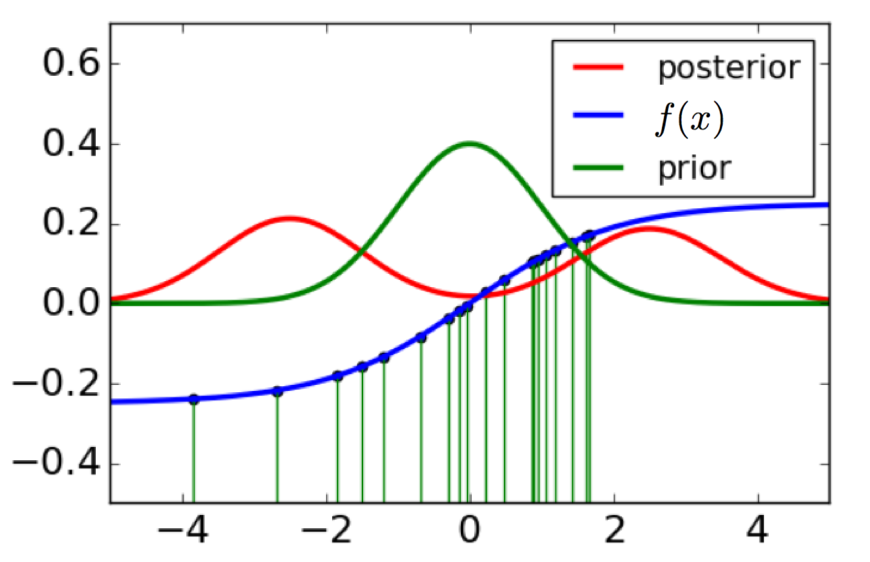}
        \end{tabular}
        \caption{\index{importance sampling}importance sampling is a change of measure via the importance weights. The red curve shows a bimodal distribution $\post$ and the green curve shows a \index{Gaussian}Gaussian distribution $\pr$. The blue curve is the function to be integrated, on its support $[-5,5].$ The upper figure shows samples 
from the \index{posterior}posterior $\post$ itself; these would be used for \index{Monte Carlo}Monte Carlo
sampling; the lower curve shows samples from the \index{prior}prior $\pr$, as used
for \index{importance sampling}importance sampling. The importance weights capture and compensate for the difference of sampling from these two distributions.}
          \label{Fig. 3}
		\end{figure}

We now introduce the \index{divergence!$\chi^2$}$\chi^2$ divergence between probability
distributions, and discuss some of its properties, before going on to
use it to quantify the accuracy of \index{importance sampling}importance sampling.

\begin{definition}
Let $\post, \post'>0$ be two pdfs on $\Ru.$\footnote{The definition extends to situations where the support of $\post'$ is not
the whole of $\Ru$, provided $\post$ is absolutely continuous with respect to
$\post'$.} The \index{divergence!$\chi^2$} \emph{$\chi^2$ divergence} of $\post$ with respect to $\post'$ is 
\begin{equation}\label{eq:chi2}
\dchi(\post \| \post') := \int_{\Ru} \Bigl( \frac{\post(u)}{\post'(u)} -1  \Bigr)^2 \, \post'(u) \, du.
\end{equation}
\end{definition}
The \index{divergence!$\chi^2$}$\chi^2$  divergence is not a \index{distance}distance as it is, in general, not symmetric; it
is, however, distance-like and captures the closeness of the two distributions;
this is analogous to the \index{divergence!Kullback-Leibler}Kullback-Leibler divergence defined in the
preceding chapter. 
The next lemma shows that the \index{divergence!$\chi^2$}$\chi^2$ divergence may be used to upper bound the \index{divergence!Kullback-Leibler}Kullback-Leibler divergence and therefore, by Lemma \ref{l:kllemma}, also the \index{distance!total variation}total variation and \index{distance!Hellinger}Hellinger distances. 

\begin{lemma}\label{l:chisq}
The \index{divergence!$\chi^2$}$\chi^2$ divergence provides the following upper bounds for the \index{divergence!Kullback-Leibler}Kullback-Leibler divergence:
$$\dkl(\post \| \post') \le \log\Bigl(\dchi(\post\| \post') + 1 \Bigr), \quad \quad \dkl( \post \| \post') \le \dchi(\post \| \post').$$
\end{lemma}
\begin{proof}
The second inequality is a direct consequence of the first one, noting that, for $x \ge 0,$ $\log(x + 1 ) \le x.$
To prove the first inequality note that by \index{Jensen inequality}Jensen inequality
\begin{align*}
\dkl(\post \| \post')  &= \int_{\R^\Du} \log \biggl( \frac{\post(u)}{\post'(u)} \biggr) \post(u) \, du \\
& \le \log \biggl(  \int_{\R^\Du} \frac{\post(u)}{\post'(u)}   \frac{\post(u)}{\post'(u)} \post'(u) \, du \biggr)  \\
& = \log\Bigl(\dchi(\post\| \post') + 1 \Bigr),
\end{align*}
where for the last equality we used that 
\begin{align*}
\dchi(\post\| \post') &= \int_{\Ru} \Bigl( \frac{\post(u)}{\post'(u)} -1  \Bigr)^2 \, \post'(u) \, du \\
& = \int_{\Ru} \Bigl( \frac{\post(u)}{\post'(u)} \Bigr)^2 \, \post'(u) \, du  - 2  \int_{\Ru} \Bigl( \frac{\post(u)}{\post'(u)} \Bigr) \, \post'(u) \, du  + \int_{\Ru}  \post'(u) \, du \\
&= \int_{\Ru} \Bigl( \frac{\post(u)}{\post'(u)} \Bigr)^2 \, \post'(u) \, du  - 1.
\end{align*}
\end{proof}

The next result shows that,  similarly as for \index{Monte Carlo}Monte Carlo sampling, the mean squared error of $\pIS(f)$ as an estimator of $\post(f)$ is order $N^{-1}.$ However, there are two main differences: the estimator is now biased, and the constant in the mean squared error depends on the \index{divergence!$\chi^2$}$\chi^2$ divergence between the \index{target distribution}target and the \index{proposal!distribution}proposal.

    \begin{theorem}[\index{importance sampling}Importance Sampling Error]
      \label{IS}
       We have
       \begin{equation*}
        \begin{split}
        \sup_{|f|_\infty\leq1} \left|\Expect{\left[\pIS(f)-\post(f)\right]}\right|  \leq 2 \frac{1 + \dchi(\post\|\pr)}{\Sam},\\
     d(\pIS, \post)^2 \leq 4 \frac{1 + \dchi(\post\|\pr)}{\Sam}.
        \end{split}
       \end{equation*}
      \end{theorem}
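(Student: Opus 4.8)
The plan is to exploit the self-normalized ratio structure of the estimator. Writing $W(u) := g(u)/Z = \post(u)/\pr(u)$ for the normalized weight and $\bar f := f - \post(f)$ for the centered test function, I would first record the two facts that drive everything. Since $\Expect^\pr[W] = 1$ one has $\Expect^\pr[W\bar f] = \post(\bar f) = 0$, and the definition of the $\chi^2$ divergence gives the key identity
\begin{equation*}
1 + \dchi(\post\|\pr) = \int_{\Ru}\Bigl(\frac{\post(u)}{\pr(u)}\Bigr)^2\pr(u)\,du = \Expect^\pr[W^2],
\end{equation*}
which expresses the divergence as the second moment of the weight. I would then rewrite the error exactly as a ratio $\post^{\Sam}_{IS}(f) - \post(f) = A_\Sam/B_\Sam$, where $A_\Sam := \frac{1}{\Sam}\sum_{\sam} W(u^{(\sam)})\bar f(u^{(\sam)})$ and $B_\Sam := \frac{1}{\Sam}\sum_{\sam} W(u^{(\sam)})$ are empirical averages of i.i.d. terms with $\Expect[A_\Sam] = 0$ and $\Expect[B_\Sam] = 1$. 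The identity above immediately yields the variance estimates $\Expect[A_\Sam^2] = \mathrm{Var}(A_\Sam) \le \frac{1}{\Sam}\Expect^\pr[W^2\bar f^2] \le \frac{4}{\Sam}\bigl(1+\dchi(\post\|\pr)\bigr)$, using $|\bar f|\le 2$, and $\Expect[(B_\Sam - 1)^2] = \frac{1}{\Sam}\dchi(\post\|\pr)$.

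The essential structural input is the deterministic bound $|A_\Sam/B_\Sam| = |\post^{\Sam}_{IS}(f)-\post(f)| \le 2|f|_\infty$, valid because $\post^{\Sam}_{IS}(f)$ is a convex combination of the values $f(u^{(\sam)})\in[-1,1]$ and $|\post(f)|\le 1$; this tames the random denominator $B_\Sam$, which may be small, wherever it would otherwise be uncontrolled. For the second-moment bound I would use the exact decomposition $\frac{A_\Sam}{B_\Sam} = A_\Sam + \frac{A_\Sam}{B_\Sam}(1 - B_\Sam)$, control the correction term through the deterministic estimate, and combine with the two variance estimates above to obtain an $O(1/\Sam)$ bound of the required form.

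For the bias a naive Cauchy--Schwarz argument only delivers the sub-optimal rate $O(\Sam^{-1/2})$, so the plan is instead a second-order expansion of the reciprocal, $\frac{A_\Sam}{B_\Sam} = A_\Sam - A_\Sam(B_\Sam - 1) + \frac{A_\Sam(B_\Sam-1)^2}{B_\Sam}$. Taking expectations, the first term vanishes; the middle term is computed exactly using independence and the vanishing means of the summands, giving $\Expect[A_\Sam(B_\Sam-1)] = \frac{1}{\Sam}\Expect^\pr[W^2\bar f]$, which is $O(1/\Sam)$ and bounded by $\frac{2}{\Sam}\bigl(1+\dchi(\post\|\pr)\bigr)$; the final term is controlled by the deterministic bound together with $\Expect[(B_\Sam-1)^2]$. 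I expect the main obstacle to be precisely this random denominator: obtaining the correct $O(1/\Sam)$ rate for the bias rather than $O(\Sam^{-1/2})$ forces one to retain the quadratic term in the expansion and to exploit the cancellation coming from $\Expect[A_\Sam] = \Expect[B_\Sam-1] = 0$, while sharpening the multiplicative constants to the stated values $2$ and $4$ requires careful bookkeeping of these contributions.
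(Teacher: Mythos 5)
Your decomposition is, up to relocating a single term, the same as the paper's. Writing $W=g/\pr(g)$ and $B_\Sam=\pr^{\Sam}_{MC}(W)$, the paper splits the error as $\post^{\Sam}_{IS}(f)\,(1-B_\Sam)+\bigl(\pr^{\Sam}_{MC}(Wf)-\post(f)\bigr)$, while your $A_\Sam+\tfrac{A_\Sam}{B_\Sam}(1-B_\Sam)$ differs from this only by moving $\post(f)(1-B_\Sam)$ from one summand to the other; both rest on the same three ingredients (deterministic boundedness of the self-normalized ratio, i.i.d.\ variance computations, and the identity $\Expect^{\pr}[W^2]=1+\dchi(\post\|\pr)$), and both give the correct $O(1/\Sam)$ rates. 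The genuine gap is quantitative: as set up, your bookkeeping cannot close to the stated constants $2$ and $4$, and this is not a matter of care but of the pre-centering of $f$. Centering costs a factor of $4$ in two places --- you must use $\Expect^{\pr}[W^2\bar{f}^2]\leq 4\,\Expect^{\pr}[W^2]$ where the uncentered split only needs $\Expect^{\pr}[W^2f^2]\leq\Expect^{\pr}[W^2]$, and $|A_\Sam/B_\Sam|\leq 2$ where the paper uses $|\post^{\Sam}_{IS}(f)|\leq 1$ --- so your mean-square bound lands at $\tfrac{8(1+\dchi(\post\|\pr))}{\Sam}+\tfrac{8\dchi(\post\|\pr)}{\Sam}$ rather than $\tfrac{4(1+\dchi(\post\|\pr))}{\Sam}$. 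Similarly, your bias expansion gives $0+\tfrac{1}{\Sam}\bigl|\Expect^{\pr}[W^2\bar{f}]\bigr|+\bigl|\Expect[\tfrac{A_\Sam}{B_\Sam}(B_\Sam-1)^2]\bigr|\leq \tfrac{2(1+\dchi(\post\|\pr))}{\Sam}+\tfrac{2\dchi(\post\|\pr)}{\Sam}$, overshooting the target by $\tfrac{2\dchi(\post\|\pr)}{\Sam}$.

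The fix is to abandon the centering where it is not needed. For the mean-square bound, keep $f$ uncentered: the term $\pr^{\Sam}_{MC}(Wf)-\post(f)$ is already mean zero with variance at most $\Expect^{\pr}[W^2]/\Sam$, and the prefactor of $(1-B_\Sam)$ can be taken to be $\post^{\Sam}_{IS}(f)$, bounded by $1$; then $(a+b)^2\leq 2a^2+2b^2$ delivers exactly $\tfrac{4(1+\dchi(\post\|\pr))}{\Sam}$. For the bias, your dismissal of Cauchy--Schwarz is what sends you down the more lossy second-order expansion. The paper's Cauchy--Schwarz is not the naive one: the mean-zero term drops out of the expectation, and in the remaining term $\Expect\bigl[\post^{\Sam}_{IS}(f)(1-B_\Sam)\bigr]$ one may replace $\post^{\Sam}_{IS}(f)$ by $\post^{\Sam}_{IS}(f)-\post(f)$ for free, since $\Expect[1-B_\Sam]=0$. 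After that substitution both factors are mean zero with $L^2$-norm $O(\Sam^{-1/2})$ --- one controlled by the just-established mean-square bound, the other by $\mathrm{Var}(B_\Sam)$ --- and Cauchy--Schwarz yields $O(\Sam^{-1})$ with constant exactly $2(1+\dchi(\post\|\pr))$. Your covariance computation $\Expect[A_\Sam(B_\Sam-1)]=\tfrac{1}{\Sam}\Expect^{\pr}[W^2\bar{f}]$ is correct and captures the same mechanism, but the unavoidable remainder term in the reciprocal expansion is what prevents you from matching the stated constant.
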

      
      \begin{proof}
The proof of the first item (bias) uses the second item (variance). 
Nonetheless, we start with the proof for the bias, 
because bias and variance are often thought of, conceptually, in that order.
        Given
        \begin{equation*}
        \post(u) = \frac{1}{Z}\like(u)\pr(u) = \frac{1}{\pr(\like)}\like(u)\pr(u),
        \end{equation*}
        the proof of Lemma \ref{l:chisq} shows that 
$$\dchi(\post\|\pr) = \frac{\pr(\like^2)}{\pr(\like)^2} -1.$$ 
To ease the notation we introduce
$$\zeta: = \frac{\pr(\like^2)}{\pr(\like)^2}.$$
        We rewrite 
        \begin{equation*}
        \begin{split}
        \post(f) &= \frac{\pr(\like f)}{\pr(\like)} \simeq \frac{\prMC(\like f)}{\prMC(\like)} =\pIS(f).\\
        \end{split}
        \end{equation*}
       Then we have
       \begin{equation}\label{eq:splitting}
        \begin{split}
        \pIS(f) -\post(f) &= \pIS(f) -  \frac{\pr(\like f)}{\pr(\like)}\\
        &=  \frac{\pIS(f)\left(\pr(\like)-\prMC(\like)\right)}{\pr(\like)} - \frac{\left(\pr(\like f)-\prMC( \like f)\right)}{\pr(\like)}.
        \end{split}
        \end{equation}
The expectation of the second term is zero and hence 
       \begin{equation*}
       \begin{split}
\left| \Expect{\left[\pIS(f) -\post(f)\right]}\right|
&= \frac{1}{\pr(\like)}\left|\Expect{\left[\pIS(f)\left(\pr(\like)-\prMC(\like)\right)\right]}\right|\\
        &\leq \frac{1}{\pr(\like)}\left|\Expect{\left[\left(\pIS(f)-\post(f)\right)\left(\pr(\like)-\prMC(\like)\right)\right]}\right|,\\
        \end{split}
        \end{equation*}
       since $\Expect{\left[\pr(\like)-\prMC(\like)\right]}=0$.
       Using the \index{Cauchy-Schwarz inequality}Cauchy-Schwarz inequality, the second result
from this theorem (whose proof follows) and
Theorem \ref{t:MC} we have, for all $|f|_\infty \leq 1$,
       
       \begin{equation*}
       \begin{split}
\left| \Expect{\left[\pIS(f) -\post(f)\right]}\right|
&\leq \frac{1}{\pr(\like)} \left(\Expect{\left[\left(\pIS(f)-\post(f)\right)^2\right]}\right)^{1/2}\left(\Expect{\left[\left(\pr(\like)-\prMC(\like)\right)^2\right]}\right)^{1/2}\\
        &\leq \frac{1}{\pr(\like)}\left(\frac{4\zeta}{\Sam}\right)^{1/2}\left(\frac{\pr(\like^2)}{\Sam}\right)^{1/2} = \frac{2\zeta}{\Sam}.
       \end{split}
       \end{equation*}

We now prove the second result. We use the splitting of
$\pIS(f) -\post(f)$ into the sum of two terms
as derived in equation \eqref{eq:splitting}. Using Theorem \ref{t:MC}, the basic inequality $(a-b)^2 \le 2(a^2 + b^2)$ and that for all $|f|_\infty \leq 1, |\pIS(f)| \leq 1$, we have, for all $|f|_\infty \leq 1$,
        \begin{equation*}
        \begin{split}
        &\left|\Expect{\left[\left(\pIS(f) -\post(f)\right)^2\right]}\right|\\
        &\leq \frac{2}{\pr(\like)^2} \left( \Expect{\left[\left(\pIS(f)\right)^2\left(\pr(\like)-\prMC(\like)\right)^2\right]}+ \Expect{\left[\left(\pr(\like f)-\prMC(\like f)\right)^2\right]} \right)\\
        &\leq \frac{2}{\pr(\like)^2} \left( \Expect{\left[\left(\pr(\like)-\prMC(\like)\right)^2\right]}+ \Expect{\left[\left(\pr(\like f)-\prMC(\like f)\right)^2\right]} \right)\\
        &= \frac{2}{\pr(\like)^2\Sam}\bigl(\text{Var}_{\pr}\left[\like\right] +\text{Var}_{\pr}\left[ \like f\right] \bigr)\\
        &\leq \frac{2}{\pr(\like)^2\Sam} \left( \pr(\like^2)+\pr(\like^2 f^2)\right)\\
        &\leq  \frac{4\pr(\like^2)}{\pr(\like)^2\Sam} = \frac{4\zeta}{\Sam}.
        \end{split}
        \end{equation*}
        Therefore, 
        \begin{align*}
         d(\pIS, \post)^2 =  \sup_{|f|_\infty\leq1} \left|\Expect{\left[\left(\pIS(f)-\post(f)\right)^2\right]}\right| 
        \leq \frac{4\zeta}{\Sam}.
\end{align*}

      \end{proof}

\begin{remark}
In Theorem \ref{IS}  we measure the quality of $\pIS$ as an approximation 
of the \index{target distribution}target $\post$ by considering the worst-case bias and mean squared error over 
the class of bounded test functions $\{f: \Ru \to \R: |f|_\infty =1\}$. We show that 
worst-case error upper-bounds can be obtained in terms of the \index{divergence!$\chi^2$}$\chi^2$ divergence between the \index{target distribution}target and the \index{proposal!distribution}proposal,
quantifying the intuitive fact that, over a broad \emph{class} of test functions,
the performance of \index{importance sampling}importance sampling depends on the closeness between \index{target distribution}target
and \index{proposal!distribution}proposal. Note, however, that for a \emph{specific} function $f$ careful choice of
$\pr$ in the \index{importance sampling}importance sampling methodology may lead to considerable improvement
over \index{Monte Carlo}Monte Carlo sampling. 

Unlike \index{Monte Carlo}Monte Carlo, the \index{importance sampling}importance sampling estimator $\pIS(f)$ is biased for $\post(f)$. The theorem shows, however, that the bias decays at a rate that is twice that of the standard deviation, and so for large $\Sam$ the mean squared error is dominated by the variance. As for \index{Monte Carlo}Monte Carlo, the rate of convergence of the variance 
is governed by the inverse of $\Sam$, and the dimension $\Du$ does not directly appear in the upper-bound. However,
for \index{importance sampling}importance sampling to be accurate (with a limited number of samples $\Sam$)
it is important that \index{target distribution}target and \index{proposal!distribution}proposal are close in \index{divergence!$\chi^2$}$\chi^2$ divergence, 
a condition that will not be typically satisfied in high dimensions. 
\end{remark}

      \begin{example}[Explicit Bound for a \index{linear-Gaussian setting}Linear-Gaussian \index{inverse problem}Inverse Problem]\label{ExampleIS}
     Let $a \in \R$ be given, and consider the one-dimensional \index{inverse problem}inverse problem 
     $$y = au + \eta, \quad \quad \eta \sim \Nc(0, \gamma^2),$$
     supplemented with a \index{Gaussian}Gaussian \index{prior}prior $u \sim \rho(u) = \Nc(0,{\widehat{c}}\, ^2).$ Defining 
     $$\like(u):= \exp\Bigl( -\frac{a^2}{2\gamma^2}u^2 + \frac{ay}{\gamma^2} u  \Bigr)$$
    we can write the \index{posterior}posterior distribution $\post(u)$ in the form \eqref{eq:MC1}, namely
    $$\post(u) = \frac{1}{Z} \like(u) \pr(u).$$
     Setting $\delta^2 := a^2 {\widehat{c}}\, ^2/\gamma^2$ a direct calculation shows that 
     \begin{align*}
     \pr(\like) &= \frac{1}{\sqrt{\delta^2 + 1}} \exp\Bigl( \frac12 \frac{\delta^2 y^2}{a^2 {\widehat{c}}\, ^2 + \gamma^2}  \Bigr), \\
     \pr(\like^2) & = \frac{1}{ \sqrt{2\delta^2 + 1}} \exp\Bigl(  \frac{2\delta^2 y^2}{2a^2 {\widehat{c}}\, ^2 + \gamma^2}  \Bigr), 
     \end{align*}
     and so, noting that $\frac{y}{ \sqrt{a^2 \widehat{c}\, ^2 + \gamma^2}} \sim \Nc(0,1)$ under our model, we obtain that
     \begin{align*}
    \zeta &=\frac{\pr(\like^2)}{\pr(\like)^2} \\
     & = \frac{\delta^2 + 1}{ \sqrt{2\delta^2 + 1} }\exp \Bigl( \frac{\delta^2}{2\delta^2 + 1} z^2  \Bigr), \quad \quad z \sim \Nc(0,1).
     \end{align*}  
     Theorem \ref{IS} then guarantees that 
     $$d(\pIS, \post)^2 \le \frac{4\zeta}{N}.$$
     It is illustrative to note that $\zeta$ ---and hence the \index{divergence!$\chi^2$}$\chi^2$-divergence between the \index{posterior}posterior and the \index{prior}prior--- is an increasing function of $\delta^2 =  a^2 {\widehat{c}}\, ^2/\gamma^2$. This is intuitive, since (i) larger $a$ and ${\widehat{c}}\, ^2$ make the \index{prior}prior less informative; and (ii) smaller $\gamma$ makes the data more informative. In either of these two limiting regimes, we expect the \index{posterior}posterior to become further away from the \index{prior}prior. 
      \end{example}

\section{Discussion and Bibliography}\label{sec:53}
A classic reference on the \index{Monte Carlo}Monte Carlo method is \cite{hammersley1964percolation}. Recent textbooks covering both methodological and theoretical aspects of \index{Monte Carlo}Monte Carlo methods include \cite{liu2008monte,robert2013monte}. 
In practice, a wide range of probabilities, integrals and summations can be approximated by the \index{Monte Carlo}Monte Carlo method. An advantage of Monte Carlo methods is that the
convergence rate is independent of the dimension of the vector space supporting
the random variable; indeed, the $N^{-1/2}$ rate can be obtained for
infinite-dimensional problems, in principle.  
A caveat of \index{Monte Carlo}Monte Carlo methods is that they converge slowly.
A faster convergence rate can be attained using quasi-random, low discrepancy sequences rather than random samples from the \index{target distribution}target. These quasi-random points can be suitably chosen in order to provide greater uniformity than random or pseudo-random sequences. The convergence theory, practical limitations, and scalability to high dimension of the resulting \emph{quasi-Monte Carlo} methods are overviewed in \cite{caflisch1998monte,dick2013high,sloan1998quasi}.  The subject of multi-level Monte Carlo (MLMC) has made the use of \index{Monte Carlo}Monte Carlo methods practical in new areas of application;
see \cite{giles2015multilevel} for an overview. The methodology applies
when approximating expectations over infinite-dimensional spaces,
and distributes the computational budget over different levels of
approximation, with the goal of optimizing the cost per unit 
error, noting that the latter balances sampling and approximation based
 sources of error.  

\index{importance sampling}Importance sampling is reviewed in \cite{tokdar2010importance}. The methodology was first developed as an approach to reduce the variance of \index{Monte Carlo}Monte Carlo integration \cite{kahn1953methods,kahn1955use}.  The chapter notes \cite{anderson2014monte} give a comparison of \index{Monte Carlo}Monte Carlo and \index{importance sampling}importance sampling with examples. Early investigations of  \index{importance sampling}importance sampling focused on the following question: given a test function $f$, how should one choose the \index{proposal!distribution}proposal $\pr$ so that the estimator $\pIS(f)$ of $\post(f)$ has a small mean squared error? This question has led to a plethora of algorithms for simulation of rare events, which is still a very active area of research.
The presentation in this chapter closely follows the papers \cite{agapiou2017importance,sanz2020bayesian}, which study \index{importance sampling}importance sampling from the perspective of \index{filtering}filtering and \index{sequential importance resampling}sequential importance resampling. In this context, it is important to guarantee the accuracy of  the \index{importance sampling}importance sampling estimator $\pIS(f) \approx \post(f)$ for a variety of test functions. This can be achieved by ensuring that $\pIS$ is close to $\post,$ as shown in Theorem \ref{IS}.
 In order for \index{importance sampling}importance sampling to be accurate for a wide family of test functions,  \index{target distribution}target and \index{proposal!distribution}proposal need to be sufficiently close, since otherwise the \emph{effective sample size} will be low \cite{agapiou2017importance,sanz2020bayesian,martino2017effective}. 
 Necessary sample size results for \index{importance sampling}importance sampling in terms of several \index{divergence}divergences between \index{target distribution}target and \index{proposal!distribution}proposal are established in \cite{sanz2018importance,chatterjee2018sample}. The papers \cite{bugallo2017adaptive,kawai2017adaptive} consider advanced \index{importance sampling}importance sampling via adaptive algorithms.  Some recent adaptive methods are based on the idea of finding the \index{proposal!distribution}proposal distribution within some parametric family that is closest to the \index{target distribution}target distribution in an appropriate sense \cite{ryu2014adaptive,akyildiz2021convergence,deniz2022global}.


 \chapter{\Large{\sffamily{Markov Chain Monte Carlo}}}
\label{ch:6}
In this chapter we study \index{Markov chain!Monte Carlo}Markov chain Monte Carlo \index{MCMC}(MCMC),
a methodology that delivers approximate samples from a given 
\index{target distribution}{\em target} distribution $\post.$  The methodology applies to
settings in which $\post$ is the posterior distribution in
\eqref{eq:bayesformula}, but it is also widely used in numerous applications beyond Bayesian inference. 
  As with \index{Monte Carlo}Monte Carlo and \index{importance sampling}importance sampling, \index{MCMC}MCMC may be viewed as approximating the \index{target distribution}target distribution by a sum of \index{Dirac}Dirac masses, thus 
allowing the approximation of expectations with respect to the \index{target distribution}target. Implementation of \index{Monte Carlo}Monte Carlo presupposes that independent samples from the \index{target distribution}target can be obtained. \index{importance sampling}Importance sampling and \index{MCMC}MCMC bypass this restrictive assumption: \index{importance sampling}importance sampling by appropriately weighting independent samples from a \index{proposal!distribution}proposal distribution, and \index{MCMC}MCMC by drawing correlated samples from a \index{Markov kernel}Markov kernel that has the \index{target distribution}target as \index{invariant distribution}invariant distribution. 

The concepts of \index{Markov kernel}Markov kernel and \index{invariant distribution}invariant distribution will hence play a central role in this chapter, and we start in Section \ref{sec:61} with a recap of the elements of this
theory needed in the remainder of the chapter.
Then in Section \ref{sec:62}  we provide a general discussion of \index{Markov chain!sampling}Markov chain sampling,  which assumes the
existence of an ergodic \index{Markov chain}Markov chain, with a kernel from which samples may
 be drawn iteratively, with \index{invariant distribution}invariant distribution equal to the \index{target distribution}target
$\post$. Following that, in Section \ref{sec:63} we discuss \index{Metropolis-Hastings}Metropolis-Hastings
sampling which assumes the existence of a \index{Markov kernel}Markov kernel from which samples may 
readily be drawn, and then uses a correction mechanism to obtain
a new \index{Markov chain}Markov chain with \index{invariant distribution}invariant distribution equal to the \index{target distribution}target $\post$.
The relationship between \index{Metropolis-Hastings}Metropolis-Hastings sampling and \index{Markov chain}Markov chain
sampling is analogous to the relationship between \index{importance sampling}importance sampling
and \index{Monte Carlo}Monte Carlo sampling. 
After introducing the general \index{Metropolis-Hastings}Metropolis-Hastings methodology, and showing its invariance with respect to the \index{target distribution}target distribution in Section \ref{sec:64}, 
we will specify to the case where $\post$ is a \index{posterior}posterior distribution 
given via \index{Bayes theorem}Bayes theorem from the product of the \index{likelihood}likelihood function and 
the \index{prior}prior distribution. In this context, we will analyze in Section \ref{sec:65} the convergence 
of the \index{pCN}pCN algorithm, which uses 
the \index{prior}prior and the \index{likelihood}likelihood separately as part of its design, and is prototypical
of many useful \index{Metropolis-Hastings}Metropolis-Hastings methods,
especially for high-dimensional sampling problems.   
The chapter closes in Section \ref{sec:66} 
with extensions and bibliographical remarks. 

\section{\index{Markov chain}Markov Chains in $\R^d$}\label{sec:61}
We recall that $p:\R^{\Du} \times \R^{\Du} \to \R$ is called a 
\emph{\index{Markov kernel}Markov kernel} if,
\begin{enumerate}[label=(\roman*)]
\item $p(u,v) \ge 0$ for all $(u,v) \in \R^{\Du} \times \R^{\Du};$ and
\item $\int_{\R^\Du} p(u,v) \; dv = 1$ for all $u\in \R^{\Du}.$
\end{enumerate}
Thus if $p:\R^{\Du} \times \R^{\Du} \to \R$ is a \index{Markov kernel}Markov kernel, then $p(u,\cdot): \R^{\Du} \to \R^+$ is a pdf on $\R^d.$ We also recall that $\post$ 
is an \index{invariant distribution}\emph{invariant distribution} of the \index{Markov kernel}Markov kernel $p(u,v)$ if, for any $v\in \Ru,$ 
\begin{equation}
\int_{\Ru} \post(u) p(u,v)\, du = \post(v). 
\end{equation}
A \emph{sample path} of the \index{Markov chain}Markov chain generated by kernel $p$ is defined as follows:
given initial distribution $\pi_0$, generate $\{u^{(n)}\}_{n \in \mathbb{Z}^+}$
inductively:
\begin{align*}
u^{(0)} &\sim \pi_0,\\
u^{(n+1)} &\sim p(u^{(n)},\cdot), \quad n \in \mathbb{Z}^+.
\end{align*}
Note that $\{u^{(n)}\}_{n \in \mathbb{Z}^+}$  is a random sequence and hence, for each 
$n  \in \mathbb{Z}^+$, there is a marginal distribution on $u^{(n)}$, denoted $\pi_n$;
in later discussions correlations between $u^{(n)}$ and $u^{(m)}$ for $n \ne m$ 
will also be relevant. The following result is fundamental.

\begin{lemma}
Let $\post$ be an \index{invariant distribution}invariant distribution of the \index{Markov kernel}Markov kernel $p$. Let $\{u^{(n)}\}_{n \in \mathbb{Z}^+}$ be a sample path generated with kernel $p$ and initial distribution $\post_0 = \post.$
Then it follows that $u^{(n)} \sim \pi$ for all 
$n  \in \mathbb{Z}^+$.
\end{lemma}

\begin{proof}
By induction it suffices to show that if $\pi_n=\pi$ then $\pi_{n+1}=\pi.$ Let $A$
denote an arbitrary subset in $\mathbb{R}^d$. We first note that
$$\mathbb{P}(u^{(n+1)} \in A|u^{(n)})=\int_{A} p(u^{(n)},v) \, dv.$$
Thus, using $\pi_n=\pi$, exchanging the order of integration and using
the invariance of $\pi$ with respect to kernel $p$, we find that
\begin{align*}
\pi_{n+1}(A)&=\mathbb{P}(u^{(n+1)} \in A)\\
&=\mathbb{E}^{u^{(n)} \sim \pi_n}  \Bigl[ \mathbb{P} \bigl(u^{(n+1)} \in A|u^{(n)} \bigr) \Bigr]\\
&=\int_{\mathbb{R}^d}\pi_n(u)\Bigl(\int_{A} p(u,v) \, dv\Bigr) \, du\\
&=\int_{\mathbb{R}^d}\pi(u)\Bigl(\int_{A} p(u,v)\, dv\Bigr) \, du\\
&=\int_{A} \Bigl(\int_{\mathbb{R}^d}\pi(u) p(u,v) \, du\Bigr) \, dv\\
&=\int_{A} \pi(v) \, dv\\
&=\pi(A).
\end{align*}
Since $A$ is arbitrary the proof is complete.
\end{proof}

In the following it will be useful to compute expectations with respect to
the distribution on sample paths $\su=\{u^{(n)}\}_{n \in \mathbb{Z}^+}$
implied by the \index{Markov kernel}Markov kernel $p$ and initial distribution $\pi_0$. 
To this end we let $\mathbb{E}$ denote
expectation with respect to the distribution on sample paths and define,
for real-valued functions $f$ and $g$ on the sample paths, 
\begin{align*}
\text{Var}\bigl(f(\su)\bigr)&:=\mathbb{E} \Bigl[\bigl(f(\su)-\mathbb{E}(f)\bigr)^2\Bigr],\\ 
\text{Cov}\bigl(f(\su),g(\su)\bigr)&:=\mathbb{E} \Bigl[\bigl(f(\su)-\mathbb{E}(f)\bigr)\bigl(g(\su)-\mathbb{E}(g)\bigr)\Bigr].
\end{align*}
We will be particularly interested in the case in which the initial distribution 
of the \index{Markov chain}Markov chain is $\pi;$ the preceding lemma shows that each element $u^{(n)}$ of 
the sample path $\su$  is then distributed according to $\pi.$
We then write $\mathbb{E}^{u^{(0)} \sim \pi}.$ If, abusing notation,
 $f$ and $g$ depend only on a single element $u^{(n)}$  of $\su$,  then we have
\begin{align*}
\text{Var}\bigl(f(u^{(n)})\bigr)&:=\text{Var}_{\pi}\bigl(f(u)\bigr):=\post \Bigl(\bigl(f(u)-\post(f)\bigr)^2\Bigr),\\
\text{Cov}\bigl(f(u^{(n)}),g(u^{(m)})\bigr)&:=\mathbb{E}^{u^{(0)} \sim \pi} \Bigl[\bigl(f(u^{(n)})-\post(f)\bigr)\bigl(g(u^{(m)})-\post(g)\bigr) \Bigr].
\end{align*}

\section{\index{Markov chain!sampling}Markov Chain Sampling}\label{sec:62}
The idea of \index{MCMC}MCMC is simple to state: given a \index{target distribution}target distribution $\post,$ 
find a \index{Markov kernel}Markov kernel that can be sampled from and has $\post$ as its 
\index{invariant distribution}invariant distribution. 
Samples  $\{ u^{(\sam)} \}_{\sam=1}^{N}$  drawn iteratively from the 
kernel may be used to approximate \index{posterior}posterior expectations. 
The samples are given uniform weights $1/N$ but, in contrast to standard \index{Monte Carlo}Monte Carlo, they are not independent and they are not drawn exactly from the \index{target distribution}target 
$\post.$ However, if the chain is guaranteed to satisfy {\em sample path \index{ergodicity}ergodicity},
then the resulting estimator for $\post(f)$ is asymptotically unbiased and satisfies a central limit theorem for suitable test functions $f.$ We display the algorithm, define
the estimator and then state a theorem summarizing convergence.

\FloatBarrier
\begin{algorithm}
\caption{\index{Markov chain!sampling}Markov Chain Sampling Algorithm \label{alg_1}}
\begin{algorithmic}[1]
\vspace{0.1in}
\STATE {\bf Input}: \index{target distribution}Target distribution $\post$, initial distribution $\post_0,$ \index{Markov kernel}Markov kernel  $q(u,v)$ with \index{invariant distribution}invariant distribution $\post$, number of samples $\Sam.$  \\
\vspace{.04in}
\STATE {\bf Initial Draw}: Draw initial sample $u^{(0)} \sim \post_0.$ 
 \\
\vspace{.04in}
\STATE{{{\bf Subsequent Samples}}}: For $\sam = 0,1,\dots,\Sam -1$ do:
\begin{enumerate}
\item Sample $u^{(\sam+1)} \sim q(u^{(\sam)},\cdot).$
\end{enumerate}
\STATE{\bf Output}: \index{target distribution}Target approximation $\post \approx \pMCMC: = \frac{1}{\Sam} \sum_{\sam=1}^\Sam \delta(u - u^{(\sam)}).$
\end{algorithmic}
\end{algorithm}
\FloatBarrier

The estimator for $\post(f)$ resulting from Algorithm \ref{alg_1} is then
$$\pMCMC(f) = \frac{1}{\Sam} \sum_{\sam=1}^\Sam f(u^{(\sam)}).$$
Recall the notation $\text{Var}$, $\text{Cov}$ and
$\mathbb{E}^{u^{(0)} \sim \pi}$ from the previous section.
We then have the following result concerning the error in this estimator.

\begin{theorem}[\index{MCMC}MCMC Error]
\label{t:MCMC}
Let $f:\Ru\longrightarrow\mathbb{R}$ satisfy $ {\emph{Var}}_{\post}[f] = 1.$  We have
\begin{align*}
   \Expect^{u^{(0)}\sim \post} \left[\pMCMC (f)-\post(f)\right]  &= 0, \\
   \Expect^{u^{(0)}\sim \post} \left[\left(\pMCMC(f)-\post(f)\right)^2\right] &= \frac{\tau_N^2(f)}{N},
\end{align*}
where 
$$\tau_N^2(f) =  1 + 2  \sum_{m = 1}^{\Sam-1} \frac{\Sam - m}{\Sam}  \emph{Cov}\bigl(f(u^{(0)}), f(u^{(m)}) \bigr) .$$
In particular, 
       \begin{equation*}
        \lim_{N \to \infty} N\Expect^{u^{(0)}\sim \post} \left[\left(\pMCMC(f)-\post(f)\right)^2\right] = \tau^2(f),
        \end{equation*}
where
$$\tau^2(f)=1+2\sum_{m=1}^{\infty}  \emph{Cov}\bigl(f(u^{(0)}), f(u^{(m)}) \bigr),$$
provided that the series converges.
      \end{theorem}
      \begin{proof}
      First note that, under the assumptions that $u^{(0)} \sim \post$ and that the kernel $q$ has \index{invariant distribution}invariant distribution $\post,$ it follows that $u^{(n)} \sim \post$ for all $n \ge 1.$ Therefore, $\pMCMC$ is unbiased for $\post(f)$ by linearity of expectation. 
      Now we characterize the mean squared error of $\pMCMC$, which agrees with its variance:
      \begin{align*}
      \Expect^{u^{(0)}\sim \post} \left[\left(\pMCMC(f)-\post(f)\right)^2\right] &= \text{Var}[\pMCMC (f)]\\
      &=\frac{1}{\Sam^2}\biggl[\sum_{\sam = 1}^\Sam \text{Var}\bigl[f(u^{(n)})\bigr] + 2 \sum_{\sam = 1}^{\Sam-1} \sum_{m >\sam} \text{Cov}\bigl(f(u^{(\sam)}), f(u^{(m)})  \bigr)      \biggr] \\
            &=\frac{1}{\Sam^2} \biggl[  \Sam  + 2  \sum_{\sam = 1}^{\Sam-1} \sum_{m = 1}^{\Sam - \sam} \text{Cov}\bigl(f(u^{(\sam)}), f(u^{(\sam + m)}) \bigr) \biggr]  \\
                         & = \frac{1}{\Sam}\biggl[ 1 + \frac{2}{ \Sam}    \sum_{m = 1}^{\Sam-1} \sum_{\sam = 1}^{\Sam - m} \text{Cov}\bigl(f(u^{(\sam)}), f(u^{(\sam + m)})\bigr)   \biggr] \\
             & = \frac{1}{\Sam}\biggl[ 1 + \frac{2}{ \Sam}    \sum_{m = 1}^{\Sam-1} \sum_{\sam = 1}^{\Sam - m} \text{Cov}\bigl(f(u^{(0)}), f(u^{( m)})\bigr)   \biggr] \\
               & = \frac{1}{\Sam}\biggl[ 1 + 2   \sum_{m = 1}^{\Sam-1} \frac{\Sam - m}{\Sam}  \text{Cov}\bigl(f(u^{(0)}), f(u^{(m)}) \bigr)  \biggr] \\
               &= \frac{\tau_N^2(f)}{\Sam}.
      \end{align*}
The final result follows by the dominated convergence theorem.
      \end{proof}
      
      \begin{remark}
   \index{i.i.d.}  Suppose that $\text{Var}_\post[f] =1.$ If, for $1 \le n \le N,$  $u^{(n)}{\sim}\pi$  are independent, then we have that 
\begin{equation*}
	\text{Var}\left[\frac{1}{N}\sum_{n=1}^N f(u^{(n)})\right] = \frac{1}{N},
\end{equation*}
as we saw in the proof of Theorem \ref{t:MC} for standard \index{Monte Carlo}Monte Carlo.
Thus, if the \index{autocorrelation}\emph{autocorrelations} $\text{Cov}\bigr(f(u^{(0)}) ,f(u^{(m)})\bigr)$ are positive, then the ergodic average will be less accurate than estimated from an\index{i.i.d.} i.i.d. sample. This is because positively correlated random variables have redundant information so are less informative than \index{i.i.d.}i.i.d. random variables. On the other hand, if  the correlations are negative ergodic averages may be more accurate than a direct \index{Monte Carlo}Monte Carlo estimator with\index{i.i.d.} i.i.d. samples.

The theorem is stated in the idealized (and unrealistic) setting in
which the \index{Markov chain}Markov chain starts at the desired \index{target distribution}target distribution.
In general, \index{ergodicity}ergodicity is needed to ensure that chains initialized
far from stationarity will converge to the desired \index{target distribution}target.
Controlling the size of $\tau^2(f)$ and ensuring rapid convergence
to stationarity are the two primary design goals when constructing
\index{Markov chain}Markov chains invariant with respect to $\pi.$
\end{remark}

Addressing the design and analysis of \index{MCMC}MCMC methods in generality and depth is beyond the scope of a single chapter;  entire books are devoted to this subject. We will restrict our discussion to a particular class of \index{MCMC}MCMC methods, known as \index{Metropolis-Hastings}Metropolis-Hastings algorithms. We will prove that the desired \index{target distribution}target distribution is invariant for the \index{Metropolis-Hastings}Metropolis-Hastings kernel, and we will show \emph{geometric \index{ergodicity}ergodicity} of the \index{pCN}pCN \index{Metropolis-Hastings}Metropolis-Hastings algorithm, meaning that the distribution $\post_\sam$ of the $n$-th sample approaches the \index{invariant distribution}invariant distribution exponentially fast in \index{distance!total variation}total variation distance. The idea is illustrated in Figure \ref{fig:MCMC_chain}: after
an initial number of \index{burn-in}{\em burn-in} steps, the samples from 
the chain start to concentrate in regions where the \index{target distribution}target distribution has the greatest mass. We will not discuss sample path \index{ergodicity}ergodicity, noting simply that a general abstract theory exists to deduce it from geometric \index{ergodicity}ergodicity.

  \FloatBarrier
    \begin{figure}[htp]
      \centering
      \includegraphics[width=0.75\columnwidth]{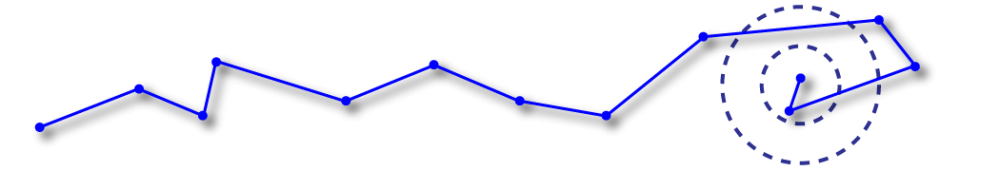}
      \caption{The \index{Markov chain}Markov chain samples points from distribution $\post_\sam$ at step $\sam$, and the sampling distribution converges towards the \index{target distribution}target distribution 
$\post$ whose high density regions are represented by the dashed circles.}
      \label{fig:MCMC_chain}
    \end{figure}
    \FloatBarrier

\section{\index{Metropolis-Hastings}Metropolis-Hastings Sampling}\label{sec:63}

Here we outline the \index{Metropolis-Hastings}Metropolis-Hastings algorithm. The
algorithm has two ingredients: a \index{proposal!kernel}{\em proposal kernel} $q(u,v)$,
which is a Markov transition kernel; and an acceptance probability
$a(u,v)$ that will be used to convert the \index{proposal!kernel}proposal kernel into a kernel $\pmh(u,v)$ for which the given \index{target distribution}target $\post$ is an \index{invariant distribution}invariant distribution. Given the $n$-th sample $u^{(\sam)},$ we generate 
$u^{(\sam+1)}$ by drawing $v^\star$ from the distribution $q(u^{(\sam)}, \cdot).$ The result is 
accepted, which means setting $u^{(\sam+1)}=v^\star$,
with probability $a(u^{(\sam)}, v^{\star})$; it is rejected,
meaning $u^{(\sam+1)} = u^{(\sam)}$, with
the remaining probability $1 - a(u^{(\sam)}, v^{\star})$.
The acceptance probability is given by
\begin{eqnarray}
\label{eq:MHA}
a(u,v) = \text{min} \Bigg( \frac{\post(v) q(v, u)}{\post(u) q(u, v)},\, 1\Bigg).\end{eqnarray}

\FloatBarrier
\begin{algorithm}
\caption{\label{algMH} \index{Metropolis-Hastings}Metropolis-Hastings Algorithm}
\begin{algorithmic}[1]
\vspace{0.1in}
\STATE {\bf Input}: \index{target distribution}Target distribution $\post$, initial distribution $\post_0,$ \index{Markov kernel}Markov kernel $q(u,v)$, number of samples $\Sam.$  \\
\vspace{.04in}
\STATE {\bf Initial Draw}: Draw initial sample $u^{(0)} \sim \post_0.$ 
 \\
\vspace{.04in}
\STATE{{{\bf Subsequent Samples}}}: For $\sam = 0,1,\dots,\Sam -1$ do:
\begin{enumerate}
\item Sample $v^{\star} \sim q(u^{(\sam)},\cdot).$
\item Calculate the acceptance probability $a_\sam:=a(u^{(\sam)},v^{\star}).$
\item Update 
\begin{align*}
u^{(\sam+1)} =
\begin{cases}
 v^{\star}, &  \text{with probability} \,\, a_\sam, \\ 
 u^{(\sam)}, & \text{with probability} \,\,  1 - a_\sam.
\end{cases}
 \end{align*}
\end{enumerate}
\STATE{\bf Output}: \index{target distribution}Target approximation $\post \approx  \pimh^N : = \frac{1}{\Sam} \sum_{\sam=1}^\Sam \delta(u - u^{(\sam)}).$ 
\end{algorithmic}
\end{algorithm}
\FloatBarrier
%
%
%
%
%
%

The estimator resulting from Algorithm \ref{algMH} for $\post(f)$ is then
$$\pimh^N(f)  = \frac{1}{\Sam} \sum_{\sam=1}^\Sam f(u^{(\sam)}).$$
The \index{Metropolis-Hastings}Metropolis-Hastings algorithm implicitly defines a \index{Markov kernel}Markov kernel $\pmh(u,\cdot)$ which specifies the density of the $(\sam+1)$-th sample given that the $n$-th sample is located at $u.$ For $u\neq v,$  the \index{Metropolis-Hastings}Metropolis-Hastings kernel has the following simple expression in terms of the \index{proposal!kernel}proposal kernel and the acceptance probability 
\begin{equation}\label{eq:pmh}
\pmh(u,v) =  a(u,v) q(u,v);
\end{equation}
this expression may be deduced noting that in order to move from $u$ to a new location $v$, the move needs to be proposed and accepted. 

\begin{remark}
 We note the following concerning 
the \index{Metropolis-Hastings}Metropolis-Hastings algorithm. 
\begin{itemize}
\item In order to implement the \index{Metropolis-Hastings}Metropolis-Hastings algorithm one needs to be able to sample from the \index{proposal!kernel}proposal kernel $q(u,\cdot)$ and evaluate the acceptance probability $a(u,v).$ Importantly, the \index{target distribution}target distribution only appears in the acceptance probability $a(u,v),$ and only the ratio $\post(v)/\post(u)$ is involved. Therefore, the \index{Metropolis-Hastings}Metropolis-Hastings algorithm may be implemented for \index{target distribution}target distributions that are only specified up to an unknown normalizing constant. 
\item If $q(u,v) = q(v,u)$ the acceptance probability simplifies to 
$\min \Bigl(1,\post(v)/\post(u)\Bigr).$ This is the setting in which the
original \emph{Metropolis algorithm} was introduced. 
In such a case, moves to regions of higher \index{target distribution}target density are always accepted, while moves to regions of smaller but non-zero target density are accepted with positive probability in order to ensure exploration of the target space. The quantity $\pi(u)q(u,v)$
should be viewed as a joint distribution on the pair $(u,v)$ with $u$ distributed
according to the \index{invariant distribution}invariant distribution and $v|u$ then defined by the \index{Markov kernel}Markov kernel.
In the general \emph{\index{Metropolis-Hastings}Metropolis-Hastings algorithm setting}, when $q$ is not
necessarily symmetric, the method favors moves that are easier 
to be reversed, in the sense that $\pi(v)q(v,u) >\pi(u)q(u,v).$ 
\item The \index{Metropolis-Hastings}Metropolis-Hastings algorithm is extremely flexible due to the freedom in the choice of \index{proposal!kernel}proposal kernel $q(u,v)$. For this algorithm the ergodic behavior, and size 
of $\tau^2(\cdot)$, is heavily dependent on the choice of \index{proposal!kernel}proposal kernel.
\item The accept-reject step may be implemented by drawing, independently
from the \index{proposal!kernel}proposal, a uniformly distributed random variable $\theta_\sam$ in the interval $[0,1].$ Recall $a_\sam$ as defined in
Algorithm \ref{alg_1}. If $\theta_\sam \in [0,a_\sam),$ then the \index{proposal}proposal is accepted ($u^{(\sam+1)}=v^\star$); it is rejected ($u^{(\sam+1)}=u^{(\sam)})$ otherwise. 
\end{itemize}
\end{remark}

\section{Invariance of the \index{target distribution}Target Distribution \texorpdfstring{$\post$}{}} \label{sec:64}
In this section we show that the \index{target distribution}target $\post$ is an \index{invariant distribution}invariant distribution for the \index{Metropolis-Hastings}Metropolis-Hastings kernel. We start by introducing the notion of \index{detailed balance}detailed balance and showing that it implies invariance. We then prove that the \index{Metropolis-Hastings}Metropolis-Hastings kernel satisfies \index{detailed balance}detailed balance with respect to $\post$, and hence $\post$ is invariant.

\subsection{\index{detailed balance}Detailed Balance and its Implication}
A \index{Markov kernel}Markov kernel $p(u,v)$ satisfies  \index{detailed balance}\emph{detailed balance} with respect to $\post$ if, for any $u, v \in \Ru,$
\begin{equation*}
\post(u) p(u,v) = \post(v) p(v,u).
\end{equation*}
\index{detailed balance}Detailed balance of $p(u,v)$ with respect to $\post$ implies that $\post$ is an \index{invariant distribution}invariant distribution for $p(u,v)$. To see this, note that if $p(u,v)$ satisfies \index{detailed balance}detailed balance with respect to $\post,$ then
$$\int_{\Ru} \post(u) p(u,v)\, du = \post(v) \int_{\Ru} p(v,u)\, du = \post(v).$$
Invariance guarantees that, if the chain is distributed according to $\post$ at a given step, then it will also be distributed according to $\post$ in the following step. \index{detailed balance}Detailed balance  
guarantees that the in/out probability
flux between any two \index{state}states is preserved; this is a stronger 
condition, which implies invariance. 

\subsection{\index{detailed balance}Detailed Balance and the \index{Metropolis-Hastings}Metropolis-Hastings Algorithm}
The following theorem establishes the \index{detailed balance}detailed balance of the \index{Metropolis-Hastings}Metropolis-Hastings kernel with respect to the \index{target distribution}target $\post$; it implies, as a consequence, that the \index{target distribution}target is an \index{invariant distribution}invariant distribution for the \index{Metropolis-Hastings}Metropolis-Hastings kernel. 


      \begin{theorem}[\index{Metropolis-Hastings}Metropolis-Hastings and \index{detailed balance}Detailed Balance]
       The \index{Metropolis-Hastings}Metropolis-Hastings kernel satisfies \index{detailed balance}detailed balance with 
respect to the distribution $\post$.
      \end{theorem}

\begin{proof}
We need to show that, for any $u,v \in \R^d,$ 
\begin{equation}
\label{eq:MHP0}
\post(u) \pmh(u,v) = \post(v) \pmh(v,u).
\end{equation}
We let $v^\star$ denote the point proposed from kernel $q(u,\cdot)$,
calculate the joint probability distribution of $(u,v^\star,v)$ and
then integrate out $v^\star$ in order to identify $\post(u) \pmh(u,v).$
We first note that the random variable $v|(u,v^\star)$ has density
\begin{equation}
\label{eq:MHP1}
\delta_{v^\star}(v) a(u,v^\star)+\delta_u(v)\bigl(1-a(u,v^\star)\bigr).
\end{equation}
The density of $(u,v^\star)$ is found from the product of the density of 
$v^\star|u$ and the density of $u$ and is hence given by
\begin{equation}
\label{eq:MHP2}
q(u,v^\star)\pi(u).
\end{equation}
Multiplying \eqref{eq:MHP1} and \eqref{eq:MHP2} gives the density
of $(u,v^\star,v)$ and integrating out $v^\star$ gives the density of
$(u,v)$, namely
\begin{align*}
\pi(u) \pmh(u,v) &=\pi(u)q(u,v)a(u,v)+\pi(u)\delta_u(v)\beta(u),\\
\beta(u) & = \int_{\R^d} \bigl(1-a(u,v^\star)\bigr) q(u,v^\star) dv^\star.
\end{align*}

Now note that
\begin{align*}
       q(u, v)a(u,v) &= \text{min }\biggl( \frac{\post(v) q(v,u)}{\post(u) q(u,v)}  ,1   \biggr) ~ q(u, v)\\
       &= \frac{1}{\post(u)} \times \text{min }\Bigl( \post(u) q(u,v), \post(v) q(v,u)\Bigr).
\end{align*}
Thus, invoking symmetry,
       $$ \post(u) q(u, v)a(u,v) = \text{min }\bigl( \post(u) q(u,v), \post(v) q(v,u)\bigr) = \post(v) q(v, u)a(v,u).$$
It is then apparent that $\pi(u) \pmh(u,v)$ is symmetric with respect to
the pair $(u,v)$, establishing \eqref{eq:MHP0} and completing the proof.
\end{proof}


Invariance of the \index{Metropolis-Hastings}Metropolis-Hastings kernel $\pmh$ 
with respect to $\post$ implies that if the initial sample is drawn from the \index{target distribution}target ($\post_0 = \post)$, then all subsequent samples are also distributed according to the \index{target distribution}target $(\post_n = \post).$ 


\section{Convergence to the \index{target distribution}Target Distribution}\label{sec:65}
In the previous section we showed that if we initialize the \index{Metropolis-Hastings}Metropolis-Hastings algorithm with distribution $\post$, all samples produced by the algorithm will be distributed according to $\post.$
 But the motivation for the 
\index{Metropolis-Hastings}Metropolis-Hastings algorithm
is that we are not able to directly sample from $\post$. 
Our aim in this section is to show that, for certain \index{Metropolis-Hastings}Metropolis-Hastings methods, the law $\pi_n$ of the $n$-th sample converges to $\post$ regardless of the initial distribution $\post_0.$
This is a strong form of  {\em ergodic} behavior which does not hold in general, as illustrated by 
the chain depicted in Figure~\ref{fig-convergence}.

\begin{figure}
  \centering
  \includegraphics[scale=0.70]{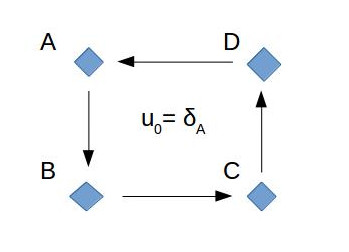}
  \caption{The arrows represent transitions with probability one in a four
\index{state}state \index{Markov chain}Markov chain. The \index{invariant distribution}invariant distribution is the uniform distribution but for $\post_{0} = \delta_{A}$, we have $\post_1 = \delta_{B}, \post_2 = \delta_{C}, \post_3 = \delta_{D}, \post_4 = \delta_A,$ etc. Here $\pi_n$ does not converge to a limit distribution.}\label{fig-convergence}
\end{figure}

In order to understand the mechanisms behind \index{ergodicity}ergodicity we will
first consider \index{Markov chain}Markov chains with finite \index{state-space!finite}state-space. We then study a specific \index{Metropolis-Hastings}Metropolis-Hastings 
algorithm, known as the \index{pCN}pCN (for {\em preconditioned Crank-Nicolson}) method,
which applies to \index{target distribution}targets $\post$ defined by their density with respect to
a \index{Gaussian}Gaussian distribution.

\subsection{\index{state-space!finite}Finite State-Space}

We consider a \index{Markov chain}Markov chain on the \index{state-space!finite}finite state-space 
$S = \left\{ 1 , \cdots , \Du \right\}.$
The \index{Markov kernel}Markov kernel described earlier becomes a $d \times d$ 
\emph{transition matrix}
$P$ with non-negative 
entries $p(i,j)$ satisfying
$$\sum_{j\in S}\ p(i,j)=1.$$ 
The \index{invariant distribution}invariant distribution becomes $d \times 1$ column vector $\pi$, with
non-negative entries which sum to one, satisfying
\begin{equation}
\label{eq:evone}
\pi^\top=\pi^\top P.
\end{equation}
Such an \index{invariant distribution}invariant distribution always exists but is not, in general, unique.
The distribution at each step of the \index{Markov chain}Markov chain
is the $d \times 1$ column vector $\pi_n$ satisfying
\begin{equation}
\label{eq:evtwo}
\pi^\top_{n+1}=\pi_n^\top P,
\end{equation}
where $\pi_0$ is the initial distribution of the chain.
\index{ergodicity}Ergodicity may be defined as convergence of the sequence $\pi_n$ to limit $\pi$
as $n \to \infty;$ this is related to the eigenvalue $1$ of $P$ having 
algebraic and geometric multiplicity one.
We now illustrate a {\em \index{coupling}coupling} approach to proving \index{ergodicity}ergodicity and then,
in the next subsection, generalize the methodology
to study the \index{pCN}pCN method on the \index{state-space!continuous}continuous state-space $\mathbb{R}^d$.
 
\begin{theorem}[\index{ergodicity}Ergodicity in \index{state-space!finite}Finite State-Space]
Let $\{u^{(n)}\}_{n \in \mathbb{Z}^+}$ be a \index{Markov chain}Markov chain with \index{state-space}state-space $S = \left\{ 1 , \cdots , \Du \right\}$, transition matrix $P$ and initial distribution $\post_{0}$. Assume that there is $\eps>0$ such that 
  \begin{equation}
    \label{eqn-unif}
    \min_{(i,j) \in S \times S} p\left( i,j \right)\ge \frac{\eps}{\Du}. 
  \end{equation}
Then there is a unique solution $\post$ to \eqref{eq:evone} within the
class of probability vectors on $S$. Furthermore, the following convergence result holds
for iteration \eqref{eq:evtwo}:
  \begin{equation}\label{eq:tvbound}
    \dtv( \post_{\sam}, \post) \leq \left( 1 - \eps \right)^{\sam}.
  \end{equation}
\end{theorem}
\begin{proof}
First note that the Markov matrix $P$ is a continuous map from
the space of probability distributions on $S$
into itself; it thus continuously maps a compact, convex set into 
itself. By Brouwer's fixed point theorem it follows that $P$ has a fixed point 
in this space, ensuring that an \index{invariant distribution}invariant distribution $\pi$ solving \eqref{eq:evone}
exists. We will now show that for any \index{invariant distribution}invariant distribution $\post$ equation \eqref{eq:tvbound} holds, which also implies the uniqueness of the \index{invariant distribution}invariant distribution within
the class of probability vectors.

Let $\post$ be an \index{invariant distribution}invariant distribution, a probability vector on $S$. Proving 
convergence to equilibrium amounts to ``forgetting the past'', to show that the long time behavior of the \index{Markov chain}Markov chain does not depend on the initial distribution $\post_0$ and in fact converges to $\post$. In general, $u^{(\sam+1)}$ will be strongly dependent on $u^{(\sam)}$, but 
the condition given in \eqref{eqn-unif} implies that there is always some residual chance that the chain jumps to any new \index{state}state, at each step, independently
of where it is currently located, $u^{(\sam)}$. This residual 
probability of the chain to make a ``totally random'' move will 
be shown to diminish the stochastic dependence on $u^{(0)}$ as $\sam$ increases. 

To formalize this idea, let $b_\sam$ be \index{i.i.d.}i.i.d. \index{Bernoulli}Bernoulli random variables with $\Prob\left( b_\sam = 1 \right) = \eps$ and $\Prob\left( b_\sam = 0 \right) = 1-\eps;$ furthermore assume that the sequence $\{b_\sam\}$ is independent of the randomness defining draws from $\{p(u^{(n)},\cdot)\}$. 
Define $r$ to be the uniform transition kernel with equal probability of
transitioning to each \index{state}state in $S: r(i,j)=\Du^{-1}$ for 
all $(i,j) \in S \times S.$ 

Using the lower bound on $p$ we may define 
a new \index{Markov chain}Markov chain $\{w^{(\sam)}\}_{n \in \mathbb{Z}^+}$ as follows:
  \begin{equation} \label{eq:kernel}
    w^{(\sam + 1)} \sim
    \begin{cases}
      &s\left(w^{(\sam)}, \cdot \right), \hfill \text{ for } b_\sam = 0,\\
      &r\left(w^{(\sam)}, \cdot \right), \hfill \text{ for } b_\sam = 1,
    \end{cases}
  \end{equation}
  where 
  \begin{equation*}
    s\left( i, j \right) := \frac{p\left(i, j \right) - \eps r\left(i, j \right)}{1- \eps}.
  \end{equation*}
We make two observations about this construction. First, the lower bound of 
$\eps/\Du$ on $p(i,j)$ means that the probability transition matrix 
$s$ is well-defined; second, the fact that $r(i,j)$ is independent of $i$ is key,
as it means that sampling explicitly forgets the current \index{state}state whenever $b_\sam=1$. 
We may now compute  
  \begin{align*}
    \Prob\left( w^{(\sam + 1)} = j | w^{(\sam)} = i \right) &= \eps \Prob\left( w^{(\sam + 1)} = j | w^{(\sam)} = i, b_\sam = 1 \right) \\
     &\hspace{3.8cm} + \left( 1 - \eps \right) \Prob\left( w^{(\sam + 1)} = j | w^{(\sam)} = i, b_\sam = 0 \right)\\
    &= \eps r\left( i,j \right) + p\left( i,j \right) - \eps r\left( i,j \right) \\
    &= p\left( i,j \right).
  \end{align*}
  Thus the kernel defined by \eqref{eq:kernel} is equivalent in law to that defined
by matrix $P$.
  However, by introducing the ancillary random variables $b_\sam$, we have 
made explicit the concept of ``forgetting the past entirely, with a small probability'' at every step. We may now use this to complete the proof.
Let $f:S \mapsto \R$ be an arbitrary test function with $|f|_\infty \leq 1$ and $\tau \coloneqq \min\left( \sam \in \N: b_\sam = 1 \right).$
Then, regardless of how $w^{(0)}$ is initialized, 
  \begin{align*}
    \Expect\left[ f\left( w^{(\sam)} \right) \right] 
    &= \Expect\left[ f\left( w^{(\sam)} \right) | \tau \geq \sam \right]\Prob\left( \tau \geq \sam \right)  + \sum \limits_{l = 0 }^{\sam-1} \Expect\left[f \left(w^{(\sam)}\right) | \tau = l  \right] \Prob\left( \tau =l \right)  \\
    &=\underbrace{\Expect\left[ f\left( w^{(\sam)} \right) | \tau \geq \sam \right]\Prob\left( \tau \geq \sam \right)}_{|\cdot| \leq \left( 1-\eps \right)^{\sam}}  + \underbrace{\sum \limits_{l = 0 }^{\sam-1} \Expect^{w^{(0)} \sim \mathfrak{u}\left( \cdot \right)}\left[f\left(w^{(\sam-l)}\right) \right] \Prob\left( \tau =l \right)}_{\text{independent of original initial distribution}}, 
  \end{align*}
where $\mathfrak{u}$ denotes the uniform distribution on $S.$

Now consider two \index{Markov chain}Markov chains $\{w^{(\sam)}\}$ and $\{w^{(\sam)'}\}$ with kernel \eqref{eq:kernel}, the first initialized from $\pi_0$ and the second from an \index{invariant distribution}invariant distribution $\post;$ denote their distributions at time $\sam$ by $\pi_n$ and $\pi_n',$ respectively. The law of $w^{(n)}$ agrees with the law $\post_\sam$ of the original chain $u^{(n)}$ when initialized at $\pi_0$; on the other hand, for the second chain it follows from invariance that $\post_n' = \post.$ We will use the \index{variational!characterization of total variation}variational characterization of the \index{distance!total variation}total variation distance established in Lemma \ref{lemmatv}. Employing the preceding identity and noting that the contribution which is independent of the initial distribution will cancel in the two different \index{Markov chain}Markov chains, we obtain
  \begin{align*}
    \dtv( \post_{\sam}, \post_\sam') = \frac{1}{2} \sup \limits_{|f|_\infty \leq 1} \left| \Expect^{\post_{\sam}}\left[ f\left( u \right) \right] - \Expect^{\post_\sam'}\left[ f\left( u \right) \right]\right| \leq \left( 1 - \eps \right)^{\sam}. 
  \end{align*}
Since  $\post_\sam' = \post$ the desired result follows.
\end{proof}

Before extending the above argument to a setting with \index{state-space!continuous}continuous state-space, we make two remarks: 
\begin{remark}
The \index{coupling}coupling proof we have just exhibited may be generalized in a number of
ways; in particular:
  \begin{itemize}
    \item The distribution $r$ does not need to be uniform; it was only chosen so 
for convenience. What is important is that  $r(i,j)$ is lower bounded,
independently of $i$, for all $j$. Adapting $r$ to the matrix $P$ at hand, might 
in some cases greatly improve the above bound ---a larger $\eps$ might
be identified.
    \item Convergence to equilibrium can also be shown if condition~\eqref{eqn-unif} holds with $P$ replaced by the $\sam$-step transition Markov matrix $P^{\sam}$. 
Again, for some chains this may yield faster bounds on the convergence to equilibrium.
  \end{itemize}
  \end{remark}

\subsection{The \index{pCN}pCN Method}

The \index{coupling}coupling argument used in the previous subsection for \index{Markov chain}Markov chains with \index{state-space!finite}finite state-space may also be employed to study \index{ergodicity}ergodicity of \index{Markov chain}Markov chains on a \index{state-space!continuous}continuous state-space.
To illustrate this, we consider a particular \index{Metropolis-Hastings}Metropolis-Hastings algorithm,
the \index{pCN}pCN method, applied to a specific \index{Bayesian!inverse problem}Bayesian \index{inverse problem}inverse problem setting.
Before we get into the details of this setting, we describe the
idea behind the \index{pCN}pCN method at a high level. 
The idea is this. If the desired \index{target distribution}target distribution has the form
\begin{equation}\label{eq:formoftarget}
\post(u)= \frac{1}{Z}  \tilde{\like}(u)  \tilde{\pr}(u)  
\end{equation}
and if the \index{Metropolis-Hastings}Metropolis-Hastings \index{proposal!kernel}proposal kernel $q$ satisfies \index{detailed balance}detailed balance with respect to $\tilde{\pr}$, then \eqref{eq:MHA} simplifies to give 
\begin{eqnarray}
\label{eq:MHA2}
a(u,v) = \text{min} \Bigg( \frac{\tilde{\pr}(v)\tilde{\like}(v) q(v, u)}{\tilde{\pr}(u)\ \tilde{\like}(u) q(u, v)}, 1\Bigg)= \text{min} \Bigg( \frac{  \tilde{\like}(v)}{\tilde{\like}(u)}, 1\Bigg).\end{eqnarray}
We will apply and study this idea in the case where $\tilde{\pr}$ is a \index{Gaussian}Gaussian distribution, in which case it is straightforward to construct a \index{proposal!kernel}proposal kernel that satisfies \index{detailed balance}detailed balance with respect to $\tilde{\pr}$. This scenario arises naturally in \index{Bayesian!inverse problem}Bayesian \index{inverse problem}inverse problems where the \index{prior}prior is either a \index{Gaussian}Gaussian, or it is naturally expressed
via density which is the product of a \index{Gaussian}Gaussian with another function.
We now formalize the \index{inverse problem}inverse problem setting that we consider by imposing certain assumptions on the \index{likelihood}likelihood and the \index{prior}prior, and then relate both to the functions $\tilde{\like}$ and $\tilde{\pr}$ in equation \eqref{eq:formoftarget}.

\begin{assumption}
\label{a:erg}
We make the following assumptions on the \index{Bayesian!inverse problem}Bayesian \index{inverse problem}inverse problem:
\begin{itemize}
    \item Bounded \index{likelihood}likelihood: there are $\like^-, \like^+ >0$ such that, for all $u \in \Ru,$ 
    $$0 <\like^- <\like(u) < \like^+.$$ 
    \item Truncated \index{Gaussian!truncated}Gaussian \index{prior}prior: there is a compact set $B \subset \Ru$ and of positive \index{Lebesgue}Lebesgue measure such that   $\pr(u) \propto\mathbb{1}_{B}(u)   z(u),$ where $z$ is the pdf of \index{Gaussian}Gaussian $\Nc  ( \nc 0,\Cpr  )\nc.$
\end{itemize}
\end{assumption}

Under Assumption \ref{a:erg} we obtain for the \index{posterior}posterior density 
\begin{equation*}
 \post\left( u \right)  \propto \like(u) \mathbb{1}_{B}\left( u \right) z(u),
\end{equation*}
which is of the form in equation \eqref{eq:formoftarget} with $\tilde{\like}(u) = \like(u) \mathbb{1}_B(u)$ and $\tilde{\pr}(u) = z(u).$

The \index{pCN}pCN method is a \index{Metropolis-Hastings}Metropolis-Hastings algorithm with \index{proposal!kernel}proposal kernel 
\begin{equation}\label{eq:pcnp}
q(u,\cdot) \sim \Nc\Bigl( (1-\beta^2)^{1/2} u, \beta^2 \Cpr  \Bigr),
\end{equation}
where $\beta \in (0,1]$ is a user-specified parameter that 
should be tuned to obtain an acceptance probability that, on average,
stays away from $0$ or $1$ ---for example one that is approximately $1/2.$
Thus, given the sample $u^{(\sam)},$ the \index{pCN}pCN proposes a new sample 
\begin{equation*}
  v^\star \sim \left( 1 - \beta^{2} \right)^{1/2}u^{(\sam)} + \beta \xi^{(\sam)}, 
\quad \xi^{(\sam)}\sim \Nc  ( \nc 0, \Cpr  ) \nc,
\end{equation*}
which only requires to sample a \index{Gaussian}Gaussian. Note that
$$\Expect \Bigl[ v^\star (v^\star)^\top \Bigr]= (1 - \beta^{2}) \Expect \Bigl[ u^{(\sam)}
 (u^{(\sam)})^\top \Bigr] +\beta^2 \Cpr ,$$
demonstrating that if $u^{(\sam)} \sim \Nc(0,\Cpr)$ then the \index{proposal}proposal satisfies
$v^\star \sim \Nc(0,\Cpr)$ as well. The following lemma shows the stronger
result that the \index{proposal!kernel}proposal kernel satisfies \index{detailed balance}detailed balance with respect to $z.$

\begin{lemma}\label{lem:detbal}
The \index{proposal!kernel}proposal kernel \eqref{eq:pcnp} satisfies \index{detailed balance}detailed balance with respect to the pdf $z$ of \index{Gaussian}Gaussian $\Nc ( 0,\Cpr )$.
\end{lemma}
\begin{proof}
Recall the notation for the covariance weighted inner-product and
resulting norm described in the introduction to these notes.
We need to show that $z(v) q(v,u)$ is symmetric in $u$ and $v.$ By direct calculation,
\begin{align*}
-\log\bigl( z(v) q(v,u) \bigr)  &=\frac{1}{2}|v|^{2}_{\Cpr} + \frac{1}{2 \beta^{2}} \left| u - \left( 1 - \beta^{2} \right)^{1/2} v\right|_{\Cpr}^{2}\\ 
  &=\left( \frac{1}{2} + \frac{\left( 1 - \beta^{2} \right)}{2 \beta^{2}}  \right) |v|^{2}_{\Cpr} + \frac{1}{2\beta^{2}} |u|^{2}_{\Cpr} - \frac{\left( 1 - \beta^{2} \right)^{1/2}}{\beta^{2}} \langle u , v \rangle_{\Cpr}\\
  &= \frac{1}{2 \beta^{2}}\left( |v|^{2}_{\Cpr} + |u|^{2}_{\Cpr} \right) - \frac{\left( 1 - \beta^{2} \right)^{1/2}}{\beta^{2}} \langle u , v \rangle_{\Cpr}.
\end{align*}
\end{proof}

We now display the \index{pCN}pCN algorithm applied in the setting of Assumption \ref{a:erg} and describe how it leads to an estimator of $\post(f).$ The expression for the acceptance probability in Algorithm \ref{alg_pCN} follows from equation \eqref{eq:MHA2} using Lemma \ref{lem:detbal} and noting that $\post_0$ being supported on $B$ implies that  $u^{(\sam)} \in B$ for all $\sam,$  as any proposed move out of $B$ will be rejected. Thus, $\mathbb{1}_{B}\left( u \right)$ may be dropped from the formula for the acceptance probability in equation \eqref{eq:MHA2}.

\FloatBarrier
\begin{algorithm}
\caption{\label{alg_pCN} \index{pCN}pCN Algorithm}
\begin{algorithmic}[1]
\vspace{0.1in}
\STATE {\bf Input}:  Tuning parameter $\beta \in (0,1),$ covariance $\Cpr$, bounded set $B$, \index{likelihood}likelihood function $g$, initial distribution $\post_0$ supported on $B$, number of samples $\Sam.$  \\
\vspace{.04in}
\STATE {\bf Initial Draw}: Draw initial sample $u^{(0)} \sim \post_0.$
 \\
\vspace{.04in}
\STATE{{{\bf Subsequent Samples}}}: For $\sam = 0,1,\dots,\Sam -1$ do:
\begin{enumerate}
\item Sample $v^{\star} \sim \Nc\Bigl( (1-\beta^2)^{1/2} u^{(\sam)} , \beta^2 \Cpr  \Bigr) .$
\item Calculate the acceptance probability $a_\sam:=a(u^{(\sam)},v^{\star})$ where
\begin{align*}
    a\left( u,v \right)
         =\min\left( \frac{\like(v)}{\like(u)}  \mathbb{1}_{B} \left( v \right) ,1  \right).
  \end{align*}
\item Update
\begin{align*}
u^{(\sam+1)} =
\begin{cases}
 v^{\star}, &  \text{with probability} \,\, a_\sam, \\ 
 u^{(\sam)}, & \text{with probability} \,\,  1 - a_\sam.
\end{cases}
 \end{align*}
\end{enumerate}
\STATE{\bf Output}: \index{target distribution}Target approximation $\post \approx  \ppCN:  = \frac{1}{\Sam} \sum_{\sam=1}^\Sam \delta(u - u^{(\sam)}).$ 
\end{algorithmic}
\end{algorithm}
\FloatBarrier

The estimator for $\post(f)$  resulting from Algorithm \ref{alg_pCN}  is then
$$ \ppCN(f)  = \frac{1}{\Sam} \sum_{\sam=1}^\Sam f(u^{(\sam)}).$$
We can now prove \index{ergodicity}ergodicity using similar techniques to those employed
in the previous subsection in the \index{state-space!finite}finite state-space setting. 
The main idea is that, restricted to the bounded set $B$, the probability 
density of the transition kernel is bounded away from zero by some $\eps$.
Splitting off a ``forgetful part'' that is triggered with probability $\eps$ 
then yields the result.

\begin{theorem}[\index{ergodicity}Ergodicity for \index{pCN}pCN Method]
\label{t:pCN}
Assume that we apply the \index{pCN}pCN method to sample from a
\index{posterior}posterior density $\post$ arising from Assumptions \ref{a:erg} with initial
condition drawn from any density supported on $B$. Then
there exists a constant $\eps \in \left( 0,1 \right)$ such that 
  \begin{equation*}
    \dtv ( \post_{\sam} , \post) \leq \left( 1 - \eps \right)^{\sam},
  \end{equation*}
where $\post_{\sam}$ is the law of the $n$-th sample from the \index{pCN}pCN \index{Metropolis-Hastings}Metropolis-Hastings algorithm.
\end{theorem}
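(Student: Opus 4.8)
The plan is to reduce the statement to a \emph{minorization condition} for the pCN Metropolis--Hastings kernel $\pmh$ restricted to $B$, and then replay the coupling argument from the finite-state-space theorem almost verbatim, with the uniform distribution on $S$ replaced by the uniform probability measure on $B$. A preliminary observation makes the reduction clean: by the simplified acceptance rule \eqref{eq:MHA2} we have $a(u,v)=\min\bigl(\tilde{\like}(v)/\tilde{\like}(u),1\bigr)$ with $\tilde{\like}(v)=\like(v)\mathbb{1}_B(v)$, so any proposal $v^\star\notin B$ is rejected; since the initial law is supported on $B$, the chain never leaves $B$, i.e. $\post_\sam$ is supported on $B$ for all $\sam$, and $\post$ itself is supported on $B$. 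Unlike the finite-state theorem, existence of an invariant distribution requires no appeal to Brouwer: $\post$ is invariant because $\pmh$ satisfies detailed balance with respect to $\post$, as already established. Thus the real content is the convergence bound, from which uniqueness follows.

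The heart of the argument, and the main obstacle, is the minorization. Restricting to $u,v\in B$, the bounded-likelihood assumption gives the uniform lower bound $a(u,v)=\min\bigl(\like(v)/\like(u),1\bigr)\ge \like^-/\like^+>0$. The pCN proposal density $q(u,v)$ from \eqref{eq:pcnp} is a continuous, everywhere-positive Gaussian density in $(u,v)$, so on the compact set $B\times B$ its exponent is bounded and it attains a positive minimum $q^-:=\inf_{u,v\in B}q(u,v)>0$; this is exactly where compactness of $B$ is indispensable. Writing $|B|$ for the finite positive Lebesgue measure of $B$ and $\mu(dv)=|B|^{-1}\mathbb{1}_B(v)\,dv$ for the uniform probability measure on $B$, the absolutely continuous part of the kernel then obeys, for every $u\in B$ and measurable $A$,
\[
\pmh(u,A)\;\ge\;\int_{A\cap B}a(u,v)\,q(u,v)\,dv\;\ge\;\frac{\like^-}{\like^+}\,q^-\,|A\cap B|\;=\;\eps\,\mu(A),
\qquad \eps:=\frac{\like^-}{\like^+}\,q^-\,|B|.
\]
Here $\eps>0$, and $\eps<1$ because $\like^-/\like^+<1$ while $q^-|B|=\int_B q^-\,dv\le\int_B q(u,v)\,dv\le 1$; hence $\eps\in(0,1)$ as required.

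Finally I would run the coupling construction. Using $\pmh(u,\cdot)\ge\eps\mu$ for $u\in B$, split $\pmh(u,\cdot)=\eps\mu+(1-\eps)s(u,\cdot)$ with the Markov kernel $s(u,\cdot):=(1-\eps)^{-1}\bigl(\pmh(u,\cdot)-\eps\mu\bigr)$, in exact analogy with \eqref{eq:kernel}. Introduce i.i.d.\ Bernoulli$(\eps)$ variables $b_\sam$, independent of the proposal/accept randomness, and let the next state be drawn from $\mu$ (independently of the present state) when $b_\sam=1$ and from $s$ when $b_\sam=0$; this reproduces $\pmh$ in law. Coupling two copies, one started from $\post_0$ (with law $\post_\sam$) and one from the invariant $\post$, and setting the regeneration time $\tau:=\min\{\sam:b_\sam=1\}$, every contribution after $\tau$ is drawn from $\mu$ and then shares dynamics, so it is independent of the initial law and cancels between the chains. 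The variational characterization of total variation in Lemma \ref{lemmatv} then yields
\[
\dtv(\post_\sam,\post)\;\le\;\Prob(\tau\ge \sam)\;=\;(1-\eps)^\sam,
\]
and uniqueness of the invariant distribution is an immediate consequence, since any invariant law is a fixed point and must coincide with the limit of $\post_\sam$. The coupling step is a direct transcription of the finite-state proof; the only genuinely new work is the compactness-based minorization above.
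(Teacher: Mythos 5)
Your proposal is correct and follows essentially the same route as the paper's proof: the compactness-based lower bound $q^-$ on the proposal density over $B\times B$, the acceptance-probability bound $\like^-/\like^+$, the resulting minorization by the uniform measure on $B$, and the Bernoulli/coupling splitting transcribed from the finite-state-space argument. The only additions are cosmetic but welcome — you verify explicitly that $\eps<1$ and note that invariance follows from detailed balance rather than a fixed-point argument — neither of which changes the substance.
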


\begin{proof}[Proof of Theorem \ref{t:pCN}]
Note again that since $u^{(0)} \in B$ we have $u^{(n)} \in B$ for all $n \ge 1.$ 
Note further that since $B$ is compact and $q$ is continuous in both of its arguments, there is $q^- >0$ such that, for any $u, v \in B,$
$$q(u,v) \ge q^-.$$
Let $p$ be the \index{Markov kernel}Markov kernel defined by the \index{pCN}pCN \index{Metropolis-Hastings}Metropolis-Hastings algorithm. It follows that, for $u, v \in B,$ 
\begin{align*}
p(u,v) & \ge q(u,v) a(u,v) \\
& \ge q^- \frac{g^-}{g^+} =: \eps \text{Leb} (B),
\end{align*}
where the last equation defines $\eps$ and $\text{Leb}(B)$ denotes the \index{Lebesgue}Lebesgue measure of $B$ (which is assumed to be positive).
Analogously to the discrete proof, we now define $b_\sam$ to be \index{i.i.d.}i.i.d. 
\index{Bernoulli}Bernoulli random variables with $\Prob\left( b_\sam = 1 \right) = \eps$, independently of all other randomness, and consider the transition rule
  \begin{equation*}
    u^{(\sam +1)} \sim
    \begin{cases}
      &s\left( u^{(\sam)} , \cdot \right), \hfill \text{ for } b_\sam = 0,\\
      &r\left( \cdot \right), \hfill \text{ for } b_\sam = 1,
    \end{cases}
  \end{equation*}
  where $r$ denotes the uniform distribution on $B$ and, for $A\subset B$ and $u \in B,$
  \begin{equation*}
    s\left( u, A \right) \coloneqq \frac{p\left(u, A \right) - \eps r\left( A \right)}{1- \eps}.
  \end{equation*}
Just as in the discrete case, one can check that the resulting \index{Markov kernel}Markov kernel is equal to the \index{pCN}pCN \index{Metropolis-Hastings}Metropolis-Hastings kernel $p(\cdot,\cdot)$. Furthermore, exponential convergence is then deduced in exactly the 
same way as in the discrete case.
\end{proof}

\section{Discussion and Bibliography}\label{sec:66}
The idea of \index{sampling}sampling a \index{target distribution}target distribution $\post$ by means
of a $\post$-invariant \index{Markov chain}Markov chain was introduced in the
statistical physics community in \cite{metropolis1953equation}, where a symmetric \index{proposal!kernel}proposal kernel was used. Hastings introduced a powerful generalization of the
method in \cite{hastings1970monte} which allowed for asymmetric \index{proposal!kernel}proposal kernels. 
The \index{Bayesian}Bayesian methodology \cite{gelman2013bayesian}, and in particular \index{MCMC}MCMC-based
exploration of the \index{posterior}posterior, became practical as a result 
of advances in computer power and became widely adopted for many \index{sampling}sampling 
problems arising in science and engineering. 

The book \cite{gamerman2006markov} is a useful
basic introduction to  \index{MCMC}MCMC and the book
\cite{brooks2011handbook} presents state of the art as of 2010.
The paper \cite{cotter2013mcmc} overviews the \index{pCN}pCN method
and related \index{MCMC}MCMC algorithms specifically designed for inverse
problems and other \index{sampling}sampling problems in high-dimensional
\index{state-space}state-spaces.
The book \cite{lindvall2002chapters} describes the \index{coupling}coupling method
in a general setting.
The book \cite{meyn2012markov} contains a wide-ranging presentation
of \index{Markov chain}Markov chains, and their long-time behavior, including \index{ergodicity}ergodicity
and \index{coupling}coupling. Furthermore, the book describes the general 
framework 
for going from convergence of expectations in (possibly weighted) 
\index{distance!total variation}total variation distances to sample path \index{ergodicity}ergodicity and almost sure convergence
of time averages, a topic we did not cover in this chapter.
The paper \cite{mattingly2002ergodicity} describes the 
\index{coupling}coupling technique 
in the context of stochastic differential
equations and their approximations. 

The tuning of parameters in
\index{MCMC}MCMC, such as the parameter $\beta$ appearing in the \index{pCN}pCN method,
is key to their success. If the goal of the \index{MCMC}MCMC sampling method is to
approximate the  expectation of a given test function $f$, then 
the aim of parameter tuning is to minimize
the integrated auto-correlation $\tau^2(f)$ defined in Theorem
\ref{t:MCMC}. In general different $f$ will lead to different optimal
\index{proposal!kernel}proposal parameter choices; however, for a wide class of high-dimensional \index{target distribution}target
distributions and specific \index{proposal!kernel}proposal kernels there are generic rules of
thumb, independent of $f$, for tuning parameters in the proposal
\cite{roberts2001optimal}. This universality often arises
from using suboptimal algorithms and, for specific problems, can be
circumvented by using tailored \index{proposal!kernel}proposals. For example, 
for target measures
that have a density with respect to a \index{Gaussian}Gaussian, the \index{pCN}pCN proposal is preferable
to the \index{random walk Metropolis proposal}random walk Metropolis proposal, as demonstrated 
in \cite{cotter2013mcmc,hairer2014spectral,trillos2017consistency}. Stochastic Newton MCMC methods to sample posterior distributions in function space Bayesian inverse problems are studied in \cite{petra2014computational}.   

 \chapter*{\Huge{\sffamily{Exercises}}}
\label{ch:IPExercises}
\addcontentsline{toc}{chapter}{Exercises}

{\bf Exercise 1} ({\sffamily{Hellinger Distance Between Gaussians}}) Recall the Hellinger distance $\dhell$ between two probability densities introduced in Definition \ref{def:hellinger}.
Consider two Gaussian densities on $\R$: $p_1 = \cN(\mu_1,\sigma_1^2)$
and $p_2 = \cN(\mu_2,\sigma_2^2)$.
Show that the squared Hellinger distance between them is given by
$$\dhell(p_1,p_2)^2 = 1-\sqrt{\exp\biggl(-\frac{(\mu_1-\mu_2)^2}{2(\sigma_1^2
+\sigma_2^2)}\biggr)\frac{2\sigma_1\sigma_2}{(\sigma_1^2+\sigma_2^2)}} \,.$$

\bigskip

\noindent {\bf Exercise 2} ({\sffamily{Kullback-Leibler Divergence Between Gaussians}}) 
Recall the Kullback-Leibler divergence $\dkl$ introduced in Definition \ref{def:KLdivergence}.
Does $\dkl$ define a metric on probability measures? Justify your answer.
Consider two Gaussian densities on $\R$: $p_1 = \cN(\mu_1,\sigma_1^2)$
and $p_2 = \cN(\mu_2,\sigma_2^2)$.
Show that the \index{divergence!Kullback-Leibler}Kullback-Leibler divergence between them is given by
$$\dkl(p_1\|p_2)=\log \Bigl(\frac{\sigma_2}{\sigma_1}\Bigr)+
\frac12\Bigl(\frac{\sigma_1^2}{\sigma_2^2}-1\Bigr)+\frac{(\mu_2-\mu_1)^2}{2\sigma_2^2}.$$
Generalize this result to Gaussians $p_1 = \Nc(\mu_1, \Sigma_1)$ and $p_2 = \Nc(\mu_2, \Sigma_2)$ in $d$ dimensions, with positive definite $\Sigma_1$ and $\Sigma_2,$ to obtain the formula in Example \ref{ex:klg}.

\bigskip

\noindent {\bf Exercise 3} ({\sffamily{Bound Between Hellinger and Kullback-Leibler}}) 
Verify the inequality
$$\dhell(p_1,p_2)^2 \le \frac12 \dkl(p_1\|p_2)$$
for the one-dimensional Gaussian examples in the two previous exercises.


\bigskip

\noindent {\bf Exercise 4} ({\sffamily{Well-posedness of Inverse Problem Under Data Perturbation}}) 
Consider the inverse problem $y = G(u) + \eta$  and noise $\eta \sim \mathcal{N}(0,\gamma^2 I_k).$ Suppose that there is $G^+$ such that
$|G(u)| \le G^+$ for any $u \in \R^d.$ 
Let $y, y'$ with $|y|, |y'| \le r$ be two instances of the data and let $\pi^y$ and $\pi^{y'}$ be the two corresponding posterior distributions with the same prior $\rho = \cN(0,\lambda^{-1}I_d)$. Show that there is $c>0$ such that 
$\dhell(\pi^y, \pi^{y'}) \le c |y-y'|.$


\bigskip

\noindent {\bf Exercise 5} ({\sffamily{Randomized Maximum Likelihood}}) 
Let $u_0^{(n)} \sim \mathcal{N}  (\widehat{m}, \widehat{C})$ i.i.d. with $\widehat{C}$ positive definite. Let  $u^{(n)}$ be the minimizer of \begin{equation}\label{eq:randomobj}
\mathsf{J}^{(n)}(u):= \frac12 |u - u_0^{(n)} |_{\widehat{C}}^2 + \frac12 |y +  \Gamma^{1/2}\xi^{(n)} - Au  |_\Gamma^2, \quad \quad \xi^{(n)} \sim \mathcal{N} (0,I) \,\,{\rm i.i.d.}
\end{equation}
Assume also that $\{u_0^{(n)}\}$ and $\{\xi^{(n)}\}$ are mutually independent
i.i.d. sequences. Show that 
$u^{(n)} \sim \mathcal{N} ( m, C)$ i.i.d., where $m$ and $C$ are defined by
\begin{align*}
m & := \widehat{m} + K(y - A\widehat{m}), \\ 
C &:= (I - K A)\widehat{C},
\end{align*}
and
\begin{equation*}
K := \widehat{C} A^\top (A \widehat{C} A^\top + \Gamma)^{-1}.
\end{equation*}

\bigskip

\noindent {\bf Exercise 6} ({\sffamily{Convergence of Gradient Descent}}) 
\begin{itemize}
\item (i) Suppose that $\J : \R^d \to \R$ has $r$-Lipschitz gradient. Show that, for any $u, v \in \R^d,$ it holds that
\begin{equation}\label{eq:conv}
    \J(v) \le \J(u) + \bigl\langle D\J(u), v - u \bigr\rangle + \frac{r}{2} | v - u |^2. 
\end{equation}
\item (ii)  Let $r>0$ be a real number. We say that $\J$ satisfies an $r$-Polyak-Lojasiewicz ($r$-PL) condition if, for all $u \in \R^d,$ it holds that $r( \J(u) - \J^\star) \le \frac12 | D \J(u)|^2.$ Suppose that $\J$ has an $r$-Lipschitz gradient, satisfies a $c$-PL condition with $0<c<r$, and achieves its infimum $\J^\star.$
Show that the gradient descent algorithm with step-size $r^{-1}$ given by 
$$u_{\ell + 1} = u_\ell - r^{-1} D\J(u_\ell), \quad \ell = 0,1,\ldots$$
has a linear convergence rate. More precisely, show that
$$\J(u_{\ell}) -  \J^\star \le \biggl( 1 - \frac{c}{r} \biggr)^\ell \bigl( \J(u_0) - \J^\star \bigr).$$
\item (iii) We say that $\J$ is $r$-strongly convex if, for all $u, v \in \R^d,$ it holds that 
$$ \J(v) \ge \J(u) + \langle \nabla \J(u) , v-u \rangle + \frac{r}{2}|v- u|^2. $$
Show that $r$-strong convexity implies an $r$-PL condition.
\end{itemize}

\bigskip

\noindent {\bf Exercise 7} ({\sffamily{Best Gaussian Approximation}}) 
Consider the inverse problem of recovering $u \in \mathbb{R}$ from data $y \in \mathbb{R}$ related by
$$y = u + 0.1 u^3 + \eta, \quad \quad \eta \sim \mathcal{N} (0, 0.4).$$
Assume a Gaussian prior $\rho(u)  = \mathcal{N}(0.5, 1),$ and that the observed data is $y = 1.1.$
Write down the posterior pdf $\pi$ of $u$ given $y$ (up to a normalizing constant) and plot it. Propose an algorithm to find the best Gaussian approximation $\dkl(p \| \pi)$ and an algorithm to find the best Gaussian approximation $\dkl( \pi \| p).$ Implement your proposed algorithms and report your results by writing the means and variances that your algorithms output, and plotting the corresponding Gaussian pdfs along with the posterior $\pi$.

      \bigskip      
\noindent {\bf Exercise 8} ({\sffamily{Inferring Correlation From Data}}) 
 In this problem you will invent an MCMC algorithm for a simple inference problem with Gaussians. Specifically, we will infer the correlation between two Gaussian random variables. Consider the model $(y,z) \sim \mathcal{N}(\mu,\Sigma)$, with:
\begin{eqnarray*}
\mu = \begin{bmatrix}
0 \\ 0 \end{bmatrix};\hspace{0.2in}  \Sigma = \Sigma(u) = \begin{bmatrix}
 1 & u \\ u & 1 \end{bmatrix}.
\end{eqnarray*}
Draw $N = 1000$ i.i.d. samples from the distribution $(y,z) \sim \mathcal{N} \bigl( \mu,\Sigma(0) \bigr) = \cN \bigl( 0, I_2 \bigr) $;  henceforth we refer to this
as the data. You will develop a Metropolis-Hastings MCMC algorithm  to find the posterior distribution of $u$, given the data;
you already know that the data was generated using $u = 0 $ which provides intuition
as you develop the algorithm.  There are multiple aspects to developing this algorithm: finding the likelihood, constructing a prior, specifying a proposal distribution, and determining the acceptance function. In the next few parts, you will be stepped through developing each ingredient. 
\begin{itemize}
\item (i) Show that the likelihood $\Prob \bigl(\{y^{(i)},z^{(i)}\}_{i = 1}^N|u \bigr)$ is given by: $$\prod_{i = 1}^N\Prob(y^{(i)},z^{(i)} | u) = \prod_{i =1}^N \frac{1}{2\pi\sqrt{1- u^2}} \exp \biggl(-\frac{1}{2(1-u^2)} \Bigl[{(y^{(i)})}^2-2u{y^{(i)}z^{(i)}} + (z^{(i)})^2 \Bigr]\biggr).$$
\item (ii) Consider Jeffreys prior $\Prob(u) = \frac{1/\pi}{|\text{det}(\Sigma(u))|^{1/2}}$. Show that this defines a probability distribution and,  for our specific choice of $\Sigma(u)$, 
has a closed form expression equal to $1/{\{\pi(1-u^2)^{1/2}\}}.$
\item Using Bayes theorem, find (up to normalization) a formula for
the posterior distribution  $\Prob \bigl(u|\{y^{(i)},z^{(i)}\}_{i = 1}^N \bigr).$ 
\item (iii) Consider the proposal distribution
$$v^\star \sim \text{Uniform} \Bigl(u^{(n)} - 0.1,u^{(n)}+0.1 \Bigr).$$
This proposal distribution is symmetric with respect to $u^{(n)}$, meaning that there is equal probability of moving in either direction of $u^{(n)}$. The Metropolis Hastings algorithms with these types of proposal distributions are often  referred to
as \emph{random walk Metropolis algorithms}.  Using this proposal distribution, find the acceptance probability function. Starting from $u^{(0)} = 0.1$ and after a burn-in\index{burn-in} time of $10^4$ samples, execute the Markov chain to generate $\bar{N} = 10^3$ samples. Keep a running sample mean and variance in the burn-in period. Plot the sample mean and variance as a function of $n$. 
 Discuss your findings.  \\
{\sffamily{Observation 1}}: The running sample mean and variance are often used as a diagnosis for the convergence of the Markov chain. \qed\\
{\sffamily{Observation 2}}: Note that an online method   
to compute the running sample mean and running sample variance is given by:
\begin{eqnarray*}
m^{(n+1)} = \frac{n~m^{(n)} + u^{(n+1)}}{n+1} ~~~ c^{(n+1)} = \frac{(n-1)~{c}^{(n)} + (u^{(n+1)}-m^{(n+1)})^2}{n}.\quad\qed 
\end{eqnarray*}
\item (iv) Repeat the previous experiment but with the step size of the proposal distribution changed from $0.1$ to $0.4$. That is, consider the proposal $v^\star \sim \text{Uniform}(u^{(n)} - 0.4,u^{(n)}+0.4).$ What do you observe about the convergence rate of the MCMC algorithm?
\end{itemize}

\bigskip

\noindent {\bf Exercise 9} ({\sffamily{Gibbs Sampling}}) 
 In this problem we consider Gibbs sampling, an MCMC algorithm for generating approximate samples from a multivariate distribution. Gibbs sampling is used when the conditional distribution of a variable conditioned on the rest is tractable (you will see an example in the next problem). In particular, for a discrete random vector $ y = (y_1,y_2, \dots, y_p)\in \R^p$,  the Gibbs sampling algorithm is given by:
 \begin{algorithm}[H]
   \caption*{{\bf{Algorithm}} Gibbs Sampling}
    \begin{algorithmic}[1]
      \STATE Initialize $y^{(0)} \sim \post_0$ (any choice of distribution $\pi_0$ is reasonable).
      For $n = 1$ to $N$ do:
      \STATE $y_{1}^{(n)} \sim \Prob \Bigl(y_1  | y_2 = y_2^{(n-1)}, y_3 = y_3^{(n-1)}, \dots, y_p = y_p^{(n-1)}\Bigr),$
      \STATE $y_{2}^{(n)} \sim \Prob \Bigl(y_2 | y_1 = y_1^{(n)}, y_3 = y_3^{(n-1)}, \dots, y_p = y_p^{(n-1)}\Bigr),$
      \STATE \vdots
      \STATE $y_p^{(n)} \sim \Prob \Bigl(y_p |y_1 = y_1^{(n)}, y_2 = y_2^{(n)}, \dots, y_{p-1} = y_{p-1}^{(n)} \Bigr).$
\end{algorithmic}
\end{algorithm}

 There is an intimate connection between this algorithm and the Metropolis Hastings algorithm as we now show, through two steps. 

\begin{itemize}
\item (i)  Firstly, consider the state $\Bigl(y_1^{(n-1)},y_2^{(n-1)},\dots,y_p^{(n-1)} \Bigr)$. Let $$v^\star  \sim \Prob \Bigl(y_1|y_{2}^{(n-1)},\dots,y_p^{(n-1)} \Bigr).$$ According to the Gibbs sampling algorithm, with probability 1, we transition to the the state $ \Bigl(v^\star,y_2^{(n-1)},\dots,y_p^{(n-1)} \Bigr).$   Show that this choice 
of  proposal kernel for $v^\star$  satisfies the detailed balance equation with respect to the joint distribution.
\item (ii) Prove that with this choice of Markov kernel, the acceptance function in the Metropolis Hastings Algorithm reduces to $1$. Hence, Gibbs sampling is indeed a special case of the Metropolis Hastings algorithm. 
 \end{itemize}
 
\bigskip 
 
\noindent  {\bf Exercise 10} ({\sffamily{The Ising Model}}) 
 Graphical models are a family of multivariate distributions which are Markov in accordance to a particular undirected graph. Each node in the graph $i \in V$ is associated to a random variable. The set of edges $E \subset {V \choose 2}$ encodes the conditional dependency relationships: a variable conditioned on its neighbours is independent of the remaining variables. \\
In this problem we focus on the setting where the collection of random variables $\{y_i\}_{i = 1}^p$ take on discrete values $\pm 1$. This is known as the Ising model and is described with the following joint distribution over the variables $y$:
$$\Prob(y) = \frac{1}{Z}~\text{exp}\biggl( \, \sum_{\{s,t\} \in E} u_{s,t}y_s{y}_t\Biggr).$$
Here $u \in \mathbb{R}^{p \times p}$ encodes the graph structure. (We set the diagonal elements of $u$ to be zero.) In particular, $u_{s,t}$ is non-zero if variables $s$ and $t$ are connected via an edge. 
\begin{itemize}
\item (i) Suppose that you were tasked with sampling from this joint distribution. One possible approach would be to use importance sampling, a technique that is based on sampling from another distribution, and reweighting the samples based on the likelihood of the original joint distribution. While this is a natural approach, it becomes intractable in the setting where the number of variables $p$ is large (say $p = 20$). Why? 
\item (ii) Show that the conditional distribution of a variable $y_r$ given the rest ($y_{V\textbackslash{r}}$) is given by:
$$ \Prob(y_r|y_{V\textbackslash{r}}) = \frac{\text{exp}(2y_r\sum_{t \in V\textbackslash{r}} u_{rt}y_t)}{\text{exp}(2y_r\sum_{t \in V\textbackslash{r}} u_{rt}y_t) + 1}.$$
{\sffamily{Observation:}} Notice that sampling from the conditional distribution is tractable. Why? This suggests that Gibbs sampling could be used to draw samples from the joint distribution.\qed 
\item (iii) We consider a specific example to showcase the use of Gibbs sampling for this problem. Consider a collection of $y \in \mathbb{R}^{5}$ discrete variables specified by the following $u^\star \in \mathbb{R}^{5 \times{5}}$:
$$ u^\star = \begin{bmatrix} 0 & 0.5 & 0 & 0.5 & 0 \\ 0.5& 0& 0.5& 0& 0.5\\ 0 &0.5& 0 & 0.5& 0 \\ 0.5 & 0 & 0.5 & 0&0 \\ 0 & 0.5 & 0 & 0& 0\end{bmatrix}.$$
 Using a Gibbs sampler with initialization $y^{(0)} = \begin{pmatrix} 1 & -1  &-1 & 1 & 1 \end{pmatrix}^\top$ and burn-in period of $5000$ samples, draw $N = 1000$ samples from the joint distribution. Report the sample mean and sample variance for each of the variables.  From your samples, compute a correlation matrix of all the 5 variables and plot an image of the correlation values. Do you see a pattern? Does this confirm the validity of the sampling technique? 
 
 \item (iv)  You will now reverse engineer $u$ from the samples you generated! You will use the Metropolis Hastings algorithm to get the posterior distribution $\Prob \bigl(u|\{y^{(i)}\}_{i = 1}^N \bigr)$. Notice that $u$ is symmetric and has zeros on the diagonal, meaning that there are $p(p-1)/2$ free parameters.  Hence we  work with a vector $\tilde{u}\in \mathbb{R}^{p(p-1)/2}$ containing all the degrees of freedom of $u$.  Recall that for Metropolis Hastings, we need to construct a prior on $\tilde{u}$ and a proposal distribution.
Since we expect the graph structure to be sparse (i.e. $\tilde{u}$ sparse), a natural prior on each element of $\tilde{u}_{i}$ is the Laplace distribution $\tilde{u}_{i} \sim \text{Laplace}(0,\lambda)$ i.i.d. Further, we use a random-walk proposal distribution:
$$v^\star \sim \tilde{u}^{(n)} + \mathcal{N}(0,\sigma^2 I ),$$
with $\sigma^2 = 0.1$. With $\lambda = 0.2$ and a burn-in time of $10000$ samples, use Metropolis Hastings to generate $\bar{N} = 1000$ samples from the posterior $\Prob \bigl(u|\{y^{(i)}\}_{i = 1}^N \bigr)$. Compute and report the sample mean and variance of these samples. Do your findings match the underlying $u^\star$?\\
{\sffamily{Observation}}: The acceptance probability function in the Metropolis Hasting algorithm often removes the normalization constant in the target probability distribution. In this scenario, this does not happen. Why? What does this say about this method for large $p$? \qed
\item (v) Suppose that the likelihood $\Prob(y_1,y_2,\dots, {y}_p | u)$ is well approximated by:
$$ \Prob(y_1,y_2,\dots, {y}_p | u) \approx\prod_{r = 1}^{p} \Prob(y_r | y_{V\textbackslash{r}}, u). $$
The expression in the right is sometimes referred to as the pseudo log-likelihood. Show that with this approximation, the MAP estimator of $u$ is given by: 
$$\arg\min_{\substack{u,~u = u^\top\\\text{diag}(u) = 0}} \sum_{i = 1}^N \sum_{r = 1}^p -\log\Big(\Prob(y_r^{(i)}|y^{(i)}_{V\textbackslash{r}}, u)\Big) + \frac{1}{\lambda}\| \tilde{u}\|_{\ell_1}.$$ 
{\sffamily{Observation}}: Under some basic regularity conditions, it can be shown that the objective used to define the MAP estimator is a convex function of $u$, and thus the optimization can be solved efficiently.\qed 
\end{itemize}

\part{Data Assimilation}

 \chapter{\Large{\sffamily{Filtering and Smoothing Problems and Well-Posedness}}}\label{lecture7}

In this chapter we introduce \index{data assimilation}data assimilation problems in which
the model of interest, and the data associated with it, have
a time-ordered nature. We distinguish between the \index{filtering}filtering problem
(on-line) in which the data is incorporated sequentially as it
comes in, and the \index{smoothing}smoothing problem (off-line) which is a specific instance 
of the \index{inverse problem}inverse problems that have been the subject of the preceding
chapters. We formulate the \index{filtering}filtering and \index{smoothing}smoothing problems in Section \ref{sec:filtering}. After that, we focus
on the \index{smoothing}smoothing problem in Section \ref{sec:smooth} and describe its interpretation as an \index{inverse problem}inverse problem. 
This interpretation will allow us to seamlessly apply to the \index{smoothing}smoothing problem the  \index{well-posed}well-posedness theory for \index{inverse problem}inverse problems developed in Chapter \ref{ch1}. Section \ref{sec73} is concerned with the on-line \index{filtering}filtering problem. We will establish \index{well-posed}well-posedness of the \index{filtering}filtering problem in \index{distance!total variation}total variation distance as a corollary of the \index{well-posed}well-posedness of the \index{smoothing}smoothing problem. We will also provide a roadmap for the \index{filtering}filtering methods that will be introduced in subsequent chapters, highlighting the settings in which they will be presented. Section \ref{sec:74} closes with extensions and bibliographical remarks.

\section{Formulation of \index{filtering}Filtering and \index{smoothing}Smoothing Problems}
\label{sec:filtering}
Consider the \index{stochastic dynamics model}{\em stochastic dynamics model} given by 
\begin{align*}
v_{j+1} &= \Psi(v_j) + \xi_j, \: j \in \Z^+, \notag \\
v_0 &\sim \Nc(m_0, C_0), \: \xi_j \sim \Nc(0, \Sigma) \: \index{i.i.d.}\text{i.i.d.} \notag, 
\end{align*}
where we assume that $v_0$ is independent of the sequence $\{ \xi_j \}$;
this is often written as $v_0 \perp \{ \xi_j \}.$ 
Now we add the \index{data model}{\em data model} given by
\begin{align*}
y_{j+1} &= h(v_{j+1}) + \eta_{j+1} , \: j \in \Z^+, \notag \\
\eta_j &\sim \Nc(0, \Gamma) \: \index{i.i.d.}\text{i.i.d.} \notag,
\end{align*}
where we assume that $\{\eta_j\} \perp v_0$ and that 
$\eta_k \perp \xi_j $ for all $j,k$. 
The following will be assumed in the remainder of these notes. 
\begin{assumption}
\label{a:fas} 
The matrices $C_0,$ $\Sigma$ and $\Gamma$ are \index{positive definite}positive definite.
Further, we have $\Psi \in C(\Ru, \Ru)$ and $h \in C(\Ru, \Ry)$. 
\end{assumption}

\noindent
We define, for a given and fixed integer $J,$
\begin{equation}
\V := \{v_0, \ldots, v_J\}, \: \Y := \{y_1, \ldots, y_J\}, 
\: \text{and} \: Y_j := \{y_1, \ldots, y_j\}.  \notag
\end{equation}
The sequence $\V$ is often termed the \index{signal}{\em signal} and the sequence
$\Y$ the {\em data}.

\begin{definition} 
The \index{smoothing}{\em smoothing problem} is to find the probability density 
$\Pi(\V):=\Prob(\V|\Y) = \Prob(\{v_0, \ldots, v_J\}|\{y_1, \ldots, y_J\})$ on $\R^{\Du(J+1)}$ 
for some fixed integer $J.$ We refer to $\Pi$ as the \index{smoothing!distribution}{\em  smoothing distribution}.
\end{definition}
\begin{definition} 
The \index{filtering}{\em filtering problem} is to find, and update sequentially in $j$, the probability densities 
$\post_j(v_j):=\Prob(v_j|Y_j) = \Prob( v_j|\{y_1, \ldots, y_j\}) $ on $\Ru$ for $j=1, \dots, J.$
We refer to $\post_j$ as the {\em \index{filtering!distribution}filtering distribution at time $j.$}
\end{definition}

The key conceptual issue to appreciate concerning the \index{filtering}filtering problem, in
comparison with the \index{smoothing}smoothing problem, is that interest is focused
on characterizing, or approximating, a sequence of probability distributions,
defined  in an iterative fashion as the data is acquired sequentially. 

\begin{remark}\label{rem:marginal}
We note the following identity:
\begin{align*}
\int \Pi(v_0, \ldots, v_J) \, dv_0dv_1 \ldots dv_{J-1} = \post_J(v_J). 
\end{align*}
This expresses the fact that the marginal of the \index{smoothing!distribution}smoothing distribution at time $J$ corresponds to the 
\index{filtering!distribution}filtering distribution at time $J$. Note also that, for $j<J,$ in general 
\begin{equation}
\int \Pi(v_0, \ldots , v_J) \, dv_0\ldots dv_{j-1}  dv_{j+1} \ldots dv_{J} \neq \post_j(v_j), \notag
\end{equation}
since the expression on the left-hand side of the equation depends on data 
$Y_J$, whereas that on the right-hand side depends only on $Y_j$, and $j < J$.
\end{remark}

\section{The \index{smoothing}Smoothing Problem}
\label{sec:smooth}

\subsection{Formulation as an \index{inverse problem}Inverse Problem}

If we define
$$\eta := \{\eta_1, \ldots, \eta_J\}$$
and
$$G(\V):=\bigl\{h(v_1), \ldots, h(v_J) \bigr\},$$
then the \index{data model}data model can be written in the form of the inverse
problem \eqref{eq:jc0}:
$$\Y=G(\V)+\eta.$$
The \index{stochastic dynamics model} stochastic dynamics model provides a \index{prior}prior probabilistic description of $\V,$ 
which may then be used to formulate a \index{Bayesian!inverse problem}Bayesian version of the
\index{inverse problem}inverse problem of finding $\V$ from $\Y.$

\subsection{Formula for pdf of the \index{smoothing}Smoothing Problem}
The \index{smoothing!distribution}smoothing distribution can be found by combining a \index{prior}prior on $v$ and a \index{likelihood}likelihood function using \index{Bayes theorem}Bayes theorem. 
The \index{prior}prior is the probability distribution on $v$ implied by the distribution of $v_0$ and the \index{stochastic dynamics model}stochastic dynamics model; the \index{likelihood}likelihood function is defined by the \index{data model}data model. We now derive the \index{prior}prior and the \index{likelihood}likelihood separately.

The \index{prior}prior distribution can be derived as follows: 
\begin{subequations}
\begin{align}
\Prob(\V) &= \Prob(v_J, v_{J-1}, \ldots , v_0) \notag \\
&= \Prob(v_J|v_{J-1}, \ldots , v_0)\Prob(v_{J-1}, \ldots , v_0) \notag \\
&= \Prob(v_J|v_{J-1})\Prob(v_{J-1}, \ldots , v_0). \notag
\end{align}
\end{subequations}
The third equality comes from the Markov, or memoryless, property which
follows from the independence of the elements of the sequence $\{\xi_j\}.$ 
By induction, we have
\begin{subequations}
\begin{align}
\Prob(\V) &=\Prob(v_0)  \prod_{j=0}^{J-1}\Prob(v_{j+1}|v_j) \notag \\
&= \frac{1}{Z_\rho}\exp\bigl(-\reg(\V)\bigr) \notag \\
&=: \pr(\V), \notag
\end{align}
\end{subequations}
where $Z_\rho>0$ is a normalizing constant and
\begin{equation}
\reg(\V) := \frac{1}{2}|v_0-m_0|^2_{C_0} + \frac{1}{2}  \sum_{j=0}^{J-1} |v_{j+1} - \Psi(v_j)|^2_{\Sigma}. \notag
\end{equation}

The \index{likelihood}likelihood function, which incorporates the measurements gathered from observing the system, depends only on the measurement model and may be derived as follows: \index{loss}
\begin{align}
\Prob(\Y|\V) &= \prod_{j=0}^{J-1} \Prob(y_{j+1}|v_0,\ldots , v_{J}) \notag \\
&= \prod_{j=0}^{J-1}\Prob(y_{j+1}|v_{j+1}) \notag \\
& \propto \exp\bigl(-\loss(\V;\Y)\bigr), \notag
\end{align}
 where the loss\index{loss} function is given by
\begin{equation}
\loss(\V;\Y) :=  \frac{1}{2} \displaystyle{\sum_{j=0}^{J-1}}|y_{j+1} - h(v_{j+1})|_{\Gamma}^2. \notag
\end{equation}
The factorization of $\Prob(\Y|\V)$ in terms of the product of the
$\Prob(y_{j+1}|v_{j+1})$ follows from the independence of the
elements of $\{\eta_j\}$ and the fact that the \index{observation}observation at time $j+1$
depends only on the \index{state}state at time $j+1.$

Using \index{Bayes theorem}Bayes Theorem \ref{t:bayes} we find the \index{smoothing!distribution}smoothing distribution by combining the \index{likelihood}likelihood and the \index{prior}prior \index{loss}
\begin{align*}
\Pi(\V) & \propto \Prob(\Y|\V)\Prob(\V)\\
&  =\frac{1}{Z}\exp\bigl(-\reg(\V)-\loss(\V;\Y)\bigr).  \notag
\end{align*}
Note that $\V \in \R^{\Du(J+1)}$ and $\Y \in \R^{\Dy J}$.

\subsection{\index{well-posed}Well-Posedness of the \index{smoothing}Smoothing Problem}
Now we study the \index{well-posed}well-posedness of the \index{smoothing}smoothing problem with respect
to perturbations in the data. To this end, we consider two \index{smoothing!distribution}smoothing
distributions corresponding to different observed data sequences $\Y,\Y'$:  \index{loss}
\begin{align*}
\Pi(\V)&:=\Prob(\V|\Y) = \frac{1}{Z}\exp\bigl(-\reg(\V)-\loss(\V;\Y)\bigr), \notag \\
\Pi'(\V)&:=\Prob(\V|\Y')= \frac{1}{Z'}\exp\bigl(-\reg(\V)-\loss(\V;\Y')\bigr). \notag
\end{align*}
We make the following assumptions:
\begin{assumption}
There is a finite non-negative constant $R$ such that the
data $\Y, \Y'$ and the \index{observation!function}observation function $h$ satisfy:
\label{a:swp}
\begin{itemize}
\item $|\Y|, |\Y'| \le R;$
\item Letting $\phi(\V) :=\Bigl( \sum_{j=1}^J (|h(v_j)|^2) \Bigr)^{1/2}$, it holds that $\Expect^{\pr}[\phi^2(\V)]<\infty$.
\end{itemize}
\end{assumption}

The following theorem shows \index{well-posed}well-posedness of the \index{smoothing}smoothing problem.

\begin{theorem}[\index{well-posed}Well-Posedness of \index{smoothing}Smoothing]
\label{t:wps}
Under Assumption \ref{a:swp}, there is $\kappa \in [0, \infty)$ independent of $\Y$ and $\Y'$  
such that $$\dhell(\Pi, \Pi') \leq \kappa|\Y-\Y'|.$$ 
\end{theorem}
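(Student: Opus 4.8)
The plan is to recognize the statement as a direct instance of the well-posedness framework of Theorem \ref{thm1}, with the forward-model perturbation parameter $\delta$ replaced by the data perturbation $|\Y-\Y'|$. Writing $g(\V):=\exp(-\loss(\V;\Y))$ and $g'(\V):=\exp(-\loss(\V;\Y'))$, the two smoothing distributions share the common prior $\pr$ and take the form $\Pi=\frac{1}{Z}g\pr$ and $\Pi'=\frac{1}{Z'}g'\pr$, exactly the structure to which Lemma \ref{lem1} and Theorem \ref{thm1} apply. First I would verify the two hypotheses of Assumption \ref{a:dh1}, uniformly in $\Y,\Y'$; then I would re-run the short computation in the proofs of Lemma \ref{lem1} and Theorem \ref{thm1}, tracking that every constant produced is independent of $\Y$ and $\Y'$.

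The boundedness hypothesis \ref{a:dh1}(ii) is immediate: since $\loss\ge 0$ we have $\sqrt{g},\sqrt{g'}\le 1$, so $\sup_\V(\sqrt{g}+\sqrt{g'})\le 2=:K_2$. The Lipschitz hypothesis \ref{a:dh1}(i) is the crux. A first attempt uses that $w\mapsto e^{-w}$ is $1$-Lipschitz on $[0,\infty)$ together with the identity $|a|_{\Gamma}^2-|b|_{\Gamma}^2=\la a-b,a+b\ra_{\Gamma}$ applied to $a=y_{j+1}-h(v_{j+1})$, $b=y'_{j+1}-h(v_{j+1})$; Cauchy--Schwarz and the bounds $|\Y|,|\Y'|\le R$ then give a pointwise estimate of the form $|\sqrt{g(\V)}-\sqrt{g'(\V)}|\le c\,(R+\phi(\V))\,|\Y-\Y'|$ with $\phi$ as in Assumption \ref{a:swp}. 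The difficulty is that with only the first moment $\Expect^{\pr}[\phi]<\infty$ this pointwise bound yields $\int|\sqrt{g}-\sqrt{g'}|^2\pr\,d\V=O(|\Y-\Y'|)$, hence only a H\"older-$\tfrac12$ estimate. To obtain the claimed Lipschitz bound I would instead exploit the Gaussian decay directly: differentiating in the data gives $\nabla_{y_{j+1}}\sqrt{g}=-\tfrac12\sqrt{g}\,\Gamma^{-1}\bigl(y_{j+1}-h(v_{j+1})\bigr)$, and since $r\mapsto |r|\exp(-\tfrac14|r|_{\Gamma}^2)$ is bounded, $\|\nabla_{\Y}\sqrt{g}\|\le M$ with $M$ independent of $\V$ and $\Y$. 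The mean value theorem then yields $|\sqrt{g(\V)}-\sqrt{g'(\V)}|\le M|\Y-\Y'|$, so Assumption \ref{a:dh1}(i) holds with the constant choice $\phi\equiv M$ and $K_1=M^2$, and consequently $\int|\sqrt{g}-\sqrt{g'}|^2\pr\le M^2|\Y-\Y'|^2$.

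Finally I would establish uniformity of the constant. The only remaining input to the Theorem \ref{thm1} computation is a lower bound on the normalizers: I claim $c_2:=\inf_{|\Y|\le R}Z(\Y)>0$. Indeed $Z(\Y)=\Expect^{\pr}[e^{-\loss(\cdot;\Y)}]\in(0,1]$ is strictly positive and, by dominated convergence (integrand bounded by $1$ and continuous in $\Y$ since $h$ is continuous), continuous in $\Y$; the infimum over the compact set $\{|\Y|\le R\}$ is therefore attained and positive. With $K_1,K_2,c_2$ all uniform, splitting the Hellinger distance as in Theorem \ref{thm1} into a normalization part controlled by $|Z-Z'|\le \sqrt{K_1}K_2|\Y-\Y'|$ over $c_2$ and a likelihood part controlled by $\sqrt{K_1/c_2}\,|\Y-\Y'|$ gives $\dhell(\Pi,\Pi')\le c\,|\Y-\Y'|$ for small separations, with $c$ independent of $\Y,\Y'$. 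For large separations I use that $|\Y-\Y'|\le 2R$ while $\dhell\le 1$ by Lemma \ref{l:dh1}, so the bound persists after enlarging the constant; taking $\kappa$ to be the maximum of the two completes the argument. The main obstacle is precisely the Lipschitz-versus-H\"older gap in the second paragraph: the naive $\phi$-based bound is not enough, and one must use the Gaussian tail estimate to make the per-sample Lipschitz constant independent of $\V$.
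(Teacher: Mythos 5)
Your proposal is correct, and it follows the paper's overall strategy — reduce to Theorem \ref{thm1} by verifying Assumption \ref{a:dh1} with $\delta=|\Y-\Y'|$ — but it verifies the key hypothesis \ref{a:dh1}(i) by a genuinely different and, in fact, stronger argument. The paper's proof uses the $1$-Lipschitz property of $e^{-x}$ to get the pointwise bound $|\sqrt{g(\V;\Y)}-\sqrt{g(\V;\Y')}|\le \kappa|\Y-\Y'|\bigl(R+\phi(\V)\bigr)$ and then appeals to Assumption \ref{a:swp}; your observation that this requires the \emph{second} moment $\Expect^{\pr}[\phi^2]<\infty$ (as demanded by Assumption \ref{a:dh1}(i)), whereas Assumption \ref{a:swp} as stated supplies only $\Expect^{\pr}[\phi]<\infty$, is a genuine catch — with only the first moment the paper's route degrades to a H\"older-$\tfrac12$ bound exactly as you describe. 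Your replacement, bounding $\nabla_{\Y}\sqrt{g}$ uniformly via the boundedness of $r\mapsto|\Gamma^{-1}r|\exp(-\tfrac14|r|_{\Gamma}^2)$ and applying the mean value theorem, yields a $\V$-independent Lipschitz constant, so Assumption \ref{a:dh1}(i) holds with $\phi$ constant and the second bullet of Assumption \ref{a:swp} becomes superfluous; this exploits the Gaussian observation noise, which the paper's argument does not. You also supply two details the paper leaves implicit but which are needed for $\kappa$ to be independent of $\Y,\Y'$: the uniform lower bound $\inf_{|\Y|\le R}Z(\Y)>0$ (your compactness-plus-dominated-convergence argument is fine; a direct bound $Z\ge e^{-C}\pr(K)$ on a bounded set $K$ of positive prior mass also works), and the treatment of large separations via $\dhell\le 1$ and $|\Y-\Y'|\le 2R$, since Theorem \ref{thm1} only covers $\delta$ small. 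The net effect is a proof of the stated theorem under weaker hypotheses.
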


\begin{proof}
We show that the proof of Theorem \ref{thm1}, which established \index{well-posed}well-posedness for \index{Bayesian!inverse problem}Bayesian \index{inverse problem}inverse problems under Assumption \ref{a:dh1}, applies in the \index{smoothing}smoothing context as well. To do so, we rewrite the problem in the same notation used in Chapter 1, and show that Assumption \ref{a:swp} above implies Assumption \ref{a:dh1}.
Write \index{loss}
\begin{align*}
\Pi(\V)&= \frac{1}{Z}\exp\bigl(-\loss(\V;\Y)\bigr) \pr(\V)= \frac{1}{Z} \like(\V;\Y) \pr(\V),\\
\Pi'(\V)& =  \frac{1}{Z'}\exp\bigl(-\loss(\V;\Y')\bigr)  \pr(\V) = \frac{1}{Z'} \like(\V;\Y') \pr(\V), \notag
\end{align*}
where $Z,Z'>0$ are normalization constants. 
Here $ |\Y-\Y'|$ plays the role of $\delta$ in Theorem \ref{thm1}. 
Since the likelihood\index{likelihood}  
$\like(\V;\Y) := \exp\bigl(-\loss(\V;\Y)\bigr)$ 
and $\loss(\V;\Y)$ is positive, we have that 
$$\sup_{v} \Bigl|\sqrt{\like(\V;\Y)}\Bigr| + \Bigl|\sqrt{\like(\V;\Y')}\Bigr|  \le 2,  $$
and so Assumption \ref{a:dh1} (ii) is satisfied. To see that Assumption \ref{a:dh1} (i) is also satisfied, note that $e^{-x}$ is \index{Lipschitz}Lipschitz-1. Therefore, using the \index{Cauchy-Schwarz inequality}Cauchy-Schwarz inequality and some algebraic manipulations, there is  $\kappa$ independent of $\Y$ and $\Y'$ such that
\begin{align*}
\Bigl|\sqrt{\like(\V,\Y)} - \sqrt{\like(\V;\Y')} \Bigr|&\le  \frac{1}{2} \Bigl| \loss(\V;\Y) - \loss(\V;\Y')  \Bigr|  \\
&=\frac{1}{2} \biggl| \sum_{j=0}^{J-1}\frac12 \Bigl( |y_{j+1} - h(v_{j+1})|_\Gamma^2 - |y'_{j+1} - h(v_{j+1})|_\Gamma^2   \Bigr)  \biggr|  \\
&\le \kappa \sum_{j=0}^{J-1} |y_{j+1} - y'_{j+1}|_\Gamma  |y_{j+1} + y'_{j+1} -2 h(v_{j+1})|_\Gamma   
   \\
   &\le \kappa |\Y-\Y'| \phi(\V), 
\end{align*}
where $\phi$ is defined in Assumption \ref{a:swp}.
This shows that under Assumption \ref{a:swp} the \index{likelihood}likelihood function of the \index{smoothing}smoothing problem satisfies Assumption \ref{a:dh1} with $\delta =|\Y-\Y'|$; Theorem \ref{t:wps} follows from Theorem \ref{thm1}. 
\end{proof}

\section{The \index{filtering}Filtering Problem}\label{sec73}

\subsection{Formula for pdf of the \index{filtering}Filtering Problem}
\index{filtering}Filtering concerns the iterative updating of distributions, as new data arrives. We recall that we denote the \index{filtering!distribution}filtering distribution at time $j$ by ${\post}_{j}=\Prob(v_{j}|Y_j),$ and we now introduce $\hat{\post}_{j+1}=\Prob(v_{j+1}|Y_j).$ Then, we decompose in two steps the updating of the \index{filtering!distribution}filtering distribution from time $j$ to time $j+1$:
\begin{align}\label{eq:pna}
\begin{split}
&{ \text{ \bf  Prediction Step:}} ~~~~\;  \hat{\post}_{j+1} = \Pred \post_j.  \index{prediction} \\
& {\text{  \bf Analysis Step:}} ~~~~~\,\,\,\,\,  \post_{j+1} = \An_j  \hat{\post}_{j+1}. \index{analysis} 
\end{split}
\end{align}

The combination of the \index{prediction}prediction and \index{analysis}analysis steps is shown 
schematically in Figure \ref{fig:interaction} and leads to the update 
\begin{equation}
\post_{j+1} = \An_j \Pred\post_j, \notag
\end{equation}
where $\Pred$ is a 
Markov map and $\An_j$ is a 
\index{likelihood}likelihood map (\index{Bayes theorem}Bayes theorem) that we define in what follows. 

\begin{figure}[h]
  \begin{center}
    \includegraphics[width=0.8\textwidth]{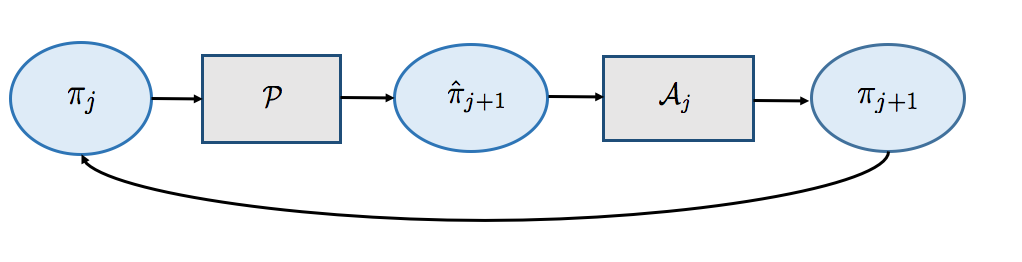}
  \end{center}
  \caption{Prediction and analysis steps combined.} \label{fig:interaction}
\end{figure}

We first derive the map $\Pred,$ which is sometimes termed {\em \index{prediction}prediction}.
By the Markov property of the
\index{stochastic dynamics model}stochastic dynamics model, we have 
\begin{align}
\hat{\post}_{j+1}(v_{j+1}) &= \Prob(v_{j+1}|Y_j)\\
&= \int_{\Ru} \Prob(v_{j+1}|Y_j, v_j)\Prob(v_j|Y_j) \, dv_j \notag \\
&= \int_{\Ru} \Prob(v_{j+1}|v_j)\Prob(v_j|Y_j) \, dv_j \notag \\
&= \int_{\Ru} \Prob(v_{j+1}|v_j)\post_j(v_j) \, dv_j \notag \\
&= \frac{1}{(2\pi)^{\Du/2} ({\rm det}\Sigma)^{1/2}}\int_{\Ru} \exp\Bigl(-\frac{1}{2}|v_{j+1}-\Psi(v_j)|^2_\Sigma \Bigr)\post_j(v_j) \,  dv_j. \notag
\end{align}
This defines the operator $\Pred$; the \index{prediction}prediction step is shown schematically in Figure \ref{fig:model}. Note that $\Pred$ is independent of step $j$ because the \index{Markov chain}Markov chain defined by the \index{stochastic dynamics model}stochastic dynamics model is time-homogeneous. In the absence of data, the
distribution of $v_j$ simply evolves through repeated application of $\Pred$.

\FloatBarrier
\begin{figure}[htp]
  \begin{center}
    \includegraphics[width=0.5\textwidth]{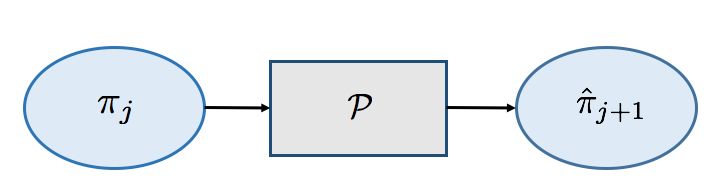}
  \end{center}
  \caption{\index{prediction}Prediction step.} \label{fig:model}
\end{figure}
\FloatBarrier

Now we derive the  \index{likelihood}likelihood map $\An_j$, which is sometimes called  {\em \index{analysis}analysis}. Note that the \index{prediction}prediction step does not make use of the new \index{observation}observation $y_{j+1}$, which is assimilated in the \index{analysis}analysis step through application of \index{Bayes theorem}Bayes theorem, as follows:
\begin{align}
\post_{j+1}(v_{j+1}) &= \Prob(v_{j+1}|Y_{j+1})\\
&= \Prob(v_{j+1}|Y_j, y_{j+1}) \notag \\ 
&= \frac{\Prob(y_{j+1}|v_{j+1}, Y_j)\Prob(v_{j+1}|Y_j)}{\Prob(y_{j+1}|Y_j)} \notag \\
&= \frac{\Prob(y_{j+1}|v_{j+1})\Prob(v_{j+1}|Y_j)}{\Prob(y_{j+1}|Y_j)} \notag \\
&= \frac{\exp(-\frac{1}{2}|y_{j+1}-h(v_{j+1})|^2_{\Gamma})\hat{\post}_{j+1}(v_{j+1})}{\int_{\Ru}  \exp(-\frac{1}{2}|y_{j+1}-h(v_{j+1})|^2_{\Gamma})\hat{\post}_{j+1}(v_{j+1}) \, dv_{j+1}}.\notag
\end{align}
This defines the  map $\An_{j}$ through multiplication
by the \index{likelihood}likelihood, and then normalization to a probability measure. The \index{analysis}analysis update is shown schematically in Figure \ref{fig:data}. 
It depends on $j$ because the data $y_{j+1}$ appears in the equation, and this will change with each set of measurements.

\begin{figure}[h]
  \begin{center}
    \includegraphics[width=0.5\textwidth]{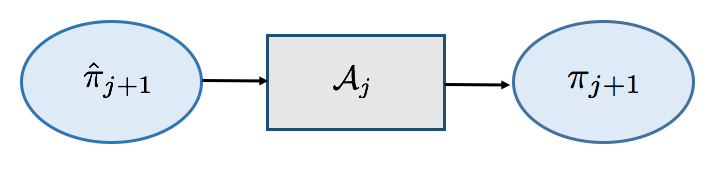}
  \end{center}
  \caption{Update step.} \label{fig:data}
\end{figure}

\subsection{\index{well-posed}Well-Posedness of the \index{filtering}Filtering Problem}
Now we establish the \index{well-posed}well-posedness of the \index{filtering}filtering problem. We let
\begin{equation*}
\post_J = \Prob(v_J|\Y), \quad \quad \post_J' = \Prob(v_J|\Y') 
\end{equation*}
be two \index{filtering!distribution}filtering distributions arising from observed data $\Y = Y_J$ and $\Y' = Y_J'.$ As noted in Remark \ref{rem:marginal}, the \index{filtering!distribution}filtering distribution at time $J$ is the $J$-th marginal of the \index{smoothing!distribution}smoothing distribution; using this observation, the \index{well-posed}well-posedness of the \index{filtering}filtering problem is a direct consequence of the \index{well-posed}well-posedness of the \index{smoothing}smoothing problem in the \index{distance!Hellinger}Hellinger distance. 
However, for the \index{filtering}filtering problem this approach only gives
\index{well-posed}well-posedness in the (weaker) \index{distance!total variation}total variation distance.
\begin{corollary}[\index{well-posed}Well-posedness of \index{filtering}Filtering]
Under Assumption \ref{a:swp},
there exists $\kappa = \kappa(R)$ such that $\dtv(\post_J, \post_J') \leq \kappa |\Y-\Y'|$. 
\end{corollary}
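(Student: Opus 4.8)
The plan is to deduce the filtering bound as a direct corollary of the smoothing bound in Theorem \ref{t:wps}, using two ingredients: that passing to a marginal cannot increase total variation distance, and that total variation is controlled by Hellinger distance through Lemma \ref{l:dh2}. No new analysis of the dynamics or observation model is needed, since all the work has already been done at the level of the full smoothing distribution.

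First I would invoke Remark \ref{rem:marginal}, which identifies the filtering distribution $\post_J$ as the marginal of the smoothing distribution $\Pi$ over the variables $v_0, \ldots, v_{J-1}$, and likewise $\post_J'$ as the marginal of $\Pi'$. The key step is then the data-processing-type inequality $\dtv(\post_J, \post_J') \leq \dtv(\Pi, \Pi')$. This follows from a one-line estimate: writing $\post_J(v_J) - \post_J'(v_J) = \int \bigl(\Pi(\V) - \Pi'(\V)\bigr)\, dv_0 \cdots dv_{J-1}$, applying the triangle inequality inside the integral, then integrating over $v_J$ and interchanging the order of integration (justified by Tonelli's theorem, as the integrands are non-negative), one obtains $\int |\post_J - \post_J'|\, dv_J \leq \int |\Pi - \Pi'|\, dv_0 \cdots dv_J$, which is exactly $\dtv(\post_J, \post_J') \leq \dtv(\Pi, \Pi')$ after dividing both sides by two.

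Next I would convert total variation into Hellinger distance. The left-hand inequality of Lemma \ref{l:dh2} gives $\dtv(\Pi, \Pi') \leq \sqrt{2}\,\dhell(\Pi, \Pi')$. Combining this with the marginalization step and with the smoothing well-posedness estimate $\dhell(\Pi, \Pi') \leq \kappa_0 |\Y - \Y'|$ from Theorem \ref{t:wps} (applicable since Assumption \ref{a:swp} is in force), I obtain $\dtv(\post_J, \post_J') \leq \sqrt{2}\,\kappa_0 |\Y - \Y'|$. Setting $\kappa := \sqrt{2}\,\kappa_0$, which depends on $R$ through $\kappa_0$ and is independent of $\Y, \Y'$, completes the argument. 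There is no genuine obstacle here; the result is a clean corollary, and the only point demanding a little care is the marginalization inequality, whose proof rests on the elementary observation that merging probability mass by integrating out coordinates can only shrink, never enlarge, the $L^1$ discrepancy between two densities.
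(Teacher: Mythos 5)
Your proposal is correct and follows essentially the same route as the paper: identify $\post_J,\post_J'$ as marginals of $\Pi,\Pi'$, show marginalization does not increase total variation, then chain Lemma \ref{l:dh2} with Theorem \ref{t:wps}. The only cosmetic difference is that you justify the marginalization inequality by a direct $L^1$ triangle-inequality estimate, whereas the paper uses the variational characterization of $\dtv$ from Lemma \ref{lemmatv} (viewing test functions of $v_J$ as a subset of test functions of $\V$); both are valid one-line arguments.
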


\begin{proof}
Let $\Pi, \Pi'$ be the \index{smoothing!distribution}smoothing distributions $\Pi = \Prob(\V|\Y)$ and $\Pi' = \Prob(\V|\Y')$. We 
note that there exists $\kappa$ such that $\dtv(\Pi,\Pi') \leq \kappa|\Y-\Y'|$
by Theorem \ref{t:wps} and by the fact that the \index{distance!Hellinger}Hellinger distance bounds
the \index{distance!total variation}total variation distance (Lemma \ref{l:dh2}). 
Let $f: \Ru \to \R$ and 
$F: \R^{\Du(J+1)} \to \R.$ Then 
\begin{align}
\dtv (\post_J, \post_{J}') &= \frac{1}{2} \sup_{|f|_\infty \leq 1 }\Bigl|\Expect^{\post_J}[f(v_J)] - \Expect^{\post_{J'}}[f(v_J)] \Bigr| \notag \\
&= \frac{1}{2} \sup_{|f|_{\infty} \leq 1} \Bigl|\Expect^\Pi[f(v_J)] - \Expect^{\Pi'}[f(v_J)] \Bigr| \notag \\
&\leq \frac{1}{2} \sup_{|F|_{\infty} \leq 1} \Bigl| \Expect^\Pi[F(\V)] - \Expect^{\Pi'}[F(\V)]  \Bigr|\notag \\
&= \dtv(\Pi, \Pi') \notag \\
&\leq \kappa|\Y-\Y'|. \notag
\end{align}
Here the first inequality follows from the fact that $\{|f|\leq 1\}$ can be viewed as a subset of $\{|F| \leq 1\}$. 
\end{proof}

\subsection{Roadmap to Discrete Filtering Methods}\label{sec:124}
There are several \index{filtering}filtering methods for performing the \index{prediction}prediction and \index{analysis}analysis steps. Some methods can be applied generally to nonlinear problems. However, others require a linear \index{dynamics}dynamics model $( \Psi(\cdot) = M\cdot)$ and/or linear \index{observation}observations $(h(\cdot) = H\cdot)$. 
Some of the methods provably approximate the \index{filtering!distribution}filtering distributions, while some just estimate the \index{state}state using covariance information to weight the relative importance
of predictions  based on the \index{dynamics}dynamics model and 
on the \index{data model}data model. 

The applicability of the methods that will be studied in the following chapters is summarized in Table \ref{roadmap}, with respect to linearity/nonlinearity of the \index{stochastic dynamics model}dynamics
and the \index{observation}observation model. Furthermore,  ${\bf P}$ is used to denote
 methods which provably approximate the \index{filtering!distribution}filtering distributions 
$\post_j$ in certain large particle limit; ${\bf S}$ denotes methods
which only attempt to estimate the \index{state}state using the data. Some of these constraints on the setting in which they apply
can be relaxed, but the list above describes
the methods as will be presented in these notes.
Furthermore, extended and ensemble  Kalman filters are  observed to 
accurately represent the \index{filtering!distribution}filtering distributions in situations where \index{Gaussian!approximation}approximate Gaussianity holds; this may be induced
by small noise  and/or  by large data.

\begin{table}
	\begin{center}
	\bgroup
\def\arraystretch{1.5}
		\begin{tabular}{ | c | c |c|c| c|}
			\hline
		\index{Kalman filter}Kalman Filter &$ \Psi(\cdot) = M\cdot$ & $h(\cdot) = H\cdot$ &${\bf P}$ & Chapter 8   \\ \hline
		3DVAR& $\text{General}\, \Psi$ & $ h(\cdot) = H\cdot $ &${\bf S}$ & Chapter 9   \\ \hline
		Extended Kalman Filter& $\text{General}\, \Psi $& $h(\cdot) = H\cdot $ &${\bf S}$ & Chapter 10    \\ \hline
			Ensemble Kalman Filter& $\text{General}\,   \Psi $ & $\, h(\cdot) = H\cdot $ &${\bf S}$ & Chapter 10   \\ \hline
			Bootstrap Particle Filter& $ \text{General}\, \Psi$ & $ \text{General}\, h$ & ${\bf P}$ & Chapter 11 \\ \hline
			Optimal Particle Filter& $\text{General}\, \Psi$ & $h(\cdot) = H\cdot$ & ${\bf P}$  & Chapter 12 \\ \hline
		\end{tabular}
\egroup
		\caption{Summary of the filtering methods considered in the following chapters, along with the settings in which they will be presented.}		
		\label{roadmap}
	\end{center}
\end{table}

\section{Discussion and Bibliography}\label{sec:74}
The book \cite{law2015data} gives a mathematical introduction to
\index{data assimilation}data assimilation; 
for further information on the \index{smoothing}smoothing problem as presented here, 
see Section 2.3 in that book;
for further information on the \index{filtering}filtering problem as presented here, 
see Section 2.4. 
 Our notes present new perspectives on data assimilation, different from those
emphasized in \cite{law2015data}, including their formulation as random dynamical
systems, detailed discussion of both the bootstrap and optimal particle filters,
and the use of data assimilation in the solution of inverse problems; 
on the other hand, the book \cite{law2015data} links the
pseudocode to downloadable code, a resource that usefully complements our notes.

The books \cite{abarbanel2013predicting,reich2015probabilistic,asch2016data,sarkka2013bayesian,bain2008fundamentals,crisan2011oxford,evensen2022data} and the review paper \cite{reich2019data}  give
alternative foundational presentations of the subject of data assimilation.  
The books \cite{kalnay2003atmospheric,oliver2008inverse,majda2012filtering,carrassi2018data}
study \index{data assimilation}data assimilation in the context of weather forecasting, oil reservoir
simulation, turbulence modeling, and geophysical sciences, respectively.

In this chapter, we have assumed throughout that $\Sigma$, the model
covariance, is  \index{positive definite}positive definite.  In applications, the \index{stochastic dynamics model}stochastic dynamics model  can be interpreted as arising from discretization of a stochastic differential equation governing the evolution of the \index{state}state. Even if the underlying \index{signal}signal is governed by a deterministic map $\Psi$, the use of a \index{stochastic dynamics model}\emph{stochastic} dynamics model can help account for errors in the modeling of this deterministic map. However, the case where
$\Sigma \equiv 0$ is also of interest as it corresponds to
 \index{dynamics!deterministic}deterministic dynamics without \index{model error}model error. In this case
we again define
$$\eta := \{\eta_1, \ldots, \eta_J\}$$
and then define
$$G_0(v_0):=\Bigl\{h\bigl(\Psi^{(1)}(v_0)\bigr), h\bigl(\Psi^{(2)}(v_0)\bigr)\ldots, h(\Psi^{(J)}v_0)\bigr)\Bigr\},$$
where $\Psi^{(j)}$ denotes $\Psi$ composed with itself $j$ times.
Then the \index{data model}data model can be written in the form of the following inverse 
problem for the determination of the initial condition of the \index{dynamical system}dynamical system:
$$\Y=G_0(v_0)+\eta.$$ 
The \index{Gaussian}Gaussian assumption $v_0 \sim \Nc(m_0,C_0)$ provides a \index{prior}prior model
for a \index{Bayesian}Bayesian formulation of this problem.  We refer to the book chapter \cite{hairer2011signal} for 
the derivation of the \index{posterior}posterior distribution in other related settings, including \index{dynamical system}dynamical systems defined by \index{ordinary differential equation}ordinary and stochastic differential equations with discrete and continuous \index{observation}observations. 

To streamline the presentation, throughout  Part II of these notes we assume to have access to maps $\Psi$ and $h,$ and covariance matrices $C_0$, $\Sigma,$ and $\Gamma$ defining the \index{dynamics}dynamics and \index{data model}data models. In practice,  however, models only reflect imperfectly the evolution of the system and the relationship between  \index{signal}signal and data. 
 For this reason, an important challenge in \index{data assimilation}data assimilation is the identification and correction of \index{model error}model errors, and the estimation of model parameters, along with the \index{state}state, from data. Several recent efforts that leverage machine learning to address \index{model error}model error in \index{dynamical system}dynamical systems are reviewed in \cite{levine2021framework}. Relatedly, several recent frameworks are emerging to blend \index{data assimilation}data assimilation with machine learning to obtain model corrections or surrogate models for the \index{dynamics}dynamics, including \cite{chen2023reduced,chen2021auto,bocquet2020bayesian,brajard2020combining,gottwald2021supervised,krishnan2017structured}.

 \chapter{\Large{\sffamily{The Linear-Gaussian Setting}}}\label{chapter:DAlineargaussian}
Recall the \index{stochastic dynamics model}stochastic dynamics and \index{data model}data models introduced in the previous chapter:
\begin{align}\label{eq:models}
\begin{split} 
        v_{j+1} &= \Psi(v_j) + \xi_j,   \quad \quad \,\, \, \, \;  \xi_j \sim \Nc(0, \Sigma) \index{i.i.d.}\text{ i.i.d.},  \\
        y_{j+1} &= h(v_{j+1}) + \eta_{j+1},  \quad \quad \, \eta_j \sim \Nc(0, \Gamma) \index{i.i.d.}\text{ i.i.d.},
\end{split}
\end{align}
with \(v_0 \sim \Nc(m_0, C_0),\) $C_0, \Sigma$ and $\Gamma$ \index{positive definite}positive definite and $v_0 \perp \{\xi_j\} \perp \{\eta_j\}.$   
Here we study the \index{filtering}filtering and \index{smoothing}smoothing problems under the assumption that both the state-transition function \(\Psi(\cdot)\) and the \index{observation!function}observation function \(h(\cdot)\) are  linear. Throughout, we will assume the following:
\begin{assumption}
The \index{stochastic dynamics model}stochastic dynamics and the \index{data model}data models defined by equation \eqref{eq:models} 
 hold with linear $\Psi(\cdot)$ and $h(\cdot)$: 
\label{a:kf} 
\begin{itemize}
	\item Linear \index{dynamics!linear}dynamics:  $v_{j+1} = Mv_j + \xi_j$ for some $\mtx{M}\in\R^{\Du \times \Du}$.
    \item Linear \index{observation}observation: $y_{j+1} = H v_{j+1} + \eta_{j+1}$ for some $H \in \R^{\Dy\times \Du}.$
\end{itemize}
\end{assumption}
We will be mostly concerned with the case where $\Du>\Dy.$
Under the \index{linear-Gaussian setting}\emph{linear-Gaussian} assumption, the \index{filtering}filtering and \index{smoothing!distribution}smoothing distributions are  \index{Gaussian}Gaussian and therefore are fully characterized by their mean and covariance.
We consider first the \index{Kalman filter}\emph{Kalman filter} in Section \ref{sec:81}, which gives explicit formulae for the iterative update of the mean and covariance of the \index{filtering!distribution}filtering distribution, and then in Section \ref{sec:82} the {\emph{\index{Kalman smoother}Kalman smoother}}, which characterizes the \index{smoothing!distribution}smoothing distribution. Section \ref{sec:83} closes this chapter with bibliographical remarks.
While the \index{Kalman filter}Kalman filter and the \index{Kalman smoother}Kalman smoother only characterize the \index{filtering!distribution}filtering and \index{smoothing!distribution}smoothing distributions in the \index{linear-Gaussian setting}linear-Gaussian setting, their importance extends beyond this setting, as will be demonstrated in the next two chapters.

\section{\index{Kalman filter}Kalman Filter}\label{sec:81}
The \index{filtering}{\em filtering problem} is to estimate the \index{state}state at time \(j\) given the data from the past up to the present time \(j\). That is, we want to determine the pdf \(\post_j = \Prob(v_j|Y_j)\), where $Y_j := \{y_1,\hdots,y_j\}.$ 
We  define $\hat{\post}_{j+1}=\Prob(v_{j+1}|Y_j)$ and recall
the evolution $$\post_{j+1} = \An_j \Pred \post_j, \quad \post_0 = \Nc(m_0, C_0),$$
which can be decomposed in terms of the \index{prediction}prediction and \index{analysis}analysis steps \eqref{eq:pna}.
Note that \(\Pred\) does not depend on $j$ because the same \index{Markov chain}Markov chain defined by the \index{state}state \index{dynamics}dynamics governs each \index{prediction}prediction step, whereas \(\An_j\) depends on $j$ because at each step $j$ the \index{likelihood}likelihood sees different data. The \index{dynamics!linear}linear dynamics assumption implies that applying the operator $\Pred$ to a \index{Gaussian}Gaussian distribution gives again a \index{Gaussian}Gaussian, and the linear \index{observation}observation assumption implies that applying the operator $\An_j$ to a \index{Gaussian}Gaussian gives again a \index{Gaussian}Gaussian. Therefore, we have the following:

\begin{theorem}[Gaussianity of \index{filtering!distribution}Filtering Distributions]
	Under Assumption \ref{a:kf},
$\post_0$, \(\{\post_{j+1}\}_{j \in \Z^+}\) and \(\{\hat{\post}_{j+1}\}_{j \in \Z^+}\) are all \index{Gaussian}Gaussian distributions.
\end{theorem}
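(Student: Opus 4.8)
The plan is to argue by induction on $j$, exploiting the decomposition $\post_{j+1} = \An_j \Pred \post_j$ together with the base point $\post_0 = \Nc(m_0, C_0)$. The base case is immediate from the standing model \eqref{eq:models}, so the whole statement reduces to two closure properties: that the prediction operator $\Pred$ maps a Gaussian $\post_j$ to a Gaussian $\hat{\post}_{j+1}$, and that the analysis operator $\An_j$ maps a Gaussian $\hat{\post}_{j+1}$ to a Gaussian $\post_{j+1}$. Granting these, Gaussianity of every member of the sequences $\{\post_{j+1}\}$ and $\{\hat{\post}_{j+1}\}$ propagates from $\post_0$.

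For the prediction step I would use the explicit integral formula for $\Pred$ derived in the previous chapter. Under the linear dynamics $v_{j+1} = Mv_j + \xi_j$ it reads
\[
\hat{\post}_{j+1}(v_{j+1}) = \frac{1}{(2\pi)^{\Du/2}(\det\Sigma)^{1/2}} \int \exp\Bigl(-\tfrac12 |v_{j+1} - Mv_j|^2_\Sigma\Bigr)\post_j(v_j)\,dv_j,
\]
which is precisely the convolution of the pushforward of the Gaussian $\post_j$ under the linear map $M$ with the Gaussian $\Nc(0,\Sigma)$; a convolution of Gaussians is Gaussian. Equivalently, $v_{j+1}\mid Y_j$ is the sum of the affine image $Mv_j$ of a Gaussian and the independent Gaussian $\xi_j$, hence Gaussian with mean $Mm_j$ and covariance $MC_jM^T + \Sigma$. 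Since $\Sigma$ is positive definite, this covariance is positive definite and $\hat{\post}_{j+1}$ is a nondegenerate Gaussian.

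The analysis step is exactly the linear-Gaussian Bayesian update of Theorem \ref{t:sz1}, with $\hat{\post}_{j+1}$ playing the role of the prior and $\Nc(Hv_{j+1},\Gamma)$ the likelihood. Writing the update
\[
\post_{j+1}(v_{j+1}) \propto \exp\Bigl(-\tfrac12|y_{j+1} - Hv_{j+1}|^2_\Gamma\Bigr)\,\hat{\post}_{j+1}(v_{j+1}),
\]
the negative logarithm of the right-hand side is a quadratic $\J(v_{j+1})$ in $v_{j+1}$ whose Hessian is $\hat{C}_{j+1}^{-1} + H^T\Gamma^{-1}H$. This matrix is positive definite because $\hat{C}_{j+1}^{-1}$ is positive definite and $H^T\Gamma^{-1}H$ is positive semidefinite; completing the square exactly as in the proof of Theorem \ref{t:sz1} then identifies $\post_{j+1}$ as a Gaussian and, in passing, furnishes the Kalman update formulae for its mean and covariance.

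The step I would watch most carefully is propagating nondegeneracy through the induction: to invoke completing the square and to guarantee finite normalizing constants one needs each covariance to stay positive definite, which is ensured by the hypotheses that $C_0$, $\Sigma$ and $\Gamma$ are positive definite, via the formulae $MC_jM^T + \Sigma$ and $(\hat{C}_{j+1}^{-1} + H^T\Gamma^{-1}H)^{-1}$. An alternative, arguably cleaner, route would bypass the step-by-step integral computations entirely: since $(v_0,\dots,v_J,y_1,\dots,y_J)$ is generated by affine recursions driven by independent Gaussians, the whole vector is jointly Gaussian, and both $\post_j$ and $\hat{\post}_{j+1}$ arise as conditionals of a jointly Gaussian vector, hence are Gaussian.
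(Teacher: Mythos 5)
Your proof is correct and follows essentially the same route as the paper: the paper justifies the theorem by noting that linearity of the dynamics makes $\Pred$ preserve Gaussianity and linearity of the observation makes $\An_j$ preserve Gaussianity, with the explicit convolution/completing-the-square computations and the inductive propagation of positive-definiteness carried out in the proof of the subsequent Kalman filter theorem. Your closing remark about joint Gaussianity of $(v_0,\dots,v_J,y_1,\dots,y_J)$ is a valid alternative the paper does not pursue, but your main argument matches the paper's.
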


As a consequence, the \index{filtering!distribution}filtering distributions can be entirely characterized by 
their mean and covariance. We write
\begin{align*}
	\hat{\post}_{j+1} &= \Prob(v_{j+1}|Y_j) = \Nc(\hat{m}_{j+1}, \hat{C}_{j+1}), && \text{(\index{prediction}prediction)}\\
	\post_{j+1} &= \Prob(v_{j+1}|Y_{j+1}) = \Nc(m_{j+1}, C_{j+1}), && \text{(\index{analysis}analysis)}
\end{align*}
and aim to find update formulae for these means and covariances.
The \index{Kalman filter}Kalman filter achieves this.

\begin{theorem}[Characterization of the \index{Kalman filter}Kalman Filter] 
	\label{kalmanfilter}
Suppose that Assumption \ref{a:kf} holds. Then,
for all $j \in \Z^+$, $C_j$ is \index{positive definite}positive definite and
\begin{subequations}
\label{eq:kfprecision}
\begin{align}
\hat{m}_{j+1} &= Mm_j, \\
\hat{C}_{j+1} &= MC_jM^\top+\Sigma, \\
C_{j+1}^{-1} &= (MC_jM^\top + \Sigma)^{-1} + H^\top \Gamma^{-1} H,\\
C_{j+1}^{-1}m_{j+1} &= (MC_jM^\top + \Sigma)^{-1} M m_j + H^\top \Gamma^{-1} y_{j+1}. 
\end{align}
\end{subequations}
\end{theorem}

\begin{proof}
The proof proceeds by breaking the \index{Kalman filter}Kalman filter step above into the \index{prediction}prediction and the \index{analysis}analysis steps. We first derive the update formulae, 
assuming that $C_j$ and $\hat{C}_{j+1}$ are \index{positive definite}positive definite; we conclude
with an inductive proof that this is indeed the case.

\noindent \textbf{\index{prediction}Prediction}: The mean and variance of the \index{prediction}prediction
step may be calculated as follows. The mean is given by: 
\begin{align*}
	\hat{m}_{j+1} &= \Expect \big[ v_{j+1} | Y_j \big] \\
	&= \Expect \big[ M v_j + \xi_j | Y_j \big]  \\
	&= M \Expect \big[ v_j | Y_j \big] + \Expect \big[\xi_j | Y_j \big] \\
	&= Mm_j, 
\end{align*}
where we used that \(\xi_j\) and \(Y_j\) are independent.
The covariance is given by
\begin{align*}
	\hat{C}_{j+1} =& \Expect \big[ (v_{j+1} - \hat{m}_{j+1}) \otimes (v_{j+1} - \hat{m}_{j+1}) | Y_j \big] \\
	=& \Expect \big[ M(v_j - m_j) \otimes M(v_j - m_j) | Y_j \big] + \Expect \big[ \xi_j \otimes\xi_j | Y_j \big] \\
	&+ \Expect \big[ \xi_j \otimes M(v_j - m_j) | Y_j \big] + \Expect \big[ M(v_j - m_j) \otimes\xi_j | Y_j \big] \\
	=& M \Expect \big[(v_j - m_j) \otimes (v_j - m_j) | Y_j \big] M^\top + \Sigma   \\
	=& M C_j M^\top + \Sigma,
\end{align*}
where we used that \(\xi_j\) and \(v_j\) are independent.
Thus, in the \index{linear-Gaussian setting}linear-Gaussian setting
the \index{prediction}prediction operator \(\Pred\) from \(\post_j = \Nc(m_j, C_j)\) to \(\hat{\post}_{j+1} = \Nc(\hat{m}_{j+1}, \hat{C}_{j+1})\) is given by
\begin{align*}
	\hat{m}_{j+1} &= M m_j, \\
	\hat{C}_{j+1} &= M C_j M^\top + \Sigma.
\end{align*}

\noindent \textbf{\index{analysis}Analysis}: The \index{analysis}analysis step may be derived as follows, 
using \index{Bayes theorem}Bayes Theorem \ref{t:bayes}:
\begin{align*}
	\Prob(v_{j+1} | Y_{j+1}) &= \Prob(v_{j+1} | y_{j+1},Y_{j})\\
& \propto \Prob(y_{j+1} | v_{j+1}, Y_j) \Prob(v_{j+1} | Y_j) \\
	& = \Prob(y_{j+1} | v_{j+1}) \Prob(v_{j+1} | Y_j).
\end{align*}
This gives 
\begin{align}\label{eq:smoot}
\begin{split}
\Prob(v_{j+1} | Y_{j+1}) &\propto \exp\left(-\frac{1}{2} |v_{j+1} - m_{j+1}|_{C_{j+1}}^2 \right)  \\
&\propto \exp\left(-\frac{1}{2} |y_{j+1} - H v_{j+1}|_{\Gamma}^2 \right) \exp\left(-\frac{1}{2} |v_{j+1} - \hat{m}_{j+1}|_{\hat{C}_{j+1}}^2 \right) \\
	&= \exp\left(-\frac{1}{2} |y_{j+1} - H v_{j+1}|_{\Gamma}^2 -\frac{1}{2} |v_{j+1} - \hat{m}_{j+1}|_{\hat{C}_{j+1}}^2 \right).
	\end{split}
\end{align}
Taking logarithms and
matching quadratic and linear terms in $v_{j+1}$ from either side of this
identity gives the update operator \(\An_j\) from \(\hat{\post}_{j+1} = \Nc(\hat{m}_{j+1}, \hat{C}_{j+1})\) to \(\post_{j+1} = \Nc(m_{j+1}, C_{j+1})\):
\begin{align*}
	C_{j+1}^{-1} &= \hat{C}_{j+1}^{-1} + H^\top \Gamma^{-1} H, \\
	C_{j+1}^{-1} m_{j+1} &= \hat{C}_{j+1}^{-1} \hat{m}_{j+1} + H^\top \Gamma^{-1} y_{j+1}.
\end{align*}
Combining the \index{prediction}prediction operator \(\Pred\) and update operator \(\An_j\) yields the desired update formulae. 

\noindent \textbf{Positive-definiteness}: It remains to show that \index{positive definite} \(C_j > 0\) for all \(j \in \Z^+\). We will use induction. By assumption the
result holds true for \(j = 0\). Assume that it is true for \(C_j\). For
the \index{prediction}prediction operator \(\Pred\) we have, for $u \ne 0$,
\begin{align*}
	\langle u, \hat{C}_{j+1} u \rangle &= \langle u, M C_j M^\top u \rangle + \langle u, \Sigma u \rangle \\
	&= \langle M^\top u, C_j  M^\top u \rangle + \langle u, \Sigma u \rangle \\
        & \ge \langle u, \Sigma u \rangle \\
	&> 0,
\end{align*}
where we used that $C_j>0$ and $\Sigma>0.$
Therefore \(\hat{C}_{j+1}, \hat{C}_{j+1}^{-1} > 0\). Then for the update operator \(\An_j\):
\begin{align*}
	\langle u, C_{j+1}^{-1} u \rangle &= \langle u, \hat{C}_{j+1}^{-1} u \rangle + \langle u, H^\top \Gamma^{-1} H u \rangle \\
	&= \langle u, \hat{C}_{j+1}^{-1} u \rangle + \langle H u, \Gamma^{-1} H u \rangle \\
        &\ge \langle u, \hat{C}_{j+1}^{-1} u \rangle \\
	&> 0,
\end{align*}
where we used that $\Gamma>0.$ Therefore, \(C_{j+1}, C_{j+1}^{-1}> 0\), which concludes the proof.
\end{proof}

\begin{remark}
The previous proof reveals two interesting facts about the structure of the \index{Kalman filter}Kalman filter updates. The first is that the covariance update does not involve the observed data; this can be thought of as a consequence of the fact that the \index{posterior}posterior covariance in the \index{linear-Gaussian setting}linear-Gaussian setting for \index{inverse problem}inverse problems does not depend on the observed data, as noted in Chapter 2.  The second is that the update formulae for the covariance are affine in the \index{prediction}prediction step, but nonlinear in the \index{analysis}analysis step; specifically, the \index{analysis}analysis step is affine in the precisions (inverse covariances).  
\end{remark}

\subsection{\index{Kalman filter}Kalman Filter:  Algorithmic Implementation}
We now rewrite the \index{Kalman filter}Kalman filter in an alternative form, which can be advantageous for algorithmic implementation. This formulation is summarized in Algorithm \ref{algKF} below, and it 
is written in terms of covariance matrices instead of precision matrices. 

\FloatBarrier
\begin{algorithm}
\caption{\label{algKF} \index{Kalman filter}Kalman Filter Algorithm}
\begin{algorithmic}[1]
\vspace{0.1in}
\STATE {\bf Input}: Initial distribution $\post_0 = \Nc(m_0,C_0)$ with $m_0 \in \R^{\Du},$ $C_0  \in \R^{\Du \times \Du}.$ \\
\vspace{.04in}
\STATE For $j = 0, 1, \ldots, J-1$ do the following \index{prediction}prediction and \index{analysis}analysis steps:
\STATE {\bf \index{prediction}Prediction}: 
\begin{align}
\hat{m}_{j+1} &= Mm_j, \\
\hat{C}_{j+1} &= MC_jM^\top+\Sigma, 
\end{align}
\vspace{.04in}
\STATE{{{\bf \index{analysis}Analysis}}}: 
	\begin{align}\label{eq:kf update}
	\begin{split}
		m_{j+1} &= \hat{m}_{j+1} + K_{j+1} d_{j+1}, \\
		C_{j+1} &= (I - K_{j+1} H) \hat{C}_{j+1},
		\end{split}
	\end{align}
where
	\begin{align}\label{eq:kf analysis}
	\begin{split}
		d_{j+1} &= y_{j+1} - H \hat{m}_{j+1}, \\
		S_{j+1} &= H \hat{C}_{j+1} H^\top + \Gamma, \\
		K_{j+1} &= \hat{C}_{j+1} H^\top S_{j+1}^{-1}.
		\end{split}
	\end{align}

\STATE{\bf Output}: Predicted distributions $\hat{\post}_{j+1} = \Nc(\hat{m}_{j+1}, \hat{C}_{j+1})$ and \index{filtering!distribution}filtering distributions $\post_{j+1} = \Nc(m_{j+1}, C_{j+1}),$ $j = 0,1, \ldots, J-1.$ 
\end{algorithmic}
\end{algorithm}
\FloatBarrier

Importantly, the formulation in Theorem \ref{kalmanfilter} involves a matrix inversion in the \index{state-space}state-space $\R^{\Du}$ while the one given in Algorithm \ref{algKF} requires 
only inversion in the data space $\R^{\Dy}$ to compute \(S_{j+1}^{-1}\). 
In many applications the \index{observation}observation space dimension is much smaller 
than the \index{state-space}state-space dimension (\(\Dy \ll \Du\)), and then the
formulation given in  Algorithm \ref{algKF} leads to 
much cheaper computations
than the one given in Theorem \ref{kalmanfilter}.

\as I notice that the Algorithm numbering is outside the
sequential numbering system for Theorems/Lemma etc. I propose we change
this and make it part of the sequential system for all environments. \nc


The vector \(d_{j+1}\) is known as the {\em \index{innovation}innovation} 
and the matrix \(K_{j+1}\) as the {\em \index{Kalman gain}Kalman gain}. Note that
\(d_{j+1}\) measures the mismatch of the predicted \index{state}state from the given data.

Combining the form of $d_{j+1}$ and $\hat{m}_{j+1}$ shows that the update
formula for the Kalman mean can be written as
\begin{equation}
\label{eq:kmu}
m_{j+1} = (I-K_{j+1}H)\hat{m}_{j+1} + K_{j+1} y_{j+1}, \quad \hat{m}_{j+1}=Mm_j.
\end{equation}
This update formula has the very natural interpretation that the mean update
is formed as a linear combination of the evolution of the noise-free \index{dynamics}dynamics
and of the data. Equations \eqref{eq:kf update} and \eqref{eq:kmu} show that the \index{Kalman gain}Kalman gain $K_{j+1}$ determines the weight given to the new \index{observation}observation $y_{j+1}$ in the \index{state}state estimation.
The update formula \eqref{eq:kmu} may also be derived from an \index{optimization}optimization perspective,
the topic of the next subsection.

The fact that the \index{analysis}analysis update given by Algorithm \ref{algKF} agrees with the one derived in Theorem \ref{kalmanfilter}  can be established using the following lemma:

\begin{lemma}[\index{Woodbury matrix identity}Woodbury Matrix Identity] \label{eq:woodbury} Let \(A \in \R^{p \times p}\), \(U \in \R^{p \times q}\), \(B \in \R^{q \times q}\), \(V \in \R^{q \times p}\). If \(A, B > 0\), then \(A + UBV\) is invertible and 
	\[(A + UBV)^{-1} = A^{-1} - A^{-1}U (B^{-1} + VA^{-1}U)^{-1} V A^{-1}.\] 
\end{lemma}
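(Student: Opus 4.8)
The plan is to prove the identity by direct verification: exhibit the claimed right-hand side and check that it is a genuine two-sided inverse of $A + UBV$ by multiplication. This route handles arbitrary $U$ and $V$ at once and, as a byproduct, produces the invertibility of $A + UBV$ rather than requiring it as a separate input. Write $W := B^{-1} + VA^{-1}U$ for the \emph{capacitance matrix} and $X := A^{-1} - A^{-1}U W^{-1} V A^{-1}$ for the candidate inverse; since $A, B > 0$ the inverses $A^{-1}$ and $B^{-1}$ exist, so $W$ and hence $X$ are well defined once $W$ is invertible.

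The whole computation rests on the single algebraic identity
\[(A + UBV)A^{-1}U = U + UBVA^{-1}U = UB\bigl(B^{-1} + VA^{-1}U\bigr) = UBW.\]
Using it, I would multiply out $(A+UBV)X = (A+UBV)A^{-1} - (A+UBV)A^{-1}U\,W^{-1}VA^{-1} = (I + UBVA^{-1}) - UBW\,W^{-1}VA^{-1} = I + UBVA^{-1} - UBVA^{-1} = I$. The mirror-image check $X(A+UBV)=I$ follows the same pattern from $VA^{-1}(A+UBV) = V + VA^{-1}UBV = \bigl(B^{-1}+VA^{-1}U\bigr)BV = WBV$. Since $A+UBV$ is square, a one-sided inverse is already two-sided, so this simultaneously shows $A+UBV$ is invertible and that its inverse is $X$, which is the claimed formula.

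The one genuine subtlety, which I expect to be the main obstacle and would flag explicitly, is the invertibility hypothesis. The displayed computation needs $W = B^{-1}+VA^{-1}U$ to be invertible, and positive-definiteness of $A$ and $B$ alone does \emph{not} force this for arbitrary $U, V$: with $p=q=1$, $A=B=U=1$, $V=-1$ one gets $W=0$ and also $A+UBV=0$, so the stated conclusion fails. The honest formulation is that invertibility of $W$ and of $A+UBV$ are equivalent, with inverses related by the identity. In the setting where the lemma is used (Corollary \ref{cor:k2}, where $V = U^T$), this difficulty evaporates: then $W = B^{-1} + U^T A^{-1} U$ is the sum of a positive-definite and a positive-semidefinite matrix, hence $W>0$ is automatically invertible, and likewise $A + UBU^T > 0$. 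I would therefore either append the hypothesis that $B^{-1}+VA^{-1}U$ is invertible, or record the symmetric case $V=U^T$ as the one actually needed downstream.

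As an equally short alternative I could present the block-matrix proof: form $M = \begin{pmatrix} A & U \\ -V & B^{-1}\end{pmatrix}$, compute the top-left block of $M^{-1}$ by Schur-complement elimination from the bottom-right corner (yielding $(A+UBV)^{-1}$) and from the top-left corner (yielding $A^{-1} - A^{-1}U W^{-1}VA^{-1}$), and equate. This makes transparent that the two relevant conditions are precisely the invertibility of the two Schur complements $A+UBV$ and $W$, reinforcing the point above about the hypotheses.
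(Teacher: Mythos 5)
Your proof is correct, and there is nothing in the paper to compare it against: the lemma is stated without proof (the Discussion and Bibliography section defers the proof of Corollary \ref{cor:k2} to a reference), so your direct verification fills a genuine omission. The algebra checks out: the pivot identity $(A+UBV)A^{-1}U = UB\bigl(B^{-1}+VA^{-1}U\bigr)$ and its mirror $VA^{-1}(A+UBV)=\bigl(B^{-1}+VA^{-1}U\bigr)BV$ make both products collapse to $I$, and since the matrices are square, a one-sided inverse suffices.

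More importantly, you have caught a real defect in the paper's statement rather than a defect in your own argument. As written, the hypotheses $A,B>0$ do not imply invertibility of $A+UBV$ for arbitrary $U,V$: your counterexample $p=q=1$, $A=B=U=1$, $V=-1$ gives $A+UBV=0$ and capacitance matrix $W=B^{-1}+VA^{-1}U=0$, so the lemma as stated is false. (Note that positive definiteness in this paper is defined via the quadratic form without a symmetry requirement, but your counterexample uses symmetric $A,B$, so no reading of the hypotheses rescues the statement.) Your repair is the right one: either add the hypothesis that $B^{-1}+VA^{-1}U$ is invertible, which your computation shows is equivalent to invertibility of $A+UBV$, or restrict to the symmetric case $V=U^T$, where $W=B^{-1}+U^TA^{-1}U>0$ automatically. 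The latter is exactly the situation in Corollary \ref{cor:k2}, where the identity is applied with $A=\hat{C}_{j+1}^{-1}$, $U=H^T$, $B=\Gamma^{-1}$, $V=H$, and the capacitance matrix is precisely $S_{j+1}=H\hat{C}_{j+1}H^T+\Gamma>0$. Your Schur-complement alternative is also sound and makes the equivalence of the two invertibility conditions transparent, though the direct verification is all that is needed here.
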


Now, to see the agreement between the characterization in terms of precision matrices in Theorem \ref{kalmanfilter} and the covariance characterization in \eqref{eq:kf update} and \eqref{eq:kf analysis}, note that Lemma \ref{eq:woodbury}  applied to
(\ref{eq:kfprecision}c)  gives
\begin{align*}
C_{j+1} &=  \hat{C}_{j+1}  - \hat{C}_{j+1}  H^\top (\Gamma + H \hat{C}_{j+1}  H^\top )^{-1} H \hat{C}_{j+1} \\
& = \Bigl( I - \hat{C}_{j+1} H^\top (\Gamma + H \hat{C}_{j+1}H^\top)^{-1} H \Bigr) \hat{C}_{j+1} \\
& = (I - \hat{C}_{j+1} H^\top S_{j+1}^{-1} H ) \hat{C}_{j+1} \\
& = (I - K_{j+1} H) \hat{C}_{j+1},
\end{align*}
as desired.

\subsection{Optimization Perspective: Mean of  \index{Kalman filter}Kalman Filter}
\label{sec:okf}
Since \(\post_{j+1}\) is \index{Gaussian}Gaussian, its mean agrees with its mode. Thus, formulae \eqref{eq:smoot} implies that
\begin{align*}
	m_{j+1} &=  {\rm argmax}_\varg \post_{j+1}(\varg)  \\
	& = {\rm argmin}_\varg \J(\varg), 
\end{align*}
where 
$$\J(\varg):= \frac{1}{2} |y_{j+1} - H \varg|_{\Gamma}^2 + \frac{1}{2} |\varg - \hat{m}_{j+1}|_{\hat{C}_{j+1}}^2.$$
In other words, \(m_{j+1}\) is chosen to fit both the observed data 
\(y_{j+1}\) and the predictions \(\hat{m}_{j+1}\) as well as possible. 
The covariances \(\Gamma\) and \(\hat{C}_{j+1}\)  determine
the relative weighting between the two quadratic terms.
The solution of the minimization problem is given by \eqref{eq:kmu}, as may be verified by direct differentiation of $\J.$

An alternative derivation which is helpful in more sophisticated contexts is to cast the problem in terms of constrained minimization. Write $v'=\varg-\hat{m}_{j+1}$, $y'=y_{j+1} - H \hat{m}_{j+1}$ and
$C'=\hat{C}_{j+1}.$ Then
 minimization of $\J$ may be reformulated as
\begin{align*}
        m_{j+1} = \hat{m}_{j+1}+{\rm argmin}_{v'} \left( \frac{1}{2} |y' - H v'|_{\Gamma}^2 + \frac{1}{2} \langle v', b \rangle \right), 
\end{align*}
where the minimization is now subject to the constraint $C'b=v'.$ Using
Lagrange multipliers we write
\begin{equation}
\label{eq:LM}
\I(v')= \frac{1}{2} |y' - H v'|_{\Gamma}^2 + \frac{1}{2} \langle v', b \rangle
+\langle \lambda, C'b-v' \rangle;
\end{equation}
computing the derivative and setting to zero gives
\begin{subequations}
\label{eq:LM_EL}
\begin{align*}
-H^\top\Gamma^{-1}(y'-Hv')+\frac12 b-\lambda & =0,\\
\frac12 v'+C'\lambda &=0,\\
v'-C'b &=0.
\end{align*}
\end{subequations}
The last two equations imply that $C'(2\lambda+b)=0.$ Thus we set 
$\lambda=-\frac{1}{2}b$ and drop the second equation, replacing the first by
$$-H^\top\Gamma^{-1}(y'-HC'b)+b=0.$$ 
Solving for $b$ gives
\begin{align*}
\varg&=\hat{m}_{j+1}+v'\\
&=\hat{m}_{j+1}+C'b\\
&=\hat{m}_{j+1}+C'(H^\top\Gamma^{-1}HC'+I)^{-1}H^\top\Gamma^{-1}y'\\
&=\hat{m}_{j+1}+C'(H^\top\Gamma^{-1}HC'+I)^{-1}H^\top\Gamma^{-1}(y_{j+1} - H \hat{m}_{j+1})\\
&=(I-K_{j+1}H)\hat{m}_{j+1}+K_{j+1}y_{j+1},
\end{align*}
where we have defined
$$K_{j+1}=C'(H^\top\Gamma^{-1}HC'+I)^{-1}H^\top\Gamma^{-1}.$$
It remains to show that $K_{j+1}$ agrees with the definition given
in \eqref{eq:kf analysis}. To see this we note that if we choose $S$ to
be any matrix satisfying 
$K_{j+1}=C'H^\top S^{-1},$ then
$$H^\top S^{-1}=(H^\top\Gamma^{-1}HC'+I)^{-1}H^\top\Gamma^{-1}$$
so that
$$(H^\top\Gamma^{-1}HC'+I)H^\top=H^\top\Gamma^{-1} S.$$
Thus 
$$H^\top\Gamma^{-1}HC'H^\top+H^\top=H^\top\Gamma^{-1}S$$
which may be achieved by choosing any $S$ so that
$$\Gamma^{-1}(HC'H^\top+\Gamma)=\Gamma^{-1}S$$
and multiplication by $\Gamma$ gives the desired formula for $S$.

\subsection{Optimality of \index{Kalman filter}Kalman Filter}

The following theorem states that the \index{Kalman filter}Kalman filter gives the best 
estimator of the mean in an online setting. 
In the following, $\Expect$ denotes expectation with respect to
all randomness present in the problem statement, through the initial
condition, the noisy dynamical evolution, and the noisy data. Furthermore,
$\Expect [\cdot|Y_j]$ denotes conditional expectation, given
the data $Y_j$ up to time $j$.

\begin{theorem}[Optimality of \index{Kalman filter}Kalman Filter]
	Let \(\{m_j\}\) be the sequence computed using the \index{Kalman filter}Kalman filter, and \(\{z_j\}\) be any sequence in \(\Ru\) such that $z_j$ is $Y_j$
measurable.\footnote{For practical purposes, this means $z_j$ is a fixed
non-random function of given observed $Y_j.$} Then, for all $j \in \N,$
	\[ \Expect \Big[ |v_j - m_j |^2 \mid Y_j \Big] \leq \Expect \Big[ |v_j - z_j |^2 \mid Y_j \Big].\]
\end{theorem}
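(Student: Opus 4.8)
The claim is that the Kalman mean $m_j$ minimizes the conditional mean-squared error $\Expect[|v_j - z_j|^2 \mid Y_j]$ over all $Y_j$-measurable estimators $z_j$. The key observation is that this is a purely probabilistic fact about conditional expectations, not a special feature of the linear-Gaussian structure: the conditional mean always minimizes conditional mean-squared error. So my first move is to identify $m_j$ with the conditional expectation $\Expect[v_j \mid Y_j]$. This is justified because, under Assumption \ref{a:kf}, the filtering distribution $\post_j = \Prob(v_j \mid Y_j)$ is Gaussian (by the Gaussianity theorem preceding Theorem \ref{kalmanfilter}) with mean $m_j$; hence $m_j = \Expect[v_j \mid Y_j]$ by definition of the Gaussian mean.

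The second step is the standard bias-variance decomposition. Fix any $Y_j$-measurable $z_j$ and write
\begin{align*}
\Expect\bigl[|v_j - z_j|^2 \mid Y_j\bigr]
&= \Expect\bigl[|(v_j - m_j) + (m_j - z_j)|^2 \mid Y_j\bigr] \\
&= \Expect\bigl[|v_j - m_j|^2 \mid Y_j\bigr]
 + 2\,\Expect\bigl[\langle v_j - m_j,\, m_j - z_j\rangle \mid Y_j\bigr]
 + \Expect\bigl[|m_j - z_j|^2 \mid Y_j\bigr].
\end{align*}
Since both $m_j$ and $z_j$ are $Y_j$-measurable, the vector $m_j - z_j$ can be pulled out of the conditional expectation in the cross term, giving
\[
\Expect\bigl[\langle v_j - m_j,\, m_j - z_j\rangle \mid Y_j\bigr]
= \bigl\langle \Expect[v_j - m_j \mid Y_j],\, m_j - z_j \bigr\rangle = 0,
\]
because $\Expect[v_j \mid Y_j] = m_j$ forces $\Expect[v_j - m_j \mid Y_j] = 0$. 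The cross term therefore vanishes, and we are left with
\[
\Expect\bigl[|v_j - z_j|^2 \mid Y_j\bigr]
= \Expect\bigl[|v_j - m_j|^2 \mid Y_j\bigr] + \Expect\bigl[|m_j - z_j|^2 \mid Y_j\bigr]
\ge \Expect\bigl[|v_j - m_j|^2 \mid Y_j\bigr],
\]
since the last term is nonnegative. This is exactly the desired inequality.

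The only genuine content — and the step I would take most care over — is the identification $m_j = \Expect[v_j \mid Y_j]$, since the entire argument collapses onto the orthogonality $\Expect[v_j - m_j \mid Y_j] = 0$. Everything downstream (pulling the $Y_j$-measurable factor out of the conditional expectation, nonnegativity of the final term) is routine. I would also note that linearity and Gaussianity enter \emph{only} through guaranteeing that the Kalman recursion actually computes the conditional mean; the optimality principle itself is distribution-free. If one wished to avoid invoking Gaussianity of the filtering distribution directly, one could instead verify $m_j = \Expect[v_j\mid Y_j]$ inductively from the prediction and analysis update formulae, but appealing to the established Gaussianity of $\post_j$ is cleaner.
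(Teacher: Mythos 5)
Your proof is correct and follows essentially the same route as the paper's: the bias--variance decomposition of $\Expect\bigl[|v_j-z_j|^2 \mid Y_j\bigr]$ around $m_j$, with the cross term killed by $\Expect[v_j\mid Y_j]=m_j$. Your extra remark justifying the identification $m_j=\Expect[v_j\mid Y_j]$ via the Gaussianity of the filtering distribution is a welcome precision that the paper's proof leaves implicit.
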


\begin{proof}
Note that $m_j$ and $z_j$ are fixed and non-random, given $Y_j$. Thus, we have:
\begin{align*}
	\Expect \Big[ |v_j - z_j |^2 \mid Y_j \Big] =& \Expect \Big[ |v_j - m_j + m_j - z_j |^2 \mid Y_j \Big] \\
	=& \Expect \Big[ |v_j - m_j|^2 + 2 \Big\langle v_j - m_j, m_j - z_j 
\Big \rangle + |m_j - z_j|^2 \mid Y_j \Big] \\
 =& \Expect \Big[ |v_j - m_j|^2 \mid Y_j \Big] + 2 \Big \langle \Expect \Big[ v_j - m_j \mid Y_j \Big] , m_j - z_j \Big \rangle + |m_j - z_j|^2 \\
	=& \Expect \Big[ |v_j - m_j|^2 \mid Y_j \Big] + 2 \Big \langle \Expect \Big[ v_j \mid Y_j \Big]-m_j , m_j - z_j \Big \rangle + |m_j - z_j|^2 \\
	=& \Expect \Big[ |v_j - m_j|^2 \mid Y_j \Big] + 0 + |m_j - z_j|^2 \\
	\geq & \Expect \Big[ |v_j - m_j|^2 \mid Y_j \Big].
\end{align*}
The fifth step follows since \(m_j = \Expect \big[ v_j \mid Y_j \big].\) 
\end{proof}

\section{\index{Kalman smoother}Kalman Smoother}\label{sec:82}
We next discuss the \index{Kalman smoother}Kalman smoother, which refers to the \index{smoothing}smoothing problem in the \index{linear-Gaussian setting}linear-Gaussian setting of Assumption \ref{a:kf}. As with the \index{Kalman filter}Kalman filter, it is possible
to solve the problem explicitly because the \index{smoothing!distribution}smoothing distribution is itself a \index{Gaussian}Gaussian. The explicit formulae computed help to build
intuition about the \index{smoothing!distribution}smoothing distribution more generally. We recall Remark \ref{rem:marginal}, which implies that the \index{filtering!distribution}filtering distribution at time $j=J$ determines the marginal of the \index{Kalman smoother}Kalman smoother on its last coordinate. However, the \index{filtering!distribution}filtering distributions do not determine the
\index{Kalman smoother}Kalman smoother in its entirety.

\subsection{Defining Linear System}
Let $\V= \{v_0,\ldots,v_J\}$ and $\Y =\{y_1,\ldots, y_J\}$. Using Bayes Theorem \ref{t:bayes}
and the fact that $\{\xi_j\}$, $\{\eta_j\}$ are mutually independent \index{i.i.d.}
i.i.d. sequences, independent of $v_0$, we have
\begin{equation*}
\Prob (\V|\Y)\propto \Prob(\Y|\V) \Prob(\V)= \prod\limits_{j=1}^J\Prob(y_j|v_j)\times \prod\limits_{j=0}^{J-1} \Prob (v_{j+1}|v_j)\times \Prob (v_0).
\end{equation*}
Noting that
$$v_{j+1}|v_j \sim \Nc(Mv_j,\Sigma), \quad y_j|v_j \sim \Nc(Hv_j,\Gamma)$$
the \index{smoothing!distribution}smoothing distribution can be expressed as
\begin{equation}
\Prob (\V|\Y) \propto \exp\bigl(-\J(\V)\bigr),
\end{equation}
where 
\begin{equation}\label{eq:jformula}
\J(\V):=\frac{1}{2}|v_0-m_0|^{2}_{C_0}+\frac{1}{2} \sum\limits_{j=0}^{J-1}|v_{j+1}-Mv_j|^{2}_{\Sigma}+\frac{1}{2}\sum\limits_{j=0}^{J-1}|y_{j+1}-Hv_{j+1}|^{2}_{\Gamma}.
\end{equation}

\begin{theorem}[Characterization of the \index{Kalman smoother}Kalman Smoother]
Suppose that Assumption \ref{a:kf} holds. Then $\Prob(\V | \Y)$ is \index{Gaussian}Gaussian with a block tridiagonal precision matrix $\precision>0$ and  mean $m$ solving $\precision m=r$, where 
\begin{align}
\precision  =& \begin{bmatrix}
 \precision_{0,0} & \precision_{0,1} &  &  &  &  \\ 
 \precision _{1,0} & \precision_{1,1} & ... &  & 0  &  \\ 
 0 & ... & ... &  &  &  \\ 
  & 0 & ... & ... & ... &  \\ 
  &  &  & ... & \precision_{J-1,J-1} & \precision_{J-1,J} \\ 
  &  &  &  & \precision_{J,J-1} & \precision_{J,J}
 \end{bmatrix}
\end{align} 
with 
\begin{align*}
\precision_{0,0}&=C_0^{-1} +M^\top\Sigma^{-1}M,\\
\precision_{j,j}&=\Sigma^{-1}+M^\top\Sigma^{-1}M+H^\top\Gamma^{-1}H, \quad  1\leq j \leq J-1,\\
\precision_{J,J}&=\Sigma^{-1}+H^\top\Gamma^{-1}H,\\
 \precision_{j,j+1}&=-\Sigma^{-1}M, \quad   0\leq j\leq J-1, \\
r_0&=C_0^{-1}m_0,\\
r_j&=H^\top\Gamma^{-1}y_j, \quad  1\leq j\leq J.
\end{align*}
\end{theorem}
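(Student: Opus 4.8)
The plan is to use that $\J(\V)$ in \eqref{eq:jformula} is an inhomogeneous quadratic form in the stacked variable $\V=(v_0,\ldots,v_J)\in\R^{\Du(J+1)}$, so that $\Prob(\V|\Y)\propto\exp\bigl(-\J(\V)\bigr)$ is Gaussian, exactly as in the derivation of Theorem \ref{t:sz1} and the analysis step of Theorem \ref{kalmanfilter}. Writing $\J(\V)=\frac{1}{2}\langle\V,\precision\V\rangle-\langle r,\V\rangle+c$ with $c$ independent of $\V$, and completing the square, gives $\J(\V)=\frac{1}{2}|\V-m|_{\precision^{-1}}^2+c'$ with $\precision m=r$; hence the Gaussian has precision matrix $\precision$ and mean $m$ solving $\precision m=r$. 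The proof then reduces to (i) reading off the blocks of $\precision$ and the components of $r$ by matching coefficients, and (ii) verifying $\precision>0$ (which simultaneously guarantees that $\exp(-\J)$ is normalizable).

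For (i) I would expand each of the three groups of terms in \eqref{eq:jformula} and collect, for each pair $(i,j)$, the matrix multiplying the bilinear form in $v_i$ and $v_j$; equivalently one computes the Hessian block $\precision_{i,j}=\partial^2\J/\partial v_i\partial v_j$. The crucial structural observation is that the \emph{only} terms coupling two distinct states are the dynamics terms $\frac{1}{2}|v_{j+1}-Mv_j|_\Sigma^2$, and each of these couples only the consecutive indices $j$ and $j+1$: expanding such a term yields $-M^T\Sigma^{-1}$ in the $(j,j+1)$ block (and its transpose $-\Sigma^{-1}M$ in the $(j+1,j)$ block), together with $M^T\Sigma^{-1}M$ on the $(j,j)$ diagonal and $\Sigma^{-1}$ on the $(j+1,j+1)$ diagonal. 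Since no term couples $v_i$ and $v_j$ when $|i-j|\ge 2$, the matrix $\precision$ is block tridiagonal. The prior term $\frac{1}{2}|v_0-m_0|_{C_0}^2$ contributes $C_0^{-1}$ to the $(0,0)$ block and $C_0^{-1}m_0$ to $r_0$, while each observation term $\frac{1}{2}|y_{j+1}-Hv_{j+1}|_\Gamma^2$ contributes $H^T\Gamma^{-1}H$ to the $(j+1,j+1)$ diagonal and $H^T\Gamma^{-1}y_{j+1}$ to $r_{j+1}$. Summing these contributions and treating the three boundary cases separately (namely $v_0$ has no incoming $\Sigma^{-1}$ and no observation term; interior $v_j$ receive $\Sigma^{-1}+M^T\Sigma^{-1}M+H^T\Gamma^{-1}H$; and $v_J$ has no outgoing $M^T\Sigma^{-1}M$ term) reproduces exactly the stated diagonal, off-diagonal, and right-hand-side formulae.

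For (ii), positive-definiteness, I would argue directly from the quadratic form rather than from the assembled blocks. The homogeneous quadratic part of $2\J$ is
\[
\langle\V,\precision\V\rangle = |v_0|_{C_0}^2+\sum_{j=0}^{J-1}|v_{j+1}-Mv_j|_\Sigma^2+\sum_{j=0}^{J-1}|Hv_{j+1}|_\Gamma^2,
\]
a sum of nonnegative terms since $C_0,\Sigma,\Gamma>0$. If it vanishes, the first term forces $v_0=0$ and the dynamics terms force $v_{j+1}=Mv_j$ for every $j$; an induction then gives $v_1=Mv_0=0,\ldots,v_J=0$, so $\V=0$. Hence $\precision>0$. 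Notably the observation terms are not even needed for this conclusion, the prior and dynamics terms already pinning down the whole trajectory.

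I expect the only real obstacle to be careful index bookkeeping at the two ends of the chain, ensuring that $v_0$ picks up $C_0^{-1}$ but no predecessor dynamics block, and that $v_J$ picks up the observation and incoming-dynamics contributions but no outgoing $M^T\Sigma^{-1}M$ term. The remaining computations are the routine expansion and coefficient-matching already used twice earlier in the notes.
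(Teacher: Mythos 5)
Your proposal is correct and follows essentially the same route as the paper's proof: identify the blocks of $\precision$ as the Hessian blocks of $\J$ (equivalently by coefficient matching), read off $r$ from the linear part, and establish $\precision>0$ by noting that the homogeneous quadratic form is a sum of nonnegative terms whose vanishing forces $v_0=0$ and $v_{j+1}=Mv_j$, hence $\V=0$. Your added observation that the observation terms are not needed for positive-definiteness is accurate and consistent with the paper's argument.
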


\begin{proof}
We may write $\J(\V)=\frac{1}{2}\vert \precision^{1/2}(\V-m)\vert^2 +q$ with $q$ independent of $\V$, by definition. Note that  $\precision$ is then the Hessian of $\J(\V)$, and differentiating in equation \eqref{eq:jformula} we obtain that 
\begin{align*}
\precision_{0,0}&= \partial_{v_0}^2\J(\V)=C_0^{-1} + M^\top\Sigma^{-1}M,\\
\precision_{j,j}&= \partial_{v_j}^2\J(\V)=\Sigma^{-1} + M^\top\Sigma^{-1}M+H^\top\Gamma^{-1}H,\\
\precision_{J,J}&= \partial_{v_J}^2\J(\V)=\Sigma^{-1} +H^\top\Gamma^{-1}H,\\
 \precision_{j-1,j} &=\partial_{v_{j-1},v_{j}}^2\J(\V)=-\Sigma^{-1}M. 
\end{align*}
Otherwise, for all other values of indices $\{k,l\}$, $\precision_{k,l}=0.$ This proves that the matrix $\precision$ has a block tridiagonal structure.

Now we focus on finding $m$. We have that $\nabla_\V\J(\V)=\precision(\V-m)$, so that $-\nabla_\V\J(\V)\vert_{\V=0}=\precision m$. Thus, we find $r$ as
\begin{align*}
r_0=&-\nabla_{v_0}\J(\V)\vert_{\V=0}=-(-C_0^{-1}m_0)=C_0^{-1}m_0,\\
r_j=&-\nabla_{v_j}\J(\V)\vert_{\V=0}=-(-H^\top\Gamma^{-1}y_j)=H^\top\Gamma^{-1}y_j.
\end{align*}
We have shown that $\precision$ is symmetric and that  $\precision\ge 0$; to prove that $\precision$ is a 
precision matrix, we need to show that $\precision>0.$ Take, for the sake of argument, $\Y = 0$ and $m_0=0$ in equation \eqref{eq:jformula}, so that every term in the expansion of $\J(\V)$ involves $\V$. It is evident that in such case $\J(\V) = \V^\top\precision \V$. Suppose that $\V^\top\precision \V = 0$ for some nonzero $\V$. Then by \index{positive definite} positive-definiteness of $C_0, \Sigma$, and $\Gamma$, it must be that $v_0 = 0$ and $v_{j+1} = M v_j$ for $j = 0,1,\dots,J$. Thus, we must have $\V = 0$. This proves that $\precision$ is \index{positive definite}positive definite.
\end{proof}
\begin{remark}
Since the \index{smoothing!distribution}smoothing distribution in the \index{linear-Gaussian setting}linear-Gaussian setting is itself \index{Gaussian}Gaussian, its mean agrees with its mode. Therefore, the \index{posterior!mean estimator}posterior mean found above is the unique minimizer of $\J(\V),$ that is, the \index{MAP estimator}MAP estimator.  
\end{remark} 
\subsection{\index{Kalman smoother}Kalman Smoother: Solution of the Linear System}
The mean of the \index{Kalman smoother}Kalman smoother may be obtained by \index{Gaussian!elimination}Gaussian elimination, as summarized in the following algorithm.

\FloatBarrier
\begin{algorithm}
\caption{\label{algKS} \index{Kalman smoother}Kalman Smoother by \index{Gaussian!elimination}Gaussian Elimination}
\begin{algorithmic}[1]
\vspace{0.1in}
\STATE {\bf Input}: Initial distribution $\post_0 = \Nc(m_0,C_0)$ with $m_0 \in \R^{\Du},$ $C_0  \in \R^{\Du \times \Du}.$ \\
\vspace{.04in}
\STATE {\bf Row reduction}: 
Define a matrix sequence $\{\precision_j\}$:
\begin{align}
\precision_0 &=\precision_{0,0}, \nonumber\\
\precision_{j+1} &=\precision_{j+1,j+1}-  M^\top \Sigma^{-1} \precision_j^{-1} \Sigma^{-1} M,  \quad j=0,\dots, J-1;
\label{eqn:Ljm}
\end{align}
and vector sequence $\{z_j\}$: 
\begin{align*}
z_0&= C_0^{-1}m_0, \\
z_{j+1}&= H^\top\Gamma^{-1}y_{j+1}-  M^\top \Sigma^{-1}  \precision_j^{-1}z_j.
\end{align*}

\vspace{.04in}
\STATE{{{\bf Back-substitution}}}: 
Read off $m_J$ by solving the equation $\precision_Jm_J=z_J.$ 
Perform back-substitution to obtain  
$$\precision_jm_j=z_j-\precision_{j,j+1}m_{j+1}, \quad j=J-1,\ldots,1.$$
\STATE{\bf Output}: Mean $m = \{ m_j \}_{j = 0}^J$ of the \index{Kalman smoother}Kalman smoother.
\end{algorithmic}
\end{algorithm}
\FloatBarrier

Note that $m_J$ found this way coincides with the mean of the Kalman
filter at $j=J.$ The rest of this chapter is devoted to proving the following proposition: 
\begin{proposition}
The matrices $\{\precision_j\}$ in \eqref{eqn:Ljm} are \index{positive definite}positive definite. 
\end{proposition}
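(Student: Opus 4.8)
The plan is to recognize the sequence $\{\precision_j\}$ as the pivots produced by block Gaussian elimination of $\precision$ — equivalently, the successive Schur complements — and then to leverage the positive-definiteness of $\precision$ established in the preceding theorem. The one external ingredient I would use is the standard Schur complement characterization of positive definiteness: if a symmetric matrix is written in block form with leading block $A$, then it is positive definite if and only if $A>0$ and the Schur complement $D-BA^{-1}B^T$ is positive definite; in particular, eliminating a positive-definite leading block from a positive-definite matrix yields a positive-definite Schur complement. I would invoke this repeatedly.

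First I would set up an induction on the elimination. Write $\precision^{[0]} := \precision$, which is positive definite by the previous theorem and block tridiagonal on $(v_0,\ldots,v_J)$. The inductive hypothesis is that, after eliminating $v_0,\ldots,v_{j-1}$, one arrives at a positive-definite, block tridiagonal matrix $\precision^{[j]}$ acting on $(v_j,\ldots,v_J)$ whose blocks coincide with those of $\precision$ except that its leading $(j,j)$ block is the pivot $\precision_j$. The base case is immediate: $\precision^{[0]}=\precision$ and $\precision_0=\precision_{0,0}$ is positive definite, being a diagonal block (hence a principal submatrix) of $\precision>0$.

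Next I would perform one elimination step, splitting $\precision^{[j]}$ into its leading block $\precision_j$, the coupling block $B$ below it, and the trailing block $D$. Because $\precision^{[j]}$ is block tridiagonal, $B$ has only its top block nonzero, namely $\precision_{j+1,j}=-\Sigma^{-1}M$, so the correction $B\precision_j^{-1}B^T$ in the Schur complement perturbs only the top-left block of $D$, replacing $\precision_{j+1,j+1}$ by $\precision_{j+1,j+1}-\precision_{j+1,j}\precision_j^{-1}\precision_{j,j+1}=\precision_{j+1,j+1}-\Sigma^{-1}M\precision_j^{-1}M^T\Sigma^{-1}=\precision_{j+1}$, while leaving every other block, and the tridiagonal pattern itself, unchanged. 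Thus the Schur complement is exactly $\precision^{[j+1]}$; it is block tridiagonal with leading block $\precision_{j+1}$, and it is positive definite by the Schur complement fact, since $\precision^{[j]}>0$ and the pivot $\precision_j>0$. Finally $\precision_{j+1}$, being a diagonal block of the positive-definite $\precision^{[j+1]}$, is itself positive definite, which closes the induction.

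The only genuinely delicate point is the structural bookkeeping in the elimination step: one must verify that the tridiagonal sparsity forces the Schur complement to alter a single diagonal block and to reproduce exactly the stated recursion, so that the reduced matrix is again block tridiagonal and may be fed back into the induction. Once this observation is secured, positivity follows at once from the Schur complement characterization together with $\precision>0$. In writing it up I would also flag that the correction's left factor is $\Sigma^{-1}M$, arising from $\precision_{j+1,j}=\precision_{j,j+1}^{T}=-\Sigma^{-1}M$, rather than $\Sigma^{-1}M^T$.
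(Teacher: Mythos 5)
Your proof is correct and is essentially the paper's argument: the paper carries out exactly this block Gaussian elimination, realizing each Schur-complement step as a congruence $B\precision B^{T}$ with a unit block lower-triangular $B$ and then invoking the lemma that diagonal blocks of a positive-definite matrix are positive definite, which is precisely the Schur-complement fact you cite as a black box. Your closing remark about the left factor is also a correct catch: since $\precision_{j+1,j}=\precision_{j,j+1}^{T}=-\Sigma^{-1}M$, the correction term should read $\Sigma^{-1}M\precision_j^{-1}M^{T}\Sigma^{-1}$, so the recursion as printed in \eqref{eqn:Ljm} contains a transpose typo.
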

\begin{proof}
The proof of this theorem relies on the following two lemmas:
\begin{lemma}\label{lemm1}
If $$X:=\begin{bmatrix}
X_1 &  \times& \times & \times \\ 
\times & X_2 & \times & \times \\ 
\times & \times & ... &\times  \\ 
\times & \times & \times & X_\Du
\end{bmatrix}$$ is \index{positive definite}positive definite, then $X_\du$ is
\index{positive definite}positive definite for all $i \in \{1,\dots, \Du\}.$
\end{lemma}
\begin{lemma} \label{lemm2}
Let $B$ be a block  lower (or upper) triangular matrix with identity on the 
diagonal. Then $B$ is an invertible matrix.
\end{lemma}
Using Lemma \ref{lemm1}, we deduce that $\precision_0=\precision_{0,0}$ is \index{positive definite}positive definite. Consider the matrix $B\in\R^{\Du(J+1)\times \Du(J+1)}$ defined as
$$B= \begin{bmatrix}
 I & 0 & & 0 \\ 
 -\precision_{1,0}\precision_0^{-1} & I & ... & ... \\ 
 0 &  & ... & 0 \\ 
  &  & 0 & I
 \end{bmatrix}.$$
We compute 
$$B\precision B^\top= \begin{bmatrix}
\precision_0 & 0 & & &0 \\
0 & \precision_1 & \precision_{1,2} & ... & 0\\
& \precision_{2,1} & \precision_{2,2} & ... &\\
& & & & \\
0& & & \precision_{J-1,J-1} &\precision_{J-1,J} \\
0& & & \precision_{J,J-1}& \precision_{J,J}\\ \end{bmatrix}.$$
By Lemma \ref{lemm1}, the matrix
 $$\tilde{\precision}=\begin{bmatrix}
\precision_1 & \precision_{1,2} & ... & 0\\
 \precision_{2,1} & \precision_{2,2} & ... &\\
 & & & \\
 & & \precision_{J-1,J-1} &\precision_{J-1,J} \\
 & & \precision_{J,J-1}& \precision_{J,J}\\ \end{bmatrix}$$ is \index{positive definite}positive definite, 
and so is $\precision_0.$

Lemma \ref{lemm1} and the \index{positive definite}positive-definiteness of $\tilde{\precision}$ imply that $\precision_1$ is \index{positive definite}positive definite. Therefore, by Lemma \ref{lemm2} the matrix 
 $$B_2=\begin{bmatrix}
I& 0 & & 0\\
-\precision_{2,1}\precision_1^{-1}& I & ... & ...\\
& & ... &0\\
0& & & I\\
\end{bmatrix}$$
is invertible.
Thus, we have $$B_2 \tilde{\precision}B_2^\top = \begin{bmatrix}
\precision_1 & 0 & & & 0\\
0 & \precision_2 & \precision_{2,3} & ... & \\
& \precision_{3,2} & \precision_{3,3} & & 0\\
& & & & \\
& & & \precision_{J-1,J-1} &  \precision_{J-1,J} \\
0& & & \precision_{J,J-1}& \precision_{J,J} \\
\end{bmatrix}, 
$$
giving the \index{positive definite}positive-definiteness of $\precision_2$. Iterating the argument shows that all the $\precision_j$ are  \index{positive definite}positive definite.
\end{proof}

\section{Discussion and Bibliography}\label{sec:83}
The original paper of Kalman \cite{kalman1960new}, in which the
Kalman filter is derived, is arguably the 
first systematic presentation of a methodology to combine 
predictive models with data; it is noteworthy that Kalman did not
employ the Bayesian perspective to derive the filter which bears his
name, but rather invoked a minimum variance hypothesis. 
The continuous time analogue of the Kalman filter, which goes by the
name Kalman-Bucy filter and applies to stochastic differential
equations, may be found in \cite{kalman1961new}. 
We refer to \cite{rauch1965maximum,gelb1974applied,anderson1979optimal,law2015data,reich2015probabilistic,asch2016data,sarkka2013bayesian} for further background on the \index{linear-Gaussian setting}linear-Gaussian setting and for alternative derivations and expressions of the Kalman update formulae.

\index{Kalman filter}Kalman filters and smoothers are the cornerstones of numerous \index{data assimilation}data assimilation algorithms for \index{filtering}filtering and \index{smoothing}smoothing, some of which will be studied in the next two chapters. 
The book \cite{harvey1990forecasting} overviews the subject in the context of time-series analysis and economics.
 The optimality of the \index{Kalman filter}Kalman filter is described in \cite{anderson1979optimal}. 
The paper \cite{sanz2015long} contains an application of the
optimality property of the \index{Kalman filter}Kalman filter (which applies beyond
the \index{linear-Gaussian setting}linear-Gaussian setting to the mean of the \index{filtering!distribution}filtering distribution
in quite general settings).  A link between the standard implementation of the \index{Kalman smoother}Kalman smoother
and \index{Gauss-Newton}Gauss-Newton methods for \index{MAP estimator}MAP estimation is made in
\cite{bell1994iterated}. For further details on the
\index{Kalman smoother}Kalman smoother, in both discrete and continuous time,
see \cite{law2015data} and \cite{hairer2005analysis}. We refer to \cite{krishnan2017structured} for  a machine learning approach to learn linear (and nonlinear) \index{dynamics}dynamics and \index{data model}data models using deep learning.

 \chapter{\Large{\sffamily{Optimization for
Filtering and Smoothing: 3DVAR and 4DVAR}}}\label{ch:optfilteringsmoothing}

This chapter demonstrates the use of \index{optimization}optimization,
namely the \index{3DVAR}3DVAR and \index{4DVAR}4DVAR methodologies, to obtain information 
from the \index{filtering!distribution}filtering and \index{smoothing!distribution}smoothing distributions. 
We emphasize that the methods we present in this chapter do not provide
approximations of the \index{filtering!distribution}filtering and \index{smoothing!distribution}smoothing distributions;
they simply provide estimates of the \index{signal}signal, given data, in the
\index{filtering}filtering (on-line) and \index{smoothing}smoothing (off-line) data scenarios. 
Their relationship to the \index{filtering!distribution}filtering and \index{smoothing!distribution}smoothing distributions is analogous to the relationship of \index{MAP estimator}MAP estimation to 
the full \index{Bayesian}Bayesian \index{posterior}posterior distribution. 
In the previous chapter we showed how the mean of the Kalman
filter could be derived through an \index{optimization}optimization principle,
once the predictive covariance is known; this idea is generalized
to nonlinear \index{forward!model}forward models to obtain \index{3DVAR}3DVAR.
On the other hand, \index{4DVAR}4DVAR is defined directly as a \index{MAP estimator}MAP estimator.

Here ``VAR'' refers to \index{variational}variational, and encodes the concept of
\index{optimization}optimization. The 3D and 4D, respectively, refer to three Euclidean
spatial dimensions and to three Euclidean
spatial dimensions plus a time dimension; this nomenclature reflects
the historical derivation of these problems in the geophysical sciences,
but the specific structure of fields over three-dimensional Euclidean
space plays no role in the generalized form of the methods described
here. The key distinction is that \index{3DVAR}3DVAR solves a sequence of \index{optimization}optimization
problems at each point in time (hence is an on-line \index{filtering}filtering
method); in contrast, \index{4DVAR}4DVAR solves an \index{optimization}optimization
problem which involves data distributed over time (and is an off-line \index{smoothing}smoothing method).

This chapter is organized as follows. We introduce the problem setting in Section \ref{sec:91}. \index{3DVAR}3DVAR and \index{4DVAR}4DVAR are considered, in turn, in Sections \ref{sec:92} and \ref{sec:93}. Section \ref{sec:94} closes with extensions and bibliographical remarks. 

\section{The Setting}\label{sec:91}
\index{3DVAR}3DVAR borrows from the \index{Kalman filter}Kalman filter \index{optimization}optimization principle outlined
in Subsection \ref{sec:okf},
but substitutes a fixed given covariance for the predictive covariance.
Throughout we consider the setting, commonly occurring in applications, in which the \index{dynamics}dynamics model
is nonlinear, but the \index{observation!function}observation function is linear. We thus have
a discrete-time \index{dynamical system}dynamical system with noisy \index{state}state 
transitions and noisy \index{observation}observations given by
\begin{align*}
        \text{\index{stochastic dynamics model}Stochastic Dynamics Model:}  \quad v_{j+1} &= \Psi(v_j) + \xi_j, \quad j \in \Z^+. \\
        \text{\index{data model}Data Model:}  \quad y_{j+1} &= Hv_{j+1} + \eta_{j+1}, \quad j \in \Z^+, \text{ for some} \,\, H \in \R^{\Dy \times \Du}.  \\
        \text{Probabilistic Structure:}  \quad v_0 &\sim \Nc(m_0, C_0), \quad \xi_j \sim \Nc(0, \Sigma), \quad \eta_j \sim \Nc(0, \Gamma).\\
 \text{Probabilistic Structure:} \quad v_0 &\perp \{\xi_j\} \perp \{\eta_j\} \text{ independent.}
\end{align*}

\section{3DVAR}\label{sec:92}
We introduce \index{3DVAR}3DVAR by analogy with the update formula \eqref{eq:kmu}
for the \index{Kalman filter}Kalman filter, and its derivation through \index{optimization}optimization from
Subsection  \ref{sec:okf}. The primary differences between \index{3DVAR}3DVAR and the Kalman
filter mean update are that $\Psi(\cdot)$ can be nonlinear for \index{3DVAR}3DVAR, and
that for \index{3DVAR}3DVAR we have no closed update formula for the covariances.
To deal with this second issue, \index{3DVAR}3DVAR uses a fixed predicted covariance,
independent of time $j$, and pre-specified.
The resulting minimization problem, and
its solution, is described in Table \ref{3DVAR:t1}, making the
analogy with the \index{Kalman filter}Kalman filter.

\begin{table}
\bgroup
\def\arraystretch{1.5}%
	\begin{tabular}{ | c | c |}
		\hline
		\index{Kalman filter}Kalman Filter & 3DVAR  \\ \hline
		$m_{j+1} = \arg \min_\varg \J(\varg)$ & $m_{j+1} = \arg \min_\varg \J(\varg)$  \\ \hline
		$\J(\varg) = \frac{1}{2} |y_{j+1} - H \varg|_{\Gamma}^2 + \frac{1}{2} |\varg - \hat{m}_{j+1}|_{\hat{C}_{j+1}}^2$  &  $\J(\varg) = \frac{1}{2} |y_{j+1} - H \varg|_{\Gamma}^2 + \frac{1}{2} |\varg - \hat{m}_{j+1}|_{\hat{C}}^2$ \\
		\hline
		$\hat{m}_{j+1} = M m_j$ & $\hat{m}_{j+1} = \Psi(m_j)$ \\ \hline
		$m_{j+1} = (I-K_{j+1}H)\hat{m}_{j+1} + K_{j+1} y_{j+1}$ & $m_{j+1} = (I-KH) \hat{m}_{j+1} + K y_{j+1}$\\ \hline
	\end{tabular}
	\egroup
	\caption{Comparison of \index{Kalman filter}Kalman filter and \index{3DVAR}3DVAR update formulae.}		
		\label{3DVAR:t1}
\end{table}

Note that the minimization
itself is of a quadratic functional, and so may be solved by means of
linear algebra. The constraint formulation used for the \index{Kalman filter}Kalman filter,
in Subsection  \ref{sec:okf}, may also be applied and used to derive the mean
update formula.

\subsection{3DVAR: Algorithmic Implementation}
The  \index{3DVAR}3DVAR \index{filtering}filtering method is fully described in the following algorithm.
\FloatBarrier
\begin{algorithm}
\caption{\label{alg3DVAR} 3DVAR}
\begin{algorithmic}[1]
\vspace{0.1in}
\STATE {\bf Input}:  Initial mean $m_0 \in \R^d$ and fixed predictive covariance $\hat{C} \in \R^{d\times d}.$ 
\STATE For $j = 0, 1, \ldots, J-1$ do the following \index{prediction}prediction and \index{analysis}analysis steps:
\STATE {\bf \index{prediction}Prediction}: 
\begin{align}\label{eq:pred3DVAR}
\hat{m}_{j+1} &= \Psi(m_j). 
\end{align}
\STATE{{{\bf \index{analysis}Analysis}}}: 
	\begin{align}\label{eq:an3DVAR}
	\begin{split}
		m_{j+1} &= (I-KH)\hat{m}_{j+1} + K y_{j+1}.	
		\end{split}
	\end{align}
\STATE{\bf Output}: Estimates $\{m_j\}_{j=1}^J$ of the \index{signal}signal.
\end{algorithmic}
\end{algorithm}
\FloatBarrier

The \index{Kalman gain}Kalman gain $K$ for \index{3DVAR}3DVAR is fixed, because the predicted covariance
$\hat{C}$ is fixed. Precisely we have, by analogy with the \index{Kalman filter}Kalman filter, the following formulae for the \index{3DVAR}3DVAR gain matrix $K:$
\begin{align*}
                S &= H \hat{C} H^\top + \Gamma, \\
                K &= \hat{C} H^\top S^{-1}.
        \end{align*}
The method also delivers an implied \index{analysis}analysis covariance $C= (I - K H) \hat{C}.$ 
Note that the resulting algorithm which maps $m_j$ to $m_{j+1}$
may be specified directly in terms of the gain $K$, without
need to introduce $\hat{C}, C,$ and $S$. In the remainder of this
section we simply view $K$ as fixed and given.
In this setting we show that the \index{3DVAR}3DVAR algorithm produces accurate \index{state}state estimation
under vanishing noise assumptions in the \index{dynamics}dynamics/\index{data model}data model.

\subsection{3DVAR: Long-Time Accuracy}
We will make the following assumptions on the \index{dynamics}dynamics/\index{data model}data model:

\begin{assumption}
\label{a:3dvar}
Consider the \index{stochastic dynamics model}dynamics/\index{data model}data model under the assumptions that
$\xi_j \equiv 0, \Gamma = \gamma^2 \Gamma_0, |\Gamma_0| = 1$
and assume that the data $y_{j+1}$ used in the \index{3DVAR}3DVAR
algorithm is found from observing
a true \index{signal}signal $v_j^{\dagger}$ given by
\begin{align*}
       \index{dynamics} \emph{Dynamics Model:}  \quad v_{j+1}^{\dagger} &= \Psi(v_j^{\dagger}), \quad j \in \Z^+. \\
        \index{data model}\emph{Data Model:}  \quad y_{j+1} &= Hv_{j+1}^{\dagger} + \gamma  \eta^{\dagger}_{j+1,0}, \quad j \in \Z^+.
\end{align*}
\end{assumption}

With this assumption of noise-free \index{dynamics!deterministic}dynamics ($\xi_j \equiv 0$)
we deduce that the \index{3DVAR}3DVAR filter produces output which, 
asymptotically, has an error of the same size as the \index{observation!noise}observational noise
error $\gamma.$ The key additional assumption in the theorem that
allows this deduction is
a relationship between the \index{Kalman gain}Kalman gain $K$ and
the derivative $D\Psi(\cdot)$ of the \index{dynamics}dynamics model. 
Encoded in the assumption are two ingredients: that the \index{observation!function}observation
function $H$ is rich enough in principle to learn enough components
of the system to synchronize the whole system; and that $K$ is designed
cleverly enough to effect this synchronization. The proof of the
theorem is simply using these two ingredients and then controlling 
the small stochastic perturbations, arising from noisy \index{observation}observations in Assumption \ref{a:3dvar}.

\begin{theorem}[Accuracy of 3DVAR]
Let Assumption \ref{a:3dvar} hold with $\eta^{\dagger}_{j,0} \sim \Nc(0,\Gamma_0)$ an \index{i.i.d.}i.i.d. sequence. Assume 
that, for the gain matrix $K$ appearing in the \index{3DVAR}3DVAR method,
there exists a norm $\lVert \cdot \rVert$ on $\Ru$ and constant $\lambda \in (0,1)$ such that, for all $v \in \Ru,$ 
$$\lVert (I-KH)D \Psi (v) \rVert \leq \lambda.$$
Then, there is a constant $c>0$ such that the \index{3DVAR}3DVAR algorithm satisfies the following large-time asymptotic
error bound: 
$$\limsup_{j\to\infty}  \Expect[\lVert m_j - v^{\dagger}_j \rVert] \leqslant \frac{c\gamma}{1-\lambda},$$
where the expectation is taken with respect to the sequence $\{\eta^{\dagger}_{j,0}\}.$ 
\end{theorem}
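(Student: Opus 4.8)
The plan is to track the error $e_j := m_j - v_j^\dagger$ and show it obeys a contractive affine recursion driven by the observational noise. First I would substitute the data model $y_{j+1} = H\Psi(v_j^\dagger) + \gamma\eta^\dagger_{j+1,0}$ (valid because $\xi_j \equiv 0$, so $v_{j+1}^\dagger = \Psi(v_j^\dagger)$) into the 3DVAR update $m_{j+1} = (I-KH)\Psi(m_j) + Ky_{j+1}$. Writing the true signal as $v_{j+1}^\dagger = (I-KH)\Psi(v_j^\dagger) + KH\Psi(v_j^\dagger)$ and subtracting, the $KH\Psi(v_j^\dagger)$ terms cancel, leaving the clean identity
\[
e_{j+1} = (I-KH)\bigl(\Psi(m_j) - \Psi(v_j^\dagger)\bigr) + \gamma K\eta^\dagger_{j+1,0}.
\]

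Next I would control the nonlinear increment by the fundamental theorem of calculus, writing $\Psi(m_j) - \Psi(v_j^\dagger) = \left(\int_0^1 D\Psi(v_j^\dagger + se_j)\, ds\right)e_j$. Multiplying by $(I-KH)$, moving the norm inside the integral, and invoking the standing hypothesis $\|(I-KH)D\Psi(v)\| \leq \lambda$ (read as the operator norm subordinate to $\|\cdot\|$, uniformly in $v$) yields the deterministic pathwise bound
\[
\|e_{j+1}\| \leq \lambda\|e_j\| + \gamma\|K\eta^\dagger_{j+1,0}\|.
\]

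I would then take total expectation of this inequality and set $c := \Expect[\|K\eta^\dagger_{0}\|]$, which is finite and independent of $j$ since the $\eta^\dagger_{j,0}$ are i.i.d.\ Gaussian; note that independence is not even needed here, only linearity and monotonicity of $\Expect$. This produces the scalar recursion $\Expect[\|e_{j+1}\|] \leq \lambda\Expect[\|e_j\|] + c\gamma$. Iterating gives $\Expect[\|e_j\|] \leq \lambda^j\Expect[\|e_0\|] + c\gamma\sum_{k=0}^{j-1}\lambda^k$; bounding the geometric sum by $(1-\lambda)^{-1}$ and passing to $\limsup_{j\to\infty}$, the transient term $\lambda^j\Expect[\|e_0\|]$ vanishes because $\lambda \in (0,1)$, leaving the claimed bound $c\gamma/(1-\lambda)$.

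The only genuinely delicate point is the interchange of norm and integral: one must justify $\left\|\int_0^1 A(s)e_j\, ds\right\| \leq \int_0^1 \|A(s)\|\, ds\,\|e_j\| \leq \lambda\|e_j\|$, where $A(s) = (I-KH)D\Psi(v_j^\dagger + se_j)$, which requires interpreting the hypothesis as a uniform bound on the induced operator norm. Everything else is routine: the cancellation of the $KH$ terms is purely algebraic, and the finiteness of $c$ follows from integrability of the norm of a Gaussian vector.
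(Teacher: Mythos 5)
Your proposal is correct and follows essentially the same route as the paper's proof: the same cancellation of the $KH\Psi(v_j^\dagger)$ terms to obtain $e_{j+1} = (I-KH)\bigl(\Psi(m_j)-\Psi(v_j^\dagger)\bigr) + \gamma K\eta^\dagger_{j+1,0}$, the same integral mean-value representation (your $v_j^\dagger + se_j$ is the paper's $sm_j + (1-s)v_j^\dagger$), and the same expectation-plus-discrete-Gronwall iteration. Your explicit remark about interpreting the hypothesis as a uniform bound on the induced operator norm so that the norm can be moved inside the integral is a point the paper uses without comment.
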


\begin{proof}
We have
	\begin{align*}
		v^{\dagger}_{j+1} &= \Psi(v^{\dagger}_{j}), \\
m_{j+1} &= (I-KH)\Psi(m_j) + Ky_{j+1},
\end{align*}
and hence that
\begin{align*}
	v^{\dagger}_{j+1} &= (I-KH)\Psi(v^{\dagger}_{j}) + KH\Psi(v^{\dagger}_{j}),\\	
	m_{j+1}	&= (I-KH)\Psi(m_j) + KH\Psi(v^{\dagger}_{j}) + \gamma K\eta^{\dagger}_{j+1,0}.
	\end{align*}
Define $e_j=m_j-v_j^{\dagger}.$ By subtracting the evolution
equation for $v_j^{\dagger}$ from that for $m_j$ we obtain, using
the mean value theorem, 
\begin{align*} \label{3DVAR:eq3}
	e_{j+1} &= m_{j+1} - v^{\dagger}_{j+1} & \\
	&= (I-KH) \bigl(\Psi(m_j)-\Psi(v^{\dagger}_j)\bigr) + \gamma K \eta^{\dagger}_{j+1,0} & \\
	&= \left( (I-KH) \int_0^1  {D} \Psi \left(s m_j + (1-s)v_j^\dagger \right) ds \right) e_j + \gamma K \eta_{j+1,0}^\dagger. && 
\end{align*}

\noindent As a result, by the triangle inequality,
\begin{align*} 
	\lVert e_{j+1} \rVert  & \leq  \left\lVert \left(  \int_0^1 (I-KH)D \Psi \left(s m_j + (1-s)v_j^\dagger \right) ds \right) e_j \right\rVert + \lVert \gamma K \eta_{j+1,0}^\dagger \rVert \\ 
	& \leq   \left(  \int_0^1 \left\lVert (I-KH)D \Psi \left(s m_j + (1-s)v_j^\dagger \right)  \right\rVert ds \right)  \lVert e_j \rVert  + \lVert \gamma K \eta_{j+1,0}^\dagger \rVert &  \\
	& \leq \lambda \lVert e_j \rVert + \gamma \lVert K \eta_{j+1,0}^{\dagger} \rVert. &
\end{align*}
Taking expectations on both sides, we obtain, for $c:= \Expect[\lVert K \eta_{j+1,0}^{\dagger} \rVert] > 0$, 
\begin{equation} \label{3DVAR:eq5}
	\begin{split}
		\Expect [\lVert e_{j+1} \rVert] & \leq \lambda \Expect [\lVert e_j \rVert] + \gamma \Expect [\lVert K \eta_{j+1,0}^{\dagger} \rVert] \\
		& \leq \lambda \Expect [\lVert e_j \rVert] + \gamma c.
	\end{split}
\end{equation}
Using the  \index{discrete Gronwall inequality}discrete Gronwall inequality of Theorem \ref{t1.6} we have that:
\begin{equation} \label{3DVAR:eq6}
	\begin{split}
		\Expect [\lVert e_j \rVert] & \leq \lambda^j \Expect [\lVert e_0 \rVert] + \sum_{i=0}^{j-1} c \lambda^i \gamma \\
		& \leq \lambda^j \Expect [\lVert e_0 \rVert] +  c \gamma \frac{1-\lambda^j}{1-\lambda},
	\end{split}
\end{equation}
where $e_0 =  m_0 - v_0$. Since $\lambda<1$, the desired statement follows.
\end{proof}

\section{4DVAR}\label{sec:93}

Recall that \index{3DVAR}3DVAR differs from \index{4DVAR}4DVAR because, whilst also based on an
\index{optimization}optimization principle, \index{4DVAR}4DVAR is applied in a distributed fashion over
all data in the time interval $j=1,\ldots, J$; in contrast \index{3DVAR}3DVAR is 
applied sequentially from time $j-1$ to time $j,$ for $j=1,\ldots, J$.
We consider two forms of the methodology: {\em \index{4DVAR!weak constraint}weak constraint 4DVAR (w4DVAR)}, in which the fact that the \index{dynamics}dynamics model contains randomness is accounted 
for in the \index{optimization}optimization; and \index{4DVAR}{\em 4DVAR} (sometimes known as {\em \index{4DVAR!strong constraint}strong constraint 4DVAR}), which can be derived from \index{4DVAR!weak constraint}w4DVAR in the limit of $\Sigma \to 0$ (no randomness in the \index{dynamics!deterministic}dynamics). 

The \index{objective}objective function minimized in \index{4DVAR!weak constraint}w4DVAR is
\begin{equation}\label{eq:objectivew4dvar}
	\J(\V) = \frac{1}{2}|v_0 - m_0|^2_{C_0} + \frac{1}{2}  \sum^{J-1}_{j=0}  |v_{j+1} - \Psi(v_j)|^2_{\Sigma} +  \frac{1}{2} \sum^{J-1}_{j=0} |y_{j+1} - H{v_{j+1}}|^2_{\Gamma},
\end{equation}
where $\V=\{v_{j}\}^{J}_{j=0} \in \R^{\Du(J+1)}$, 
$\Y=\{y_{j}\}^{J}_{j=1} \in \R^{\Dy J}$, $v_j \in \Ru$, $y_j \in \Ry$, $H$
is the \index{observation!function}observation function, $\Sigma$ is the random \index{dynamical system}dynamical system covariance, $\Gamma$ is the data noise covariance, and $m_0$ and $C_0$ are the mean and
covariance of the initial \index{state}state. 
The three terms in the \index{objective}objective function enforce, in turn, information about the initial
condition $v_0$, the \index{dynamics}dynamics model, and the \index{data model}data model. Note that, because
$\Psi$ is nonlinear, the \index{objective}objective is not quadratic and cannot be optimized
in closed form.  Implementation of the \index{4DVAR}4DVAR \index{smoothing}smoothing algorithm involves therefore using a suitable numerical 
\index{optimization}optimization algorithm; a brief discussion  of some guiding principles for the construction of gradient-based \index{optimization}optimization methods can be found in Chapter \ref{chap:optimization}, but whole books are devoted to this subject. 
 In contrast, each step of \index{3DVAR}3DVAR requires solution of a
quadratic \index{optimization}optimization problem, tractable in closed form. 

\begin{theorem}[Minimizer Exists for \index{4DVAR!weak constraint}w4DVAR]
Assume that $\Psi$ is bounded and continuous. Then $\J$ has a minimizer, which is a \index{MAP estimator}MAP estimator for the \index{smoothing}smoothing problem.
\end{theorem}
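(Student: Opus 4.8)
The plan is to verify the three hypotheses of Theorem \ref{thm:achievable-obj} for $\J$ viewed as a function on the Euclidean space $\R^{\Du(J+1)}$, and then invoke that theorem to obtain a minimizer. Since the smoothing posterior satisfies $\Pi(\V) \propto \exp\bigl(-\J(\V)\bigr)$ (as derived when the smoothing problem was formulated, with $\reg(\V) + \loss(\V;\Y) = \J(\V)$ in the linear-observation case $h = H$), any minimizer of $\J$ is a maximizer of $\Pi$, hence by definition a MAP estimator for the smoothing problem. The proof of Theorem \ref{thm:achievable-obj} uses only that the domain is a finite-dimensional Euclidean space, so it applies verbatim with $\R^{\Du(J+1)}$ as the parameter space. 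Thus the whole task reduces to checking that $\J$ is non-negative, continuous, and coercive, i.e. $\J(\V)\to\infty$ as $|\V|\to\infty$.

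Non-negativity is immediate since $\J$ is a sum of squared weighted norms. Continuity follows term by term: the initial and data-fit terms are continuous, and each dynamics term $\tfrac12|v_{j+1}-\Psi(v_j)|_\Sigma^2$ is continuous precisely because $\Psi$ is assumed continuous. The main work, and the only place boundedness of $\Psi$ enters, is coercivity. Writing $\Psi_{\max} := \sup_v |\Psi(v)| < \infty$ and using positive-definiteness of $C_0, \Sigma$ to get $|w|_{C_0}^2 \ge |w|^2/\lambda_{\max}(C_0)$ and $|w|_\Sigma^2 \ge |w|^2/\lambda_{\max}(\Sigma)$, I would establish the coordinatewise lower bounds (with $(\cdot)_+$ the positive part)
\[
\J(\V) \ge \frac{1}{2\lambda_{\max}(C_0)}\bigl(|v_0|-|m_0|\bigr)_+^2, \qquad \J(\V) \ge \frac{1}{2\lambda_{\max}(\Sigma)}\bigl(|v_k|-\Psi_{\max}\bigr)_+^2, \quad 1\le k\le J.
\]
The first follows by discarding all but the initial term and applying the reverse triangle inequality; the $k$-th follows by discarding all but the single dynamics term $\tfrac12|v_k-\Psi(v_{k-1})|_\Sigma^2$ and using $|v_k-\Psi(v_{k-1})| \ge |v_k| - \Psi_{\max}$. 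The crucial structural point is that these bounds collectively control every coordinate: the initial term pins down $v_0$, and each dynamics term pins down $v_k$ because the boundedness of $\Psi$ prevents $\Psi(v_{k-1})$ from ever being large.

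To conclude, given any $\V$ let $k^\star$ maximize $|v_k|$, so that $|v_{k^\star}| \ge |\V|/\sqrt{J+1}$. Applying whichever of the displayed bounds corresponds to $k^\star$ shows $\J(\V) \ge \psi(|\V|)$ for some function $\psi$ with $\psi(r)\to\infty$ as $r\to\infty$, which gives coercivity. With all three hypotheses verified, Theorem \ref{thm:achievable-obj} yields the existence of a minimizer and the proof is complete. I anticipate no genuine obstacle beyond cleanly organizing the coercivity estimate; the one point requiring care is to keep the two roles of the hypotheses straight — continuity of $\Psi$ is what gives continuity of $\J$, whereas boundedness of $\Psi$ is exactly what makes the dynamics penalty coercive in each $v_k$.
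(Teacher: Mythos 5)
Your proposal is correct and follows essentially the same route as the paper: both reduce the statement to the three hypotheses of Theorem \ref{thm:achievable-obj} and both exploit the boundedness of $\Psi$ to make the dynamics penalty coercive in each $v_k$ with $k\ge 1$, with the initial term controlling $v_0$. The only difference is cosmetic — the paper keeps all terms and expands the squares to get a single lower bound of the form $\frac12|v_0|_{C_0}^2-2|v_0|_{C_0}|m_0|_{C_0}+\frac12\sum_j\bigl(|v_{j+1}|_\Sigma^2-2R|v_{j+1}|_\Sigma\bigr)$, whereas you discard all but one term and track the largest coordinate; both yield the same conclusion.
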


\begin{proof}
Recall Theorem \ref{thm:achievable-obj}, which shows that the \index{MAP estimator}MAP estimator based on the \index{smoothing!distribution}smoothing distribution $\Prob(\V|\Y) \propto \exp\bigl( -\J(\V) \bigr)$ is attained provided that $\J$ is guaranteed to be non-negative, continuous, and satisfy $\J(\V) \to \infty$ as $|\V| \to \infty.$ Now, the \index{objective}objective $\J$ defined by equation \eqref{eq:objectivew4dvar} is clearly non-negative, and it is continuous since $\Psi$ is assumed to be continuous. It remains to show that $\J(\V) \to \infty$ as $|\V| \to \infty.$
Let $R$ be a bound for $\Psi,$ so that $|\Psi(v_j)|_\Sigma \le R$ for all $v_j \in \Ru.$ 
Then, since 
$$\J(\V) \ge \frac{1}{2}|v_0|_{C_0}^2 - |v_0|_{C_0} |m_0|_{C_0}  + \frac{1}{2} \sum_{j=0}^{J-1} \Bigl( |v_{j+1} |^2_\Sigma 
- 2R|v_{j+1}|_{\Sigma}  \Bigr),$$
it follows that $\J(\V) \to \infty$ as $|\V| \to \infty$ and the proof is complete.  
\end{proof}

We now consider the vanishing dynamical noise limit of \index{4DVAR!weak constraint} w4DVAR.
This is to minimize 
\begin{equation*}
	\J_0(\V) = \frac{1}{2}|v_0 - m_0|^2_{C_0} +  \frac{1}{2}  \sum^{J-1}_{j=0} |y_{j+1} - H{v_{j+1}}|^2_{\Gamma}
\end{equation*}
subject to the hard constraint that
$$v_{j+1}=\Psi(v_j), \quad j=0,\ldots, J-1.$$
This is \index{4DVAR}4DVAR.
Note that by using the constraint, \index{4DVAR}4DVAR can be written as a minimization over
$v_0$, rather than over the entire sequence $\{v_j\}_{j=0}^{J}$ as is required
in \index{4DVAR!weak constraint}w4DVAR.

We let $\J_{\sigma}$ denote the \index{objective}objective function $\J$ from \index{4DVAR!weak constraint}w4DVAR
in the case where $\Sigma$ is replaced by  $\sigma^2\Sigma_0$.
Roughly speaking, the following result shows that minimizers of $\J_{\sigma}$ 
converge as $\sigma \to 0^+$ to points in ${\R}^{\Dy(J+1)}$ which
satisfy the hard constraint associated with \index{4DVAR}4DVAR. 

\begin{theorem}[Small \index{signal}Signal Noise Limit of \index{4DVAR!weak constraint}w4DVAR]
Suppose that $\Psi$ is bounded and continuous and let $\V^{\sigma}$ be a minimizer of $\J_{\sigma}.$ 
Then as $\sigma \to 0^+$ there is a convergent 
subsequence of $\V^{\sigma}$ with 
limit $\V^*$ satisfying $v^*_{j+1} = \Psi(v^*_j)$. 
\end{theorem}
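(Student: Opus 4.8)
The plan is to combine a $\sigma$-uniform energy bound with a compactness argument, using boundedness of $\Psi$ in an essential way. First I would fix a reference trajectory $\bar{\V}=\{\bar v_j\}_{j=0}^J$ defined by $\bar v_0 = m_0$ and $\bar v_{j+1}=\Psi(\bar v_j)$. This sequence satisfies the hard constraint exactly, so the dynamics (middle) term of $\J_\sigma$ vanishes on it, and since $|\bar v_0 - m_0|_{C_0}=0$ we obtain
$$\J_\sigma(\bar{\V}) = \frac{1}{2}\sum_{j=0}^{J-1}|y_{j+1}-H\bar v_{j+1}|_\Gamma^2 =: \mathsf{K},$$
a finite constant independent of $\sigma$. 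Since the preceding theorem guarantees that a minimizer $\V^\sigma$ of $\J_\sigma$ exists (as $\Psi$ is bounded and continuous), minimality gives the uniform bound $\J_\sigma(\V^\sigma)\le \mathsf{K}$ for all $\sigma>0$.

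Next I would exploit the non-negativity of the three terms in $\J_\sigma$. From $\J_\sigma(\V^\sigma)\le \mathsf{K}$ the initial-condition term yields $|v_0^\sigma - m_0|_{C_0}^2 \le 2\mathsf{K}$, so $\{v_0^\sigma\}$ is bounded uniformly in $\sigma$; and the dynamics term, recalling that $|v_{j+1}-\Psi(v_j)|^2_{\Sigma}=\sigma^{-2}|v_{j+1}-\Psi(v_j)|^2_{\Sigma_0}$, yields
$$\sum_{j=0}^{J-1}|v_{j+1}^\sigma-\Psi(v_j^\sigma)|_{\Sigma_0}^2 \le 2\sigma^2 \mathsf{K}\longrightarrow 0 \quad\text{as } \sigma\to 0.$$
The key use of the boundedness of $\Psi$ enters here: writing $v_{j+1}^\sigma = \Psi(v_j^\sigma) + r_j^\sigma$ with $|r_j^\sigma|_{\Sigma_0}^2 \le 2\sigma^2\mathsf{K}$ bounded, and since $|\Psi(\cdot)|\le R$ globally, each $v_j^\sigma$ for $j\ge 1$ is bounded uniformly in $\sigma\in(0,1]$ regardless of the location of $v_{j-1}^\sigma$. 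Together with the bound on $v_0^\sigma$ this shows $\{\V^\sigma\}$ lies in a bounded subset of $\R^{\Du(J+1)}$.

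The argument then finishes by compactness. Given any sequence $\sigma_n \to 0$, Bolzano--Weierstrass provides a subsequence $\sigma_{n_k}$ with $\V^{\sigma_{n_k}}\to \V^\star$ for some $\V^\star\in\R^{\Du(J+1)}$. Passing to the limit in the dynamics bound, using continuity of $\Psi$ and of $v\mapsto |v|_{\Sigma_0}^2$,
$$\sum_{j=0}^{J-1}|v_{j+1}^\star-\Psi(v_j^\star)|_{\Sigma_0}^2 = \lim_{k\to\infty}\sum_{j=0}^{J-1}|v_{j+1}^{\sigma_{n_k}}-\Psi(v_j^{\sigma_{n_k}})|_{\Sigma_0}^2 \le \lim_{k\to\infty} 2\sigma_{n_k}^2\mathsf{K} = 0.$$
Since $\Sigma_0$ is positive definite each summand is non-negative, so every term vanishes and $v_{j+1}^\star=\Psi(v_j^\star)$ for $j=0,\dots,J-1$, as claimed.

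I expect the main obstacle to be the uniform (in $\sigma$) boundedness of the minimizers, since the dynamics penalty only controls the one-step residuals and its coefficient blows up as $\sigma\to 0$; it is precisely the boundedness hypothesis on $\Psi$, combined with control of $v_0^\sigma$ by the $C_0$-term, that propagates a bound along the whole trajectory. I would also emphasize that the statement only claims feasibility of the limit for the hard constraint, not that $\V^\star$ minimizes $\J_0$; establishing the latter would require a $\Gamma$-convergence or lower-semicontinuity argument, which this theorem deliberately does not assert.
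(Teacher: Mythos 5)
Your proof is correct and follows essentially the same route as the paper's: comparison with the exact trajectory started at $m_0$ to get a $\sigma$-uniform bound on $\J_\sigma(\V^\sigma)$, extraction of uniform boundedness of the minimizers from the $C_0$-term and the boundedness of $\Psi$, and passage to the limit in the rescaled dynamics residual along a convergent subsequence. Your write-up is if anything slightly more explicit than the paper's about how the bound on $v_j^\sigma$ for $j\ge 1$ is obtained.
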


\begin{proof}
Throughout this proof $c$ is a constant which may change from
instance to instance, but is independent of $\sigma.$
Consider $\V \in \R^{\Du(J+1)}$ defined by  $v_0=m_0$
and  $v_{j+1}=\Psi(v_{j})$. Then $\V$ is bounded, as $\Psi(\cdot)$ is bounded, 
and the bound is independent of $\sigma$. Furthermore, 
\begin{equation*}
	\J_{\sigma}(\V) =  \frac{1}{2} \sum^{J-1}_{j=0} |y_{j+1}-Hv_{j+1}|^2_\Gamma \leq c,
\end{equation*}
where $c$ is independent of $\sigma$.
It follows that
\begin{equation*}
	\J_{\sigma}(\V^{\sigma}) \leq \J_{\sigma}(\V) \leq c.
\end{equation*}
Thus,
\begin{align*}
	\frac{1}{2}|v^{\sigma}_{j+1} - \Psi(v^{\sigma}_{j})|^2_{\Sigma_0} &= \frac{\sigma^2}{2}|v^{\sigma}_{j+1} - \Psi(v^{\sigma}_{j})|^2_{\Sigma}  \le  \sigma^2 \J_\sigma(\V^\sigma) \leq \sigma^2 c, \\
	 \frac{1}{2}|v^{\sigma}_{0} - m_0|^2_{C_0} &\leq  \J_\sigma(\V^\sigma) \le  c .
\end{align*}
Since $\Psi$ is bounded, these bounds imply that $|\V^{\sigma}|$ is bounded
above independently of $\sigma.$
Therefore,
there is a limit $\V^{*} : \V^{\sigma} \to \V^*$ along
a subsequence. By continuity 
\begin{equation*}
	0 \leq \frac{1}{2}|v^{*}_{j+1} - \Psi(v^{*}_{j})|^2_{\Sigma_0} \leftarrow \frac{1}{2}|v^{\sigma}_{j+1} - \Psi(v^{\sigma}_{j})|^2_{\Sigma_0} \leq \sigma^{2}c. \\
\end{equation*}
Letting $\sigma \to 0^+$ we obtain that $v^*_{j+1} = \Psi(v^*_j).$
\end{proof}

\section{Discussion and Bibliography}\label{sec:94}

The \index{3DVAR}3DVAR and \index{4DVAR}4DVAR methodologies, in the context of weather forecasting,
are discussed in \cite{lorenc1986analysis} and \cite{fisher2009data},
respectively. The implementation of these methodologies by the 
 UK Meteorological Office 
is overviewed in \cite{lorenc2000met,rawlins2007met}.
The accuracy analysis presented here is similar to
that which first appeared in the papers
\cite{brett2013accuracy,moodey2013nonlinear}
and was developed further in \cite{law2012analysis,sanz2015long,law2016filter}. 
It arises from considering stochastic perturbations of the
seminal work of Titi and collaborators, exemplified by the paper
\cite{hayden2011discrete}; this in turn is linked to earlier work on
synchronization in dynamical systems \cite{pecora1990synchronization}. 
In all of these works, particular emphasis is placed in estimating the \index{state}state of deterministic chaotic \index{dynamical system}dynamical systems from partial and noisy \index{observation}observations \cite{lalley1999beneath,paulin2019optimization,paulin2018concentration,branicki2018accuracy,oljaca2018almost}.
For an overview of \index{variational!data assimilation}variational \index{data assimilation}data assimilation methods,
and their links to problems in physics and mechanics, see the book
\cite{abarbanel2013predicting} and the references therein; see
also the paper \cite{broecker2017existence}.

 \chapter{\Large{\sffamily{The Extended and Ensemble Kalman Filters }}}\label{lecture10}

In this chapter we describe the \index{Kalman filter!extended}Extended Kalman Filter  (\index{ExKF}ExKF)\footnote{The \index{Kalman filter!extended}extended
Kalman filter is often termed the EKF in the literature, a terminology 
introduced before the existence of the \index{EnKF}EnKF;
we find it useful to write \index{ExKF}ExKF to unequivocally distinguish it from the \index{EnKF}EnKF.}
and the \index{Kalman filter!ensemble}Ensemble Kalman Filter (\index{EnKF}EnKF). The \index{ExKF}ExKF approximates the predictive covariance by linearization, while the \index{EnKF}EnKF approximates it by the empirical covariance of a collection of particles. 
The \index{ExKF}ExKF is a provably accurate approximation of the 
\index{filtering!distribution}filtering distribution if the \index{dynamics}dynamics are approximately linear and \index{small noise limit}small noise is present in both \index{signal}signal and data, in which case the \index{filtering!distribution}filtering distribution is well approximated by a 
\index{Gaussian!approximation}Gaussian. In such settings, the \index{EnKF}EnKF can also provide a good approximation of the
\index{filtering!distribution}filtering distribution if a sufficiently large number of particles
is used.  For problems where the filtering distributions are not well approximated by Gaussians, ExKF and EnKF can still be successful online optimizers for state estimation; they may be thought of as generalizations of 3DVAR in which
the model covariance, which weights the model contribution to the optimization
problem solved at every step, is updated on the basis of linearized (ExKF)
or ensemble (EnKF) information.

This chapter is organized as follows. We introduce the problem setting in Section \ref{sec:101}. The \index{ExKF}ExKF and \index{EnKF}EnKF are described, in turn, in Sections \ref{sec:102} and \ref{sec:103}. We close in Section \ref{sec:104} with extensions and bibliographical remarks.

\section{The Setting}\label{sec:101}
Throughout this chapter we consider the setting in which 3DVAR was introduced
and may be applied: the \index{dynamics}dynamics model
is nonlinear, but the \index{observation!function}observation function is linear. 
For purposes of exposition, we summarize it again here:
\begin{align*}
        v_{j+1} &= \Psi(v_j) + \xi_j, && \xi_j \sim \Nc(0, \Sigma) \index{i.i.d.} \text{ i.i.d.}, \\
        y_{j+1} &= Hv_{j+1} + \eta_{j+1}, && \eta_j \sim \Nc(0, \Gamma) \index{i.i.d.}\text{ i.i.d.},
\end{align*}
with,  as in previous chapters,  \(v_0 \sim \Nc(m_0, C_0)\) independent of the independent \index{i.i.d.}i.i.d.
sequences \(\{\xi_j\}\) and  \(\{\eta_j\}\).
Throughout this chapter we assume that $v_j \in \R^\Du, \,
y_j \in \R^\Dy.$ 

\FloatBarrier
\begin{table}
\bgroup
\def\arraystretch{1.5}
        \begin{tabular}{ | c | c |}
                \hline
                Kalman Filter & \index{ExKF}ExKF \\ \hline
                $m_{j+1} = \arg \min_\varg \J(\varg)$ & $m_{j+1} = \arg \min_\varg \J(\varg)$  \\ \hline
                $\J(\varg) = \frac{1}{2} |y_{j+1} - H \varg|_{\Gamma}^2 + \frac{1}{2} |\varg - \hat{m}_{j+1}|_{\hat{C}_{j+1}}^2$  &  $\J(\varg) = \frac{1}{2} |y_{j+1} - H \varg|_{\Gamma}^2 + \frac{1}{2} |\varg - \hat{m}_{j+1}|_{\hat{C}_{j+1}}^2$ \\
                \hline
                $\hat{m}_{j+1} = M m_j$ & $\hat{m}_{j+1} = \Psi(m_j)$ \\ \hline
$\hat{C}_{j+1}$ update exact & $\hat{C}_{j+1}$ update by linearization \\ \hline 
                $m_{j+1} = (I-K_{j+1}H)\hat{m}_{j+1} + K_{j+1} y_{j+1}$ & $m_{j+1} = (I-K_{j+1}H)\hat{m}_{j+1} + K_{j+1} y_{j+1}$\\ \hline
        \end{tabular}
\egroup        
        \caption{Comparison of \index{Kalman filter}Kalman filter and \index{ExKF}ExKF update formulae.}
                \label{3DVAR:t2}
\end{table}
\FloatBarrier

\section{The \index{Kalman filter!extended}Extended Kalman Filter}\label{sec:102}
This method is derived by applying the Kalman methodology, using
linearization to propagate the covariance $C_j$ to the predictive
covariance $\hat{C}_{j+1}.$ Table \ref{3DVAR:t2} summarizes the idea,
and in what follows we calculate the formulae required in full detail.

We first recall the \index{Kalman filter}Kalman filter update formulae and their derivation. We
have
\begin{align}
\widehat{v}_{j+1} = Mv_{j}+\xi_{j}, \quad v_{j}  \sim \Nc(m_{j},C_{j}), \quad \xi_{j} \sim \Nc(0,\Sigma).
\end{align}
From this we deduce, by taking expectations, that
\begin{equation}
\widehat{m}_{j+1} = \Expect[\widehat{v}_{j+1} \mid   Y_{j}]=\Expect[Mv_{j}+\xi_{j} \mid  Y_{j}]= \Expect[Mv_{j}\mid  Y_{j}] + \Expect[\xi_{j}\mid  Y_{j}] =Mm_{j} .
\end{equation}
The covariance update is derived as follows:
\begin{align}
\widehat{C}_{j+1} &= \Expect\Bigl[(\widehat{v}_{j+1}-\widehat{m}_{j+1})\otimes(\widehat{v}_{j+1}-\widehat{m}_{j+1})\mid  Y_{j} \Bigr] \nonumber \\
				  &= \Expect\Bigl[(M(v_{j}-m_{j})+\xi_{j})\otimes(M(v_{j}-m_{j})+\xi_{j})\mid  Y_{j}\Bigr] \nonumber \\
                  &= \Expect \Bigl[(M(v_{j}-m_{j}))\otimes(M(v_{j}-m_{j}))\mid  Y_{j} \Bigr]+\Expect\Bigl[\xi_{j}\otimes\xi_{j}\mid  Y_{j}\Bigr]  \\
&\quad  +\Expect\Bigl[(M(v_{j}-m_{j}))\otimes \xi_{j} \mid Y_{j}\Bigr]
+\Expect\Bigl[\xi_{j}\otimes (M(v_{j}-m_{j})) \mid Y_{j}\Bigr] \nonumber \\
                  &=M\Expect\Bigl[(v_{j}-m_{j})\otimes(v_{j}-m_{j}) \mid Y_{j}\Bigr]M^{\top} + \Sigma \nonumber \\
                 &=MC_{j}M^{\top}+\Sigma. \nonumber
\end{align}

For the \index{ExKF}ExKF, the \index{prediction}prediction map $\Psi$ is no longer linear.
But since 
$\xi_{j}$ is independent of $Y_{j}$ and $v_{j}$,
we obtain
\begin{equation*}
\widehat{m}_{j+1} = \Expect\Bigl[\Psi(v_{j})+\xi_{j} \mid Y_{j}\Bigr] = \Expect\Bigl[\Psi(v_{j})\mid Y_{j}\Bigr]+\Expect\Bigl[\xi_{j} \mid Y_{j}\Bigr] = \Expect \Bigl[\Psi(v_{j})\mid Y_{j}\Bigr]. 
\end{equation*}
If we assume that the fluctuations of $v_j$ around its mean $m_j$ (conditional
on data) are small, then a reasonable approximation is to take
$\Psi(v_{j}) \approx \Psi(m_{j})$ so that
\begin{equation}
\label{eq:1a}
\widehat{m}_{j+1} = \Psi(m_{j}).    
\end{equation}
For the predictive covariance we use linearization; we have
\begin{align*}
\widehat{C}_{j+1} \nonumber &= \Expect\Bigl[(\widehat{v}_{j+1}-\widehat{m}_{j+1})
\otimes(\widehat{v}_{j+1}-\widehat{m}_{j+1})\mid  Y_{j}\Bigr]\nonumber \\ 
                 &= \Expect\Bigl[(\Psi(v_{j})-\Psi(m_{j})+\xi_{j})\otimes(\Psi(v_{j})-\Psi(m_{j})+\xi_{j})\mid  Y_{j}\Bigr]\\ 
&=\Expect \Bigl[(\Psi(v_{j})-\Psi(m_{j}))\otimes(\Psi(v_{j})-\Psi(m_{j}))\mid  Y_{j}\Bigr]+\Sigma\\
&\approx D\Psi(m_{j})\Expect\Bigl[(v_{j}-m_{j})\otimes(v_{j}-m_{j})\mid  Y_{j}\Bigr]D\Psi(m_{j})^\top+\Sigma,
\end{align*}
and so, again assuming that fluctuations of $v_j$ around its mean $m_j$ 
(conditional on data) are small, we invoke the approximation
\begin{equation}
\label{eq:1b}
\widehat{C}_{j+1}=D\Psi(m_{j})C_{j}D\Psi(m_{j})^{\top}+\Sigma.
\end{equation}
To be self-consistent, $\Sigma$ itself should be small.  We next summarize the steps of the \index{ExKF}ExKF.

\FloatBarrier
\begin{algorithm}
\caption{\label{algExKF} \index{Kalman filter!extended}Extended Kalman Filter}
\begin{algorithmic}[1]
\STATE {\bf Input}:  Initial mean $m_0 \in \R^d$ and covariance $C_0 \in \R^{d\times d}.$ 
\STATE For $j = 0, 1, \ldots, J-1$ do the following \index{prediction}prediction and \index{analysis}analysis steps:
\STATE {\bf \index{prediction}Prediction}: 
\begin{align}\label{eq:predExKF}
\hat{m}_{j+1} &= \Psi(m_j), \\
\hat{C}_{j+1} &= D\Psi(m_j) C_j D\Psi(m_j)^\top+\Sigma.
\end{align}
\vspace{-0.5cm}
\STATE{{{\bf \index{analysis}Analysis}}}: 
	\begin{align}\label{eq:anExKF}
	\begin{split}
		m_{j+1} &= (I-K_{j+1}H)\hat{m}_{j+1} + K_{j+1} y_{j+1},\\
		C_{j+1}&=(I-K_{j+1}H)\widehat{C}_{j+1}.
		\end{split}
	\end{align}
\STATE{\bf Output}: Predictive means $\{\hat{m}_j\}_{j=1}^J$ and covariances $\{\hat{C}_{j}\}_{j=1}^J,$ and \index{analysis}analysis means $\{m_j\}_{j=1}^J$ and covariances $\{C_j\}_{j=1}^J.$
\end{algorithmic}
\end{algorithm}
\FloatBarrier


Note that the \index{Kalman gain}Kalman gain $K_{j+1}$ in equation \eqref{eq:anExKF} is defined in the same way as for the \index{Kalman filter}Kalman filter, namely 
	\begin{align*}
		K_{j+1} &= \hat{C}_{j+1} H^\top S_{j+1}^{-1}, \quad \quad S_{j+1} = H \hat{C}_{j+1} H^\top + \Gamma.
	\end{align*}
Thus, the \index{analysis}analysis step is the same as for the \index{Kalman filter}Kalman filter. However, for the \index{ExKF}ExKF
the maps $C_{j} \mapsto\widehat{C}_{j+1}\mapsto C_{j+1}$ depend
on the observed data through the dependence of the predictive
covariance on the filter mean. 
To be self-consistent with the ``small fluctuations around the mean''
assumptions made in the derivation of the \index{ExKF}ExKF, $\Sigma$ and $\Gamma$
should both be small.

The \index{analysis}analysis step can also be defined by 
\begin{align*}
C_{j+1}^{-1} &= \widehat{C}_{j+1}^{-1}+H^{\top}\Gamma^{-1}H, \\
m_{j+1}&=\arg \min_\varg \J(\varg),
\end{align*}
where
\begin{equation}\label{eq:Ifilter}
\J(\varg) = \frac{1}{2} |y_{j+1}-H\varg |_{\Gamma}^{2} + \frac{1}{2}|\varg-\widehat{m}_{j+1} |^{2}_{\widehat{C}_{j+1}}
\end{equation}
and $\widehat{m}_{j+1}, \widehat{C}_{j+1}$ are calculated as above
in the \index{prediction}prediction step \eqref{eq:predExKF}. 
The constraint formulation of the minimization
problem, derived for the \index{Kalman filter}Kalman filter in Section \ref{sec:okf}, may
also be used to derive the update formulae above.


\section{\index{Kalman filter!ensemble}Ensemble Kalman Filter}\label{sec:103}

When the \index{dynamical system}dynamical system is in high dimension, evaluation and storage of 
the predictive covariance, and in particular the Jacobian required for 
the update formula \eqref{eq:1b},
becomes computationally inefficient and expensive for the 
\index{ExKF}ExKF. The \index{EnKF}EnKF was developed to overcome
this issue. The basic idea is to maintain an ensemble of particles,
and to use their empirical covariance
within a Kalman-type update. The method is summarized in  
Table \ref{3DVAR:t3}.  It may be thought of as an ensemble 3DVAR technique
in which a collection of particles are generated similarly to 3DVAR, but
interact through an ensemble estimate of their covariance.

In the basic form which we present here, the 
\index{EnKF}EnKF is applied  when $\Psi$ is nonlinear, while the \index{observation!function}observation
function $H$ is linear. 
The $\Sam$ particles used at step $j$ are denoted $\{v_{j}^{(\sam)}\}^{\Sam}_{\sam=1}.$ 
They are all given equal weight, so it is possible, in principle, to 
make an approximation to the \index{filtering!distribution}filtering distribution of the form
$$\post_j^\Sam (v_j) \approx \frac{1}{\Sam}\sum_{\sam=1}^\Sam 
\delta\bigl(v_j-v_{j}^{(\sam)}\bigr).$$
This approximation can in principle be accurate if $\Sam$ is sufficiently large and the filtering distributions are approximately Gaussian. In problems where approximate Gaussianity of the filtering distribution fails ---for instance due to strong nonlinearity of $\Psi$ and large observation noise--- EnKF is better understood as a sequential \index{optimization}optimization method, similar in spirit to 3DVAR,  as described in the introduction to the chapter.

The \index{state}state of all the particles at time $j+1$ are predicted 
to give $\{\widehat{v}_{j+1}^{(\sam)}\}^{\Sam}_{\sam=1}$ using the dynamical
model. The resulting empirical covariance is then used
to define an \index{objective}objective function which is minimized
in order to perform the \index{analysis}analysis step and obtain $\{{v}_{j+1}^{(\sam)}\}^{\Sam}_{\sam=1}.$ 
The updates are denoted schematically by
\[\{v_{j}^{(\sam)}\}^{\Sam}_{\sam=1} \xmapsto{} \{\widehat{v}_{j+1}^{(\sam)}\}^{\Sam}_{\sam=1}\xmapsto{} \{v_{j+1}^{(\sam)}\}^{\Sam}_{\sam=1}.\]

The idea of the \index{EnKF}EnKF is summarized in Table \ref{3DVAR:t3} below, which is followed by a full description of the algorithm.  

\FloatBarrier
\begin{table}
\bgroup
\def\arraystretch{1.5}
        \begin{tabular}{ | c | c |}
                \hline
                Kalman Filter & \index{EnKF}EnKF \\ \hline
                $m_{j+1} = \arg \min_\varg \J(\varg)$ &  $v_{j+1}^{(\sam)} = \arg \min_\varg \J_\sam(\varg)$  \\ \hline
                $\J(\varg) = \frac{1}{2} |y_{j+1} - H \varg|_{\Gamma}^2 + \frac{1}{2} |\varg - \hat{m}_{j+1}|_{\hat{C}_{j+1}}^2$  &  $\J_n(\varg) = \frac{1}{2} |y_{j+1}^{(\sam)} - H  v |_{\Gamma}^2 + \frac{1}{2} |\varg - \hat{v}_{j+1}^{(\sam)}|_{\hat{C}_{j+1}}^2$ \\
                \hline
                $\hat{m}_{j+1} = M m_j$ & $\hat{v}_{j+1}^{(\sam)} = \Psi(v_j^{(\sam)})+\xi_j^{(\sam)}$ \\ \hline
$\hat{C}_{j+1}$ update exact & $\hat{C}_{j+1}$ update by ensemble estimate\\ \hline 
                $m_{j+1} = (I-K_{j+1}H)\hat{m}_{j+1} + K_{j+1} y_{j+1}$ &  $v_{j+1}^{(\sam)} = (I-K_{j+1}H)\hat{v}_{j+1}^{(\sam)} + K_{j+1} y_{j+1}^{(n)}$  \\ \hline
        \end{tabular}
\egroup
        \caption{Comparison of \index{Kalman filter}Kalman filter and \index{EnKF}EnKF update formulae.}
                \label{3DVAR:t3}
\end{table}
\FloatBarrier

\subsection{Algorithmic Implementation of \index{EnKF}EnKF}
We next summarize the steps of the \index{EnKF}EnKF:
\FloatBarrier
\begin{algorithm}
\caption{\label{algEnKF} \index{Kalman filter!ensemble} Ensemble Kalman Filter}
\begin{algorithmic}[1]
\STATE {\bf Input}: Ensemble size $N.$ Initial ensemble $\{v_0^{(n)}\}_{n=1}^N.$ Parameter $s \in \{0,1\}.$
\STATE For $j = 0, 1, \ldots, J-1$ do the following \index{prediction}prediction and \index{analysis}analysis steps:
\STATE {\bf \index{prediction}Prediction}: 
\begin{align}\label{eq:predEnKF}
\begin{split}
\xi_{j}^{(\sam)} &\sim \Nc(0,\Sigma), \quad  \index{i.i.d.}\text{i.i.d.}, \quad \sam=1,\ldots,\Sam,\\
\widehat{v}_{j+1}^{(\sam)} &= \Psi(v_{j}^{(\sam)})+\xi^{(\sam)}_{j}, \quad \sam=1,\ldots,\Sam, \\
\widehat{m}_{j+1} &= \frac{1}{\Sam}\sum^{\Sam}_{\sam=1} \widehat{v}_{j+1}^{(\sam)}, \\
\widehat{C}_{j+1} &= \frac{1}{\Sam}\sum^{\Sam}_{\sam=1}\bigl(\widehat{v}^{(\sam)}_{j+1}-\widehat{m}_{j+1}\bigr)\otimes \bigl(\widehat{v}^{(\sam)}_{j+1}-\widehat{m}_{j+1}\bigr).
\end{split}
\end{align}
\vspace{-0.5cm}
\STATE{{{\bf \index{analysis}Analysis}}}: 
	\begin{align}\label{eq:analEnKF}
	\begin{split}
	\eta_{j+1}^{(\sam)} &\sim \Nc(0,\Gamma),   \quad \sam=1,\ldots,\Sam,\\
y_{j+1}^{(\sam)}&=y_{j+1}+s\eta_{j+1}^{(\sam)},\quad \sam=1,\ldots,\Sam,\\
v_{j+1}^{(\sam)} &= (I-K_{j+1}H)\widehat{v}_{j+1}^{(\sam)}+K_{j+1}y_{j+1}^{(\sam)},\quad \sam=1,\ldots,\Sam. 
\end{split}
\end{align}
\STATE{\bf Output}: Ensembles  $\{v_j^{(n)}\}_{n=1}^N,\quad  j = 0,1, \ldots, J.$
\end{algorithmic}
\end{algorithm}
\FloatBarrier

%
%
%
Once again  the \index{Kalman gain}Kalman gain $K_{j+1}$ in equation \eqref{eq:analEnKF} is defined in the same way as for the \index{Kalman filter}Kalman filter, namely 
	\begin{align*}
		K_{j+1} &= \hat{C}_{j+1} H^\top S_{j+1}^{-1}, \quad \quad S_{j+1} = H \hat{C}_{j+1} H^\top + \Gamma.
	\end{align*}
 However $\hat{C}_{j+1}$ is estimated in a novel fashion, using 
an ensemble of particles;
this is the key innovation behind the EnKF. 
The parameter $s$ may be chosen to be $0$ or $1$.
The choice $s=1$ is natural when aiming at approximating the \index{Kalman filter}Kalman filter in \index{linear-Gaussian setting}linear-Gaussian settings; in such case the $y_{j+1}^{(\sam)}$ are referred to as \index{observation!perturbed}{\em perturbed observations}. The choice $s=0$ is natural if viewing the algorithm as a sequential
optimizer  in problems where the filtering distributions are not well approximated by Gaussians.

The \index{analysis}analysis step may be written as 
\begin{equation}
v_{j+1}^{(\sam)}=\arg \min_\varg \J_{\sam}(v),    
\end{equation}
where
\begin{equation}
\label{eq:deenz2}
\J_{n}(\varg) := \frac{1}{2} |y_{j+1}^{(\sam)} - H\varg|^{2}_{\Gamma} + \frac{1}{2} |\varg-\widehat{v}_{j+1}^{(\sam)} |^{2}_{\widehat{C}_{j+1}}
\end{equation}
and the predictive mean and covariance are given by  \eqref{eq:predEnKF}. 
Note that $\widehat{C}_{j+1}$ is typically not invertible as it is a rank
$\Sam$ matrix and $\Sam$ is usually less than the dimension $\Du$ of the space
on which $\widehat{C}_{j+1}$ acts; this is since the typical use
of ensemble methods is for high-dimensional \index{state-space}state-space estimation, 
with a small ensemble size. 
The minimizing solution can be found
by regularizing $\widehat{C}_{j+1}$ by  adding $\epsilon I$ for
$\epsilon>0$, deriving the update equations as above,
and then letting $\epsilon \to 0^+.$ Alternatively, the constraint 
formulation of the minimization problem, derived for the \index{Kalman filter}Kalman filter 
in Subsection \ref{sec:okf}, may also be used to derive the update formulae above.

The following theorem explains why perturbing the observations with $ s = 1$ may be favored when aiming at approximating the \index{Kalman filter}Kalman filter in 
 (close to) \index{linear-Gaussian setting}linear-Gaussian settings. Setting $s=1$  ensures that if each \index{prediction}prediction particle  $\widehat{v}_{j+1}^{(\sam)}$ is distributed according to a non-degenerate \index{Gaussian}Gaussian predictive distribution $\mathcal{N}  ( \widehat{m}_{j+1}, \widehat{C}_{j+1}),$ 
then,  in the linear Gaussian setting, each analysis particle $\widehat{v}_{j+1}^{(\sam)}$ will be \index{Gaussian}Gaussian distributed with mean and covariance given by the \index{filtering!distribution}filtering distribution found by the \index{Kalman filter}Kalman filter formulae. This is achieved by updating each particle minimizing an objective defined using a randomization of the 
likelihood\index{likelihood} function. 
\begin{theorem}[Perturbed Observation EnKF –- Randomized Likelihood Viewpoint]\label{th:enkfpo}
Suppose that $\widehat{v}_{j+1}^{(\sam)} \sim \mathcal{N}  ( \widehat{m}_{j+1}, \widehat{C}_{j+1})$  with $\widehat{C}_{j+1}$ positive definite. Let  $v_{j+1}^{(n)}$ be the minimizer of 
\begin{equation}\label{eq:randomobj}
\mathsf{J}_n(v):=  \frac12 |y_{j+1} + \eta_{j+1}^{(n)} - Hv  |_\Gamma^2 + \frac12 |v - \widehat{v}_{j+1}^{(\sam)} |_{\widehat{C}_{j+1}}^2 , \quad \quad \eta_{j+1}^{(n)}\sim \mathcal{N} (0,\Gamma),
\end{equation}
where $\widehat{v}_{j+1}^{(\sam)}$ and $\eta_{j+1}^{(n)}$ are independent.
Then $v_{j+1}^{(n)} \sim \mathcal{N} ( m_{j+1}, C_{j+1}),$ where $m_{j+1}$ and $C_{j+1}$ are defined by
\begin{align}
m_{j+1} & = \widehat{m}_{j+1} + K_{j+1} (y_{j+1} - H\widehat{m}_{j+1}), \label{eq:statement1}\\ 
C_{j+1} &= (I - K_{j+1} H)\widehat{C}_{j+1}, \label{eq:statement2}
\end{align}
and
\begin{equation*}
K_{j+1} := \widehat{C}_{j+1} H^\top (H\widehat{C}_{j+1} H^\top + \Gamma)^{-1}.
\end{equation*}
\end{theorem}
\begin{proof}
The minimizer of \eqref{eq:randomobj} is given by 
\begin{align}
v_{j+1}^{(\sam)} &= \widehat{v}_{j+1}^{(\sam)} + K_{j+1}( y_{j+1} + \eta_{j+1}^{(\sam)} - H \widehat{v}_{j+1}^{(\sam)} ) \label{eq:aux1} \\
&= C_{j+1} \Bigl\{ \widehat{C}_{j+1}^{-1} \widehat{v}_{j+1}^{(\sam)} + H^\top \Gamma^{-1}(y_{j+1} + \eta_{j+1}^{(\sam)} ) \Bigr \}, \label{eq:aux2}
\end{align}
where $C_{j+1}$ is defined in \eqref{eq:statement2} and the equivalence between \eqref{eq:aux1} and \eqref{eq:aux2} follows from the equivalence of precision and covariance characterizations of the Kalman filter in Theorem \ref{kalmanfilter}  and equations \eqref{eq:kf update} and \eqref{eq:kf analysis}. Notice that \eqref{eq:aux1} and \eqref{eq:aux2} show that $v_{j+1}^{(\sam)} $ can be written as a linear combination of Gaussian random variables, so $v_{j+1}^{(\sam)} $ is Gaussian. We next show that its mean and covariance are given by \eqref{eq:statement1} and \eqref{eq:statement2}.

First, from \eqref{eq:aux1} we deduce that
\begin{align*}
\Expect[  v_{j+1}^{(\sam)} ] &= \Expect\Bigl[\widehat{v}_{j+1}^{(\sam)} + K_{j+1}( y_{j+1} + \eta_{j+1}^{(\sam)} - H \widehat{v}_{j+1}^{(\sam)} )  \Bigr]  \\
& =  \widehat{m}_{j+1} + K_{j+1} (y_{j+1} - H\widehat{m}_{j+1}),
\end{align*}
where we used that by assumption $\Expect[ \widehat{v}_{j+1}^{(\sam)}] = \widehat{m}_{j+1}$ and that $\Expect[\eta_{j+1}^{(\sam)} ] = 0.$ 

Second, from \eqref{eq:aux2} we deduce that
\begin{align*}
\Expect[ ( v_{j+1}^{(\sam)}    - m_{j+1}) \otimes ( v_{j+1}^{(\sam)}    - m_{j+1}) ]   
&= C_{j+1} \widehat{C}_{j+1}^{-1} C_{j+1} + C_{j+1} H^\top \Gamma^{-1} H C_{j+1} \\ 
&= C_{j+1} \bigl(  \widehat{C}_{j+1}^{-1}  + H^\top \Gamma^{-1} H     \bigr)  C_{j+1} \\
& = C_{j+1},
\end{align*}
where we used that $C_{j+1}^{-1} = \widehat{C}_{j+1}^{-1}  + H^\top \Gamma^{-1} H $ by  the equivalent characterization of the Kalman filter covariance in Theorem \ref{kalmanfilter} and equations \eqref{eq:kf update} and \eqref{eq:kf analysis}.
\end{proof}

\subsection{Subspace Property of EnKF}

We now give another way to think of, and exploit in algorithms, the 
low rank property of $\widehat{C}_{j+1}.$
Note that $\J_{\sam}(\varg)$ is undefined unless
$$\varg-\widehat{v}_{j+1}^{(\sam)}=\widehat{C}_{j+1}a$$
for some $a \in \Ru.$ 
From the structure of $\widehat{C}_{j+1}$ it follows that 
\begin{equation}
\label{eq:deenz3}
 \varg=\widehat{v}_{j+1}^{(\sam)}+\frac{1}{\Sam}\sum_{m=1}^\Sam b_m\bigl(\widehat{v}^{(m)}_{j+1}-\widehat{m}_{j+1}\bigr) 
\end{equation}
for some unknown  vector $b = \{b_m \}_{m=1}^\Sam \in \R^\Sam$ to be determined. Note that both $a$ and $b$  depend on the ensemble member $n$, but we suppress that dependence from the notation. This form for $\varg$ can be substituted into \eqref{eq:deenz2} to obtain
a functional $\I_{\sam}(b)$ to be minimized over $b \in \R^\Sam.$ 
We re-emphasize that $\Sam$ will typically be much smaller than $\Du$, the
\index{state-space}state-space dimension. 
Once $b$ is determined, it may be substituted back into
\eqref{eq:deenz3} to obtain the solution to the minimization problem. 
 
To dig a little deeper into this calculation, we define
$$e^{(m)}=\widehat{v}^{(m)}_{j+1}-\widehat{m}_{j+1}$$
and note that then
$$\widehat{C}_{j+1}=\frac{1}{\Sam}\sum_{m=1}^\Sam e^{(m)} \otimes e^{(m)}.$$
Since
$$\widehat{C}_{j+1} a=\frac{1}{\Sam}\sum_{m=1}^\Sam b_m e^{(m)}$$
we deduce that 
$$b_m= \langle e^{(m)}, a \rangle.$$
Now note that
$$\frac{1}{2} | v-\widehat{v}_{j+1}^{(\sam)} |^{2}_{\widehat{C}_{j+1}}
= \frac12 \langle a, \widehat{C}_{j+1} a \rangle=\frac{1}{2\Sam}\sum_{m=1}^\Sam b_m^2.$$ 
Therefore, defining
\begin{equation}
\label{eq:z2}
 \mathsf{F}_{\sam}(b)  := \frac{1}{2} \Bigl|y_{j+1}^{(\sam)} - H\widehat{v}_{j+1}^{(\sam)} -\frac{1}{\Sam}\sum_{m=1}^\Sam b_m H(\widehat{v}_{j+1}^{(\sam)}-\widehat{m}_{j+1}) \Bigr|^{2}_{\Gamma}+\frac{1}{2\Sam}\sum_{m=1}^\Sam b_m^2
\end{equation}
we have proved the following:

\begin{theorem}[Implementation of \index{EnKF}EnKF in $\Sam$--Dimensional Subspace] 
\label{eq:extra1}
Given the \index{prediction}prediction defined by \eqref{eq:predEnKF}, the Kalman 
update formulae \eqref{eq:analEnKF} may be found
by minimizing $ \mathsf{F}_{\sam}(b)  $ with respect to $b$ and substituting
into \eqref{eq:deenz3}.
\end{theorem}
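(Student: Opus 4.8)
The plan is to exploit the contrast between the two functionals: $\I_\sam$ is a genuine, finite-dimensional \emph{strictly convex} quadratic in $b$, whereas $\J_\sam(\varg)$ involves the weighted norm $|\cdot|^2_{\widehat{C}_{j+1}}$ built from the \emph{singular} ensemble covariance $\widehat{C}_{j+1}=\frac{1}{\Sam}\sum_{m=1}^\Sam e^{(m)}\otimes e^{(m)}$, where $e^{(m)}:=\widehat{v}_{j+1}^{(m)}-\widehat{m}_{j+1}$. First I would record that $\J_\sam(\varg)$ is finite only when $\varg-\widehat{v}_{j+1}^{(\sam)}$ lies in $\mathrm{Range}(\widehat{C}_{j+1})=\mathrm{span}\{e^{(1)},\dots,e^{(\Sam)}\}$, which is exactly the content of the ansatz \eqref{eq:deenz3}; as $b$ ranges over $\R^\Sam$ the vector $\varg=\widehat{v}_{j+1}^{(\sam)}+\frac{1}{\Sam}\sum_m b_m e^{(m)}$ ranges over this whole affine subspace. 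Combined with the identity $\I_\sam(b)=\J_\sam(\varg)$ already verified just before \eqref{eq:z2}, minimizing $\J_\sam$ over the subspace is the same as minimizing $\I_\sam$ over $b$, and it remains only to show that the resulting $\varg$ reproduces the analysis update.

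Next I would minimise $\I_\sam$ directly. Writing $r:=y_{j+1}^{(\sam)}-H\widehat{v}_{j+1}^{(\sam)}$ and $w:=\frac{1}{\Sam}\sum_m b_m e^{(m)}=\varg-\widehat{v}_{j+1}^{(\sam)}$, the Hessian of $\I_\sam$ is bounded below by $\Sam^{-1}I$, so $\I_\sam$ is strictly convex with a unique minimiser characterised by $\nabla_b\I_\sam=0$. Setting the $k$-th partial derivative to zero gives
\[
b_k=\bigl\langle e^{(k)},\,H^T\Gamma^{-1}(r-Hw)\bigr\rangle,\qquad k=1,\dots,\Sam.
\]
Multiplying by $e^{(k)}$, summing over $k$, and dividing by $\Sam$, the definition of $\widehat{C}_{j+1}$ contracts these $\Sam$ scalar equations into the single vector equation
\[
w=\widehat{C}_{j+1}H^T\Gamma^{-1}(r-Hw),\qquad\text{that is}\qquad \bigl(I+\widehat{C}_{j+1}H^T\Gamma^{-1}H\bigr)w=\widehat{C}_{j+1}H^T\Gamma^{-1}r.
\]
The matrix $I+\widehat{C}_{j+1}H^T\Gamma^{-1}H$ has spectrum contained in $[1,\infty)$, hence is invertible, so $w$ is uniquely determined.

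Finally I would identify $w$ with the Kalman correction. Using $S_{j+1}=H\widehat{C}_{j+1}H^T+\Gamma$ and $K_{j+1}=\widehat{C}_{j+1}H^T S_{j+1}^{-1}$ from the analysis step, I would verify the algebraic identity
\[
\bigl(I+\widehat{C}_{j+1}H^T\Gamma^{-1}H\bigr)\widehat{C}_{j+1}H^T S_{j+1}^{-1}=\widehat{C}_{j+1}H^T\Gamma^{-1}\bigl(H\widehat{C}_{j+1}H^T+\Gamma\bigr)S_{j+1}^{-1}=\widehat{C}_{j+1}H^T\Gamma^{-1},
\]
which shows that $w=K_{j+1}r$ solves the vector equation above; by uniqueness this is the minimiser. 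Substituting back into \eqref{eq:deenz3} yields $\varg=\widehat{v}_{j+1}^{(\sam)}+K_{j+1}\bigl(y_{j+1}^{(\sam)}-H\widehat{v}_{j+1}^{(\sam)}\bigr)=(I-K_{j+1}H)\widehat{v}_{j+1}^{(\sam)}+K_{j+1}y_{j+1}^{(\sam)}$, which is precisely the analysis update formula, completing the argument.

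I expect the only genuinely delicate point to be the first step: since $\widehat{C}_{j+1}$ is singular, $|\cdot|^2_{\widehat{C}_{j+1}}$ is not literally defined off its range, so the reduction to the subspace must be read through the regularisation $\widehat{C}_{j+1}\mapsto\widehat{C}_{j+1}+\epsilon I$ described before the theorem. To close this gap cleanly I would note that for each $\epsilon>0$ the regularised $\J_\sam$ is strictly convex with unique minimiser $\widehat{v}_{j+1}^{(\sam)}+\widehat{C}_{j+1}^{\epsilon}H^T(H\widehat{C}_{j+1}^{\epsilon}H^T+\Gamma)^{-1}r$, and that, since $\Gamma>0$ makes the gain continuous in $\epsilon$, this converges as $\epsilon\to0$ to $\widehat{v}_{j+1}^{(\sam)}+K_{j+1}r$, consistent with the direct computation. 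The remaining steps are routine convex optimisation and the Woodbury-type manipulation in the displayed identity.
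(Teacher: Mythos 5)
Your proposal is correct and follows essentially the same route as the paper: parametrize the analysis state within the affine subspace $\widehat{v}_{j+1}^{(\sam)}+\mathrm{Range}(\widehat{C}_{j+1})$ via \eqref{eq:deenz3}, use the identity $\I_{\sam}(b)=\J_{\sam}(\varg)$ established just before \eqref{eq:z2}, and minimize over $b\in\R^{\Sam}$. What you add beyond the paper's argument is the explicit closing of the loop: solving the stationarity conditions $b_k=\langle e^{(k)},H^T\Gamma^{-1}(r-Hw)\rangle$, contracting them to $(I+\widehat{C}_{j+1}H^T\Gamma^{-1}H)w=\widehat{C}_{j+1}H^T\Gamma^{-1}r$, and verifying via the identity $(I+\widehat{C}_{j+1}H^T\Gamma^{-1}H)\widehat{C}_{j+1}H^TS_{j+1}^{-1}=\widehat{C}_{j+1}H^T\Gamma^{-1}$ that the unique minimizer reproduces the Kalman gain update, together with the $\epsilon$-regularization limit that makes the restriction to the subspace rigorous when $\widehat{C}_{j+1}$ is singular --- steps the paper leaves implicit after asserting ``We have shown.''
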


\section{Discussion and Bibliography}\label{sec:104}
In this chapter we have considered a derivative-based \index{filtering}filtering algorithm (ExKF) and an
ensemble-based \index{filtering}filtering algorithm (EnKF). Extended and ensemble Kalman algorithms for the \index{smoothing}smoothing problem are also available, see e.g. \cite{evensen2000ensemble,bell1994iterated}. 

The development and theory of the \index{ExKF}ExKF is 
documented in the text \cite{jazwinski2007stochastic}. 
A methodology for analyzing evolving probability distributions
with small variance, and establishing the validity of the
\index{Gaussian!approximation}Gaussian approximation, is described in \cite{sanz2016gaussian}.
The use of the \index{ExKF}ExKF
for weather forecasting was proposed in \cite{ghilande}.
However, the dimension of the \index{state-space}state-space in most geophysical
applications renders the \index{ExKF}ExKF impractical.

The \index{EnKF}EnKF provided an innovation with far
reaching consequences in geophysical applications, because
it allowed for the use of  partial, low-rank,  empirical correlation
information, without the computation of the full covariance.
An overview of ensemble Kalman methods may be found in the
book \cite{evensen2009data}, including a historical perspective
on the subject, originating from papers of Evensen and
Van Leeuwen in the mid 
1990s \cite{evensen1994sequential,evensen1996assimilation};
a similar idea was also developed by Houtekamer
within the Canadian Meteorological Service,  
around the same time; \cite{houtekamer1995methods,houtekamer1998data}.
The presentation of the \index{EnKF}EnKF 
as a smart  sequential \index{optimization}optimization tool, adopted
here, is developed in \cite{law2015data}. The derivation of
the update equations in a space whose dimension is that of
the ensemble is well-known to practitioners
in the field \cite{asch2016data}
and is also described in \cite{albers2019ensemble}. 
The form of \index{EnKF}EnKF with perturbed observations ($s = 1$) presented in these notes is closely related to \index{likelihood!randomized maximum}randomized maximum likelihood \cite{chen2012ensemble}, but other implementations of the algorithm are available, see e.g. \cite{tippett2003ensemble,anderson2001ensemble,bishop2001adaptive,majda2012filtering}.
See also \cite{kelly2014well} for  a proof of Theorem \ref{th:enkfpo} and for further details on the connection between randomized maximum likelihood\index{likelihood} and perturbed observation EnKF. 

The analysis of ensemble methods
is difficult and theory is only just starting to emerge.
In the linear case the method converges in the large ensemble
limit to the \index{Kalman filter}Kalman filter \cite{le2009large,mandel2011convergence,kwiatkowski2015convergence}, 
but in the nonlinear case the
limit does not reproduce the \index{filtering!distribution}filtering distribution
\cite{ernst2015analysis}. 
An overview of ensemble Kalman methods, adopting a unifying
mean-field framework in which $N \to \infty$, may be found in
\cite{calvello22}. That framework provides the basis for an analysis of
the accuracy of the EnKF \cite{carrillo2022ensemble},
in terms of its ability to approximate
the true filtering distribution. 
However it is arguable that a major advantage of ensemble methods is that
they can provide good \index{state}state estimation when the number of
particles is {\em not} large; this subject
is discussed in \cite{gottwald2013mechanism,kelly2014well,tong2015nonlinear,tong2016nonlinear,ghattassanzalonso22}. 
In particular the paper \cite{ghattassanzalonso22} develops a
unified non-asymptotic analysis of ensemble Kalman
methods from the perspective of high-dimensional statistics,
which explains why a small sample size $N$ suffices in applications
where the covariance models have moderate effective dimension.

 \chapter{\Large{\sffamily{Particle Filter}}}
\label{ch11}

This chapter is devoted to the \index{particle filter}particle filter, a method that
approximates the \index{filtering!distribution}filtering distribution by a sum of \index{Dirac}Dirac masses.
\index{particle filter}Particle filters provably converge to the \index{filtering!distribution}filtering distribution as the number of particles, and hence the number of \index{Dirac}Dirac masses,
approaches infinity. We focus on the \index{particle filter!bootstrap}bootstrap particle filter, also known as \index{sequential importance resampling}sequential importance resampling;
it is linked to the material on \index{Monte Carlo}Monte Carlo and \index{importance sampling}importance sampling described in Chapter \ref{lecture5}.
We note that the \index{Kalman filter}Kalman filter completely characterizes the \index{filtering!distribution}filtering
distribution in the \index{linear-Gaussian setting}linear-Gaussian setting. The Kalman-based methods 
introduced in the two previous chapters apply outside the \index{linear-Gaussian setting}linear-Gaussian 
setting and are built by approximating the predictive distribution using
a \index{Gaussian!ansatz}Gaussian ansatz, and then applying the Kalman formulae for the \index{analysis}analysis 
step. The \index{particle filter!bootstrap}bootstrap particle filter approximates the predictive distribution by a sum of \index{Dirac}Dirac masses and, using this structure, exactly solves the 
\index{analysis}analysis step. Thus, both Kalman-based methods (with linear \index{observation}observations) and 
the \index{particle filter!bootstrap}bootstrap particle filter use exact application of \index{Bayes formula}Bayes formula, but
with approximate \index{prior}priors found by approximating the outcome of the
\index{prediction}prediction step. However, whilst  Kalman-based methods use an approximation
that is only valid for problems which are close to \index{Gaussian}Gaussian, 
\index{particle filter}particle filters have the potential of recovering an accurate approximation to the \index{filtering!distribution}filtering distribution in nonlinear, \index{non-Gaussian}non-Gaussian settings provided that the number of particles is large enough. 
 However, an  important disadvantage of particle filters is that they tend to struggle in high-dimensional problems for practically implementable particle numbers. 
In contrast, Kalman-based methods are robust, but harder to interpret in
a rigorous fashion except for \index{linear-Gaussian setting}linear-Gaussian problems.

This chapter is organized as follows. We describe the problem setting in Section \ref{sec:111}. We then introduce the \index{particle filter!bootstrap}bootstrap particle filter in Section \ref{sec:112} and analyze its convergence in Section \ref{sec:113}. Section \ref{ssec:BPFdynamics} describes how the \index{particle filter!bootstrap}bootstrap particle filter can be interpreted as a \index{dynamical system}random dynamical system. We close in Section \ref{sec:115} with extensions and bibliographical remarks. 

\section{The Setting}\label{sec:111}
Let us return to the setting in which we introduced \index{filtering}filtering
and \index{smoothing}smoothing in Chapter \ref{lecture7}, with nonlinear stochastic \index{dynamics!stochastic}dynamics and nonlinear
\index{observation}observation function, namely the model
\begin{align*}
	v_{j+1} &= \Psi(v_j) + \xi_j, && \xi_j \sim \Nc(0, \Sigma) \index{i.i.d.}\text{ i.i.d.}, \\
	y_{j+1} &= h(v_{j+1}) + \eta_{j+1}, && \eta_j \sim \Nc(0, \Gamma) \index{i.i.d.}\text{ i.i.d.},
\end{align*}
with \(v_0 \sim \post_0 := \Nc(m_0, C_0)\) independent of the \index{i.i.d.}i.i.d.
sequences \(\{\xi_j\}\) and  \(\{\eta_j\}\). Here \(\Psi(\cdot)\) drives the \index{dynamics}dynamics and \(h(\cdot)\) is the \index{observation!function}observation function. 
Recall that we denote by \(Y_j = \{y_1,\hdots,y_j\}\) all the data up to time 
\(j\) and by \(\post_j\) the pdf of \(v_j | Y_j\), that is, \(\post_j = \Prob(v_j|Y_j)\). The \index{filtering}filtering problem is to determine \(\post_{j+1}\) from \(\post_j\). We may do so in two steps: first, we run forward the \index{Markov chain}Markov chain generated by the stochastic \index{dynamical system}dynamical system
(\index{prediction}prediction), and second, we incorporate the data by an application of \index{Bayes theorem}Bayes 
theorem (\index{analysis}analysis). 

For the \index{prediction}prediction step, we define the operator \(\Pred\) acting on a pdf \(\post\) as an application of a \index{Markov kernel}Markov kernel defined by
\begin{equation}
\label{eq:markov}
	(\Pred \post)(v) = \int_{\R^\Du} p(u,v) \post(u) \, du,
\end{equation}
where \(p(u,v)\) is the associated pdf of the stochastic \index{dynamics!stochastic}dynamics, so that 
\begin{equation*}
	p(u,v) = \frac{1}{\sqrt{(2\pi)^\Du {\rm det}\Sigma}} \text{exp}\left( -\frac{1}{2}|v - \Psi(u)|_{\Sigma}^2 \right).
\end{equation*}
Thus, we obtain
\begin{equation*}
	\Prob(v_{j+1}|Y_j)=\hat{\post}_{j+1} = \Pred \post_j.
\end{equation*}
We then define the \index{analysis}analysis operator \(\An_j\) acting on a pdf \(\post\) to correspond to an application of \index{Bayes theorem}Bayes theorem, namely  
$$(\An_j \post)(v) = \frac{\like_j(v) \post(v)}{\int_{\Ru} \like_j(v) \post(v) \, dv}, \quad \quad  \like_j(v) = \text{exp} \Bigl(-\frac{1}{2}|y_{j+1} - h(v)|_{\Gamma}^2 \Bigr).$$
Finally, combining the \index{prediction}prediction and \index{analysis}analysis steps, we obtain 
\begin{equation*}
	\post_{j+1} = \An_j \hat{\post}_{j+1} = \An_j \Pred \post_j. 
\end{equation*}
We now describe a way to numerically approximate, and update,
the pdfs $\post_j.$

\section{\index{particle filter!bootstrap}The Bootstrap Particle Filter}\label{sec:112}
The \index{particle filter!bootstrap}Bootstrap Particle Filter (\index{BPF}BPF) can be thought of as 
performing \index{sequential importance resampling}sequential importance resampling. Let \(S^\Sam\) be an operator acting on a pdf \(\post\) by producing an $\Sam$-samples \index{Dirac}Dirac approximation of \(\post\), that is
\begin{equation*}
	(S^\Sam \post)(u) = \sum_{\sam=1}^\Sam w_\sam \delta(u - u^{(\sam)}),
\end{equation*}
where \(u^{(1)},\hdots,u^{(\Sam)}\) are \index{i.i.d.}i.i.d samples from \(\post\) that are weighted uniformly i.e. \(w_\sam = \frac{1}{\Sam}\). Note that $S^\Sam \post=\pMC$, as introduced in Chapter \ref{lecture5}, equation \eqref{eq:indeed}. We will use the operator $S^\Sam$ to approximate 
the measure produced by the \index{Markov kernel}Markov kernel step $\Pred$ within the overall
\index{filtering!map} filtering map $\An_j \Pred$. 
Note that $S^\Sam$ is a {\em random} map taking pdfs into pdfs 
if we interpret weighted sums of \index{Dirac}Dirac masses as a pdf.

Let \(\post_0^\Sam = \post_0 = \Nc(m_0, C_0)\) and let $\post^\Sam_j$ denote
a particle approximation of the pdf $\post_j$ 
that we will determine in what
follows. We define
\begin{equation*}
	\hat{\post}^\Sam_{j+1} = S^\Sam \Pred \post^\Sam_j;
\end{equation*}
this is an approximation of \(\hat{\post}_{j+1}\) from the previous section. We then apply the operator \(\An_j\) to act on \(\hat{\post}^\Sam_{j+1}\) by appropriately reconfiguring the weights \(w_j\) according to the data.

To understand this reconfiguration of the weights
we use the fact that, if 
$$\post(v)=\frac{1}{\Sam} \sum_{\sam=1}^\Sam \delta(v-v^{(\sam)}),$$
then
$$(\An_j \post)(v)= \sum_{\sam=1}^\Sam w^{(\sam)} \delta(v-v^{(\sam)}),$$
where 
$$\bar{w}^{(\sam)}=\like_j(v^{(\sam)})$$
and the ${w}^{(\sam)}$ are found from the $\bar{w}^{(\sam)}$ by renormalizing
them to sum to one.
We use this calculation concerning the application of \index{Bayes formula}Bayes formula
to sums of \index{Dirac}Dirac masses within the following desired approximation of the
\index{filtering!update}filtering update formula:
\begin{equation*}
	\post_{j+1} \approx \post_{j+1}^\Sam = \An_j \hat{\post}_{j+1}^\Sam = \An_j S^\Sam \Pred \post_j^\Sam.
\end{equation*}
The steps for the method are summarized in Algorithm \ref{alg:bpf}.

\FloatBarrier
\begin{algorithm}
\caption{\index{particle filter!bootstrap}Bootstrap Particle Filter}
\begin{algorithmic}[1]
\vspace{0.1in}
\STATE {\bf Input}: Initial distribution $\post_0^\Sam = \post_0$, number of particles $\Sam.$  \\
\vspace{.04in}
\STATE {\bf Particle Generation}: For $j = 0,1,\dots,J-1$, perform
\begin{enumerate}
\item Draw $v_j^{(\sam)} \sim \post_j^\Sam$ for $\sam=1,\ldots,\Sam$ \index{i.i.d.}i.i.d.
\item Set $\hat{v}^{(\sam)}_{j+1} = \Psi(v_j^{(\sam)}) + \xi_j^{(\sam)}$ with $\xi_{j}^{(\sam)}$\index{i.i.d.} i.i.d.  $\Nc(0,\Sigma).$
\item Set $\bar{w}_{j+1}^{(\sam)} = \text{exp} \Bigl(-\frac{1}{2}|y_{j+1} - h(\hat{v}^{(\sam)}_{j+1})|_{\Gamma}^2\Bigr).$
\item Set $w_{j+1}^{(\sam)} = \bar{w}_{j+1}^{(\sam)} / \sum_{\sam=1}^\Sam \bar{w}_{j+1}^{(\sam)}.$
\item Set $\post_{j+1}^\Sam (u) = \sum_{\sam=1}^\Sam w_{j+1}^{(\sam)} \delta(u - \hat{v}^{(\sam)}_{j+1}).$
\end{enumerate}
\vspace{.04in}
\STATE{\bf Output:} Particle approximations $\post_{j}^\Sam \approx \post_j, \,\, j = 1, \ldots, J.$
\end{algorithmic}
\label{alg:bpf}
\end{algorithm}
\FloatBarrier


\section{\index{particle filter!bootstrap}Bootstrap Particle Filter Convergence}\label{sec:113}

We will now show that, under certain conditions, the \index{BPF}BPF 
converges to the true \index{filtering!distribution}filtering distribution in
the limit \(\Sam \rightarrow \infty\). The proof is similar to that of the
Lax-Equivalence Theorem from the numerical approximation of
evolution equations, part of which is the statement that
consistency and stability together imply convergence.
For the \index{BPF}BPF, consistency refers to a Monte Carlo error estimate, similar to that
derived in the chapter on \index{importance sampling}importance sampling, and stability
manifests in bounds on the
\index{Lipschitz}Lipschitz constants for the operators $\Pred$ and $\An_j.$

Our first step is to define what we mean by convergence, that is, we need 
a metric on probability measures. Notice that the operators \(\Pred\) and \(\An_j\) are deterministic, but the operator \(S^\Sam\) is random since it requires \index{sampling}sampling.
As a consequence, the approximate pdfs \(\post_j^N\) are also random. 
Thus, in fact, we need a \index{distance!between random probability measures}distance between random probability measures. To this
end, for random pdfs $\post$ and $\post'$, we define
\begin{equation*}
	d(\post, \post') = \sup_{|f|_\infty \leq 1} \Bigl( \Expect\Bigl[\bigl(\post(f) - \post'(f)\bigr)^2\Bigr] \Bigr)^{1/2},
\end{equation*}
where the expectation is taken over the random variable, in our case, 
the randomness from \index{sampling}sampling with \(S^\Sam\). 
 This \index{distance!between random probability measures}distance between random probability measures was introduced in Chapter \ref{lecture5} to study Monte Carlo integration: see equation \eqref{eq:rdist}. 
%


We now prove three lemmas, which together will enable us to prove convergence
of the \index{BPF}BPF.  The first shows consistency; the second
and third show stability estimates for $\Pred$ and $\An_j,$ respectively.

\begin{lemma}\label{lemmaSampling}
 For any pdf $\post,$ it holds that 
	\[d(\post, S^\Sam \post) \leq \frac{1}{\sqrt{\Sam}}.\]
\end{lemma}
\begin{proof} 
This is a consequence of Theorem \ref{t:MC}, since 
$S^N\post$ agrees with $\pMC$ as defined in Chapter \ref{lecture5}. 
\end{proof}

Now we prove a stability bound for the operator $\Pred$ defined in equation \eqref{eq:markov}.

\begin{lemma}\label{lemmaPrediction}
 For any pdfs $\post, \post',$ it holds that 
	\[d(\Pred \post, \Pred\post') \leq d(\post,\post').\]
\end{lemma}
	
\begin{proof}  For $|f|_\infty \le 1$  define a function $q$ on $\R^\Du$ by
\[q(v') = \int_{\R^\Du}p(v',v)f(v) \, dv,  \]
where, recall, $p$ denotes the transition pdf associated to the \index{stochastic dynamics model}stochastic dynamics model. 
Note that
\[|q(v')| \leq \int_{\R^\Du}p(v',v) \, dv = 1 ,\]
and so $|q|_\infty \le 1.$ Moreover, it holds that
\[\post(q) = (\Pred\post)(f).\]
To see this, note that by exchanging the order of integration, we have
\begin{align*}
	\post(q) &= \int_{\R^\Du}q(v')\post(v') \, dv' = \int_{\R^\Du} \Bigg[ \int_{\R^\Du}p(v',v)f(v) \, dv \Bigg] \post(v') \, dv' \\
	&= \int_{\R^\Du} \Bigg[ \int_{\R^\Du}p(v',v)\post(v') \, dv' \Bigg] f(v) \, dv \\
	&= \int_{\R^\Du} (\Pred\post)(v)f(v) \, dv.
\end{align*}
Finally, using that $|q|_\infty \le 1$ and $\post(q) = (\Pred\post)(f)$ we deduce that
\begin{align*}
	d(\Pred\post,\Pred\post') &= \sup_{|f|_\infty \leq 1} \Bigl(\Expect \Bigl[\bigl((\Pred\post)(f) - (\Pred\post')(f)\bigr)^2\Bigr]\Bigr)^{1/2} \\
	&\leq \sup_{|q|_\infty \leq 1} \Bigr( \Expect  \Bigl[\bigl(\post(q) - \post'(q)\bigr)^2\Bigr]\Bigr)^{1/2} \\
	&= d(\post,\post').
\end{align*}
\end{proof}

To prove the next lemma and the main convergence
theorem of the \index{BPF}BPF below,
we will make the following assumption, which encodes the idea of a bound
on the \index{observation!function}observation function.

\begin{assumption}
\label{a:LE}
There exists \(\kappa \in (0,1)\) such that, for all  $v\in\R^\Du$ and  $j \in \{0,\hdots,J-1\},$
\begin{equation*}
	\kappa \leq \like_j(v) \leq \kappa^{-1}.
\end{equation*}
\end{assumption}

It may initially appear strange to use the same constant $\kappa$ in the
upper and lower bounds, but recall that $\like_j$ is undefined up to a multiplicative
constant. Consequently, given any upper and lower bounds, $\like_j$ can be
scaled to achieve the bound as stated. Relatedly, it is $\kappa^{-2}$
which appears in the stability constant in the next lemma; 
if $\like_j$ is not scaled to produce the same constant $\kappa$ in the
upper and lower bounds in  
Assumption \ref{a:LE}, then it is the ratio of the upper and lower
bounds which would appear in the stability bound.

\begin{lemma} \label{lemma:analysis}
Let Assumption \ref{a:LE} hold. Then, for all pdfs $\post, \post'$ and $j \in \{0,\hdots,J-1\},$ it holds that
	\[d(\An_j \post,\An_j \post') \leq \frac{2}{\kappa^2} d(\post,\post').\]
\end{lemma}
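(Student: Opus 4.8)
I want to bound the $d$-distance between the two filtered measures $\An_j\post$ and $\An_j\post'$ in terms of $d(\post,\post')$. Recall that $\An_j$ is the Bayes update with likelihood $g_j$, so for a test function $f$ with $|f|_\infty\le 1$ I can write
\[
(\An_j\post)(f) = \frac{\post(g_jf)}{\post(g_j)}, \qquad (\An_j\post')(f) = \frac{\post'(g_jf)}{\post'(g_j)}.
\]
The natural first step is to subtract these and force both terms over a common denominator, producing an expression whose numerator I can split into two pieces, each of which is a difference $\post(\cdot)-\post'(\cdot)$ of a bounded function that I can control via $d(\post,\post')$.

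**Key steps.** First I would do the algebraic manipulation
\[
(\An_j\post)(f) - (\An_j\post')(f) = \frac{\post(g_jf)}{\post(g_j)} - \frac{\post'(g_jf)}{\post'(g_j)},
\]
and rewrite this as
\[
\frac{1}{\post(g_j)}\Bigl(\post(g_jf) - \post'(g_jf)\Bigr) + \frac{\post'(g_jf)}{\post'(g_j)}\cdot\frac{1}{\post(g_j)}\Bigl(\post'(g_j) - \post(g_j)\Bigr).
\]
This is the standard quotient-splitting trick: add and subtract $\post'(g_jf)/\post(g_j)$. Now Assumption \ref{a:LE} does the heavy lifting: since $\kappa\le g_j\le \kappa^{-1}$, I get $\post(g_j)\ge\kappa$ (so $1/\post(g_j)\le \kappa^{-1}$) and $|\post'(g_jf)/\post'(g_j)| = |(\An_j\post')(f)|\le 1$. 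The two functions appearing inside the differences are $g_jf$ and $g_j$, both bounded in absolute value by $\kappa^{-1}$. Writing $h_1 = g_jf$ and $h_2 = g_j$, the functions $\kappa h_1$ and $\kappa h_2$ take values in $[-1,1]$, so $d(\post,\post')$ controls $\bigl(\Expect[(\post(\kappa h_i) - \post'(\kappa h_i))^2]\bigr)^{1/2}$.

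**Assembling the bound.** Taking the square of the split expression, using $(a+b)^2\le 2(a^2+b^2)$ — or more simply applying the triangle inequality in the $L^2(\Expect)$ norm first — I would estimate
\[
\Bigl(\Expect\bigl[((\An_j\post)(f) - (\An_j\post')(f))^2\bigr]\Bigr)^{1/2} \le \frac{1}{\kappa}\Bigl(\Expect[(\post(h_1)-\post'(h_1))^2]\Bigr)^{1/2} + \frac{1}{\kappa}\Bigl(\Expect[(\post(h_2)-\post'(h_2))^2]\Bigr)^{1/2}.
\]
Each term on the right is $\frac{1}{\kappa}\cdot\frac{1}{\kappa}\bigl(\Expect[(\post(\kappa h_i)-\post'(\kappa h_i))^2]\bigr)^{1/2}\le \frac{1}{\kappa^2}d(\post,\post')$, since $|\kappa h_i|_\infty\le 1$ puts $\kappa h_i$ in the admissible class for the supremum defining $d$. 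Summing the two contributions gives $\frac{2}{\kappa^2}d(\post,\post')$, and taking the supremum over $|f|_\infty\le 1$ yields the claim.

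**Main obstacle.** The only delicate point is handling the expectations correctly: $\post$ and $\post'$ are \emph{random} measures, so all quantities $\post(g_j)$, $\post'(g_jf)$, etc. are random variables, and the denominators $\post(g_j),\post'(g_j)$ appear inside the expectation. The lower bound $\post(g_j)\ge\kappa$ holds \emph{surely} (pointwise in the randomness) precisely because Assumption \ref{a:LE} is a deterministic pointwise bound on $g_j$, so I can pull $1/\post(g_j)\le\kappa^{-1}$ and the factor $|(\An_j\post')(f)|\le 1$ out as sure bounds before applying the $L^2(\Expect)$ triangle inequality. This is what makes the argument clean; the care needed is just to verify that every constant I extract is a sure (not merely in-expectation) bound so that it legitimately comes out of the expectation.
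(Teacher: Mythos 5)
Your proposal is correct and follows essentially the same route as the paper: the identical quotient-splitting decomposition, the same sure bounds $\post(g_j)\ge\kappa$, $|(\An_j\post')(f)|\le 1$, $|g_j f|_\infty,|g_j|_\infty\le\kappa^{-1}$, and the same rescaling by $\kappa$ to place the test functions in the admissible class for $d$. The only (cosmetic) difference is that you apply the triangle inequality in $L^2(\Expect)$ where the paper uses $(a+b)^2\le 2(a^2+b^2)$ before taking square roots; both yield the constant $2/\kappa^2$.
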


\begin{proof}
To ease the notation, we drop the $j$ subscripts in $\like_j$ and $\An_j.$ Let $|f|_\infty \le 1.$ Following the proof of Theorem \ref{IS} in Chapter \ref{lecture5}, we use the following identity:
\begin{align*}
	(\An\post)(f) - (\An\post')(f) &= \frac{\post(f\like)}{\post(\like)} - \frac{\post'(f\like)}{\post'(\like)} \\
	&= \frac{\post(f\like)}{\post(\like)} - \frac{\post'(f\like)}{\post(\like)} + \frac{\post'(f\like)}{\post(\like)} - \frac{\post'(f\like)}{\post'(\like)} \\
	&= \frac{1}{\kappa} \bigg( \frac{\post(\kappa f\like) - \post'(\kappa f\like)}{\post(\like)} + \frac{\post'(f\like)}{\post'(\like)} \frac{\post'(\kappa \like) - \post(\kappa \like)}{\post(\like)} \bigg).
\end{align*}
Applying \index{Bayes theorem}Bayes Theorem \ref{t:bayes} and using that $|f|_\infty \le 1$ gives
\[\Bigl|\frac{\post'(f\like)}{\post'(\like)}\Bigr| = |(\An\post')(f)| \leq 1. \]
Therefore,
\[\bigl|(\An\post)(f) - (\An\post')(f)\bigr| \leq \frac{1}{\kappa^2}\Bigl( \bigl|\post(\kappa f\like) - \post'(\kappa f\like)\bigr| + \bigl|\post'(\kappa \like) - \post(\kappa \like)\bigr|   \Bigr).  \]
It follows that
\[\Expect \Bigl[\bigl( (\An\post)(f) - (\An\post')(f)\bigr)^2\Bigr] \leq  \frac{2}{\kappa^4}  \Bigl(\Expect \Bigl[(\post(\kappa f\like) - \post'(\kappa f\like))^2\Bigr] + \Expect\Bigl[\bigl(\post'(\kappa \like) - \post(\kappa \like)\bigr)^2\Bigr] \Bigr).  \]
Since $|\kappa \like| \le 1,$ we find that
\[\sup_{|f|_\infty \leq 1} \Expect\Bigl[\bigl((\An\post)(f) - (\An\post')(f)\bigr)^2\Bigr] \leq \frac{4}{\kappa^4} \sup_{|f|_\infty \leq 1} \Expect\Bigl[\bigl(\post(f) - \post'(f)\bigr)^2\Bigr],  \]
and hence
\[d(\An \post,\An\post') \leq \frac{2}{\kappa^2} d(\post,\post'). \]
\end{proof}


\begin{theorem}[Convergence of the \index{BPF}BPF] \label{th:convergenceBPF}
Let Assumption \ref{a:LE} hold.
Then there exists a $c = c(J,\kappa)$ independent of $N$ such that, for all $j = 1, \ldots, J,$
	\[d(\post_j,\post_j^\Sam) \leq \frac{c}{\sqrt{\Sam}}. \]
\end{theorem}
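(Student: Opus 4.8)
The plan is to mimic the Lax-equivalence argument advertised before the statement: combine the consistency estimate with the two stability estimates to derive a one-step error recursion for $d(\post_j,\post_j^\Sam)$, and then unroll that recursion. Recall that the exact filter evolves by $\post_{j+1} = \An_j \Pred \post_j$ while the particle filter evolves by $\post_{j+1}^\Sam = \An_j S^\Sam \Pred \post_j^\Sam$, with initialization $\post_0^\Sam = \post_0$, so that $d(\post_0,\post_0^\Sam)=0$.

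First I would bound the one-step error. Inserting the intermediate measure $\An_j \Pred \post_j^\Sam$ and using the triangle inequality for the metric $d$ gives
\[
d(\post_{j+1},\post_{j+1}^\Sam) \le d\bigl(\An_j \Pred \post_j,\, \An_j \Pred \post_j^\Sam\bigr) + d\bigl(\An_j \Pred \post_j^\Sam,\, \An_j S^\Sam \Pred \post_j^\Sam\bigr).
\]
For the first term, apply the stability of $\An_j$ (Lemma \ref{lemma:analysis}, constant $2/\kappa^2$) followed by the stability of $\Pred$ (constant $1$) to obtain $\tfrac{2}{\kappa^2} d(\post_j,\post_j^\Sam)$. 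For the second term, apply the stability of $\An_j$ once more, reducing it to $\tfrac{2}{\kappa^2}\, d(\Pred\post_j^\Sam, S^\Sam \Pred\post_j^\Sam)$, and then invoke the consistency estimate to bound $d(\Pred\post_j^\Sam, S^\Sam \Pred\post_j^\Sam) \le 1/\sqrt{\Sam}$. Writing $L := 2/\kappa^2$, this yields the recursion
\[
d(\post_{j+1},\post_{j+1}^\Sam) \le L\, d(\post_j,\post_j^\Sam) + \frac{L}{\sqrt{\Sam}}.
\]
Setting $a_j := d(\post_j,\post_j^\Sam)$ with $a_0 = 0$ and iterating $a_{j+1} \le L a_j + L/\sqrt{\Sam}$ gives $a_j \le \tfrac{1}{\sqrt{\Sam}}\sum_{i=1}^{j} L^i$, so the claim holds with $c = c(J,\kappa) := \sum_{i=1}^{J}(2/\kappa^2)^i$, finite and independent of $\Sam$.

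The main obstacle is the application of the consistency lemma to the second term: that lemma is stated for a deterministic density $\post \in \pdfs$, whereas $\Pred\post_j^\Sam$ is itself random. The fix is to condition on the $\sigma$-algebra $\mathcal{F}_j$ generated by all sampling randomness used to build $\post_j^\Sam$. Given $\mathcal{F}_j$, the density $\Pred\post_j^\Sam$ is fixed, and the fresh i.i.d.\ samples drawn by $S^\Sam$ at step $j+1$ are independent of $\mathcal{F}_j$; hence for every $f$ with $|f|_\infty\le 1$ the Monte Carlo bound of Theorem \ref{t:MC} applies conditionally, giving $\Expect\bigl[(\Pred\post_j^\Sam(f) - S^\Sam\Pred\post_j^\Sam(f))^2 \mid \mathcal{F}_j\bigr] \le 1/\Sam$. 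Taking total expectations and the supremum over $f$ preserves $d(\Pred\post_j^\Sam, S^\Sam\Pred\post_j^\Sam) \le 1/\sqrt{\Sam}$. Verifying this conditional independence carefully — and checking that the metric $d$, defined through an expectation over all randomness, is compatible with the tower property used here — is the one genuinely delicate point; the remainder is bookkeeping.
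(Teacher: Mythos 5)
Your proof is correct and follows essentially the same route as the paper's: the same triangle-inequality decomposition through the intermediate measure $\An_j\Pred\post_j^\Sam$, the same application of the stability bounds for $\An_j$ and $\Pred$ and the consistency bound for $S^\Sam$, and the same resulting recursion and constant (your $\sum_{i=1}^{J}(2/\kappa^2)^i$ equals the paper's $\lambda(1-\lambda^J)/(1-\lambda)$ with $\lambda=2/\kappa^2$). Your conditioning argument for applying the consistency lemma to the random density $\Pred\post_j^\Sam$ is a genuine subtlety that the paper passes over silently, and your resolution via the tower property is the right one.
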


\begin{proof}
Let $e_j = d(\post_j,\post_j^\Sam)$. Using the triangle inequality,
\begin{align*}
	e_{j+1} &= d(\post_{j+1},\post_{j+1}^\Sam) = d(\An_j\Pred\post_j,\An_jS^\Sam \Pred\post_j^\Sam) \\
	&\leq d(\An_j\Pred\post_j  , \An_j\Pred\post_j^\Sam) + d(\An_j\Pred\post_j^\Sam,\An_jS^\Sam \Pred\post_j^\Sam) .
\end{align*}
Applying the stability bound for $\An_j$, 
we have
\[e_{j+1} \leq \frac{2}{\kappa^2}\Bigl[d(\Pred\post_j   , \Pred\post_j^\Sam) + d(\hat{\post}_j^\Sam,S^\Sam\hat{\post}_j^\Sam)\Bigr],\]
where $\hat{\post}_j^\Sam = \Pred\post_j^\Sam$. By the stability bound for $\Pred$,
\[d(\Pred\post_j  , \Pred\post_j^\Sam) \leq d(\post_j, \post_j^\Sam) \]
and by the consistency bound for $S^\Sam,$ 
\[d(\hat{\post}_j^\Sam,S^\Sam \hat{\post}_j^\Sam) \leq \frac{1}{\sqrt{\Sam}}.\]
Therefore,
\begin{align*}
	e_{j+1} &\leq \frac{2}{\kappa^2}\Bigl(d(\post_j,\post_j^\Sam) + \frac{1}{\sqrt{\Sam}}\Bigr) \\
	&\leq \frac{2}{\kappa^2}\Bigl(e_j + \frac{1}{\sqrt{\Sam}}\Bigr).
\end{align*}
We let $\lambda = 2/\kappa^2$ and note that \(\lambda \geq 2\) since \(\kappa \in (0,1]\). Then the \index{discrete Gronwall inequality}discrete Gronwall inequality of Theorem \ref{t1.6} gives
\begin{equation*}
	e_{j} \leq \lambda^je_0 + \frac{\lambda}{\sqrt{\Sam}}\frac{1-\lambda^j}{1-\lambda}.
\end{equation*}
Recall that \(\post_0^\Sam = \post_0\) hence \(e_0 = 0\). Thus, letting
\[c = \frac{\lambda(1 - \lambda^J)}{1 - \lambda}\]
completes the proof since \(\lambda(1 - \lambda^j)/(1 - \lambda)\) is increasing in \(j\).
\end{proof}

\section{\index{particle filter!bootstrap}The Bootstrap Particle Filter as a Random \index{dynamical system}Dynamical System}\label{ssec:BPFdynamics}

A nice interpretation of the \index{BPF}BPF is to view it as a random \index{dynamical system}dynamical system for a set of interacting
particles $\{v_j^{(\sam)}\}_{n=1}^N.$ 
To this end, a measure 
\begin{equation*}
	\bar{\post}_j^{\Sam}(u) = \frac{1}{\Sam}\sum_{\sam=1}^\Sam \delta(u-v_j^{(\sam)}) \approx \post_j^\Sam(u) \approx \post_j(u)
\end{equation*}
with equally weighted particles may be naturally defined after the \index{resampling}resampling step from $\post_j^\Sam.$
It can then be seen that the \index{BPF}BPF updates the particle positions 
\begin{equation*}
\{v_j^{(\sam)} \}_{\sam = 1}^\Sam  \mapsto \{v_{j+1}^{(\sam)} \}_{\sam = 1}^\Sam 
\end{equation*}
via the random map 
\begin{align*}
	\hat{v}_{j+1}^{(\sam)} & = \Psi(v_j^{(\sam)}) + \xi_j^{(\sam)}, && \xi_j^{(\sam)} \sim \Nc(0,\Sigma) \index{i.i.d.}\index{i.i.d.}\text{ i.i.d.,}\\
	v_{j+1}^{(\sam)} & = \sum_{m=1}^\Sam \mathbbm{1}_{I_{j+1}^{(m)}}\left(r_{j+1}^{(\sam)}\right) \hat{v}_{j+1}^{(m)}, && r_{j+1}^{(\sam)} \sim \text{Uniform}(0,1)\index{i.i.d.}\text{ i.i.d.}
\end{align*}
Here the supports $I_{j}^{(m)}$ of the indicator functions have widths
given by the weights appearing in  $ {\post}_j^{\Sam}(u).$ Specifically, we have  
\begin{align*}
	I_{j+1}^{(m)} & = \left[\alpha_{j+1}^{(m-1)},\alpha_{j+1}^{(m)}\right), &  \alpha_{j+1}^{(m+1)} = \alpha_{j+1}^{(m)}+w_{j+1}^{(m)}, \quad \alpha_{j+1}^{(0)} = 0.
\end{align*}
Note that, by construction, $\alpha_j^{(\Sam)} = 1.$

Thus, the underlying \index{dynamical system}dynamical system on particles comprises $\Sam$ particles
governed by two steps: (i) the underlying \index{stochastic dynamics model}stochastic dynamics model, in
which the particles do not interact; (ii) a \index{resampling}resampling of the resulting collection of particles, to reflect the different weights associated with them, in
which the particles do then interact. The interaction is driven by the
weights, which see all the particle positions and measure their goodness
of fit to the data. Note that the same particle may be replicated more than once
through the \index{resampling}resampling in (ii) and, relatedly, a particle may disappear
through the \index{resampling}resampling.


%

\section{Discussion and Bibliography}\label{sec:115}

\index{particle filter}Particle filters are overviewed from an algorithmic viewpoint
in \cite{doucet2001introduction,doucet2000sequential}, 
and from a more mathematical perspective in \cite{del2004feynman,chopin2020introduction}. A variety of ways to resample the weights are reviewed and compared in \cite{chopin2020introduction}.
The convergence of \index{particle filter}particle filters is addressed
in \cite{crisan1998discrete}; the clean proof presented here
originates in \cite{rebeschini2015can} and may also
be found in \cite{law2015data}. We refer to \cite{CD02} for a review paper on convergence results for \index{particle filter}particle filters. 
For problems in which the \index{dynamics}dynamics evolve in relatively low-dimensional 
spaces they have been enormously successful. However, \index{particle filter}particle filters often perform poorly in high-dimensional systems due to the fact that the particle weight typically concentrates on one,  or
a small number, of particles --- the phenomenon of weight collapse; see 
\cite{bickel2008sharp, snyder2008obstacles, Snyder2011}.
Generalizing them so that they work for the high-dimensional problems that arise,
for example, in geophysical applications, provides a major challenge
\cite{van2015nonlinear}.
This fact also motivates the widespread adoption of the EnKF in
the geophysical sciences -- despite the relative paucity of theoretical
justification, in comparison with the particle filter,
the EnKF automatically avoids weight collapse since all particles are
equally weighted.

\chapter{\Large{\sffamily{Optimal Particle Filter }}}
\label{ch12}

This chapter is devoted to the \index{particle filter!optimal}Optimal Particle Filter (\index{OPF}OPF). 
Like the
Bootstrap Particle Filter\index{particle filter!bootstrap} (\index{BPF}BPF)  from the previous chapter, the \index{OPF}OPF
approximates the \index{filtering!distribution}filtering distribution by a sum of \index{Dirac}Dirac masses. But while the \index{BPF}BPF is conceptually derived
by factorizing the update of the \index{filtering!distribution}filtering distribution into a \index{prediction}prediction and an \index{analysis}analysis step, the \index{OPF}OPF uses a different
factorization which can result in improved performance. 



We introduce the decomposition of the \index{filtering!update}filtering update used by the \index{OPF}OPF in Section \ref{sec:121}.
The setting will initially be the same as for the \index{BPF}BPF 
(nonlinear \index{dynamics!stochastic}stochastic dynamics and nonlinear \index{observation}observations), and in this
general setting we will prove a convergence result, similar to that
for the \index{BPF}BPF from the previous chapter. However, we will see that 
the \index{OPF}OPF cannot be implemented in the fully nonlinear case without additional
approximate \index{sampling}sampling. 
For this reason, we will specify in Section \ref{sec:122} to the case of linear \index{observation}observations, 
where the \index{OPF}OPF can be  implemented 
in a straightforward fashion, without additional approximate
\index{sampling}sampling; indeed we will see that in this setting the method
may be characterized as a set of interacting
\index{3DVAR}3DVAR filters. Section \ref{sec:123}  discusses the sense in which the 
\index{OPF}OPF has desirable properties in comparison with the \index{BPF}BPF. 
We close in Section \ref{sec:125} with bibliographical remarks.

\section{The \index{particle filter!bootstrap}Bootstrap and \index{particle filter!optimal}Optimal Particle Filters Compared}\label{sec:121}
We initially work in the setting in which we introduced \index{filtering}filtering
and \index{smoothing}smoothing in Chapter \ref{lecture7}, with nonlinear \index{dynamics!stochastic}stochastic dynamics and nonlinear
\index{observation!function}observation function, namely the model
\begin{align*}
        v_{j+1} &= \Psi(v_j) + \xi_j, && \xi_j \sim \Nc(0, \Sigma) \index{i.i.d.}\text{ i.i.d.}, \\
        y_{j+1} &= h(v_{j+1}) + \eta_{j+1}, && \eta_j \sim \Nc(0, \Gamma) \index{i.i.d.}\text{ i.i.d.},
\end{align*}
with \(v_0 \sim \post_0 := \Nc(m_0, C_0)\) independent of the \index{i.i.d.}i.i.d.
sequences \(\{\xi_j\}\) and  \(\{\eta_j\}\). Here \(\Psi(\cdot)\) drives the \index{dynamics}dynamics and \(h(\cdot)\) is the \index{observation!function}observation function.
Recall that we denote by \(Y_j = \{y_1,\hdots,y_j\}\) all the data up to time
\(j\) and by \(\post_j\) the pdf of \(v_j | Y_j\), that is, \(\post_j = \Prob(v_j|Y_j)\). The \index{filtering}filtering problem is to determine \(\post_{j+1}\) from \(\post_j\).

The fundamental \index{filtering}filtering problem that we are interested in is thus 
determination of $\Prob(v_{j+1}|Y_{j+1})$ from $\Prob(v_{j}|Y_{j}).$ 
The \index{BPF}BPF is based on applying \index{sampling}sampling to the outcome of
the following manipulation:
\begin{align*}
	\Prob(v_{j+1}|Y_{j+1}) & = \Prob(v_{j+1}|y_{j+1}, Y_j)\\
	& = \frac{\Prob(y_{j+1}|v_{j+1}, Y_j)\Prob(v_{j+1}|Y_j)}{\Prob(y_{j+1}|Y_j)}\\
	& =\frac{\Prob(y_{j+1}|v_{j+1}, Y_j)}{\Prob(y_{j+1}|Y_j)}     \int_{\Ru}\Prob(v_{j+1}|v_j,Y_j)\Prob(v_j|Y_j)\, dv_j\\
	&=\frac{\Prob(y_{j+1}|v_{j+1})}{\Prob(y_{j+1}|Y_j)}       \int_{\Ru}\Prob(v_{j+1}|v_j)\Prob(v_j|Y_j) \, dv_j\\
	&= \An_j \Pred \Prob(v_j|Y_j).
\end{align*}
The \index{Markov kernel}Markov kernel $\Pred$ acts on arbitrary density $\post$ by
\begin{equation*}
	\Pred \post(v_{j+1}) = \int_{\Ru}\Prob(v_{j+1}|v_j)\post(v_j) \, dv_j,
\end{equation*}
and $\An_j$ acts on an arbitrary density $\post$ by application of \index{Bayes theorem}Bayes theorem, taking into account the \index{likelihood}likelihood of the data
\begin{equation*}
	\An_j\post(v_{j+1}) = \frac{1}{Z}\Prob(y_{j+1}|v_{j+1})\post(v_{j+1}),
\end{equation*}
with $Z$ normalization to a probability density.
The above manipulations are summarized by the relationship
\begin{equation}
\label{eq:pf}
\post_{j+1}=\An_j \Pred\post_j.
\end{equation}
Note that in this factorization we apply a \index{Markov kernel}Markov kernel and then 
\index{Bayes theorem}Bayes theorem.  In contrast, to derive the \index{OPF}OPF we perform the following manipulation: 
\begin{align*}
	\Prob(v_{j+1}|Y_{j+1}) & = \int_{\Ru}\Prob(v_{j+1},v_j|Y_{j+1})  \, dv_j \\
	& = \int_{\Ru}\Prob(v_{j+1}|v_j,Y_{j+1})\Prob(v_j|Y_{j+1}) \, dv_j  \, \\
& = \int_{\Ru}\Prob(v_{j+1}|v_j,y_{j+1},Y_j)\Prob(v_j|y_{j+1},Y_j) \, dv_j\\
	& = \int_{\Ru} \Prob(v_{j+1}|v_j,y_{j+1})\Prob(v_j|y_{j+1},Y_{j}) \, dv_j\\
	& = \int_{\Ru}  \Prob(v_{j+1}|v_j,y_{j+1})  \frac{\Prob(y_{j+1}|v_j,Y_{j}) }{\Prob(y_{j+1}|Y_{j})}  \Prob(v_j|Y_{j}) \,dv_j\\
& = \int_{\Ru} \Prob(v_{j+1}|v_j,y_{j+1})  \frac{  \Prob(y_{j+1}|v_j) }{\Prob(y_{j+1}|Y_{j})}  \Prob(v_j|Y_{j}) \, dv_j \\ 
	& = \Pred^{OPF}_j \An_j^{OPF}\Prob(v_j|Y_{j}),
\end{align*}
with \index{Markov kernel}Markov kernel for particle update 
\begin{equation*}
	\Pred_j^{OPF}\post(v_{j+1}) = \int_{\Ru}\Prob(v_{j+1}|v_j,y_{j+1})\post(v_j) \, dv_j
\end{equation*}
and application of \index{Bayes theorem}Bayes theorem to include the \index{likelihood}likelihood
\begin{equation*}
	\An_j^{OPF}\post(v_j) = \frac{1}{Z}\Prob(y_{j+1}|v_j)\post(v_j).
\end{equation*}
Thus, we have
\begin{equation}
\label{eq:opf}
\post_{j+1}=\Pred_j^{OPF} \An_j^{OPF}\post_j.
\end{equation}
Note that in the factorization given by \index{OPF}OPF we apply \index{Bayes theorem}Bayes theorem and then a \index{Markov kernel}Markov kernel, the opposite order to the \index{BPF}BPF. Moreover, the propagation mechanism is different ---it sees the data through
the \index{Markov kernel}Markov kernel $\Pred_j^{OPF}$--- and hence the weighting 
of the particles is also different:  the \index{BPF}BPF weights are proportional to the \index{likelihood}likelihood $\Prob(y_{j+1}|v_{j+1})$ and the \index{OPF}OPF weights are proportional to  $\Prob(y_{j+1}|v_{j})$ which may be, in general, not available in closed form.  In the \index{BPF}BPF, the evolution of the particles and the observation of the data are kept separate from each other ---the \index{Markov kernel}Markov kernel $\Pred$ depends only on the \index{dynamics}dynamics and not the observed data and is thus independent of $j$.
Furthermore, \index{sampling}sampling from the \index{Markov kernel}Markov kernel $\Pred_j^{OPF}$ may
not be possible and may require further approximation.
In the next subsection we will see that these two issues may be overcome
when the \index{observation!function}observation function is linear, and 
particle updates use a \index{3DVAR}3DVAR procedure. 
However, in the remainder of this subsection we study particle approximations
of \eqref{eq:opf},  simply assuming that the OPF weights are computed
exactly and that $\Pred_j^{OPF}$ can be sampled from without approximation. 

The natural particle approximation of \eqref{eq:opf}, generalizing the
\index{BPF}BPF from the preceding chapter, is to consider the iteration
$$\post_{j+1}^\Sam =\Pred_j^{OPF}S^\Sam \An_j^{OPF} \post_j^\Sam,  \quad \post_0^N=S^\Sam \post_0. $$
We refer to this as the \index{OPF}OPF.
It is possible to show that, under suitable assumptions, the \index{OPF}OPF satisfies a convergence result analogous to Theorem \ref{th:convergenceBPF} for the \index{BPF}BPF. Here we will analyze a slight modification of the \index{OPF}OPF, called the \index{particle filter!Gaussianized optimal}Gaussianized Optimal Particle Filter (\index{GOPF}GOPF), which reorders the \index{resampling}resampling and propagation steps.  We first write the resulting algorithm and then establish a convergence result.

The \index{GOPF}GOPF satisfies the recursion
$$\post_{j+1}^\Sam =S^\Sam \Pred_j^{OPF}\An_j^{OPF} \post_j^\Sam,  \quad \post_0^N=S^\Sam \post_0. $$
This recursion is similar in spirit to the one we derived for the \index{BPF}BPF, but note that the order of the analysis, \index{sampling}sampling and \index{prediction}prediction steps is different. Our goal now is to show a convergence result for the \index{GOPF}GOPF. We will make the following assumption, which is analogous to Assumption  \ref{a:LE} in Chapter 11 for the Bootstrap filter. 

\begin{assumption}\label{assumption:like}
There exists $\kappa \in (0,1)$ such that, for  all $v_j \in \R^\Du,$ and for all $j \in \{ 0, \ldots, J-1 \}$, 
\begin{equation*}
\kappa \le \Prob( y_{j+1} |v_j)\le \kappa^{-1}. 
\end{equation*}
\end{assumption}

We are ready to establish a convergence result for the \index{GOPF}GOPF.  The proof employs the same  \index{distance!between random probability measures}distance \eqref{eq:rdist}  between random probability measures used in Theorem \ref{th:convergenceBPF} to establish convergence for the \index{BPF}BPF and in Chapter \ref{lecture5} to study \index{Monte Carlo}Monte Carlo and \index{importance sampling}importance sampling. 

\begin{theorem}[Convergence of \index{GOPF}GOPF]
Let Assumption \ref{assumption:like} hold. Then there is a $c = c(J,\kappa)$ independent of $N$ such that, for all $j = 1\ldots, J,$
\begin{equation*}
d(\pi_j, \pi_j^N) \le\frac{c}{\sqrt{N}}
\end{equation*}
\end{theorem} 

\begin{proof}
Let $e_{j}  = d(\pi_j, \pi_j^N).$ Then,
\begin{align*}
e_{j+1} &= d(\pi_{j+1}, \pi_{j+1}^N) \\
&= d( \Pred_j^{OPF}   \An_j^{OPF}  \pi_j,     S^N \Pred_j^{OPF}   \An_j^{OPF}  \pi_j^N) \\
& \le  d( \Pred_j^{OPF}   \An_j^{OPF}  \pi_j,     \Pred_j^{OPF}   \An_j^{OPF}  \pi_j^N) +  d( \Pred_j^{OPF}   \An_j^{OPF}  \pi_j^N,   S^N   \Pred_j^{OPF}   \An_j^{OPF}  \pi_j^N) \\
& \le \frac{2}{\kappa^2}  e_j + \frac{1}{\sqrt{N}},
\end{align*}
where we have used Lemmas \ref{lemmaSampling}, \ref{lemmaPrediction} and Lemma \ref{lemma:analysis}, replacing Assumption \ref{a:LE}  by Assumption \ref{assumption:like} in order to guarantee the stability of $\An^{OPF}.$
The rest of the proof is identical to that of Theorem \ref{th:convergenceBPF}.
\end{proof}


\section{Implementation: Linear \index{observation}Observation Setting }\label{sec:122}
In general it is not possible to implement the \index{OPF}OPF in the fully nonlinear setting because of two computational bottlenecks:
\begin{itemize}
\item There may not be a closed formula for evaluating the \index{likelihood}likelihood $\Prob(y_{j+1}|v_{j}),$ making unfeasible the computation of the particle weights. 
\item It may not be possible to sample from the \index{Markov kernel}Markov kernel $\Prob(v_{j+1}|v_j,y_{j+1}),$ making unfeasible the propagation of particles. 
\end{itemize}
However, when the \index{observation!function}observation function $h(\cdot)$ is linear, i.e. $h(\cdot) = H\cdot$ for some $H \in R^{\Dy \times \Du},$ both bottlenecks are overcome.
We thus consider the following setting, which arises in many applications:
\begin{align*}
	v_{j+1} & = \Psi(v_j) + \xi_j, && \xi_j \sim \Nc(0,\Sigma) \index{i.i.d.}\text{ i.i.d.}, \\
	y_{j+1} & = Hv_{j+1} +\eta_{j+1}, && \eta_{j} \sim \Nc(0,\Gamma)\index{i.i.d.}\text{ i.i.d.}, 
\end{align*}
with $v_0 \sim \Nc(m_0,C_0)$ and $v_0, \{\xi_j\}, \{\eta_j\}$ independent. 
First, note that combining the \index{dynamics!stochastic} stochastic dynamics and \index{data model}data models we may write 
\begin{equation*}
	y_{j+1} = H\Psi(v_j) + H\xi_j + \eta_{j+1},
\end{equation*}
which shows that the conditional distribution for $y_{j+1}$ given $v_j$ is
\begin{equation*}
	\Prob(y_{j+1} | v_j) = \Nc(H\Psi(v_j),S),
\end{equation*}
where $S= H\Sigma H^\top + \Gamma.$
We will use this formula to compute the weights, thus overcoming the first computational bottleneck.  

We now show that, under the linear \index{observation}observation assumption, $\Pred^{OPF}_j$ is a \index{Gaussian}Gaussian kernel, and hence the second computational bottleneck is overcome too. 
We have
\begin{align*}
	\Prob(v_{j+1}|v_j,y_{j+1}) &\propto \Prob(y_{j+1}|v_{j+1}, v_j)\Prob(v_{j+1}|v_j)\\ 
	&= \Prob(y_{j+1}|v_{j+1})\Prob(v_{j+1}|v_j)\\
	&\propto \exp \left(-\frac{1}{2}|y_{j+1}-Hv_{j+1}|^2_\Gamma -\frac{1}{2}|v_{j+1}-\Psi(v_{j})|^2_\Sigma \right)\\
	&=\exp\bigl(-\J_{\textsc{opt}}(v_{j+1})\bigr).
\end{align*}
 This is a \index{Gaussian}Gaussian distribution for $v_{j+1}$ as 
$$\J_{\textsc{opt}}(v_{j+1}):=\frac{1}{2}|y_{j+1}-Hv_{j+1}|^2_\Gamma +\frac{1}{2}|v_{j+1}-\Psi(v_{j})|^2_\Sigma$$ is quadratic 
with respect to $v_{j+1}$.\footnote{$\J_{\textsc{opt}}$ is identical to 
$\J$ on the right-hand side of Table \ref{3DVAR:t1}, with $\hat{C}$ replaced by $\Sigma.$} Consequently, we can compute the mean $m_{j+1}$ and covariance $C$ 
(which, note, is independent of $j$) of this \index{Gaussian}Gaussian by matching the mean
and quadratic terms in the relevant quadratic forms: 
\begin{align*}
	C^{-1} &= H^\top\Gamma^{-1}H + \Sigma^{-1},\\
	C^{-1}m_{j+1} &= \Sigma^{-1}\Psi(v_j) + H^\top\Gamma^{-1}y_{j+1}.
\end{align*}
Then $\Prob(v_{j+1}|y_{j+1}, v_j)=\Nc(m_{j+1},C).$ 
This is hence a special case of \index{3DVAR}3DVAR in which the
\index{analysis}analysis covariance is fixed at $C$; note that when we derived \index{3DVAR}3DVAR we
fixed the predictive covariance $\hat{C}$ which, here, is fixed at $\Sigma.$ 
As with the \index{Kalman filter}Kalman filter, and with \index{3DVAR}3DVAR, it is possible to implement 
the \index{prediction}prediction step through the following mean and covariance formulae
which avoid inversion in \index{state-space}state-space, and require inversion only in data
space: 
\begin{align*}
	m_{j+1} & = (I-KH)\Psi(v_j) + Ky_{j+1},\\
	C & = (I-KH)\Sigma,\\
	K & = \Sigma H^\top S^{-1},\\
	S & = H \Sigma H^\top + \Gamma.
\end{align*}
Furthermore, as for \index{3DVAR}3DVAR, the inversion of $S$ need only be performed once
in a pre-processing step before the algorithm is run.
Since the expression for $\Prob(v_{j+1}|v_j,y_{j+1})$ is \index{Gaussian}Gaussian we now
have the ability to sample directly from $\Pred_j^{OPF}$.
The \index{OPF}OPF is thus given by the following update algorithm for approximations
$\post_j^\Sam \approx \post_j$ in which we generalize the notational conventions 
used in the previous chapter to formulate \index{particle filter}particle filters as random
\index{dynamical system}dynamical systems:\footnote{The notation used in step 4 for the \index{resampling}resampling step was introduced in Subsection \ref{ssec:BPFdynamics}.}

\FloatBarrier
\begin{algorithm}
\caption{\index{particle filter!optimal}Optimal Particle Filter }
\begin{algorithmic}[1]
\vspace{0.1in}
\STATE {\bf Input}: Initial distribution $\Prob(v_0) = \post_0$, number of particles $\Sam.$
\STATE {\bf Initial Sampling}: Draw $\Sam$ particles $v_0^{(\sam)} \sim \post_0$ so that $\post_0^\Sam=S^\Sam\post_0.$
\vspace{0.04in}
\STATE{\bf Subsequent Sampling} For $j = 0,1,\dots,J-1$, perform:
\begin{enumerate}
\item Set $\hat{v}_{j+1}^{(\sam)} = (I-KH)\Psi(v_{j}^{(\sam)}) + Ky_{j+1} + 		\zeta_{j+1}^{(\sam)}$ with $\zeta_{j+1}^{(\sam)}$ \index{i.i.d.}i.i.d.  $\Nc(0,C).$
\item Set $\bar{w}_{j+1}^{(\sam)} = \exp\left(-\frac{1}{2}|y_{j+1}-H\Psi(v_j^{(\sam)})|_S^2\right).$
\item Set $w_{j+1}^{(\sam)} = \bar{w}_{j+1}^{(\sam)} / \sum_{\sam=1}^\Sam\bar{w}_{j+1}^{(\sam)}.$
\item Set $v_{j+1}^{(\sam)}=\sum_{m=1}^{\Sam} {\mathbbm{1}}_{I_{j+1}^{(m)}}(r_{j+1}^{(\sam)})\hat{v}_{j+1}^{(m)}.$
\item Set $\post_{j+1}^{\Sam}(v_{j+1}) = \frac{1}{\Sam}\sum_{\sam=1}^\Sam \delta\left(v_{j+1}-{v}_{j+1}^{(\sam)}\right).$
\end{enumerate}
\vspace{0.04in}
\STATE {\bf Output}: Particle approximations $\post_{j}^\Sam \approx \post_j, \,\, j = 1, \ldots, J.$
\end{algorithmic}
\label{alg:opf}
\end{algorithm}
\FloatBarrier

The \index{GOPF}GOPF has a similar form, after a reordering of the \index{sampling}sampling and propagation
steps:
\footnote{Here again, the \index{resampling}resampling step 3 follows the notation introduced in Subsection \ref{ssec:BPFdynamics}.}

\FloatBarrier
\begin{algorithm}
\caption{\index{particle filter!Gaussianized optimal} Gaussianized Optimal Particle Filter}
\begin{algorithmic}[1]
\vspace{0.1in}
\STATE {\bf Input}: Initial distribution $\Prob(v_0) = \post_0$, number of particles $\Sam.$
\STATE {\bf Initial Sampling}: Draw $\Sam$ particles $v_0^{(\sam)} \sim \post_0$ so that $\post_0^\Sam=S^\Sam\post_0.$
\vspace{0.04in}
\STATE{\bf Subsequent Sampling} For $j = 0,1,\dots,J-1$, perform:
\begin{enumerate}
\item Set $\bar{w}_{j+1}^{(\sam)} = \exp\left(-\frac{1}{2}|y_{j+1}-H\Psi(v_j^{(\sam)})|_S^2\right).$
\item Set $w_{j+1}^{(\sam)} = \bar{w}_{j+1}^{(\sam)} / \sum_{\sam=1}^\Sam\bar{w}_{j+1}^{(\sam)}.$
\item Set $\hat{v}_{j}^{(\sam)}=\sum_{m=1}^{\Sam} {\mathbbm{1}}_{I_{j+1}^{(m)}}(r_{j+1}^{(\sam)}){v}_{j}^{(m)}.$
\item Set ${v}_{j+1}^{(\sam)} = (I-KH)\Psi(\hat{v}_{j}^{(\sam)}) + Ky_{j+1} +          \zeta_{j+1}^{(\sam)}$ with $\zeta_{j+1}^{(\sam)}$\index{i.i.d.} i.i.d.  $\Nc(0,C).$
\item Set $\post_{j+1}^{\Sam}(v_{j+1}) = \frac{1}{\Sam}\sum_{\sam=1}^\Sam \delta\left(v_{j+1}-{v}_{j+1}^{(\sam)}\right).$
\end{enumerate}
\vspace{0.04in}
\STATE {\bf Output}: Particle approximations $\post_{j}^\Sam \approx \post_j, \,\, j = 1, \ldots, J.$
\end{algorithmic}
\label{alg:opf}
\end{algorithm}
\FloatBarrier

\section{``Optimality'' of the \index{particle filter!optimal}Optimal Particle Filter}\label{sec:123}
\index{particle filter}Particle filter methods rely on approximating the target distribution by a swarm of \index{Dirac}Dirac masses; it is clear that the distribution will not be well approximated by only a small number of particles in most cases. Consequently, a performance requirement for \index{particle filter}particle filter methods is that they do not lead to degeneracy of the particles. Resampling leads to degeneracy if
a few particles have all the weights. Conversely, non-degeneracy 
may be promoted by ensuring that the weights $w_j^{(\sam)}$ are similar in magnitude, so that a small number of particles are not overly favored during the \index{resampling}resampling step. This condition can be formulated as a requirement that the variance of the weights be minimized; doing this results in the \index{OPF}OPF.

To understand this perspective,
we consider an arbitrary particle update kernel of the form 
$\pi(v_{j+1}|v_j^{(\sam)},Y_{j+1})$
and we study the resulting \index{particle filter}particle filter without \index{resampling}resampling.
It is then the case that the  unnormalized particle weights are updated according
to the formula
			\begin{equation}\label{eq:weights}
				\bar{w}^{(\sam)}_{j+1} = \bar{w}^{(\sam)}_j \frac{\Prob(y_{j+1}|v_{j+1})\Prob(v_{j+1}|v^{(\sam)}_{j})}{\pi(v_{j+1}|v_j^{(\sam)},Y_{j+1})}.
			\end{equation}

\begin{theorem}[Meaning of Optimality]
	The choice of $\Prob(v_{j+1}|v_j^{(\sam)}, y_{j+1})$ as the particle update kernel $\pi(v_{j+1}|v_j^{(\sam)},Y_{j+1})$ results in the minimal
variance of the weight $w^{(\sam)}_{j+1}$ with respect to all possible
choices of the particle
update kernel $\pi(v_{j+1}|v_j^{(\sam)},Y_{j+1}).$
\end{theorem}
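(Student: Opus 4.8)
The plan is to show that the candidate optimal kernel renders the incremental importance weight deterministic, so that its variance is exactly zero and therefore minimal. First I would fix the particle index $n$ and make precise the sense of the variance: it is the conditional variance over the single proposal draw $v_{j+1} \sim \pi(\cdot|v_j^{(n)}, Y_{j+1})$, given the parent state $v_j^{(n)}$ and the data $Y_{j+1}.$ Under this conditioning the parent weight $\bar{w}^{(n)}_j$ and the state $v_j^{(n)}$ are fixed, so the only source of randomness in $\bar{w}^{(n)}_{j+1}$ is the proposed sample $v_{j+1}$, which enters solely through the ratio $\Prob(y_{j+1}|v_{j+1})\Prob(v_{j+1}|v_j^{(n)})/\pi(v_{j+1}|v_j^{(n)},Y_{j+1})$ in the weight update.

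The key step is an application of Bayes theorem to the candidate kernel. Using the conditional independence $\Prob(y_{j+1}|v_{j+1},v_j^{(n)}) = \Prob(y_{j+1}|v_{j+1})$ --- valid because $y_{j+1}=h(v_{j+1})+\eta_{j+1}$ sees the past only through $v_{j+1}$ --- I would write
$$\Prob(v_{j+1}|v_j^{(n)},y_{j+1}) = \frac{\Prob(y_{j+1}|v_{j+1})\Prob(v_{j+1}|v_j^{(n)})}{\Prob(y_{j+1}|v_j^{(n)})}.$$
Substituting this expression for $\pi$ into the weight update, the $v_{j+1}$-dependent factors in the numerator and the proposal cancel exactly, leaving
$$\bar{w}^{(n)}_{j+1} = \bar{w}^{(n)}_j\,\Prob(y_{j+1}|v_j^{(n)}),$$
which is independent of the proposed $v_{j+1}.$

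From here the conclusion is immediate: a quantity that is almost surely constant in the proposal draw has zero conditional variance, and since a variance is always non-negative, zero is the global minimum over all admissible proposal kernels $\pi.$ Hence the choice $\pi = \Prob(v_{j+1}|v_j^{(n)},y_{j+1})$ attains this minimum, which is exactly the asserted optimality.

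I expect the only genuine obstacle to be conceptual rather than computational, namely being careful about which randomness the variance is taken over. The clean statement is in terms of the conditional variance of the unnormalized incremental weight given $v_j^{(n)}$ and $Y_{j+1}$; I would remark explicitly that the normalization producing $w^{(n)}_{j+1}$ does not disturb the argument, since the vanishing of the variance already occurs at the level of the unnormalized weight, before any normalization across particles is carried out.
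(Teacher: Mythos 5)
Your proposal is correct and rests on the same key cancellation as the paper's proof: writing $\Prob(v_{j+1}\mid v_j^{(\sam)},y_{j+1})$ via Bayes theorem as $\Prob(y_{j+1}\mid v_{j+1})\Prob(v_{j+1}\mid v_j^{(\sam)})/\Prob(y_{j+1}\mid v_j^{(\sam)})$ and observing that the $v_{j+1}$-dependence disappears from the weight, so the conditional variance is zero and hence globally minimal. The paper reaches the same conclusion by first computing the variance of $\bar{w}^{(\sam)}_{j+1}$ under a general proposal $\pi$ as a second-moment integral minus $\Prob(y_{j+1}\mid v_j^{(\sam)})^2$ and then specializing; your route of noting directly that the unnormalized weight becomes the deterministic quantity $\bar{w}^{(\sam)}_j\,\Prob(y_{j+1}\mid v_j^{(\sam)})$ is a mild streamlining of the same argument, and your explicit care about which randomness the variance is taken over is a welcome clarification.
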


\begin{proof}
We calculate the variance of the unnormalized weights (treated as random variables) $\bar{w}^{(\sam)}_{j+1}$ with respect to the transition density $\pi(v_{j+1}|v_j^{(\sam)},Y_{j+1})$ and obtain
\begin{align*}
	{\rm Var}_{\pi(v_{j+1}|v_j^{(\sam)},Y_{j+1})} &[\bar{w}^{(\sam)}_{j+1}]  = \int_{\Ru} \left(\bar{w}^{(\sam)}_{j+1}\right)^2\pi(v_{j+1}|v_j^{(\sam)},Y_{j+1}) \, dv_{j+1}\\
	 &- \left[\int_{\Ru} \bar{w}^{(\sam)}_{j+1}\pi(v_{j+1}|v_j^{(\sam)},Y_{j+1}) \, dv_{j+1} \right]^2\\
	& = \left(\bar{w}^{(\sam)}_{j}\right)^2\int_{\Ru} \frac{\left(\Prob(y_{j+1}|v_{j+1})\Prob(v_{j+1}|v^{(\sam)}_{j})\right)^2}{\pi(v_{j+1}|v_j^{(\sam)},Y_{j+1})} \, dv_{j+1}\\
	& \quad\quad\quad\quad\quad\quad\quad - \left(\bar{w}^{(\sam)}_{j}\right)^2\left[\int_{\Ru} \Prob(y_{j+1}|v_{j+1})\Prob(v_{j+1}|v^{(\sam)}_{j}) \, dv_{j+1}\right]^2\\
	& = \left(\bar{w}^{(\sam)}_{j}\right)^2\left[\int_{\Ru}\frac{\left(\Prob(y_{j+1}|v_{j+1})\Prob(v_{j+1}|v^{(\sam)}_{j})\right)^2}{\pi(v_{j+1}|v_j^{(\sam)},Y_{j+1})} \, dv_{j+1} - \Prob(y_{j+1}|v^{(\sam)}_{j})^2\right].
\end{align*}
Choosing $\pi(v_{j+1}|v_j^{(\sam)},Y_{j+1}) = \Prob(v_{j+1}|v_j^{(\sam)}, y_{j+1})$, as in the \index{OPF}OPF, we obtain
\begin{align*}
	{\rm Var}_{\Prob(v_{j+1}|v_j^{(\sam)}, Y_{j+1})}[\bar{w}^{(\sam)}_{j+1}]
	& = \left(\bar{w}^{(\sam)}_{j}\right)^2\left[\int_{\Ru}\frac{\left(\Prob(y_{j+1}|v_{j+1})\Prob(v_{j+1}|v^{(\sam)}_{j})\right)^2}{\Prob(v_{j+1}|v_j^{(\sam)}, y_{j+1})} \, dv_{j+1} - \Prob(y_{j+1}|v^{(\sam)}_{j})^2\right]\\
	& = \left(\bar{w}^{(\sam)}_{j}\right)^2\left[\Prob(y_{j+1}|v^{(\sam)}_{j})^2 - \Prob(y_{j+1}|v^{(\sam)}_{j})^2\right]\\
	& = 0.
\end{align*}
\end{proof}

\begin{remark}
Note that directly from  \eqref{eq:weights}  we can see that choosing $\pi(v_{j+1}|v_j^{(\sam)},Y_{j+1}) = \Prob(v_{j+1}|v_j^{(\sam)}, y_{j+1})$ gives the weight update
			\begin{equation*}
				\bar{w}^{(\sam)}_{j+1} = \bar{w}^{(\sam)}_j \Prob(y_{j+1} | v_j^{(n)}),
			\end{equation*}
which does not depend on the draw $v_{j+1} \sim  \Prob(v_{j+1}|v_j^{(\sam)}, y_{j+1}),$ and is deterministic given $y_{j+1}$ and $v_j^{(n)}.$
\end{remark}

\begin{remark}
The \index{OPF}OPF is optimal in the very precise sense of the
theorem. Note that no optimality criterion is asserted by
this theorem with respect to iterating the particle updates, and in particular
when \index{resampling}resampling is included. The nomenclature ``optimal'' should thus be
treated with caution.
\end{remark}

\begin{example}[Linear-Gaussian One-Step Filter]
Recall Example \ref{ExampleIS} from Chapter \ref{lecture5}. We considered a \index{linear-Gaussian setting} linear-Gaussian one-dimensional \index{inverse problem}inverse problem with \index{prior}prior $\pr(u) = \Nc(0, {\widehat{c}}\, ^2)$ and \index{likelihood}likelihood $\Prob(y| u) = \Nc( au, \gamma^2)$, and we showed that the \index{divergence!$\chi^2$}$\chi^2$ divergence between the \index{posterior}posterior $\post$ and the \index{prior}prior $\pr$ is given by
 \begin{align*}
    \zeta &= \dchi( \post \| \pr)  + 1  \\
     & = \frac{\delta^2 + 1}{ \sqrt{2\delta^2 + 1} }\exp \Bigl( \frac{\delta^2}{2\delta^2 + 1} z^2  \Bigr), \quad \quad z \sim \Nc(0,1), 
     \end{align*}  
     where $\delta^2 := a^2 {\widehat{c}}\, ^2/\gamma^2.$ It is easy to see that $\zeta$ is monotonically increasing as a function of $\delta.$ We saw in Chapter \ref{lecture5} that a large \index{divergence!$\chi^2$}$\chi^2$ divergence between \index{target distribution}target (\index{posterior}posterior) and \index{proposal!distribution}proposal (\index{prior}prior) leads to a poor approximation of the \index{target distribution}target by reweighing \index{prior}prior samples. 
  
Now we consider a scalar, \index{linear-Gaussian setting}linear-Gaussian \index{filtering}filtering step
    \begin{align*}
   v_{1} &= \alpha v_0+\xi, \quad \quad
    v_0\sim\Nc(0,c_0^2), \quad
    \xi\sim\Nc(0,\sigma^2),\\
    y_1 &=h v_{1}+\eta,\quad \quad
    \eta \sim\Nc(0,\gamma^2).
    \end{align*}
In the \index{analysis}analysis step, the \index{BPF}BPF updates the \index{prior}prior $\Prob(v_1) = \Nc (0, \alpha^2 c_0^2 + \sigma^2) $ with \index{likelihood}likelihood $\Prob(y_1|v_1) = \Nc(hv_1, \gamma^2),$ while the \index{OPF}OPF updates the \index{prior}prior $\Prob(v_0) = \Nc(0, c_0^2)$ with \index{likelihood}likelihood $\Prob(y_1|v_0) = \Nc(h\alpha v_0, h^2 \sigma^2 + \gamma^2).$ Both bootstrap and optimal \index{analysis}analysis steps reweigh samples from their respective \index{prior}priors using their given \index{likelihood}likelihoods; since in both cases the \index{prior}prior is \index{Gaussian}Gaussian and the \index{observation}observation model is linear, we are in the setting of Example \ref{ExampleIS}. Here, the \index{divergence!$\chi^2$}$\chi^2$ divergence between the \index{target distribution}target and \index{proposal}proposal for the bootstrap and optimal filters are determined by 
\begin{align*}
\delta_{\text{BPF}} &=  \frac{ h^2 \alpha^2 c_0^2 + h^2 \sigma^2 }{\gamma^2}     ,      \\
\delta_{\text{OPF}} &= \frac{h^2 \alpha^2 c_0^2}{ h^2 \sigma^2 + \gamma^2}.
\end{align*}
  Clearly, $\delta_{\text{OPF}} \le \delta_{\text{BPF}},$ which indicates that the \index{divergence!$\chi^2$}$\chi^2$ divergence between \index{target distribution}target and \index{proposal!distribution}proposal is smaller for the optimal than for the bootstrap filter. In particular, note that in the \index{small noise limit}small \index{observation!noise}observation noise limit $\gamma \to 0^+$, the \index{divergence!$\chi^2$}$\chi^2$ divergence for the bootstrap filter diverges, while for the optimal filter it remains bounded provided that $h^2 \sigma^2 >0.$ In such a small \index{observation!noise}observation noise regime, the \index{OPF}OPF is clearly advantageous over the \index{BPF}BPF. 
  Finally, it is illustrative to see that the bootstrap and optimal filter agree and $\delta_{\text{BPF}}  = \delta_{\text{OPF}}$  if there is no noise in the \index{stochastic dynamics model}stochastic dynamics model, i.e. if $\sigma^2 =0.$
    \end{example}

\section{Discussion and Bibliography}\label{sec:125}
The \index{OPF}OPF is discussed, and further references given,
in the paper \cite{doucet2000sequential}; see section IID. 
 Throughout much of this chapter we assume
\index{Gaussian}Gaussian additive noise
and linear \index{observation}observation function, in which case the \index{prediction}prediction step is tractable; the order in which the \index{prediction}prediction and \index{resampling}resampling is performed can be commuted,
leading to the distinction between
what we term the \index{GOPF}GOPF and the \index{OPF}OPF. 
The paper \cite{doucet2000sequential} discusses the general setting,
beyond that in which \index{Gaussian}Gaussian additive noise
and linear \index{observation}observation function are assumed;
the idea that the order of \index{prediction}prediction and 
\index{resampling}resampling can be commuted was observed in
the general setting in \cite{pitt1999filtering}.
The convergence of the \index{OPF}OPF is studied in \cite{johansen2008note}.
The formulation of the bootstrap and optimal particle filters as random
\index{dynamical system}dynamical systems may be found in \cite{kelly2016ergodicity}.

The performance of the \index{BPF}BPF is poor when the filtering distributions 
 are far from the predictive distributions, a situation that arises in high-dimensional or small \index{observation!noise}observation noise \index{filtering}filtering settings. In such cases, the update of the weights in the \index{analysis}analysis step of the \index{BPF}BPF results in a degenerate distribution of weights, with the largest weight being close to $1$ \cite{bickel2008sharp, snyder2008obstacles, Snyder2011}. 
 This is the issue that the \index{OPF}OPF tries to ameliorate; the papers \cite{snyder2015performance,agapiou2017importance,sanz2020bayesian} show calculations which demonstrate
the extent to which this amelioration is manifest in theory. In practice, further exploiting decay of correlations through \emph{localization} is often needed. A review of local \index{particle filter}particle filters can be found in \cite{farchi2018comparison} and the paper  \cite{rebeschini2015can} investigates, from a theoretical viewpoint, whether localization can help to beat the curse of dimension. Attempts to bridge \index{particle filter}particle filters with \index{Kalman filter!ensemble}ensemble Kalman filters to alleviate the curse of dimension include \cite{FK13,stordal2011bridging}, and the relation between the collapse of ensemble and particle methods is investigated in the paper \cite{morzfeld2017collapse}, which also emphasizes the importance of localization.

 \chapter*{\Huge{\sffamily{Exercises}}}
\label{ch:DAExercises}
\addcontentsline{toc}{chapter}{Exercises}

{\bf Exercise 1} ({\sffamily{Scalar Linear-Gaussian Dynamics}}) 
Consider the scalar stochastic dynamics model 
\begin{subequations}
\begin{align*}
v_{j+1}&=a v_j+ \xi_j, \quad \quad \xi_j\sim \Nc(0,\sigma^2)\,\,\, \text{i.i.d.},\\
v_0&\sim \Nc(m_0,c_0^2), \quad \quad v_0 \perp \{\xi_j\}.
\end{align*}
\end{subequations}
\begin{itemize}
\item (i) Show that
\begin{equation*}
v_j= a^j v_0 + \sum_{i=0}^{j-1} a^{j-i-1} \xi_i.
\end{equation*}
Why does it follow that $v_j$ is Gaussian?

\item (ii) Show that the mean and variance of $v_j$
are given by
\begin{align*}
m_j& = a^j m_0, \\
c_j^2 
&= a^{2j}c_0^2 +\sigma^2 \sum_{i=0}^{j-1} a^{2i}.
\end{align*}

\item (iii) Find explicit formulae for the maps $m_j \mapsto m_{j+1}$ and
$c_j^2 \mapsto c_{j+1}^2.$

\item (iv)
If $|a|<1$ find the limit of $m_j$ and $c_j^2$
as $j \to \infty.$ What happens if $a=-1, 1$ or if $|a|>1$? 

\end{itemize}

\bigskip

\noindent {\bf Exercise 2} ({\sffamily{Filtering and Smoothing: Scalar Linear-Gaussian Setting}}) 
Consider the scalar stochastic dynamics and observation models given by 
\begin{align*}
v_{j+1}&=a v_j+ \xi_j, \quad \quad \xi_j\sim \Nc(0,\sigma^2),  \quad \quad v_0 \sim \Nc(m_0,c_0^2), \\
y_{j+1} &= v_{j+1} + \eta_{j+1}, \quad \quad \eta_{j+1} \sim \Nc(0,\gamma^2),
\end{align*}
where \(\{\xi_j\}\) and  \(\{\eta_j\}\) are i.i.d. sequences and $v_0 \perp \{\xi_j\} \perp \{\eta_j\}.$ 

\begin{itemize}
    \item (i) Set $a = 1.25,$ $\sigma^2 = 0.5,$ $m_0 = 1,$ $c_0^2 = 1$ and $\gamma^2 =0.1.$ Generate synthetic data $\{y_j\}_{j=1}^J$ with $J = 10$ from this model following these steps: 
   \begin{enumerate}[label=(\alph*),leftmargin=2.5\parindent]
    \item  Sample $v_0^\dagger \sim \Nc(1, 1).$ 
    \item For $j=0, \ldots, 9,$ sample $v_{j+1}^\dagger \sim \Nc(1.25 v_j^\dagger, 0.5)$. 
    \item For $j=0, \ldots, 9$, sample $y_{j+1} \sim \Nc(v_{j+1}^\dagger, 0.1).$ 
    \end{enumerate}
    We interpret $\{v_j^\dagger\}_{j=0}^{10}$ as the true signal underlying the synthetic data $\{y_j\}_{j=1}^{10}$. 
    \item (ii) For the synthetic data generated above  find, 
using the Kalman filter,  the filtering distributions $\pi_j(v_j) = \mathbb{P}(v_j|y_1, \ldots, y_j)$ for $j=0, 1, \ldots, 10$. Using the Kalman smoother, find the smoothing distribution $\mathbb{P} (\{v_j\}_{j=0}^{10} | \{y_j\}_{j=1}^{10})$. 
    \item (iii) Plot, for discrete time $1\le j \le 10,$  the mean of the filtering and smoothing distributions, together with the true signal and the observations.
\end{itemize}

\noindent {\bf Exercise 3} ({\sffamily{The Pendulum Problem}}) 
The dynamics of a pendulum are characterized by the following linear system:
$$\ddot{u} + \delta{\dot{u}} + \sin(u) = 0,$$
where $u(t)$ denotes the location of the pendulum, $\dot{u}(t)$ denotes the velocity and $\delta$ is a scalar parameter.
\begin{itemize}
\item (i)  Show  that this dynamical system implies the identity: 
$$\frac{\mathrm{d}}{\mathrm{d}t} \Big[\frac{1}{2}\dot{u}^2 - \cos(u)\Big] = -\delta\dot{u}^2.$$
 Use this identity to prove  that for $\delta \geq 0$ 
the solution will not blow up in finite time. 
What happens when $\delta = 0$?
\item  (ii)  Show that the dynamical system can be equivalently expressed 
 using the following first order differential equation: 
\begin{eqnarray}
\begin{cases}
\dot{u} = w,\\
\dot{w} = -\delta{w} - \sin(u);
\end{cases}
\label{eqn:diff}
\end{eqnarray}
 let $v =( u , w)^\top$ and consider this as an equation for $v$. 
Let $\Psi(\mathsf{v})$ denote the solution of \eqref{eqn:diff} at time $t = 0.2$ with initial condition $v(0)=\mathsf{v}.$ Set  $v_0 = (u_0,w_0)^\top = (\pi/4,0)^\top$ and consider the deterministic dynamics model and observation model given by
\begin{align}
\begin{split}\label{eqn:diff2}
v_{j+1} &= \Psi(v_{j}), \\
y_{j+1} &= u_{j+1} + \eta_{j+1}, \quad \eta_j \sim \mathcal{N}(0, \gamma^2) \index{i.i.d.} \text{ i.i.d.}
\end{split}
\end{align}
Set $\gamma^2 = 0.01$ and $\delta = 0.1$.  Solve numerically the differential equation \eqref{eqn:diff} to generate solutions of \eqref{eqn:diff2} and thereby  obtain $20$ observations $\{y_j\}_{j=1}^{20}$. Plot these values. 
\item (iii) This question continues from the setting of the previous questions.
Recall that 3DVAR estimates the
state of a partially observed dynamical system from the following sequential updates: 
\begin{eqnarray*}
m_{j+1} = (I - KH)\Psi(m_j) + Ky_{j+1}.
\end{eqnarray*}
Here, the $\{y_j\}$ denote the observations and the $\{m_j\}$ the state estimates;
the map encapsulates a
tradeoff between fitting to data and respecting the dynamics. 
Note that here $H = \begin{pmatrix} 1 & 0 \end{pmatrix}$ and recall that for 3DVAR the matrix $K$ satisfies the relations:
\begin{eqnarray*}
S = H \widehat{C} H^\top + \gamma^2, ~~~ K = \widehat{C} H^\top S^{-1},
\end{eqnarray*}
with $\widehat{C} \in \mathbb{R}^{2\times{2}}$  to be specified. Consider initializing 3DVAR from $m_0 = (0,0)^\top$ and suppose that $\widehat{C}$ is chosen to be a diagonal matrix. In this problem, show that there is only one degree of freedom. Play around with this parameter to find one that gives you a small $\sum_{j = 1}^{20} |v_{j} - m_{j}|$. (This an open-ended problem; you are encouraged to  experiment.) 
\item (iv) Repeat items (ii) and (iii)  with the observations being on the state $w$ instead of $u$, e.g. $y_{j+1} = w_{j+1} + \eta_{j+1}$. What do you notice?
\item (v) Starting with $N \in\{5,20,50,100\}$ random particles, sampled from a Gaussian $\Nc(0, I_2),$  apply particle filtering and report the effective sample size after each iteration of the dynamics. What do you observe?
\item (vi) What is the advantage of particle filtering for this problem over 3DVAR? Are there any disadvantages?\end{itemize}
\bigskip

\noindent {\bf Exercise 4} ({\sffamily{Estimation of Model Parameters: the EM Algorithm}}) 
Consider \index{stochastic dynamics model}stochastic dynamics and \index{data model}data models given by 
\begin{align}
        v_{j+1} &= \Psi_\theta(v_j) + \xi_j, && \xi_j \sim \Nc(0, \Sigma) \index{i.i.d.} \text{ i.i.d.}, \label{eq:SSM1}\\
        y_{j+1} &= Hv_{j+1} + \eta_{j+1}, && \eta_j \sim \Nc(0, \Gamma) \index{i.i.d.}\text{ i.i.d.}  \label{eq:SSM2}
\end{align}
with \(v_0 \sim \Nc(m_0, C_0)   \perp \{\xi_j\} \perp \{\eta_j\} \). Here, the vector $\theta \in \R^p$ parameterizes the dynamics. 
For a given and fixed integer $J,$ set
$\V := \{v_0, \ldots, v_J\}$ and  $\Y := \{y_1, \ldots, y_J\}.$
We seek to find $\theta$ that maximizes the \index{likelihood}likelihood function of $\theta$ given the observed data $Y$:
\begin{equation}
\Prob(Y | \theta) = \int \Prob (V, Y |\theta) \, dV.
\end{equation}
Here and below $\Prob (Y | \theta)$ denotes the pdf of $Y$ given that the dynamics map $\Psi_\theta$ in \eqref{eq:SSM1} is parameterized by $\theta;$ $\Prob(V,Y |\theta)$ is defined similarly. 
 
 \begin{itemize}
\item (i) Show that the joint distribution of $V$ and $Y$ admits the characterization
\begin{align}\label{eq:joint}
\begin{split}
\hspace{-0.5cm} \log & \Prob(V,Y | \theta) 
 =  -  \frac{1}{2} \displaystyle{\sum_{j=0}^{J-1}}|y_{j+1} - Hv_{j+1}|_{\Gamma}^2 
 -  \frac{1}{2}|v_0-m_0|^2_{C_0} - \frac{1}{2}  \sum_{j=0}^{J-1} |v_{j+1} - \Psi_\theta(v_j)|^2_{\Sigma}  + c,
\end{split}
\end{align}
where $c$ is a constant independent of $V,$ $Y$ and $\theta.$

\item (ii) Show that, for any pdf $q$ with compatible support, it holds that 
\begin{equation*}
\log \Prob (Y | \theta) = \mathcal{L}(q, \theta) + \dkl \bigl( q \| \Prob(V|Y,\theta)\bigr),
\end{equation*}
where 
$$\mathcal{L}(q, \theta) := \int\log\biggl( \frac{\Prob (V, Y| \theta)}{q(V)} \biggr) \, q(V)  \, dV$$
is a lower bound for the log-likelihood $\log \Prob (Y | \theta)$ since the Kullback-Leibler divergence is non-negative.
\item (iii)
You will now derive an iterative algorithm to maximize the likelihood 
given the current iterate $\theta_\ell$. In particular, we define the new iterate $\theta_{\ell+1}$ in two steps, maximizing in turn the two components of the lower bound $\mathcal{L}(q, \theta)$:
\begin{enumerate}[label=(\alph*),leftmargin=2.5\parindent]
\item  First, show that $q_\ell(V) = \Prob(V|Y,\theta_\ell)$ maximizes the lower bound $\mathcal{L}(q, \theta_\ell)$ over pdf $q.$  
\item Second, you will obtain $\theta_{\ell + 1}$ by maximizing the lower bound  $\mathcal{L}(q_\ell, \theta)$ over $\theta.$ Show that the quantity to maximize 
is the expected value of the joint log-density $\log  \Prob (V, Y | \theta)$ with respect to $q_\ell(V) = \Prob(V|Y, \theta_\ell).$
\end{enumerate}

Combining these two steps, you have derived the Expectation Maximization \index{EM}(EM) algorithm summarized below. 

\begin{algorithm}
\caption*{\label{alg:EM}{\bf Algorithm}  Expectation Maximization}
\begin{algorithmic}[1]
\STATE {\bf Input}:  Initialization $\theta_0.$
\STATE For $\ell = 0, 1, \ldots, L-1$ do the following expectation and maximization steps:
\STATE {\bf E-Step}: Compute 
\begin{align*}
\Expect^{V \sim  \Prob(V|Y, \theta_\ell)} \Bigl[ \log \Prob(V, Y | \theta) \Bigr]= \int \log \Prob(V, Y | \theta) \Prob(V|Y, \theta_\ell) \, dV. 
\end{align*}
\vspace{-0.5cm}
\STATE{{{\bf M-Step}}}: Compute 
\begin{equation*}
\theta_{\ell+1} = \arg \max_\theta \Expect^{V \sim  \Prob(V|Y, \theta_\ell)} \Bigl[ \log \Prob(V, Y | \theta) \Bigr].
\end{equation*}
\STATE{\bf Output}: Parameter $\theta^L.$
\end{algorithmic}
\end{algorithm}
\FloatBarrier

\item (iv)
Let $\{\theta_\ell \}_{\ell = 0}^{L-1}$ be the iterates of the EM algorithm. Show that, for $0 \leq \ell \leq L-1,$ it holds that 
\begin{equation}\label{eq:decreaselikelihood}
\log \Prob(Y | \theta_\ell) \le \log \Prob (Y | \theta_{\ell + 1}).
\end{equation}

\begin{observation}
\label{rem:em}
As a consequence of \eqref{eq:decreaselikelihood} it is possible to deduce, under mild assumptions, that the iterates $\theta_\ell$ of the \index{EM}EM algorithm converge, as $\ell \to \infty,$ to a local maximizer of the \index{likelihood}likelihood function. It is important to note, however, that the expectation in the E-step and the \index{optimization}optimization in the M-step are often intractable. \index{Monte Carlo}Monte Carlo, \index{filtering}filtering, or \index{smoothing}smoothing algorithms may be employed to approximate the E-step, and \index{optimization}optimization algorithms to approximate the M-step. Such approximations can cause loss of monotonicity and convergence guarantees. 
\end{observation}
\end{itemize}


\bigskip

\noindent {\bf Exercise 5} ({\sffamily{EM Algorithm with Ensemble Kalman Filter}}) 
Consider the scalar stochastic dynamics and observation models given by 
\begin{align*}
v_{j+1}&= \theta v_j+ \xi_j, \quad \quad \xi_j\sim \Nc(0,\sigma^2),  \quad \quad v_0 \sim \Nc(m_0,c_0^2), \\
y_{j+1} &= v_{j+1} + \eta_{j+1}, \quad \quad \eta_{j+1} \sim \Nc(0,\gamma^2),
\end{align*}
where $\{\xi_j\}$ and  $\{\eta_j\}$ are \index{i.i.d.}i.i.d. sequences and
\(v_0  \perp \{\xi_j\} \perp \{\eta_j\}. \)
Generate synthetic data $\{y_j\}_{j=1}^{10}$ from this model as in Exercise 2 using parameter $\theta^\star = 1.25.$ You will derive an EM algorithm to find $\theta^\star.$ Notice that the methodology you will derive is also applicable in nonlinear settings. 
\begin{itemize}
    \item (i) Implement an ensemble Kalman filtering algorithm with $N = 100$ particles, so that given parameter $\theta_\ell$ it outputs an ensemble $\{v_j^{(n)}\}_{n=1}^{100}$ for discrete time $j = 0, 1, \ldots, 10.$  
    \item (ii) Using this ensemble Kalman filtering algorithm with parameter $\theta_\ell$, derive an (approximate) EM algorithm by setting 
    $$\theta_{\ell +1} = \arg \max_\theta \frac{1}{N} \sum_{n=1}^N \log \Prob(V^{(n)}, Y | \theta), $$
    where we define, as in \eqref{eq:joint},
      \begin{align*}
\begin{split}
\hspace{-0.5cm} \log  \Prob(V^{(n)},Y | \theta)  
 =  -  \frac{1}{2 \gamma^2} \displaystyle{\sum_{j=0}^{9}}|y_{j+1} - v_{j+1}^{(n)}|^2
 -  \frac{1}{2 c_0^2}|v_0^{(n)} - m_0|^2 - \frac{1}{2 \sigma^2}  \sum_{j=0}^{9} |v_{j+1}^{(n)} - \theta v_j^{(n)}|^2.
\end{split}
\end{align*}

    Implement this EM algorithm with an initialization $\theta_0 = 1.5$ to recover $\theta^\star.$
\end{itemize}

\bigskip

\noindent {\bf Exercise 6} ({\sffamily{Likelihood: Linear-Gaussian Setting}}) 
 Suppose that, for each $0 \le j \le J-1,$ the predictive distribution $\Prob (v_{j+1}| Y_j, \theta)$ of the stochastic dynamics and data models \eqref{eq:SSM1} and \eqref{eq:SSM2} is \index{Gaussian}Gaussian with mean $\widehat{m}_{j+1}(\theta)$ and covariance $\widehat{C}_{j+1}(\theta).$ Show that then the \index{likelihood}log-likelihood function admits the following characterization 
\begin{equation*}
\log \Prob (Y | \theta) = - \frac12 \sum_{j = 0 }^{J-1} |y_{j+1} - H\widehat{m}_{j+1}(\theta)|_{S_{j+1}(\theta)}^2 - \frac12 \sum_{j=0}^{J-1} \log \text{{det}} \bigl(S_{j+1}(\theta)\bigr) + c,
\end{equation*}
where $S_{j+1}(\theta) = H \widehat{C}_{j+1}(\theta) H^\top + \Gamma$ and $c$ is a constant independent of $\theta.$

\bigskip

\noindent {\bf Exercise 7} ({\sffamily{$\chi^2$ Divergence in the Exponential Family}}) 
 The pdf $\post_\theta(u)$ is in the exponential family $\mathcal{E}_F$ if it can be written in the form  $\post_\theta(u)=e^{\langle t(u),\theta\rangle-F(\theta)+\kappa(u)}$. We denote this distribution as $\mathcal{E}_F(\theta)$.
    Suppose the natural parameter space $\Theta=\{\theta|\int_{\mathbb{R}^d}\post_\theta(u)du=1\}$ is affine, meaning that $\sum_iw_i\theta_i\in\Theta$ if $\theta_i\in\Theta$ and $\sum_iw_i=1$. Show that the $\chi^2$ divergence within the same exponential family $\mathcal{E}_F$ is characterized by
    \[\dchi(\mathcal{E}_F \bigl(\theta_1)\|\mathcal{E}_F(\theta_2) \bigr)=e^{F(2\theta_1-\theta_2)-2F(\theta_1)+F(\theta_2)}-1.\]

\bigskip

\noindent {\bf Exercise 8} ({\sffamily{$\chi^2$ Divergence Between Gaussians}}) Recall the exponential family introduced in Exercise 7.
   \begin{itemize}
 \item (i) Let $p=\mathcal{N}(\mu,\Sigma)$ be a Gaussian on $\R^d$ with positive definite covariance matrix $\Sigma.$  Show that it belongs to the exponential family, with parameter $\theta=[\Sigma^{-1}\mu;-\frac{1}{2}\Sigma^{-1}]$ where the natural parameter space $\mathbb{R}^d\bigoplus\mathbb{R}^{d\times d}$ inherits the inner products from the Euclidean space and the matrix space\footnote{The canonical inner product in the space of square matrices is defined to be the trace of the matrix product. We can also view this as an extension of the Euclidean inner product where the scalar in each coordinate is replaced by vectors.}, $t(u)=[u;uu^\top]$, $F(\theta)=\frac{1}{2}\mu^\top \Sigma^{-1}\mu+\frac{1}{2}\log \text{det} \Sigma,$ and $\kappa(u)=-\frac{d}{2}\log(2\pi)$. 
 
\item (ii) Let $p_1=\mathcal{N}(\mu_1,\Sigma_1)$ and $p_2=\mathcal{N}(\mu_2,\Sigma_2)$ be Gaussians on $\R^d$ with positive definite covariance matrices. Show that
    \[\dchi(p_1\| p_2)=\frac{ \text{det} W}{\sqrt{\text{det}(2W-I)}}e^{w^\top \Sigma_1^{-1}(2W-I)^{-1}w} -1,\]
    where $W=\Sigma_2 \Sigma_1^{-1}$ and $w=\mu_1-\mu_2$.
    \end{itemize}
    
    \bigskip

\part{Kalman Inversion}
\chapter{\Large{\sffamily{\index{inverse problem}  Blending Inverse Problems and Data Assimilation \index{data assimilation}}}}

This chapter brings together the material in the first two  parts of these notes, demonstrating how the principles and ideas underpinning the derivation of \index{Kalman filter!extended}extended and \index{Kalman filter!ensemble}ensemble Kalman filters for \index{data assimilation}data assimilation can be used to design  ensemble Kalman methods for \index{inverse problem}inverse problems. We adopt an \index{optimization}optimization perspective to the \index{inverse problem}inverse problem and study gradient-based and  ensemble  algorithms for the minimization of two \index{objective}objective functions: a \index{data-misfit}\emph{data-misfit} \index{objective}objective defined by a \index{loss}loss function; and a \index{Tikhonov-Phillips}\emph{Tikhonov-Phillips} \index{objective}objective defined by appending the \index{loss}loss term with a \index{regularization}regularization term. These \index{objective}objective functions will be introduced in Section \ref{sec:setting}, where we also show that they are particular instances of a general family of \index{nonlinear least-squares}nonlinear least-squares objectives. Section \ref{sec:NLS} contains a short overview of Gauss-Newton and Levenberg-Marquardt  \index{optimization}optimization algorithms for \index{nonlinear least-squares}nonlinear least-squares. In Section \ref{ssec:derivative-based} we consider gradient-based extended Kalman methods for both \index{objective}objectives, highlighting their interpretation as standard \index{Gauss-Newton}Gauss-Newton and \index{Levenberg-Marquardt}Levenberg-Marquardt \index{optimization}optimization algorithms. Finally, in Section \ref{ssec:Ensemble-based} we consider ensemble Kalman methods that avoid the calculation of gradients by invoking a statistical linearization\index{statistical linearization}
defined with an ensemble of particles. The chapter closes in Section \ref{sec:discussion} with extensions and bibliographical remarks. 

\section{Problem Setting and \index{objective}Objective Functions}\label{sec:setting}
Recall the \index{inverse problem}inverse problem of finding an unknown $u \in \Ru$ from data $y \in \Ry,$ where
\begin{align}
\label{IP}
y=G(u)+\eta, \quad  \eta\sim \Nc(0,\Gamma),
\end{align}
and $G$ represents a given \index{forward!model}forward model.
We consider an \index{optimization}optimization approach to the \index{inverse problem}inverse problem, seeking to recover the unknown $u$ by minimizing a \index{data-misfit}data-misfit or a \index{Tikhonov-Phillips}Tikhonov-Phillips \index{objective}objective function defined, respectively, by 
\begin{equation}\label{eq:TPobjective}
\Jdm(u) :=  \frac12|y-G(u)|^2_\Gamma, \quad \quad \Jtp(u) := \frac12|y-G(u)|^2_\Gamma   +  \frac12|u-  \hat{m} |_{\hat{C}}^2.
\end{equation}
As discussed in Section \ref{sec:31} and the examples therein,   
the \index{data-misfit}data-misfit \index{objective}objective function $\Jdm$ 
can be interpreted as a \index{loss}loss function and minimizing it promotes fitting the given data $y$; and the
\index{Tikhonov-Phillips}Tikhonov-Phillips \index{objective}objective $\Jtp$ comprises a \index{loss}loss function appended with a \index{regularization}regularization term that helps prevent overfitting the data. While in this chapter we focus on the \index{optimization}optimization perspective, we recall that in the \index{Bayesian}Bayesian perspective the \index{regularization}regularization term can be interpreted as the negative log-density of a \index{Gaussian}Gaussian \index{prior}prior $\pr(u) = \Nc( \hat{m}, \hat{C}),$ in which case minimizing the \index{Tikhonov-Phillips}Tikhonov-Phillips \index{objective}objective is equivalent  to finding the \index{MAP estimator}MAP estimator.

The \index{data-misfit}data-misfit and \index{Tikhonov-Phillips}Tikhonov-Phillips \index{objective}objectives are examples of \index{nonlinear least-squares}nonlinear least-squares \index{objective}objectives of the general form
\begin{equation}\label{eq:obj}
	\J(u) = \frac12 |r(u)|^2.
\end{equation}
To see this, note first that 
the \index{data-misfit}data-misfit \index{objective}objective 
can be written in the form
\begin{align}\label{eq:objectiveLM}
	 \Jdm(u) = \frac12|\rdm(u)|^2, \quad \quad \rdm(u) :=\Gamma^{-1/2}\bigl(y - G(u) \bigr). 
\end{align}
 Secondly, note that  the 
\index{Tikhonov-Phillips}Tikhonov-Phillips \index{objective}objective
may be written in the form 
\begin{align*}
	\Jtp(u) &= \frac12|y-G(u)|^2_\Gamma + \frac12|u-\hat{m}|^2_{\hat{C}} \\
	&= \frac12|z-h(u)|^2_Q,
\end{align*}
where
$$z := \begin{bmatrix}
	y\\ \hat{m}
\end{bmatrix},  \quad \quad \quad 
h(u) := \begin{bmatrix}
	G(u)\\ u
\end{bmatrix},
\quad \quad \quad
Q := 
\begin{bmatrix}
	\Gamma & 0 \\ 0 & \hat{C}
\end{bmatrix}.
$$
Therefore, 
\begin{align}\label{eq:objectiveGN}
	\Jtp(u) = \frac12|\rtp(u)|^2, \quad \quad \rtp(u) :=Q^{-1/2}\bigl( z - h(u) \bigr).
\end{align}
Equations \eqref{eq:objectiveLM} and \eqref{eq:objectiveGN} show that 
 both the \index{data-misfit}data-misfit and the 
\index{Tikhonov-Phillips}Tikhonov-Phillips 
\index{objective}objectives can be written in the general form \eqref{eq:obj}.

\section{Nonlinear-Least Squares Optimization}\label{sec:NLS}
Gradient-based \index{optimization}optimization algorithms  for the
\index{nonlinear least-squares}nonlinear least-squares problem of
minimizing \eqref{eq:obj} can  be broadly classified into line-search and trust region methods, exemplified by the classical \index{Gauss-Newton}Gauss-Newton and \index{Levenberg-Marquardt}Levenberg-Marquardt algorithms, respectively. 
 We overview each of these in turn. 

\subsection{Gauss-Newton Method}\label{ssec:GN}

The \index{Gauss-Newton}Gauss-Newton method applied to the general least-squares \index{objective}objective \eqref{eq:obj} is a line-search method which, starting from an initialization $u_0,$ sets
\begin{align*}
	u_{\ell+1} = u_\ell + \alpha_\ell v_\ell, \quad \quad \ell  = 0, 1,\ldots, L-1,
\end{align*}
where $v_\ell$ is a search direction defined by

\begin{align}\label{eq:definitionJl}
	v_\ell = \arg\min_v \Jl(v), \quad \quad \Jl(v):= \frac12|Dr(u_\ell)v + r(u_\ell) |^2,
\end{align}
and $\alpha_\ell>0$ is a user-chosen step-size parameter. Here and throughout this chapter, $Dr$ will denote the Jacobian of $r,$ which here is assumed 
to exist.  However, a significant outcome of the presentation in this
chapter is the derivation of ensemble Kalman formulae for the search 
direction update, avoiding the need for the calculation of the Jacobian; these ensemble
methods can be used when the Jacobian does not exist, or is too expensive
to compute. 

Our presentation in Sections \ref{ssec:derivative-based} and 
\ref{ssec:Ensemble-based} will focus on the derivation of 
extended and ensemble Kalman formulae, respectively, for the search direction 
update. Although the choice of step-size is crucial to the
efficiency of all Gauss-Newton methods, it is not the focus of these notes.
Consequently,  we introduce algorithms viewing the number $L$ of iterations, 
and the mechanism for determining the step-size schedule 
$\{\alpha_\ell\}_{\ell = 0}^{L-1}$, as given inputs.

\begin{remark}
In practice each step-size $\alpha_\ell$ is chosen adaptively based
on the current state $u_\ell$ and search direction $v_\ell.$
A unifying idea shared by many sophisticated line search strategies is to find an interval of desirable step-sizes and then try out a sequence of candidates within that interval, stopping when certain conditions are satisfied. For instance, a simple condition is to require reduction of $\J$ in which case $\alpha_\ell$ is required to satisfy 
$$\J(u_\ell + \alpha_\ell v_\ell) < \J(u_\ell).$$ 
However, this condition is not sufficient to guarantee convergence and
motivates the stronger Armijo condition: for some constant $c_1 \in (0,1)$
$$\J(u_\ell + \alpha_\ell v_\ell)  \le \J(u_\ell) + c_1 \alpha_\ell \langle D\J(u_\ell), v_\ell \rangle.$$ 
The choice of stopping criteria and adaptive  step-sizes will be further discussed in Section \ref{sec:discussion}.
\end{remark}

\subsection{Levenberg-Marquardt Method}\label{ssec:LM}

The \index{Levenberg-Marquardt}Levenberg-Marquardt method applied to the general least-squares \index{objective}objective \eqref{eq:obj} is a trust region method which, starting from an initialization $u_0,$ sets
\begin{align*}
	u_{\ell+1} = u_\ell +  v_\ell, \quad \quad \ell  = 0, 1,\ldots, L-1,
\end{align*}
where
\begin{equation*}
	v_\ell =  \arg\min_{v} \Jl(v), \quad \text{s.t.} \,\, |v|_{\hat{C}}^2 \le \delta_\ell,  \quad \quad \Jl(v):= \frac12|Dr(u_\ell)v + r(u_\ell) |^2.
\end{equation*}
Similar to \index{Gauss-Newton}Gauss-Newton methods, the increment $v_\ell$ is defined as the minimizer of a linearized \index{objective}objective, but now the minimization is constrained to a ball $\{ |v|_{\hat{C}}^2 \le \delta_\ell\}$ in which we \emph{trust} that the \index{objective}objective can be replaced by its linearization. 
For any $\delta_\ell$ there is an $\alpha_\ell \in (0, \infty]$ such that 
\begin{equation*}
	v_\ell = \arg\min_v  \Juc \hspace{-0.2cm}(v),
\end{equation*}
where
\begin{equation}\label{eq:LMobjectivewithlengthstep}
	\Juc \hspace{-0.2cm}(v)  =  \Jl(v)   + \frac{1}{2\alpha_\ell} | v |_{\hat{C}}^2 .
\end{equation}
The parameter $\alpha_\ell\in (0,\infty]$ acts as a Lagrange multiplier and 
plays an analogous role to the  step-size  in 
\index{Gauss-Newton}Gauss-Newton methods. We study Levenberg-Marquardt 
methods from the perspective of the unconstrained minimization
problem for $\Juc$ given by \eqref{eq:LMobjectivewithlengthstep}. 
Our presentation in Sections \ref{ssec:derivative-based} and \ref{ssec:Ensemble-based} will focus on the derivation of Kalman formulae for the increments $v_\ell.$ As for Gauss-Newton methods, we view the number $L$ of iterations and 
the mechanism for determining the step-size schedule $\{\alpha_\ell\}_{\ell = 0}^{L-1}$ as inputs to the algorithms we state here.

\begin{remark} In practice the parameter $\delta_\ell$ is chosen adaptively, for instance by monitoring the ratio $$ s_\ell = \frac{\J(u_\ell) - \J(u_\ell + v_\ell)}{\mathsf{Q}(0) - \mathsf{Q}(v_\ell)},$$
where $\mathsf{Q}(v)$ is a quadratic approximation to $\J(u_\ell + v).$ If $s_\ell$ is close to $1,$ this indicates that the objective \eqref{eq:obj} can be well approximated by a quadratic in a neighborhood of size $\delta_\ell$ around $u_\ell$, and thus that the next trust region can be enlarged. On the other hand, if $s_\ell$ is small, we may shrink the trust region in the next iteration. 
The choice of stopping criteria and adaptive  step-sizes  will be  further  
discussed in Section \ref{sec:discussion}.  
\end{remark}

Note that the \index{Levenberg-Marquardt}Levenberg-Marquardt increment is the  unconstrained  minimizer of a \emph{regularized} \index{objective}objective. It is for this reason that we say that \index{Levenberg-Marquardt}Levenberg-Marquardt provides an implicit \index{regularization}regularization. This regularization helps avoid overfitting when applied to the \index{data-misfit}data-misfit \index{objective}objective which, unlike the Tikhonov-Phillips objective,
is not regularized. On the other hand, \index{Gauss-Newton}Gauss-Newton methods do not provide implicit \index{regularization}regularization and 
therefore  should not be applied to the \index{data-misfit}data-misfit \index{objective}objective when solving \index{ill-posed}ill-posed \index{inverse problem}inverse problems. We will therefore focus on gradient and ensemble methods that arise from the following three choices of \index{objective}objective function and \index{optimization}optimization algorithm (see Table \ref{Updateswithgradients}):

\begin{table}
	\begin{center}
	\bgroup
\def\arraystretch{1.5}
		\begin{tabular}{ | c | c |c|c|}
			\hline
			\index{objective}Objective & \index{optimization}Optimization & Gradient Method & Ensemble Method    \\ \hline
		 $\Jtp$ & \index{Gauss-Newton}Gauss-Newton &  \index{IExKF}IExKF & \index{IEnKF-SL}IEnKF-SL  \\ \hline
			 $\Jdm$ & \index{Levenberg-Marquardt}Levenberg-Marquardt  & \index{ExKI}ExKI  & \index{EnKI-SL}EnKI-SL    \\ \hline
			 $\Jtp$ & \index{Levenberg-Marquardt}Levenberg-Marquardt & \index{TExKI}TExKI   & \index{TEnKI-SL}TEnKI-SL  \\ \hline
		\end{tabular}
\egroup
		\caption{Summary of the algorithms considered in this chapter.}		
		\label{Updateswithgradients}
	\end{center}
\end{table}

\begin{enumerate}
\item \index{Tikhonov-Phillips}Tikhonov-Phillips and \index{Gauss-Newton}Gauss-Newton, leading to \index{Kalman filter!iterative extended}Iterative Extended and \index{Kalman filter!iterative ensemble}Iterative Ensemble Kalman Filters (\index{IExKF}IExKF and \index{IEnKF-SL}IEnKF-SL);
\item \index{data-misfit}Data-misfit and \index{Levenberg-Marquardt}Levenberg-Marquardt, leading to \index{Kalman inversion!extended}Extended and \index{Kalman inversion!ensemble}Ensemble Kalman Inversion (\index{ExKI}ExKI and \index{EnKI-SL}EnKI-SL); and
\item \index{Tikhonov-Phillips}Tikhonov-Phillips and \index{Levenberg-Marquardt}Levenberg-Marquardt, leading to \index{Kalman inversion!Tikhonov extended}Tikhonov Extended and \index{Kalman inversion!Tikhonov ensemble}Tikhonov Ensemble Kalman Inversion (\index{TExKI}TExKI and \index{TEnKI-SL}TEnKI-SL).
\end{enumerate}
 Gradient methods will be introduced in Section \ref{ssec:derivative-based} while their ensemble counterparts will be introduced in Section \ref{ssec:Ensemble-based}. 

\section{Extended Kalman Methods}\label{ssec:derivative-based}
In this section we derive closed formulae for the \index{Gauss-Newton}Gauss-Newton method applied to the \index{Tikhonov-Phillips}Tikhonov-Phillips \index{objective}objective $\Jtp$, as well as for the \index{Levenberg-Marquardt}Levenberg-Marquardt method  applied to the \index{data-misfit}data-misfit \index{objective}objective $\Jdm$ and the \index{Tikhonov-Phillips}Tikhonov-Phillips \index{objective}objective $\Jtp.$ These formulae are the basis for the ensemble methods considered in the next section. 

Since the search directions of \index{Gauss-Newton}Gauss-Newton and \index{Levenberg-Marquardt}Levenberg-Marquardt methods are found by minimizing a linearization of the \index{objective}objective, it is instructive to consider first linear least-squares \index{optimization}optimization before delving into the nonlinear setting.  The following  result characterizes the minimizer $m$ of the \index{Tikhonov-Phillips}Tikhonov-Phillips \index{objective}objective $\Jtp$ in the case of linear $G(u) =  A  u$. Note that it is a consequence of completing
the square and is derived in the \index{linear-Gaussian setting}linear-Gaussian setting for \index{inverse problem}inverse problems studied in Chapter \ref{sec: posteriordistribution};  we record it here, as it will be used extensively in this chapter. 
\begin{lemma}\label{lemma:linear}
	It holds that 
	\begin{equation}\label{eq:linearTikhonov}
		\frac12| y -  A  u|_\Gamma^2 + \frac12| u - \hat{m}|^2_{\hat{C} }
		= \frac12 | u - m|_C^2 + \beta,
	\end{equation}
	where $\beta$ does not depend on $u,$ and 
	\begin{align}
		C^{-1} &=  A^\top \Gamma^{-1}  A  + \hat{C}^{-1},  \label{eq:precision} \\ 
		C^{-1} m & =  A^\top  \Gamma^{-1} y + \hat{C}^{-1} \hat{m}. \label{eq:precisionmean}
	\end{align}
	Equivalently, 
	\begin{align}
		m &= \hat{m} + K(y -  A  \hat{m}), \label{eq:linearmean}\\ 
		C &= (I - K A)\hat{C},\label{eq:linearcovariance}
	\end{align}
	where $K$ is the \index{Kalman gain}Kalman gain matrix given by
	\begin{equation}\label{eq:kalmangain}
		K = \hat{C}  A^\top (A  \hat{C} A^\top + \Gamma)^{-1} =C  A^\top \Gamma^{-1}.
	\end{equation} 
\end{lemma}
\begin{proof}
	The  formulae  \eqref{eq:precision} and \eqref{eq:precisionmean}  
 follow  by matching linear and quadratic coefficients in $u$ between  
	\begin{equation}
		\frac{1}{2} |u-m|^2_C  \quad \quad \text{and}  \quad \quad \frac12|u-\hat{m}|^2_{\hat{C}} + \frac12|y- A  u|^2_\Gamma. 
	\end{equation}
	The formulae  \eqref{eq:linearmean} and \eqref{eq:linearcovariance} as well as the equivalent expressions for the \index{Kalman gain}Kalman gain $K$ in Equation \eqref{eq:kalmangain} can be obtained using 
 the  \index{Woodbury matrix identity}Woodbury matrix identity, Lemma \ref{eq:woodbury}.
\end{proof}

%

\subsection{\index{Kalman filter!iterative extended}Iterative Extended Kalman Filter (IExKF)}\label{ssec:GNTP}
In this subsection we introduce two ways of writing the  \index{Gauss-Newton}Gauss-Newton update applied to the \index{Tikhonov-Phillips}Tikhonov-Phillips \index{objective}objective $\Jtp$. In order to apply the \index{Gauss-Newton}Gauss-Newton method to the \index{Tikhonov-Phillips}Tikhonov-Phillips \index{objective}objective, we use \eqref{eq:objectiveGN}. The following result is a direct consequence of Lemma \ref{lemma:linear}. 
\begin{lemma}
	The \index{Gauss-Newton}Gauss-Newton method applied to the \index{Tikhonov-Phillips}Tikhonov-Phillips \index{objective}objective $\Jtp$ admits the characterizations: 
	\begin{equation}
		u_{\ell+1}  = u_\ell +  \alpha_\ell C_\ell\Bigl( G_\ell^\top \Gamma^{-1} \bigl( y - G(u_\ell) \bigr)  + \hat{C}^{-1}(\hat{m} - u_\ell)  \Bigr)\label{GNTP1},
	\end{equation}
	and 
	\begin{equation}
		u_{\ell+1}  =  u_\ell  + \alpha_\ell  \Bigl(K_\ell \bigl( y - G(u_\ell) \bigr) +  (I- K_\ell G_\ell) (\hat{m} - u_\ell)   \Bigr), \label{GNTP2} 
	\end{equation}
 where $G_\ell = DG(u_\ell)$ and 
	\begin{align*}
		K_\ell &= \hat{C} G_\ell^\top (G_\ell \hat{C}G_\ell^\top + \Gamma)^{-1},  \\
		C_\ell & = (I - K_\ell G_\ell) \hat{C}.
	\end{align*}
\end{lemma}
\begin{proof}
	The search direction $v_\ell$  of \index{Gauss-Newton}Gauss-Newton for the \index{objective}objective $\Jtp$ is given by 
	\begin{align}
		v_\ell &= \arg \min_v  \frac12 \bigl|D\rtp(u_\ell)v + \rtp(u_\ell) \bigr|^2 \\
		& = \arg \min_v   \frac12 \bigl| z - h(u_\ell) - Dh(u_\ell) v \bigr|_Q^2      \\
		& = \arg \min_v  \biggl\{  \frac12 \bigl| y - G(u_\ell) - DG(u_\ell) v \bigr|^2_\Gamma + \frac12 \bigl|v - (\hat{m} - u_\ell) \bigr|_{\hat{C}}^2 \biggr\} . \label{eq:GN_obj}
	\end{align}
	Applying Lemma \ref{lemma:linear},  using formulae \eqref{eq:precisionmean} and \eqref{eq:linearcovariance},  we deduce that
	\begin{align*}
		v_\ell =  C_\ell\Bigl( G_\ell^\top \Gamma^{-1} \bigl( y - G(u_\ell) \bigr)  + \hat{C}^{-1}(\hat{m} - u_\ell)  \Bigr),
	\end{align*}
	which establishes the characterization \eqref{GNTP1}.
	The equivalence between \eqref{GNTP1} and \eqref{GNTP2} follows from the identity \eqref{eq:kalmangain}, which implies that $C_\ell G_\ell^\top \Gamma^{-1} = K_\ell$ and $C_\ell \hat{C}^{-1} = I-K_\ell G_\ell.$ 
\end{proof}

We refer to the \index{Gauss-Newton}Gauss-Newton method applied to $\Jtp$ as the \index{Kalman filter!iterative extended}Iterative Extended Kalman Filter (\index{IExKF}IExKF) algorithm.   Discussion of how to choose the  step-sizes  adaptively can be found in Section \ref{sec:discussion}. 
\FloatBarrier
\begin{algorithm}
	\caption{\index{Kalman filter!iterative extended}Iterative Extended Kalman Filter (\index{IExKF}IExKF) \label{itexKF}}
	\begin{algorithmic}[1]
	\STATE {\bf Input}: Initialization $u_0 = \hat{m}$,  rule for choosing the step-sizes $\{\alpha_\ell\}_{\ell = 0}^{L-1}.$  \\
	\STATE For $\ell = 0, 1, \ldots, L-1$ do:
	\begin{enumerate}
		\item Set $K_\ell = \hat{C} G_\ell^\top (G_\ell \hat{C} G_\ell^\top + \Gamma)^{-1}, \quad \quad G_\ell = DG(u_\ell). $
		\item  Set 
		\begin{equation}\label{eq:updateiexKF}
			u_{\ell+1} = u_\ell + \alpha_\ell \Bigl( K_\ell \big( y - G(u_\ell) \big) + (I - K_\ell G_\ell)(\hat{m}-u_\ell)   \Bigr).
		\end{equation}
	\end{enumerate}
	\STATE{\bf Output}: $u_1, u_2, \ldots, u_L.$
	\end{algorithmic}
\end{algorithm}
\FloatBarrier
The next proposition shows that in the linear case,  if $\alpha_\ell = 1$ for all $\ell \ge 0,$ \index{IExKF}IExKF finds the minimizer of the \index{objective}objective \eqref{eq:TPobjective} in one iteration, and further iterations still 
 stay at  the minimizer.
\begin{proposition}\label{proposition:IExKF}
	Suppose that $G(u) =  A  u$ is linear and that $\alpha_\ell = 1$ for all $\ell \ge 0.$ Then the output of Algorithm \ref{itexKF} satisfies
	$$ u_\ell = m, \quad \quad   \ell  = 1, 2, \ldots $$
	where $m$ is the minimizer of the \index{Tikhonov-Phillips}Tikhonov-Phillips \index{objective}objective \eqref{eq:TPobjective}.
\end{proposition}
\begin{proof}
	In the linear case we have 
	$$G_\ell =  A , \quad \quad K_\ell = K = \hat{C} A^\top ( A  \hat{C}  
A^\top + \Gamma)^{-1}, \quad \quad \ell  = 0, 1,\ldots$$
	Therefore,  update \eqref{eq:updateiexKF} simplifies as 
	$$u_{\ell +1} = \hat{m} + K(y -  A \hat{m}), \quad \ell  = 0, 1,\ldots$$
	This implies that, for all $\ell \ge 1,$   it holds that $u_\ell=m$ with $m$ defined in \eqref{eq:linearmean}. 
\end{proof}

\subsection{\index{Kalman inversion!extended}Extended Kalman Inversion (ExKI)}\label{ssec:LMDM}
In this subsection we study the application of the \index{Levenberg-Marquardt}Levenberg-Marquardt algorithm to the \index{data-misfit}data-misfit \index{objective}objective $\Jdm$.  In order to apply the \index{Levenberg-Marquardt}Levenberg-Marquardt method to the \index{data-misfit}data-misfit \index{objective}objective $\Jdm,$  recall that this objective can be written in standard \index{nonlinear least-squares}nonlinear least-squares form:
\begin{align}
	 \Jdm(u) = \frac12|\rdm(u)|^2, \quad \quad \rdm(u) :=\Gamma^{-1/2}\bigl(y - G(u) \bigr). 
\end{align}

\begin{lemma}
	The \index{Levenberg-Marquardt}Levenberg-Marquardt method applied to the \index{data-misfit}data-misfit \index{objective}objective $\Jdm$ admits the following characterization:
	\begin{equation}
		u_{\ell+1} = u_\ell + K_\ell \Bigl(y - G(u_\ell) \Bigr),
	\end{equation}
	where $$K_\ell = \alpha_\ell \hat{C} G_\ell^\top (\alpha_\ell G_\ell \hat{C} G_\ell^\top + \Gamma)^{-1}, \quad \quad G_\ell = DG(u_\ell).$$
\end{lemma} 
\begin{proof}
	Note that the increment $v_\ell$ is defined as the unconstrained minimizer of
	\begin{align}
		\begin{split}
			\Jucdm(v) &=   \frac12|D\rdm(u_\ell)v + \rdm(u_\ell) |^2  + \frac{1}{2\alpha_\ell} |v|^2_{\hat{C}}  \\
			& = \frac12| y - G(u_\ell) - DG(u_\ell)v |^2_\Gamma + \frac{1}{2\alpha_\ell} |v|^2_{\hat{C}}.\label{eq:LMDM}
		\end{split}
	\end{align}
	The result follows from Lemma \ref{lemma:linear}. 
\end{proof}
The previous lemma motivates the following \index{Kalman inversion!extended}Extended Kalman Inversion (\index{ExKI}ExKI) algorithm.  Discussion of how to choose the  step-sizes adaptively can be found in Section \ref{sec:discussion}. 
\FloatBarrier
\begin{algorithm}
	\caption{\index{Kalman inversion!extended}Extended Kalman Inversion (\index{ExKI}ExKI) \label{ILMDM}}
	\begin{algorithmic}[1]
	\STATE {\bf Input}: Initialization $u_0 = \hat{m}$, rule for choosing the step-sizes  $\{\alpha_\ell\}_{\ell = 0}^{L-1}.$  \\
	\STATE For $\ell = 0, 1, \ldots, L-1$ do:
	\begin{enumerate}
		\item Set  $K_\ell = \alpha_\ell \hat{C} G_\ell^\top (\alpha_\ell G_\ell \hat{C} G_\ell^\top + \Gamma)^{-1}, \quad \quad G_\ell = DG(u_\ell).$
		\item Set 
		\begin{equation}\label{eq:ILMDM}
			u_{\ell +1} = u_\ell + K_\ell \Bigl( y - G(u_\ell)  \Bigr).
		\end{equation}
	\end{enumerate}
	\STATE{\bf Output}: $u_1, u_2, \ldots, u_L .$
	\end{algorithmic}
\end{algorithm}
\FloatBarrier
When $\alpha_0 = 1$, the following linear-case result shows that \index{ExKI}ExKI reaches the minimizer of $\Jtp$ in one iteration. However, in contrast to \index{IExKF}IExKF, further iterations of \index{ExKI}ExKI  will typically no longer agree with the \index{optimization}minimizer of $\Jtp$.

\begin{proposition}\label{proposition:ILMDM}
	Suppose that $G(u) =  A  u$ is linear  and $\alpha_0 = 1$. Then the output of Algorithm \ref{ILMDM} satisfies
	$$ u_1= \arg\min_u \Jtp(u),$$
	where $\Jtp$ is the \index{Tikhonov-Phillips}Tikhonov-Phillips \index{objective}objective \eqref{eq:TPobjective}.
\end{proposition}
\begin{proof}
	The proof is identical to that of Proposition \ref{proposition:IExKF}, noting that in the linear case
	$u_{\ell +1} = u_\ell + K(y -   A  u_\ell).$
\end{proof}

\subsection{\index{Kalman inversion!Tikhonov extended}
Tikhonov Extended Kalman Inversion (TExKI)}\label{ssec:LMTP}
In this subsection we describe the application of the \index{Levenberg-Marquardt}Levenberg-Marquardt algorithm to the \index{Tikhonov-Phillips}Tikhonov-Phillips \index{objective}objective $\Jtp$. 

\begin{lemma}
	The \index{Levenberg-Marquardt}Levenberg-Marquardt method applied to the \index{Tikhonov-Phillips}Tikhonov-Phillips \index{objective}objective $\Jtp$ admits the following characterization:
	\begin{align*}
		u_{\ell +1} = u_\ell + K_\ell  \Bigl( z - h(u_\ell) \Bigr),
	\end{align*}
	where 
	$$K_\ell = \alpha_\ell \hat{C} H_\ell^\top ( \alpha_\ell H_\ell \hat{C} H_\ell^\top + Q )^{-1}, \quad \quad H_\ell = Dh(u_\ell).$$
\end{lemma} 
\begin{proof}
	Note that the increment $v_\ell$ is defined as the unconstrained minimizer of
	\begin{align}
		\Juctp(v) &=  \Jtpl(v)  + \frac{1}{2\alpha_\ell}|v|^2_{\hat{C}} \\
		& = \frac12 |z - h(u_\ell) -Dh(u_\ell)v |_Q^2 + \frac{1}{2\alpha_\ell}|v|^2_{\hat{C}}. \label{eq:LMTP_obj}
	\end{align}
 This has the form of Equation \eqref{eq:LMDM}, 
replacing $y$ with $z,$  $G$ with $h,$ and $\Gamma$ with $Q.$  
\end{proof}

The previous lemma motivates the following \index{Kalman inversion!Tikhonov extended}Tikhonov Extended Kalman Inversion (\index{TExKI}TExKI) algorithm. Discussion of how to choose the step-sizes adaptively can be found in Section \ref{sec:discussion}. 
\FloatBarrier
\begin{algorithm}
	\caption{\index{Kalman inversion!Tikhonov extended}Tikhonov Extended Kalman Inversion (\index{TExKI}TExKI) \label{ILMTP}}
	\begin{algorithmic}
	\STATE {\bf Input}: Initialization $u_0 = \hat{m}$,  rule for choosing the step-sizes  $\{\alpha_\ell\}_{\ell = 0}^{L-1}.$  \\
	\STATE For $\ell = 0, 1, \ldots, L-1$ do:
	\begin{enumerate}
		\item Set $K_\ell = \alpha_\ell \hat{C} H_\ell^\top (\alpha_\ell H_\ell \hat{C} H_\ell^\top + Q)^{-1}, \quad \quad H_\ell = Dh(u_\ell).$
		\item Set 
		\begin{equation}\label{eq:ILMTP}
			u_{\ell +1} = u_\ell + K_\ell \bigl( z - h(u_\ell)  \bigr).
		\end{equation}
	\end{enumerate}
	\STATE{\bf Output}: $u_1, u_2, \ldots, u_L .$
	\end{algorithmic}
\end{algorithm}
\FloatBarrier

When $\alpha_0 = 1$, the following linear-case result shows that \index{TExKI}TExKI reaches in one iteration the minimizer of a  $\Jtp$ objective appended with an additional regularization term.

\begin{proposition}\label{proposition:TExKI}
	Suppose that $G(u) =  A u$ is linear  and $\alpha_0 = 1$. Then the output of Algorithm \ref{ILMDM} satisfies
	$$ u_1= \arg\min_u \Bigl(\Jtp(u) + \frac12 |u  - \hat{m}|_{\hat C}^2\Bigr),$$
	where $\Jtp$ is the \index{Tikhonov-Phillips}Tikhonov-Phillips \index{objective}objective \eqref{eq:TPobjective}.
\end{proposition}
\begin{proof}
	Notice that setting $H = \begin{bmatrix}
	A\\ I
\end{bmatrix}$ we have 
	$$u_1 = \widehat{m} + \widehat{C} H^\top( H \widehat{C} H^\top + Q)^{-1} (z - H \widehat{m}).$$
 Lemma \ref{lemma:linear} then implies that $u_1$ minimizes
$$ \frac12 |z - Hu |^2_Q  + \frac12 | u - \widehat{m} |^2_{\hat C},$$
which implies the result.
\end{proof}
\begin{remark}
It is illustrative to compare Propositions \ref{proposition:IExKF}, \ref{proposition:ILMDM}, and \ref{proposition:TExKI}. These results show that in a linear setting: (i) \index{IExKF}IExKF reaches in one iteration the minimizer of $\Jtp,$  and that further iterates remain at the minimizer; (ii) \index{ExKI}ExKI reaches in one iteration the minimizer of $\Jtp$; and (iii) \index{TExKI}TExKI reaches in one iteration the minimizer of a $\Jtp$ \index{objective}objective appended with an additional regularization term. 
\end{remark}

\section{Ensemble Kalman Methods}
\label{ssec:Ensemble-based}
In this section we review three subfamilies of iterative methods that update an ensemble $\{ u_\ell^{(n)} \}_{n=1}^N$ employing Kalman-based formulae, where $\ell = 0, 1, \ldots$ denotes the iteration index and $N$ is a fixed ensemble size. Each ensemble member $u_\ell^{(n)}$ is updated by optimizing an \index{objective}objective defined using the current ensemble $\{ u_\ell^{(n)} \}_{n=1}^N$. The \index{optimization}optimization is performed without evaluating derivatives by invoking a \emph{statistical linearization}\index{statistical linearization} of a \index{Gauss-Newton}Gauss-Newton or \index{Levenberg-Marquardt}Levenberg-Marquardt algorithm. In analogy with the previous section, the three subfamilies of ensemble methods we consider differ in the choice of the \index{objective}objective and in the choice of the \index{optimization}optimization algorithm.

Given an ensemble $\{u_\ell^{(n)} \}_{n=1}^N$ we use the following notation for ensemble empirical means 
\begin{align*}
m_\ell &:= \frac{1}{N} \sum_{n=1}^N u_\ell^{(n)}, \quad \quad \bar{G}_\ell := \frac{1}{N} \sum_{n=1}^N G(u_\ell^{(n)}),
\end{align*}
and empirical covariances and cross-covariances
\begin{align*}
\hat{C}_\ell^{uu}  &:= \frac{1}{N} \sum_{n=1}^N (u_\ell^{(n)} - m_\ell)  (u_\ell^{(n)} - m_\ell)^\top,  \quad \quad  \\
\hat{C}_\ell^{uy} &:= \frac{1}{N} \sum_{n=1}^N \bigl( u_\ell^{(n)} - m_\ell  \bigr)   \bigl( G(u_\ell^{(n)}) - \bar{G}_\ell  \bigr)^\top,  \\
\hat{C}_\ell^{yy} &:= \frac{1}{N} \sum_{n=1}^N \bigl(G(u_\ell^{(n)}) - \bar{G}_\ell \bigr)  \bigl(G(u_\ell^{(n)}) - \bar{G}_\ell  \bigr)^\top.
\end{align*}
 Here and in what follows $(\hat{C}_\ell^{uu})^{-1}$ denotes the pseudoinverse of $\hat{C}_\ell^{uu}.$


The overarching theme that underlies the derivation of the ensemble methods studied in this section is the use of statistical linearization to avoid evaluation of derivatives. 
The idea behind statistical linearization\index{statistical linearization} is  
this: if  $G(u) =  A  u$ 
is linear, we have
\begin{equation*}
\hat{C}_\ell^{uy} = \hat{C}_\ell^{uu} A^\top.
\end{equation*}
Thus, if $\hat{C}_\ell^{uu}$ is invertible,
$A= (\hat{C}_\ell^{uy})^\top (\hat{C}_\ell^{uu})^{-1}$. 
Here and in what follows $(\hat{C}_\ell^{uu})^{-1}$ denotes the inverse 
of $\hat{C}_\ell^{uu}$ if this inverse exists, and the pseudoinverse otherwise. Noting that
$A$ is the derivative of $G(\cdot)$ in the linear case, this calculation
motivates the following \emph{approximation} in the general nonlinear case:
\begin{equation}\label{eq:stat_linearization}
DG(\uin) \approx G_\ell:=(\hat{C}_\ell^{uy})^\top (\hat{C}_\ell^{uu})^{-1}, 
\quad \quad n = 1,\dots,N,
\end{equation}
Note that \eqref{eq:stat_linearization} gives the same approximation of the derivative for every particle $n,$ and
indeed that it leads to an approximation that may be used at any point.

Other useful approximations follow from this. For example, note that
the exact gradient of $\Jdm(u)$ from \eqref{eq:TPobjective} is given by
$$D \Jdm(u)=DG(u)^\top \Gamma^{-1}\bigl(y-G(u)\bigr).$$ 
This suggests the approximation, for $G_\ell$ given 
by \eqref{eq:stat_linearization},
\begin{subequations}
\label{eq:SLgrad}
\begin{align}
D \Jdm(\uin) &\approx G_{\ell}^\top \Gamma^{-1}\bigl(y-G(\uin)\bigr),\\
& =(\hat{C}_\ell^{uu})^{-1} \hat{C}_\ell^{uy} \Gamma^{-1}\bigl(y-G(\uin)\bigr).
\end{align}
\end{subequations}

\subsection{\index{Kalman filter!iterative ensemble}Iterative Ensemble Kalman Filter with Statistical Linearization (IEnKF-SL)}\label{ssec:IenKF}
Given an ensemble $\{u_\ell^{(n)} \}_{n=1}^N$, consider the following \index{Gauss-Newton}Gauss-Newton update for each $n$:
\begin{equation}\label{eq:GNupdateensemble}
u_{\ell +1}^{(n)} = u_\ell^{(n)} + \alpha_\ell v_\ell^{(n)},
\end{equation}
where $\alpha_\ell>0$ is the step-size, and $\vin$ is the minimizer of the following (linearized) \index{Tikhonov-Phillips}Tikhonov-Phillips \index{objective}objective  (see \eqref{eq:GN_obj})
\begin{equation}\label{eq:IEKF_new_C}
\Jtpn(v) = \frac12 \bigl| y- G(u_\ell^{(n)} )  - G_\ell v \bigr|^2_\Gamma + \frac12 \bigl| \widehat{m} - \uin - v \bigr|^2_{\widehat{C}}.
\end{equation}
 It is important to appreciate that we adopt the 
statistical linearization\index{statistical linearization}  \eqref{eq:stat_linearization} in the above formulation. This couples
the different objective functions $\Jtpn$ indexed by $\ell.$ 
Applying Lemma \ref{lemma:linear}, the minimizer $\vin$ of
$\Jtpn$  can be calculated as
\begin{equation}\label{eq:IEKF_step1}
v_\ell^{(n)} = C_\ell \Bigl( G_\ell^\top \Gamma^{-1} \big( y - G(\uin) \big) + \widehat{C}^{-1} \big( \widehat{m} - \uin \big) \Bigr),
\end{equation}
or, in an equivalent form, 
\begin{equation}\label{eq:IEKF_step2}
v_\ell^{(n)} = K_\ell \big( y - G(\uin) \big) + (I - K_\ell G_\ell) (\widehat{m} - u_\ell^{(n)}), 
\end{equation}                      
where
\begin{align*}
C_\ell &= \big( G_\ell^\top \Gamma^{-1} G_\ell + \widehat{C}^{-1} \big)^{-1}, \\
K_\ell &= \widehat{C} G_\ell^\top (G_\ell \widehat{C} G_\ell^\top + \Gamma)^{-1}.
\end{align*}
Crucially each $v_\ell^{(n)}$ depends on all the $\{u_\ell^{(m)}\}_{m=1}^N.$

Combining \eqref{eq:GNupdateensemble} and \eqref{eq:IEKF_step2}  leads to the \index{Kalman filter!iterative ensemble}Iterative Ensemble Kalman Filter with Statistical Linearization (\index{IEnKF-SL}IEnKF-SL) algorithm. 
Discussion on how to choose the step-sizes adaptively can be found in Section \ref{sec:discussion}. 
\begin{algorithm}
\caption{\index{Kalman filter!iterative ensemble}Iterative Ensemble Kalman Filter with Statistical Linearization \index{IEnKF-SL} \label{itenKF}} 
\begin{algorithmic}
\STATE {\bf Input}: Initial ensemble $\{ u_0^{(n)}\}_{n=1}^N$ sampled from $\Nc(\widehat{m},\widehat{C})$,  rule for choosing the step-sizes  $\{\alpha_\ell\}_{\ell = 0}^{L-1}.$ \\
\STATE For $\ell = 0, 1, \ldots, L-1$ do:
\begin{enumerate}
\item Set $K_\ell = \widehat{C} G_\ell^\top (G_\ell \widehat{C} G_\ell^\top + \Gamma)^{-1}, \quad \quad G_\ell = (\hat{C}_\ell^{uy})^\top (\hat{C}_\ell^{uu})^{-1}.  $
\item Set
\begin{equation}\label{eq:IEKF_update}
u_{\ell +1}^{(n)} = \uin + \alpha_\ell  \Bigl( K_\ell \big( y - G(u_\ell^{(n)}) \big) + (I - K_\ell G_\ell)  \big( \widehat{m} - \uin \big)   \Bigr), \quad \quad 1 \le n \le N. 
\end{equation}
\end{enumerate}
\STATE{\bf Output}: Ensemble means $m_1, m_2, \ldots, m_L .$
\end{algorithmic}
\end{algorithm}
\FloatBarrier
Notice that \index{IEnKF-SL}IEnKF-SL is a natural ensemble-based version of the derivative-based \index{IExKF}IExKF Algorithm \ref{itexKF}  with update \eqref{eq:updateiexKF}. 
Other statistical linearizations\index{statistical linearization} 
and approximations of the \index{Gauss-Newton}Gauss-Newton scheme are possible.


\subsection{\index{Kalman inversion!ensemble}Ensemble Kalman Inversion with Statistical Linearization (EnKI-SL)}\label{ssec:EKI}
Given an ensemble $\{u_\ell^{(n)} \}_{n=1}^N$, consider the following \index{Levenberg-Marquardt}Levenberg-Marquardt update for each $n$:
\begin{equation}\label{eq:updateensembleLM}
u_{\ell +1}^{(n)} = u_\ell^{(n)} + v_\ell^{(n)},
\end{equation}
where $v_\ell^{(n)}$ is the minimizer of the following regularized (linearized) \index{data-misfit}data-misfit \index{objective}objective  (see \eqref{eq:LMDM}) 
\begin{equation}\label{eq:EKI_new_C}
\Jucdmn \hspace{-0.2cm}(v) = \frac12 \bigl| y - G(u_\ell^{(n)} )  - G_\ell v \bigr|^2_\Gamma  + \frac{1}{2\alpha_\ell} \bigl|v\bigr|^2_{\widehat{C}},
\end{equation}
and $\alpha_\ell>0$ will be regarded as a step-size. Notice that we adopt the 
statistical linearization\index{statistical linearization} \eqref{eq:stat_linearization} in the above formulation. Applying Lemma \ref{lemma:linear}, we can calculate the minimizer $v_\ell^{(n)}$ explicitly:
\begin{equation}\label{eq:EKI_step1}
v_\ell^{(n)} = (G_\ell^\top \Gamma^{-1} G_\ell + \alpha_\ell^{-1} \hat{C}^{-1})^{-1} G_\ell^\top \Gamma^{-1} \big( y - G(\uin) \big),
\end{equation}
or, in an equivalent form, 
\begin{equation}\label{eq:EKI_step2}
v_\ell^{(n)} = \hat{C} G_\ell^\top (G_\ell \hat{C} G_\ell^\top + \alpha_\ell^{-1} \Gamma)^{-1} \big( y - G(\uin) \big).
\end{equation}
 As in the preceding subsection,
each $v_\ell^{(n)}$ depends on all the $\{u_\ell^{(m)}\}_{m=1}^N.$ 
This leads to the \index{Kalman inversion!ensemble}Ensemble Kalman Inversion (\index{EnKI-SL}EnKI-SL) with Statistical Linearization method. Discussion of how to choose the step-sizes adaptively can be found in Section \ref{sec:discussion}. 
\begin{algorithm}
	\caption{\index{Kalman inversion!ensemble}Ensemble Kalman Inversion with Statistical Linearization\index{EnKI-SL}\label{algorithm:EKI_step}}
	\begin{algorithmic}
	\STATE {\bf Input}: Initial ensemble $\{ u_0^{(n)}\}_{n=1}^N$ sampled from $\Nc(\widehat{m},\widehat{C})$, rule for choosing the step-sizes  $\{\alpha_\ell\}_{\ell = 0}^{L-1}.$ \\ 
	\STATE For $\ell = 0, 1, \ldots, L-1$ do:
	\begin{enumerate}
		\item Set $K_\ell = \hat{C} G_\ell^\top (G_\ell \hat{C} G_\ell^\top + \alpha_\ell^{-1} \Gamma)^{-1}, \quad \quad G_\ell = (\hat{C}_\ell^{uy})^\top (\hat{C}_\ell^{uu})^{-1}.  $ 
		\item Set
		\begin{equation}\label{eq:EKI_update}
		u_{\ell +1}^{(n)} = u_\ell^{(n)}+ K_\ell \Bigl( y - G(u_\ell^{(n)})     \Bigr) ,\quad \quad 1 \le n \le N. 
		\end{equation}
	\end{enumerate}
	\STATE{\bf Output}: Ensemble means $m_1, m_2, \ldots, m_L.$
	\end{algorithmic}
\end{algorithm}
\FloatBarrier
Notice that \index{EnKI-SL}EnKI-SL is a natural ensemble-based version of the derivative-based \index{ExKI}ExKI Algorithm \ref{ILMDM}.

\subsection{\index{Kalman inversion!Tikhonov ensemble}Tikhonov Ensemble Kalman 
Inversion with Statistical Linearization (TEnKI-SL)}\label{ssec:TEKI}
Recall that we define
$$z := \begin{bmatrix}
y\\ \hat{m}
\end{bmatrix},  \quad \quad \quad 
h(u) := \begin{bmatrix}
G(u)\\ u
\end{bmatrix},
\quad \quad \quad
Q := 
\begin{bmatrix}
\Gamma & 0 \\ 0 & \hat{C}
\end{bmatrix}.
$$
Then, given an ensemble $\{u_\ell^{(n)} \}_{n=1}^N$, we can define 
$$\bar{h}_\ell := \frac{1}{N} \sum_{n=1}^N h(u_\ell^{(n)}),$$ and empirical covariances
\begin{align*}
\hat{C}_\ell^{zz} &:= \frac{1}{N} \sum_{n=1}^N \bigl( h(u_\ell^{(n)}) - \bar{h}_\ell  \bigr)   \bigl( h(u_\ell^{(n)}) - \bar{h}_\ell  \bigr)^\top,\\
\hat{C}_\ell^{uz} &:= \frac{1}{N} \sum_{n=1}^N \bigl( u_\ell^{(n)} - m_\ell  \bigr)   \bigl( h(u_\ell^{(n)}) - \bar{h}_\ell  \bigr)^\top.
\end{align*}
Furthermore, we define the statistical linearization\index{statistical linearization} $H_\ell$:
\begin{equation}
Dh(\uin)
\approx
(\hat{C}_\ell^{uz})^\top (\hat{C}_\ell^{uu})^{-1}
=: H_\ell.
\end{equation}
Notice that
\begin{equation*}
H_\ell = \begin{bmatrix}
G_\ell \\ I
\end{bmatrix},
\end{equation*}
with $G_\ell$ defined in \eqref{eq:stat_linearization}.

Given an ensemble $\{u_\ell^{(n)} \}_{n=1}^N$, consider the following \index{Levenberg-Marquardt}Levenberg-Marquardt update for each $n$:
\begin{equation*}
u_{\ell +1}^{(n)} = u_\ell^{(n)} + v_\ell^{(n)},
\end{equation*}
where $v_\ell^{(n)}$ is the minimizer of the following regularized (linearized) \index{Tikhonov-Phillips}Tikhonov-Phillips \index{objective}objective  (see \eqref{eq:LMTP_obj}) 
\begin{equation}\label{eq:TEKI_new_C}
\Juctpn \hspace{-0.2cm}(v) = \frac12 \bigl| z - h(u_\ell^{(n)} )  -  H_\ell v \bigr|^2_Q  + \frac{1}{2\alpha_\ell} \bigl|v\bigr|^2_{\hat{C}}, 
\end{equation}
and $\alpha_\ell>0$ will be regarded as a step-size. We can calculate the minimizer $v_\ell^{(n)}$ explicitly, applying Lemma \ref{lemma:linear}:
\begin{equation}\label{eq:TEKI_step1}
v_\ell^{(n)} = (H_\ell^\top Q^{-1} H_\ell + \alpha_\ell^{-1} \hat{C}^{-1})^{-1} H_\ell^\top Q^{-1} \big( z - h(\uin) \big),
\end{equation}
or, in an equivalent form, 
\begin{equation}\label{eq:TEKI_step2}
v_\ell^{(n)} = \hat{C} H_\ell^\top (G_\ell \hat{C} H_\ell^\top + \alpha_\ell^{-1} Q)^{-1} \big(  z - h(\uin) \big).
\end{equation}
Once again each $v_\ell^{(n)}$ depends on all the $\{u_\ell^{(m)}\}_{m=1}^N.$

This leads to \index{Kalman inversion!Tikhonov ensemble}Tikhonov Ensemble Kalman Inversion with Statistical Linearization (\index{TEnKI-SL}TEnKI-SL), described in Algorithm \ref{algorithm:TEKI_step}. Discussion on how to choose the  step-sizes adaptively can be found in Section \ref{sec:discussion}.
\begin{algorithm}
	\caption{\index{Kalman inversion!Tikhonov ensemble}Tikhonov Ensemble Kalman Inversion with Statistical Linearization \index{TEnKI-SL} \label{algorithm:TEKI_step}}
	\begin{algorithmic}
	\STATE {\bf Input}: Initial ensemble $\{ u_0^{(n)}\}_{n=1}^N$ sampled from $\Nc(m,P)$, rule for choosing the step-sizes  $\{\alpha_\ell\}_{\ell = 0}^{L-1}.$ \\ 
	\STATE For $\ell = 0, 1, \ldots, L-1$ do:
	\begin{enumerate}
		\item Set $K_\ell =\hat{C} H_\ell^\top (G_\ell \hat{C} H_\ell^\top + \alpha_\ell^{-1} Q)^{-1}, \quad \quad 
H_\ell = (\hat{C}_\ell^{uz})^\top (\hat{C}_\ell^{uu})^{-1}.$
		\item Set
		\begin{equation}\label{eq:TEKI_update}
		u_{\ell +1}^{(n)} = u_\ell^{(n)}+ K_\ell \Bigl( z  - h(u_\ell^{(n)})     \Bigr) ,\quad \quad 1 \le n \le N. 
		\end{equation}
	\end{enumerate}
	\STATE{\bf Output}: Ensemble means $m_1, m_2, \ldots, m_L.$
	\end{algorithmic}
\end{algorithm}
\FloatBarrier
Notice that \index{TEnKI-SL}TEnKI-SL is a natural ensemble-based version of the derivative-based \index{TExKI}TExKI Algorithm \ref{ILMTP}. 

\section{Discussion and Bibliography}\label{sec:discussion}
The presentation in this chapter follows the conceptual approach
to this subject overviewed and systematized 
in the paper \cite{chada2021iterative}: Kalman methods 
for \index{inverse problem}inverse problems are studied from the
optimization perspective, and classified in terms of the 
\index{objective}objective function they seek to minimize 
and the \index{nonlinear least-squares}nonlinear 
least-squares \index{optimization}optimization algorithm they are based on.  
For background on \index{nonlinear least-squares}nonlinear least-squares 
\index{optimization}optimization we refer to 
\cite{nocedal2006numerical,dennis1996numerical} where, in particular, 
a detailed discussion on the adaptive choice of the 
step-size parameters can be found;  note that the algorithms stated
in this chapter have been agnostic regarding the step-size choice strategy
as we have concentrated on the use of ideas from Kalman filtering
within \index{optimization}optimization.  Furthermore, following the presentation in \cite{chada2021iterative}, we have considered only 
\index{nonlinear least-squares}nonlinear least-squares 
\index{objective}objectives and quadratic \index{regularization}regularizers. 
However, ensemble Kalman methods for 
\index{inverse problem}inverse problems that use other objective
functions (or loss functions\index{loss}) and other regularizers are 
starting to emerge; in particular cross-entropy loss \cite{kovachki2019ensemble},
logistic loss \cite{SRPR21} and regularizers that promote sparsity
\cite{lee2020lp,schneider2020imposing,kim2022hierarchical} have all been considered.

There are a number of other ways in which Kalman filtering methods
may be used to study inverse problems.
The review article \cite{calvello22} emphasizes the Bayesian approach
to inversion and, in particular, shows how ideas from 
the sequential Monte Carlo \index{sequential Monte Carlo}(SMC)
\cite{del2006sequential} approach to Bayesian 
inversion\index{Bayesian!inversion} can be
adapted to the use of ensemble Kalman methods.
This possibility is highlighted in the paper
\cite{reich2011dynamical}, which is focused on 
sequential data assimilation; note, however, that the
analysis step \eqref{eq:pna} in sequential data assimilation requires
solution of a Bayesian inverse problem and thus the ideas in that paper 
are relevant for inverse problems in general, beyond data
assimilation. The reader interested in the use of SMC
for inverse problems is directed to the papers \cite{kantas2014sequential,beskos2015sequential} and the references therein; the
former paper demonstrates use of the methodology for an inverse
problem arising from the Navier-Stokes equation, and the latter
paper contains a simple proof of convergence of the particle
filter in the context of SMC for inverse problems, 
following the analysis in \cite{rebeschini2015can} for particle
filters in sequential data assimilation.

Another class of methods for inverse problems, which may be
applied in both the optimization and Bayesian approaches,
revolves around the idea of preconditioned gradient descent in 
ensemble Kalman methods for inversion;
in particular, use of the pre-conditioned gradient
\begin{equation*}
\hat{C}_\ell^{uu} D \Jdm(\uin) \approx
\hat{C}_\ell^{uy} \Gamma^{-1}\bigl(y-G(\uin)\bigr).
\end{equation*}
which follows from \eqref{eq:SLgrad}.
This leads to iterative optimization methods
\cite{iglesias2013ensemble,schillings2017analysis,schillings2018convergence},
based on gradient descent, and to Bayesian sampling methods
\cite{garbuno2019interacting,garbuno2020affine}. A key feature of the
preconditioned gradient is that it leads to algorithms
which are affine invariant \cite{affine}, and hence to
convergence rates which are uniform across wide classes of problems;
see the review article \cite{calvello22} for further discussion.

Finally, we note that Kalman methods are based on a \index{Gaussian!approximation}Gaussian approximation,
and hence on matching first and second order moments when studying Bayesian
inversion. Therefore, it is natural to study Kalman methods for inverse
problems which are applied to (possibly stochastic) dynamical systems
whose long-term properties exactly solve the \index{optimization}optimization or Bayesian
approach to inversion in the \index{linear-Gaussian setting}linear-Gaussian setting; this idea is
developed in \cite{huang2022iterated,huang2022efficient}.

The gradient-based \index{IExKF}IExKF algorithm was developed in the control theory literature \cite{jazwinski2007stochastic} without reference to the \index{Gauss-Newton}Gauss-Newton \index{optimization}optimization method; the 
 correspondence  between both methods was established in \cite{bell1993iterated}.   Ensemble Kalman methods were also first introduced as \index{filtering}filtering schemes for sequential \index{data assimilation}data assimilation, as described in Chapter \ref{lecture10}. Their use for \index{state}state and parameter estimation and \index{inverse problem}inverse problems was further developed in \cite{anderson2001ensemble,lorentzen2001underbalanced,skjervheim2011ensemble}. The idea of \emph{iterating} these methods was considered in \cite{chen2012ensemble,emerick2013ensemble,reich2011dynamical}. Ensemble Kalman methods are now popular in both \index{inverse problem}inverse problems and \index{data assimilation}data assimilation; they have also shown some potential in machine learning applications \cite{haber2018never,guth2020ensemble,kovachki2019ensemble}. There are two main computational benefits in updating an ensemble of candidate reconstructions rather than a single estimate. First, the ensemble update can be performed without evaluating derivatives of $G,$ effectively approximating them using 
statistical linearization.\index{statistical linearization} This is important in applications where computing derivatives of $G$ is expensive, or where the map $G$ needs to be treated as a black-box. Second, the use of empirical rather than model covariances can significantly reduce the computational cost whenever the ensemble size $N$ is smaller than the dimension $d$ of the unknown $u$. Another advantage of the ensemble approach is that, for problems that are not strongly nonlinear, the spread of the ensemble may contain meaningful information on the uncertainty in the reconstruction. Statistical linearization has also been used within unscented Kalman methods, see \cite{ungarala2012iterated,huang2021unscented1,huang2021unscented2}. 

In this chapter we have considered three families of ensemble algorithms characterized by a choice of \index{objective}objective function and \index{optimization}optimization algorithm: (i)  \index{Tikhonov-Phillips}Tikhonov-Phillips and \index{Gauss-Newton}Gauss-Newton; (ii) \index{data-misfit}data-misfit and \index{Levenberg-Marquardt}Levenberg-Marquardt; and (iii) \index{Tikhonov-Phillips}Tikhonov-Phillips and \index{Levenberg-Marquardt}Levenberg-Marquardt. Each family of ensemble Kalman methods stems from a choice of \index{objective}objective and a derivative-based \index{optimization}optimization scheme that is approximated with the ensemble. There is substantial freedom as to how to use the ensemble to approximate a derivative-based method. We have focused on \index{likelihood!randomized maximum}randomized-maximum likelihood implementations \cite{gu2007iterative,kelly2014well}, but square-root  approaches \cite{anderson2001ensemble,tippett2003ensemble} can also be considered.

Algorithms in the first family were first introduced in petroleum engineering and the geophysical sciences \cite{aanonsen2009ensemble,chen2012ensemble,emerick2013ensemble,gu2007iterative,li2007iterative,reynolds2006iterative} and were inspired by iterative, gradient-based, extended Kalman filters \cite{bell1994iterated,bell1993iterated,jazwinski2007stochastic}. More challenging problems with strongly nonlinear \index{dynamics}dynamics are considered in \cite{sakov2012iterative}. In this chapter we have presented the Iterative \index{IEnKF-SL}IEnKF-SL as a prototypical example of an algorithm that belongs to this family.
\index{IEnKF-SL}IEnKF-SL was introduced in \cite{chada2021iterative} as a slight modification of the \index{Kalman filter!iterative ensemble}iterative ensemble Kalman algorithm proposed in \cite{ungarala2012iterated}. One of the earliest applications of iterative ensemble Kalman methods for inversion in the petroleum engineering literature was proposed in \cite{reynolds2006iterative}, which considered the alternative characterization of the \index{Gauss-Newton}Gauss-Newton update \eqref{eq:IEKF_step1}. Moreover, instead of using a different preconditioner $C_\ell$ for each step, \cite{reynolds2006iterative} used a fixed preconditioner.
  
  Algorithms in the second family were introduced in the applied mathematics literature \cite{iglesias2016regularizing,iglesias2013ensemble} building on ideas from classical \index{inverse problem}inverse problems \cite{hanke1997regularizing}. Recent theoretical work has focused on developing continuous-time and mean-field limits, as well as various convergence results \cite{blomker2019well,blomker2018strongly,chada2019convergence,herty2018kinetic,ding2019ensemble,kovachki2019ensemble,schillings2017analysis}. Methodological extensions based on \index{Bayesian}Bayesian hierarchical techniques were introduced in \cite{chada2018analysis,chada2018parameterizations} and the incorporation of constraints has been investigated in \cite{albers2019ensemble,chada2019incorporation}. In this chapter we use  EnKI-SL as a prototypical example of an algorithm  that belongs to this subfamily. Its connection with the Ensemble Kalman Inversion algorithm from \cite{iglesias2013ensemble} is discussed in \cite{chada2021iterative}.
  
  The third family, which has emerged more recently, combines explicit \index{regularization}regularization through the \index{Tikhonov-Phillips}Tikhonov-Phillips \index{objective}objective and an implicitly regularizing \index{optimization}optimization scheme \cite{chada2019tikhonov,chada2019convergence}. In this chapter we use  TEnKI-SL as a prototypical example of an algorithm  that belongs to this subfamily.

Our presentation has focused on the derivation of Kalman formulae for the search direction update of \index{Gauss-Newton}Gauss-Newton and \index{Levenberg-Marquardt}Levenberg-Marquardt algorithms and their ensemble approximations. All the algorithms studied in this chapter require specifying appropriate  step-size parameters that determine the size of the updates along the search direction. 
 For gradient-based methods, there is abundant literature on the adaptive choice of  step-sizes \cite{nocedal2006numerical}.   \index{Gauss-Newton}Gauss-Newton methods can be shown to converge when the step-sizes are chosen according to Armijo or Wolfe conditions; the line search is often performed with a backtracking strategy \cite{nocedal2006numerical,dennis1996numerical}. When using \index{Levenberg-Marquardt}Levenberg-Marquardt schemes for \index{inverse problem}inverse problems, it is important to ensure that the  step-sizes,  as well as the stopping criteria, provide sufficient implicit \index{regularization}regularization to alleviate the \index{ill-posed}ill-posedness of \index{inverse problem}inverse problems \cite{hanke1997regularizing}. 
  For ensemble Kalman methods, the use and analysis of adaptive  step-sizes  is a topic of current research \cite{chada2019convergence,iglesias2021adaptive}. In practice, ensemble methods are often run with short  step-sizes,  in which case the algorithms may be interpreted as being defined by discretization of (stochastic) differential equations, see e.g. \cite{reich2011dynamical,chada2021iterative,schillings2017analysis}. Finally, we point out that the original description of some of the algorithms studied in this paper, e.g. \index{EnKI}EnKI  and \index{TEnKI}TEnKI in \cite{iglesias2013ensemble,chada2019tikhonov}, do not discuss the inclusion of  step-size  parameters. This would correspond to setting $\alpha_\ell =1$ for all $\ell \ge 0$ in our terminology.

\newpage
\bibliographystyle{abbrvnat} 

\bibliography{references}

\newpage
\printindex
\end{document}